\newif\ifdvi
\def\@citex[#1]#2{\leavevmode
  \let\@citea\@empty
  \@cite{\@for\@citeb:=#2\do
    {\@citea\def\@citea{,\penalty\@m\ }%
\edef\magic##1{\let##1\expandafter\noexpand\csname bibalias@\@citeb\endcsname}%
\magic\tmp \ifx\tmp\relax\else \let\@citeb\tmp\fi
     \edef\@citeb{\expandafter\@firstofone\@citeb\@empty}%
     \if@filesw\immediate\write\@auxout{\string\citation{\@citeb}}\fi
     \@ifundefined{b@\@citeb}{\hbox{\reset@font\bfseries ?}%
       \G@refundefinedtrue
       \@latex@warning
         {Citation `\@citeb' on page \thepage \space undefined}}%
       {\@cite@ofmt{\csname b@\@citeb\endcsname}}}}{#1}}
\def\bibalias#1#2{\expandafter\def\csname bibalias@#1\endcsname{#2}}
\newcommand{\np}{{\em NP}\xspace}
\newcommand{\nphard}{\np-hard\xspace}
\DeclareMathOperator{\supp}{supp}
\DeclareMathOperator{\argmin}{argmin}
\DeclareMathOperator{\argmax}{argmax}
\newcommand{\Ex}{\ensuremath{\mathbb{E}}}
\newcommand{\E}[2][]{\ensuremath{\mathbb{E}_{#1}\bigl[#2\bigr]}}
\DeclareMathOperator{\demd}{demd}
\newenvironment{proofof}[1]{\begin{proof}[Proof of {#1}]}{\end{proof}}
\newenvironment{proofsketch}{\begin{proof}[Proof sketch]}{\end{proof}}
\newtheorem{theorem}{Theorem}[section]
\newtheorem{lemma}[theorem]{Lemma}
\newtheorem{claim}[theorem]{Claim}
\newtheorem{corollary}[theorem]{Corollary}
\theoremstyle{definition} 
\newtheorem{definition}[theorem]{Definition}
\newtheorem{remark}[theorem]{Remark}}
\newcommand{\R}{\ensuremath{\mathbb R}}
\newcommand{\Z}{\ensuremath{\mathbb Z}}
\newcommand{\C}{\ensuremath{\mathcal{C}}}
\newcommand{\F}{\ensuremath{\mathcal F}}
\newcommand{\D}{\ensuremath{\mathcal D}}
\newcommand{\M}{\ensuremath{\mathcal M}}
\newcommand{\Nc}{\ensuremath{\mathcal N}}
\newcommand{\Sc}{\ensuremath{\mathcal S}}
\newcommand{\Pc}{\ensuremath{\mathcal P}}
\newcommand{\opt}{\ensuremath{\mathsf{opt}}}
\newcommand{\optset}{\ensuremath{O^*}}
\newcommand{\boptset}{\ensuremath{\overline O}}
\newcommand{\OPT}{\ensuremath{\mathit{OPT}}}
\newcommand{\cost}{\ensuremath{\mathit{cost}}}
\newcommand{\frall}{\ensuremath{\text{ for all }}}
\newcommand{\es}{\ensuremath{\emptyset}}
\newcommand{\ceil}[1]{\ensuremath{\left\lceil#1\right\rceil}}
\newcommand{\floor}[1]{\ensuremath{\left\lfloor#1\right\rfloor}}
\newcommand{\e}{\ensuremath{\epsilon}}
\newcommand{\gm}{\ensuremath{\gamma}}
\newcommand{\Gm}{\ensuremath{\Gamma}}
\newcommand{\sse}{\subseteq}
\newcommand{\ti}{\ensuremath{\widetilde i}}
\newcommand{\tj}{\ensuremath{\widetilde j}}
\newcommand{\tv}{\ensuremath{\tilde v}}
\newcommand{\hi}{\ensuremath{\widehat i}}
\newcommand{\hS}{\ensuremath{\widehat S}}
\newcommand{\bx}{\ensuremath{\overline x}}
\newcommand{\ld}{\ensuremath{\lambda}}
\newcommand{\kp}{\ensuremath{\kappa}}
\newcommand{\al}{\ensuremath{\alpha}}
\newcommand{\tht}{\ensuremath{\theta}}
\newcommand{\Dt}{\ensuremath{\Delta}}
\newcommand{\Om}{\ensuremath{\Omega}}
\newcommand{\ve}{\ensuremath{\varepsilon}}
\newcommand{\bc}{\ensuremath{\overline c}}
\newcommand{\wt}{\ensuremath{\mathsf{wt}}}
\newcommand{\infeas}{``\ensuremath{\mathsf{infeasible}}''\xspace}
\newcommand{\gset}{\ensuremath{U}}
\newcommand{\targ}{\ensuremath{\mathit{Val}}}
\newcommand{\lpopt}{\ensuremath{\mathsf{LP}^*}}
\newcommand{\mech}{\ensuremath{\M}}
\newcommand{\mechone}{\ensuremath{\mech^{(1)}}}
\newcommand{\mechtwo}{\ensuremath{\mech^{(2)}}}
\newcommand{\flag}{\ensuremath{\mathsf{flag}}}
\newcommand{\util}{\ensuremath{u}}
\newcommand{\optlpname}{BFLP}
\newcommand{\optlp}[1][{}]{\ensuremath{\text{(\optlpname}{#1}\text{)}}\xspace}
\newcommand{\frcoverlp}[1][S]{\ensuremath{\text{(PFC}{(#1)}\text{)}}\xspace}
\newcommand{\optfrcover}[1][S]{\ensuremath{\mathsf{LP^*_{PFC}}({#1})}\xspace}
\newcommand{\OPTAlg}{\ensuremath{\OPT_{\mathsf{Alg}}}\xspace}
\newcommand{\OPTalg}{\OPTAlg}
\newcommand{\optalg}{\OPTalg}
\newcommand{\optbench}{\ensuremath{\OPT_{\mathsf{Bench}}}\xspace}
\newcommand{\optparam}{\ensuremath{\OPT_{\mathsf{Param}}}\xspace}
\newcommand{\bench}{\ensuremath{\mathsf{Bmark}}\xspace}
\newcommand{\vbench}[1][v]{\ensuremath{{#1}_{-1}}}
\newcommand{\vbengen}[2][v]{\ensuremath{{#1}_{-{#2}}}}
\newcommand{\pset}{\ensuremath{G}}
\newcommand{\pl}{\ensuremath{\Nc}}
\newcommand{\vemax}{\ensuremath{v_{\max}}}
\newcommand{\bnew}{\ensuremath{B^{\mathsf{new}}}}
\newcommand{\ecost}[1][{}]{\ensuremath{\E{\cost_{#1}}}}
\newcommand{\priv}{\ensuremath{Z}}
\newcommand{\vxos}[1][v]{\ensuremath{{#1}^{\mathsf{prxos}}}}
\newcommand{\gprxos}{\vxos[g]}
\newcommand{\psmax}{\zeta}
\newcommand{\sduniv}{\Psi}
\newcommand{\est}{\ensuremath{\mathsf{est}}\xspace}
\newcommand{\los}{\ensuremath{\mathsf{LOS}}\xspace}
\newcommand{\xosdemdapx}{\ensuremath{64}}
\newcommand{\xosgenapx}{\ensuremath{80}}
\newcommand{\xossupaddpolyapx}{\ensuremath{100}}
\newcommand{\submodapx}{\ensuremath{48}}
\newcommand{\additiveapx}{\ensuremath{272}}
\newcommand{\xosgenapxobid}{\ensuremath{145}}
\let\oldnl\nl
\newcommand{\nonl}{\renewcommand{\nl}{\let\nl\oldnl}}
\let\oldthealgocf\thealgocf
\newcommand{\swamy}[1]{\begingroup {#1} \endgroup}
\title{Multidimensional Budget-Feasible Mechanism Design} 
\author{
    Rian Neogi\thanks{\texttt{\{rneogi,kpashkovich,cswamy\}@uwaterloo.ca}.
    Dept. of Combinatorics and Optimization, Univ. Waterloo, Waterloo, ON N2L 3G1.
    Supported in part by the NSERC Discovery grants of K. Pashkovich and C. Swamy.}
\and
\addtocounter{footnote}{-1}
Kanstantsin Pashkovich\footnotemark
\and
\addtocounter{footnote}{-1}
Chaitanya Swamy\footnotemark
}
\date{}
\begin{document}

\maketitle
\def\thepage{}
\thispagestyle{empty}

\bibalias{HajiaghayiKS18}{Hajiaghayi2018FrugalAD}
\bibalias{ArcherT07}{archer2007frugal}
\bibalias{ElkindGG07}{elkind2007frugality}
\bibalias{KarlinKT05}{karlin2005beyond}
\bibalias{LiptonMMS04}{Lipton2004OnAF}
\bibalias{CaragiannisKMPS16}{Caragiannis2016TheUF}

\begin{abstract}
In {\em budget-feasible mechanism design}, a buyer wishes to procure a set of items of
maximum value 
from self-interested rational players. We have a nonnegative valuation function 
$v:2^\gset\mapsto\R_+$, where $U$ is the set of all items, 
where $v(S)$ specifies the value obtained from set $S$ of items. 
The entirety of current work on budget-feasible mechanisms has focused on the
{\em single-dimensional} setting, wherein each player holds a {\em single} item $e$ and incurs
a {\em private cost} $c_e$ for supplying item $e$ (and each item is held by some player).

We introduce {\em multidimensional budget feasible mechanism design}: 
the universe $U$ is now partitioned into item-sets $\{\pset_i\}$ held by the different
players, and each player $i$ incurs a private cost $c_i(S_i)$ for supplying the set
$S_i\sse\pset_i$ of items. 
A {\em budget-feasible mechanism} is a mechanism (i.e., an algorithm and a payment scheme) 
that is {\em truthful}, i.e., where players are incentivized to report their true
costs, and where the total payment made to the players is at most some given budget $B$. 
The goal (as in the single-dimensional setting) is to devise a budget-feasible mechanism
that procures a set of items of large value. 

{\em We obtain the first approximation guarantees for multidimensional budget feasible 
mechanism design.} 

Our contributions are threefold. First, we prove an impossibility result showing that the
standard benchmark used in single-dimensional budget-feasible mechanism design, namely the
algorithmic optimum $\optalg$ is inadequate in that no budget-feasible mechanism can
achieve good approximation relative to this.
We identify that the chief underlying issue here is that there could be a monopolist, i.e.,
a single player who contributes a large fraction of $\optalg$, which prevents a
budget-feasible mechanism from obtaining good guarantees.
Second, we devise an alternate benchmark, $\optbench$, 
that allows for meaningful approximation guarantees, 
thereby yielding a metric for comparing mechanisms. 
Third, we devise budget-feasible mechanisms that achieve 
{\em constant-factor approximation guarantees} with respect to this benchmark for XOS
valuations. Our most general results pertain to XOS valuations and arbitrary cost
functions, where we obtain a universally budget-feasible mechanism, and a
budget-feasible-in-expectation mechanism that runs in polytime given a demand oracle. 
We also obtain polytime universally budget-feasible mechanisms for: 
(a) additive valuations and additive costs; and
(b) XOS valuations and superadditive cost functions, assuming access to a variant of
a demand oracle. 

Our guarantees for XOS valuations also yield an $O(\log k)$-approximation for subadditive
valuations (with respect to our benchmark), where $k$ is the number of players.
\end{abstract}

\newpage
\pagenumbering{arabic} \normalsize

\section{Introduction} \label{intro}
In {\em budget-feasible mechanism design}, a buyer wishes to procure a set of items of
maximum value 
from self-interested rational players. We have a ground set $\gset$ of items, a
{\em valuation function} $v:2^\gset\mapsto\R_+$ satisfying $v(\es)=0$, where 
$v(S)$ specifies the value obtained from set $S$ of items, and a budget $B$.
The {\em vast majority} of work on budget-feasible mechanisms has focused on the
{\em single-dimensional setting}, wherein each player holds a {\em single, distinct} 
element $e$ 
and incurs a {\em private cost} $c_e$ for supplying item $e$.
In order to incentivize players to reveal their true costs, the buyer needs to make
suitable payments to the players.   
The utility of a player is then equal to the (payment received by it) $-$ (cost incurred
by it). A mechanism, i.e., an algorithm and a payment scheme, is: 
(a) {\em truthful}, if each player maximizes her utility by revealing her true cost and
thereby has no incentive to misreport her cost; and 
(b) {\em individually-rational}, if the utility of every truthful player is nonnegative.
A {\em budget-feasible mechanism} is a truthful, individually rational (IR) mechanism 
where the total payment made to the players is at most the given budget $B$. 
The goal is to devise a budget-feasible mechanism that procures a set of items of large
value, where we measure the quality of the mechanism by comparing its value returned
against the {\em algorithmic optimum} 
$\OPTalg:=\OPTalg(v,B,c):=\max\,\{v(S): \sum_{e\in S}c_e\leq B,\ S\sse\gset\}$, which
denotes the maximum obtainable value if the costs were public information. 

Budget feasible mechanisms were introduced by Singer~\cite{Singer10} and have  been
extensively studied (see, e.g.,~\cite{BeiCGL12,Singer13,LeonardiMSZ17,BalkanskiGGST22} and
the references therein), but (as noted earlier) 
{\em almost exclusively} in the (above) single-dimensional setting. A few works 
consider a somewhat richer, albeit {\em still single-dimensional} setting called the
level-of-service (\los) setting, wherein a player can provide multiple units of
an item (or levels of service), incurring the same cost for each unit
supplied~\cite{ChanC14,AnariGN18,KlumperS22,AmanatidisKMST23}.  
This focus on single-dimensional settings is in stark contrast with
other prominent mechanism-design problem domains, such as social-welfare maximization and
profit-maximization in combinatorial auctions (see~\cite{LaviAGT,HartlineKAGT}), or
cost-sharing mechanism design~\cite{GeorgiouS19,DobzinskiO17}, which considers richer
multidimensional settings involving more-expressive players (notwithstanding the
significant challenges that arise in multidimensional mechanism design).

\vspace*{-1ex}
\subsection{Our contributions and results}
We {\em initiate the study of multidimensional budget feasible mechanism design}: each 
player $i$ now owns a publicly-known set $\pset_i$ of items, where these player-sets
partition $\gset$ 
and incurs cost $c_i(S_i)$ for supplying the set $S_i\sse\pset_i$ of items, 
where $c_i:2^{\pset_i}\mapsto\R_+$ is $i$'s {\em private cost function}.
(Throughout, we use $i$ to index the players, and $e$ to index items in $\gset$.)
We assume only that the $c_i$s are monotone ($c_i(S)\leq c_i(T)$ for all
$S,T\sse\pset_i$), and normalized ($c_i(\es)=0$). 
The goal (as before) is to design a budget-feasible mechanism that maximizes the value of 
the procured set of items.
\footnote{Note that this model is rich enough to  
to capture even a {\em multidimensional \los} setting, wherein $c_i(S_i)$ is a monotone 
(but not necessarily linear) function of $|S_i|$.}
Let $\C_i$ denote the set of all possible player-$i$ private cost functions, and
$\C:=\Pi_i\C_i$. 
Let $n=|\gset|$.
Note that $v$, $B$, and $\{(\pset_i,\C_i)\}_i$ are public knowledge.

{\em We obtain the first approximation guarantees for multidimensional budget feasible 
mechanism design.} 

Multidimensional mechanism design is in general substantially more challenging than
single-dimensional mechanism design, mostly due to the fact that there is no simple and
conveniently-leverageable characterization of truthfulness that is analogous
to Myerson's monotonicity-based characterization of truthfulness in single-dimensional
settings.
With budget-feasible mechanism design, where payments also feature in the constraints,
this challenge also manifests itself in another distinct (but related) way, 
namely, that there is a complicated relationship between truthfulness-inducing payments
and the underlying algorithm 
\footnote{
In the multidimensional setting, there are characterizations of truthfulness based on
cycle monotonicity and weak-monotonicity, and payments can be obtained by computing
shortest paths in a certain graph, but these have been difficult to leverage.}
compared to the single-dimensional setting, where payments
correspond to threshold values.

Furthermore, multidimensional budget-feasible mechanism design poses an (orthogonal)
modeling challenge, namely,   
the benchmark used in the single-dimensional setting, the algorithmic optimum, which
now translates to   
$\OPTalg(v,B,c):=\max\,\{v(S): \sum_i c(S\cap\pset_i)\leq B,\ S\sse\gset\}$, 
turns out to be quite inadequate: 
{\em no budget-feasible mechanism can achieve any non-trivial approximation with respect
to $\OPTalg$}, even for additive valuations and additive cost functions. 
This impossibility result (see Theorem~\ref{intro-thm}) extends to 
{\em budget-feasible-in-expectation} mechanisms, 
which are {\em truthful-in-expectation} mechanisms---i.e., truthful reporting maximizes
the expected utility of each player---
where the expected total payment is at most the budget $B$ (and IR holds with probability
$1$).  
It also extends to the {\em Bayesian} setting, wherein there is a prior distribution
from which players' types are drawn, and we compare the expected value of the mechanism
and $\E{\OPTalg}$.

We say that a budget-feasible mechanism achieves approximation ratio $\al$
with respect to a benchmark $\bench$, where $\al\geq 1$,
if it always obtains value, or expected value in case of a randomized mechanism, at least 
$\bench(v,B,c)/\al$.
We sometimes consider a natural ``no-overbidding'' assumption, 
which states that every item, by itself, constitutes a feasible solution:
formally, for every $i$, every $c\in\C_i$, and every $e\in\pset_i$, we have   
$c_i(\{e\})\leq B$. 
This is without loss of generality in the single-dimensional (single-item and \los)
settings (as dropping a player $i$ with $c_i>B$ does not impact truthfulness or
approximation), but does impose a restriction on player cost functions in the
multidimensional setting.
\footnote{
In particular, with overbidding, one can capture a richer space of private 
inputs, namely the ``unknown'' setting where there is a {\em private} subset
$\priv_i\sse\pset_i$ of items that a player $i$ can provide; with overbidding, observe
that this can be easily encoded by player $i$ reporting (a monotone) $c_i$ with
$c_i(\{e\})>B$ for all $e\in\pset_i-\priv_i$.
}

\begin{theorem}[{\bf Impossibility results:} Informal versions of
    Theorems~\ref{det-overbid}--\ref{detlb}]  
\label{intro-thm} \label{infthm}
(Recall that $n=|U|$.)
\begin{enumerate}[label=(\alph*), topsep=0ex, noitemsep, leftmargin=*]
\item 
No deterministic budget-feasible mechanism can achieve any bounded approximation 
ratio relative to $\OPTalg(v,B,c)$.

\item 
No budget-feasible-in-expectation mechanism 
can achieve approximation ratio better than $n$ relative to $\OPTalg(v,B,c)$.  

\item (\cite{ChanC14}) Under no-overbidding, even for the single-dimensional \los
setting, no deterministic budget-feasible
mechanism can achieve approximation ratio better than $n$, and no
budget-feasible-in-expectation mechanism can achieve approximation ratio better than 
$O(\log n)$, relative to $\OPTalg(v,B,c)$.
\end{enumerate}
These impossibility results hold even when the valuation $v$ and the
$c_i$s are additive functions. 
The lower bounds for budget-feasible-in-expectation mechanisms extend to Bayesian 
budget-feasible mechanisms. 
\end{theorem}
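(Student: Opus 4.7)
My approach is to restrict attention to instances with a single ``monopolist'' player holding all $n$ items, additive valuation $v_e=1$, and additive costs, and to extract a single-parameter subfamily of cost reports along which Myerson's characterization can be combined with the budget constraint. For part~(a), I consider the family $c^t := t\mathbf{1}$ for $t\ge 0$. Along this slice the mechanism specializes to a single-dimensional multi-unit procurement auction with allocation $|S_t|$ and payment $p_t$; truthfulness then forces $|S_t|$ to be non-increasing in $t$ and $p_t = t|S_t| + \int_t^\infty |S_s|\,ds$. Budget feasibility at $t=0$ yields $\int_0^\infty |S_t|\,dt = p_0 \le B$. On the other hand, $\OPTalg(c^t) = \min\{n,\lfloor B/t\rfloor\}$, so an $\alpha$-approximation requires $|S_t|\ge \OPTalg(c^t)/\alpha$ for every $t$; integrating the right-hand side gives $\int_0^\infty\OPTalg(c^t)\,dt = B\cdot H_n$, hence $\alpha \ge H_n = \Omega(\log n)$, which is unbounded in $n$ and proves (a).

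For part~(b), applying the same argument to the expected allocation and payment of a truthful-in-expectation mechanism already yields an $\Omega(\log n)$ lower bound. To reach the claimed $n$ lower bound, I will pass to a richer adversarial family of cost functions that exploits overbidding: for each $T\sse\gset$, let $c^{(T)}_e=0$ for $e\in T$ and $c^{(T)}_e=M$ (with $M\gg B$) for $e\notin T$, so that individual rationality essentially forces $S(c^{(T)})\sse T$ almost surely while $\OPTalg(c^{(T)})=|T|$. I will then choose a suitable distribution over $T$ and invoke Yao's principle, together with the truthfulness and budget-in-expectation constraints, to upper-bound the expected mechanism value relative to $\E{\OPTalg}$ by the required factor. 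Controlling the truthfulness-induced coupling between the mechanism's responses for different $T$'s under a single budget-in-expectation is the step I anticipate being the most delicate, since in this overbidding regime the standard Myerson-integral bookkeeping no longer tightly captures how the mechanism must trade off expected allocations across incomparable $T$'s.

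Part~(c) is cited directly from Chan and Chen~\cite{ChanC14}. For the Bayesian extension, each lower-bound construction already uses an explicit (finite or one-parameter continuous) family of cost vectors, which I can equip with a natural distribution that will serve as the Bayesian prior. Applying Yao's principle, any Bayesian budget-feasible-in-expectation mechanism achieves, against this prior, expected value at most $\E{\OPTalg}/\alpha$ whenever it attains approximation ratio $\alpha$, and therefore inherits the same lower bounds that we derive in the worst-case analysis.
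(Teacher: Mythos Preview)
Your argument for part~(a) is correct but proves a substantially weaker statement than the paper's formal version (Theorem~\ref{det-overbid}). With uniform valuations $v_e=1$, the best lower bound achievable on an $n$-item instance is $n$ (returning a single item already gives an $n$-approximation), and your Myerson-integral argument extracts only $H_n=\Theta(\log n)$. More tellingly, your costs $c^t=t\mathbf{1}$ violate no-overbidding only for $t>B$, where $\OPTalg(c^t)=0$ and the approximation constraint is vacuous; so your argument never actually exploits overbidding, and what you have is essentially a continuous version of the \emph{no-overbidding} $\Omega(\log n)$ lower bound from part~(c) --- not the separation that part~(a) is meant to provide. The paper's construction for~(a) uses two items with values $\alpha$ and $1$ and two cost vectors $(B,0)$ and $(B{+}1,B)$; the second vector (which overbids on the high-value item) forces the mechanism to output the low-value item and pay $B$, and truthfulness then prevents it from ever outputting the high-value item under the first vector. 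This yields approximation ratio $\alpha$ for arbitrary $\alpha$ on a \emph{fixed} two-item instance, which is qualitatively stronger than any bound that grows with $n$.

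For part~(b) you explicitly leave the key step open, so this is a gap rather than an alternative proof. The paper's route is quite different from your $c^{(T)}$-plus-Yao sketch: it again leverages overbidding, but via a \emph{chain} of $n$ cost vectors $c^{(\ell)}$ (items $1,\dots,\ell{-}1$ priced prohibitively, item $\ell$ priced $B$, items $\ell{+}1,\dots,n$ free) together with geometrically-decreasing valuations $v(e)=\gamma^{n-e}$. Telescoping the truthfulness inequalities $p_\ell-p_{\ell+1}\ge b_{\ell,\ell}-b_{\ell,\ell+1}$ and using $p_1\le B$ bounds the sum of the ``pick item $\ell$ on input $c^{(\ell)}$'' probabilities by $1$, and the geometric weights then convert this into an $n$-factor loss on some $c^{(\ell)}$. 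The linear ordering of the $c^{(\ell)}$'s and the non-uniform valuation are what make the bookkeeping tractable; your family $\{c^{(T)}\}_{T\sse\gset}$ has no such structure, which is precisely the ``coupling across incomparable $T$'s'' difficulty you identified. Your treatment of part~(c) and of the Bayesian extension via Yao's principle matches the paper.
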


(We include part (c) above mainly for the sake of comparison (when we do not assume
no-overbidding) and completeness. 
\footnote{In~\cite{AmanatidisKMST23}, this lower bound is bypassed in the \los setting by
making a strong ``all-in'' assumption. Under such a strong assumption, one can obtain
$O(1)$-approximation relative to $\optalg$ even in our multidimensional settings; see
Remark~\ref{allin}.}) 
We remark that the lower bounds mentioned above are {\em tight} for XOS valuations (see
Section~\ref{impos-tight}); this 
follows as a by-product of some of our results, and we elaborate upon this later.

\vspace*{-2ex}
\paragraph{Suitable benchmark.}
In light of the above impossibility results, a pertinent question that arises is: 
{\em can one come up with alternative suitable benchmarks that enable one to 
circumvent the above impossibility result and obtain meaningful approximation guarantees}?   
One of our contributions is to define a novel benchmark that answers this question
{\em affirmatively}. 

Before delving into our benchmark, we discuss some considerations to keep in mind when
coming up with an appropriate benchmark.
We first note that the natural idea 
of comparing, for every input $(v,B,c)$, against the maximum value obtainable by a
budget-feasible mechanism for that input, fails.   
This is because for every input $(v,B,c)$, one can always tailor a budget-feasible
mechanism that obtains value $\optalg(v,B,c)$ on this particular input; 
\footnote{Let $S^*\sse\gset$ be such that $v(S^*)=\optalg(v,B,c)$ and
$S^*_i=S^*\cap\pset_i$. Consider the mechanism that on input $d$, returns
$\bigcup_i T_i$, where $T_i=S^*_i$ if $d(S^*_i)\leq c(S^*_i)$ and $\es$ otherwise, and
pays player $i$, $c(S^*_i)$, if $d(S^*_i)\leq c(S^*_i)$ and $0$ otherwise.} 
so this maximum-value benchmark 
coincides with $\optalg$. 
Coupled with the above impossibility results, this also implies
that there is no (pointwise) ``best'' budget-feasible mechanism: for every budget-feasible 
mechanism $\mech$, we can find some input $(v,B,c)$ on which $\mech$ performs rather
poorly (obtaining value at most $\optalg(v,B,c)/n$), but for which some other
budget-feasible mechanism $\mech'$ fares much better (obtaining value $\optalg(v,B,c)$).
A further consequence of this 
to appreciate is that if we are to circumvent the
impossibility results and come up with meaningful approximation guarantees relative to
some benchmark $\bench$, then it must be that $\bench(v,B,c)\ll\optalg(v,B,c)$ for some
input $(v,B,c)$, and therefore $\bench$ cannot be pointwise-close to an upper bound on the
maximum value obtainable from a budget-feasible mechanism; in fact, on some inputs, it must
{\em necessarily} be {significantly smaller} than the maximum value achievable for that
input by budget-feasible mechanisms. Consequently, if one seeks a suitable benchmark, then 
one must necessarily foresake the a priori natural property that the benchmark be 
(close to) an upper bound on the maximum value achievable in the desired space of
solutions. 
Thus, some care and insight is needed to define a suitable benchmark. 

Roughly speaking, the impossibility results stem from the fact that there could be a
single player $i$ responsible for a large fraction of the total value, who can
then act as a monopolist: a budget-feasible mechanism can be forced to spend the entire
budget on player $i$, even if only one item from $\pset_i$ is chosen, which leads to a
poor approximation factor. 
There are two approaches for bypassing this issue: 
(1) restrict attention to inputs where no player is a monopolist (and still compare
against $\OPTalg$); or 
(2) come up with a new benchmark whose definition captures that there is no monopolist.  
We adopt the latter approach, as it has the benefit that it yields guarantees for 
{\em all} inputs. 
Given a vector $c=(c_1,\ldots,c_k)$ of player cost functions,
define $c(S):=\sum_i c_i(S\cap\pset_i)$.
Define 
\[
\swamy{\optbench(v,B,c) \ :=\
\max_{S\sse\gset}\,\Bigl\{\min_{i\in[k]}v(S-\pset_i):\ c(S)\leq B\Bigr\}.}
\]
This benchmark safeguards against a monopolist because we consider the value after
excluding the contribution from any single player; 
so its objective function 
\swamy{$\min_i v(S-\pset_i)$}
degrades in the presence of a monopolist.
\footnote{
\swamy{For subadditive $v$, 
the objective function of $\optbench$ is always larger than the alternate objective
$v(S)-\max_i v(S\cap\pset_i)$.  
In the alternate objective, note that it is important to remove the {\em maximum}
contribution of any player towards $v(S)$; 
if we instead subtract the {\em average}
contribution of the players and consider the objective 
$v(S)-\bigl(\sum_i v(S\cap\pset_i)\bigr)/k'$,
where $k'=|\{i: S\cap\pset_i\neq\es\}|$, then the benchmark again becomes too strong. 
This is because one can pad an instance with a monopolist with dummy players that
contribute little value and incur $0$ cost, thereby increasing $k'$ and driving the
subtracted term to $0$.}}
An appealing aspect of $\optbench$ is that it is parameter-free. 
Observe also that \swamy{for subadditive $v$ (i.e., $v(S\cup T)\leq v(S)+v(T)$ for all
$S,T\sse\gset$)}, on inputs where there is no monopolist, 
$\optbench$ 
is close to $\OPTalg$: if we have some $S^*\sse\gset$ with $v(S^*)=\OPTalg(v,B,c)$
satisfying $v(S^*\cap\pset_i)\leq\ve\cdot v(S^*)$ for all $i\in[k]$, then
$\optbench\geq(1-\ve)\optalg$. 

\swamy{We discuss this benchmark further in Section~\ref{relwork}, relating it also to
other mechanism-design domains where the need for considering suitable banchmarks
arises.} 

\vspace*{-1ex}
\paragraph{Approximation results.}
We devise budget-feasible mechanisms that achieve 
{\em constant-factor approximation guarantees with respect to the above 
$\optbench$ benchmark}.   
Unless otherwise stated, in the sequel, 
when we refer to the approximation factor of a mechanism, it is
always with respect to the $\optbench$ benchmark, and without assuming no-overbidding. (As  
noted earlier, overbidding provides increased modeling power, allowing us to also model
settings where player $i$'s set of items is private.) 

Note that since budget-feasibility imposes a
condition on the payments of the mechanism, 
even the {\em existence} of budget-feasible mechanisms, 
{\em bereft of computational concerns},
that achieve a ``good'' approximation with respect to the above benchmark is not
guaranteed.
\footnote{In fact, in the multidimensional setting, the existence of even just a truthful
(but not necessarily budget-feasible) mechanism 
that achieves a good
approximation relative to $\OPTalg$ is not guaranteed; 
see Theorem~\ref{notruthful}. This in contrast to the single-dimensional
setting, 
where the algorithm that returns an optimal solution is monotone and hence truthfully
implementable.}  
This situation applies also to the single-dimensional setting (where we seek
approximation with respect to $\optalg$), and 
therefore even the development of good budget-feasible mechanisms 
{\em setting aside} computational considerations has been the focus of a significant body
of work. For instance, Dobzinski et al.~\cite{DobzinskiPS11} explicitly raised the
question of whether there exists an $O(1)$-approximation budget-feasible mechanism for
subadditive valuations, which was answered affirmatively by Bei et al.~\cite{BeiCGL12},
albeit in a very non-constructive fashion, and only very recently an explicit, but
non-polytime, mechanism was obtained~\cite{NeogiPS24}; a polytime mechanism for subadditive
valuations remains elusive. 
Similarly, for multidimensional cost-sharing mechanism design, the work
of~\cite{DobzinskiO17} focuses on proving the existence of good cost-sharing mechanisms,
regardless of computational considerations. 

Our main results are for {\em XOS valuations} (and subclasses), and are summarized in
Table~\ref{approxtable}. 
A function $v$ is XOS if it is the maximum of a collection of additive functions.  

\begin{table}[ht!]
\small
\begin{tabular}{l|l|c|l|l} \hline
Valuations & Costs & 
\rule[-6pt]{0pt}{19pt}$O(1)$-approximation wrt. & Mechanism type & Oracle \\ \hline
\multirow{3}{*}[-10pt]{XOS} & \multirow{2}{*}[-2pt]{Arbitrary} & 
\rule[-5pt]{0pt}{18pt}$\optbench$ & BF-in-expectation (Theorem~\ref{xosdemdthm}) & 
Demand \\ \cline{3-5}

& & 
\rule[-6pt]{0pt}{19pt}{$\optbench$} &
Universally BF (Theorem~\ref{xosgenthm-overbid}) & * \\ \cline{2-5}

& Superadditive &
\rule[-5pt]{0pt}{18pt}$\optbench$ &
Universally BF (Theorem~\ref{xossupaddpolythm-overbid}) & 
\makecell*[l]{Constrained \\ demand} \\ \hline

{Additive} & 
Additive &
\rule[-5pt]{0pt}{18pt}$\optbench$ &
Universally BF (Theorem~\ref{additivethm-overbid}) & -- \\ \hline

Submodular & Arbitrary & 
\rule[-6pt]{0pt}{19pt}$\optbench-O\bigl(\max_i v(G_i)\bigr)$ &
Universally BF (Theorem~\ref{submodthm}) & Demand \\ \hline
\end{tabular}
\captionsetup{font=small}
\caption{Summary of our results for XOS valuations, and subclasses. 
BF stands for budget-feasible. 
BF-in-expectation denotes truthful-in-expectation with expected total payment at most $B$,
where IR holds with probability 1; universally BF means that truthfulness, IR, 
and ``total payment $\leq B$'', all hold with probability $1$.
All our results hold without assuming no-overbidding.
The value obtained is at least $\frac{1}{O(1)}$ times the quantity listed under ``$O(1)$
approximation wrt.''
Our mechanisms run in polytime given access to the specified oracle; the entry marked *
requires a more involved oracle. For additive valuations and costs, the
valuation and costs are explicitly given.
}   
\label{approxtable}
\end{table}

Our mechanisms access the valuation function $v$ via a suitably generalized form of demand 
oracle, and leverage the VCG mechanism (see Theorem~\ref{vcgthm}). 
(Recall that 
given $c\in\C:=\Pi_i\C_i$, we define $c(S):=\sum_i c_i(S\cap\pset_i)$ for
$S\sse\gset$.)
In the multidimensional setting, since player-cost functions are not 
necessarily additive, it is natural to extend the notion of a demand oracle to consider
general price functions: 
a demand oracle for class $\C=\Pi_i\C_i$, takes $q\in\C$, $\kp\in\R_+$ as input, 
and returns $\argmax_{S\sse\gset}\,\bigl(v(S)-\kp\cdot q(S)\bigr)$. 
Some of our mechanisms utilize a constrained demand oracle, where we 
are also 
given a cap $\targ$, 
and the oracle returns $\argmax\,\{v(S)-\kp\cdot c(S): S\sse\gset,\ v(S)\leq\targ\}$.
(We remark that we can also work with an oracle that returns
$\argmax_{S\sse\gset}\bigl(\min(v(S),\targ)-\kp\cdot c(S)\bigr)$.) 

Our most general results apply to XOS valuations and 
{\em arbitrary} (monotone, normalized) cost functions. 
We devise two types of mechanisms that both achieve $O(1)$-approximation (with respect to
$\optbench$). 
The first mechanism is {\em budget-feasible-in-expectation} 
and runs in polytime given a demand oracle (Theorem~\ref{xosdemdthm}). 
The second mechanism (Theorem~\ref{xosgenthm-overbid}) satisfies a much stronger
mechanism-design guarantee, {\em universal budget-feasibility}---
wherein truthfulness, IR, and
the budget constraint on total payment, hold with probability $1$---
but requires stronger oracle access. Note that this 
also demonstrates the existence of a good universally budget-feasible mechanism, even in
the most general setting of arbitrary cost functions. 

We next design {\em polytime} universally budget-feasible mechanisms for various special 
cases of interest (that achieve $O(1)$ approximation). 
For XOS valuations and {\em superadditive} cost functions, our mechanisms
run in polytime given a constrained demand oracle (Theorems~\ref{xossupaddpolythm}
and~\ref{xossupaddpolythm-overbid}).  

For {\em additive valuations and additive cost functions}, a constrained demand oracle 
corresponds to solving a knapsack problem, and we show that, 
we can (roughly speaking) solve related (scaled and rounded) knapsack problems optimally  
and thereby obtain polytime mechanisms (Theorems~\ref{additivethm}
and~\ref{additivethm-overbid}). (We do not require any oracle access here since the input
now explicitly specifies the additive valuation and costs.) 

In presenting our universally budget-feasible mechanisms, we 
present them first in the simpler setting where we assume no-overbidding
(Section~\ref{xos-bfuni}), as this helps to illustrate some of the main ideas, and the
arguments become much simpler. We then discuss how to drop this assumption in
Section~\ref{overbid}.     
(Recall that all the results in Table~\ref{approxtable} hold without assuming 
no-overbidding.)  

For the subclass of monotone {\em submodular valuations} and arbitrary cost functions, we
design a polytime universally budget-feasible mechanism using demand oracles, 
with a weaker approximation guarantee 
(Theorem~\ref{submodthm}). Interestingly, this mechanism is not VCG-based: we
do not perform a global VCG computation, but only perform VCG computations locally, for 
individual players (see Algorithm~\ref{submodalg-gencost}).  

Our mechanisms for XOS mechanisms also yield {\em $O(\log k)$-approximation guarantees for
subadditive valuations} (Section~\ref{subadditive}). Recall that $k$ is the number of
players. 
This essentially capitalizes on the idea that an XOS function can
pointwise approximate a subadditive function. However: 
(a) we cannot use this as a black-box since a pointwise approximation of $v$ by $\tv$ does
not imply that the benchmarks $\optbench(v,B,c)$ and $\optbench(\tv,B,c)$ are close to
each other; 
(b) the best pointwise-approximation of $v$ by an XOS function can incur an
$O(\log n)$-factor loss~\cite{BeiCGL12,BhawalkarR11}, as $v$ is defined over a universe of
$n$ items. 
Instead, we obtain our results by working with a pointwise $O(\log k)$-approximation of
$v$ that satisfies a property weaker than XOS, 
and utilizing this internally within our mechanisms in a suitable fashion.

\medskip
We obtain two other types of results as a byproduct of our techniques. 
First, we also obtain tight 
approximation guarantees using demand oracles for the algorithmic problem of approximating  
$\optalg(v,B,c)$ in polynomial time (Section~\ref{algresults}). 
Theorem~\ref{subaddalgthm} shows that when the $c_i$s are superadditive, 
we can obtain a polytime $(2+\ve)$-approximation, for any constant $\ve>0$, even for  
subadditive valuations (i.e., we have $v(S\cup T)\leq v(S)+v(T)$ for all $S,T\sse\gset$),
which is {\em tight}, even for additive cost functions~\cite{BadanidiyuruDO19}.

Second, we can show that the lower bounds mentioned in our impossibility result
(Theorem~\ref{intro-thm}) are {\em tight} for XOS valuations (see
Section~\ref{impos-tight}).  
For instance, we obtain randomized budget-feasible mechanisms that achieve 
$O(\log n)$-approximation relative to $\optalg(v,B,c)$ assuming no-overbidding
(Theorem~\ref{xos-optalg}). 
This is a consequence of the fact (see Section~\ref{overview}) that our mechanisms work
with an estimate $V_1$ of our benchmark, and one can isolate $\optalg$ within a
multiplicative factor of $n$ (in fact, within an $O(\max_i|G_i|)$-factor).  
Notably, this also implies substantially more-general results even for the 
single-dimensional \los setting mentioned at the start of the Introduction, which 
is the very special case where $c_i(S)=c_i|S|$ for $S\sse\pset_i$. For this setup (where
no-overbidding is without loss of generality), prior work yields approximation factors of
$O(\log n)$~\cite{ChanC14} and $O(\max_i|\pset_i|)$~\cite{AmanatidisKMST23}, assuming that 
$v$ is additive across players and concave for a single player. 
We obtain similar guarantees, but 
{\em we do not need additivity} across players; they apply to {\em any XOS valuation},
and to the much-richer 
{\em multidimensional \los setting}, where $c_i(S)$ is a monotone function (as opposed to
a linear function of $|S|$). 

\medskip
Our work opens up the area of multidimensional budget-feasible mechanism design as an
exciting well-motivated research direction. 
Our results show that, notwithstanding the substantial challenges faced in multidimensional 
mechanism design, one can obtain interesting guarantees in this domain. 
While we make substantial progress in this area, a variety of interesting questions
remain, 
and we hope that our work will stimulate further work in this area.

\subsection{Technical overview} \label{overview} \label{techoverview}
We give an overview of our techniques, highlighting the salient ideas underlying our
mechanisms and their analysis.
We use $v(e)$, $c_i(e)$, and $c(e)$ to denote the respective function value for the
singleton set $\{e\}$.
Broadly speaking, our mechanisms for XOS valuations capitalize on two main ideas. 
(A) We can obtain a good estimate of the target value to aim for via random partitioning;  
and
(B) we can leverage such an estimate to select a suitable subset via a suitable
demand-oracle computation, which amounts to a VCG computation.
These ideas have been shown to be useful for single-dimensional
budget-feasible mechanism design~\cite{BeiCGL12,NeogiPS24}, but the multidimensional
setting inevitably leads to various challenges, 
including a rather tricky 
technical difficulty that arises 
when we aim for guarantees without assuming no-overbidding. 

\begin{enumerate}[label={\bf (\Alph*)}, topsep=0.2ex, itemsep=0.1ex, leftmargin=0pt, widest=b, itemindent=*]
\item {\bf Random partitioning.}\ The idea here is to estimate the target value from a
suitable random subset of $\gset$, which is then discarded. 
In the single-dimensional setting, this has been utilized in prior work (see,
e.g.,~\cite{BeiCGL12,AmanatidisKS19,NeogiPS24}), 
where one randomly samples each
element in $\gset$ with probability $0.5$, and one can argue that this well-estimates
$\optalg$. In the single-dimensional setting, sampling players or elements amounts to the
same thing, 
but when players hold multiple items, the two versions differ, 
and one needs to be careful in selecting the right sampling procedure. 
To ensure truthfulness, we do not want any player $i$'s item-set 
$\pset_i$ to be straddled by the sample as then a player $i$ whose items are still in
``play'' after the sample is discarded can influence the computation on the sample.
Therefore, we pick a random subset $\pl_1$ of players by selecting each
player with probability $0.5$, and use the items owned by these players
$\gset_1:=\bigcup_{i\in\pl_1}\pset_i$ to estimate the target value to aim for 
and discard $\gset_1$. However, we
hit an immediate snag, namely that if a single player contributes a large fraction of
$\optalg$ (i.e., is a monopolist), then such a sample will be quite noisy and not
yield a good estimate.  

As shown by our impossibility result, this is a {\em real} issue that precludes good
approximation with respect to $\optalg$ (and not simply an artifact of random sampling). 
Crucially (and conveniently) however, when we move to the $\optbench$ benchmark, 
this no longer presents a difficulty, as this benchmark specifically
safeguards against a monopolist. 

Specifically, we can argue (see Lemma~\ref{rpartition} and
Corollary~\ref{rsample}) that this random sampling: 
(i) yields an $O(1)$-factor estimate $V_1$ of $\optbench$, i.e., $V_1=\Omega(\optbench)$
with constant probability; and 
(ii) the set $\gset_2=\gset-\gset_1$ also contains a good approximation to $\optbench$.
Moreover, we show that for XOS valuations, it also holds that, for any $j=1,2$, the
optimal value of the LP-relaxation of the algorithmic problem on $\gset_j$
yields a good estimate for 
$\optbench$ (see Lemma~\ref{rsample-lp}). 

It is worth noting that the definition of our benchmark $\optbench$ 
to suitably account for a monopolist, is key here to showing that random sampling 
remains an effective tool that can be leveraged in the multidimensional setting.

\item {\bf Leveraging the estimate \boldmath $V_1$ via suitable demand oracles.}\  
Given the estimate $V_1$, we show how to exploit the VCG mechanism---
the prototypical
truthful mechanism  
for social-welfare maximization---in a careful fashion via suitable demand oracles.
One insight to emerge from the work of~\cite{NeogiPS24} on the single-dimensional setting 
is that computing  
$S^*=\argmax_{S\sse\gset_2}\,\bigl\{v(S)-\frac{\ld V_1}{B}\cdot c(S)\bigr\}$
via a demand oracle, yields suitable payments. The argument therein is based on the
monotonicity condition that characterizes truthfulness in single-dimensional settings,
where payments correspond to threshold values, and they show that with an XOS valuation
$v$, if we prune $S^*$ to a set $T$ with $v(T)\leq \ld V_1$, then the payments 
satisfy the budget constraint.  

The above demand-oracle computation is a VCG computation: it amounts to minimizing an
affine function of players' costs. This is promising, since it implies that there are
payments that can be combined with the algorithm to obtain a truthful mechanism (see
Theorem~\ref{vcgthm} and \eqref{vcgpayment}). However, these payments may not satisfy
the budget constraint, even in the single-dimensional setting, which is
why~\cite{NeogiPS24} need the postprocessing pruning step.

The pruning operation is however quite problematic and no longer works in the
multidimensional setting, even if we are careful and select or discard the entire
set $S^*\cap\pset_i$ of a player $i$. The crux of the issue is that, 
since players own multiple items, there is a much-richer space of possibilities in terms 
of how a player's cost function can influence $S^*$ (and hence, the final outcome):
a player who gets discarded by pruning can lie so that a different subset
from $\pset_i$ is included in $S^*$ and she is no longer discarded, and the VCG payment 
then provides her with positive utility. 
\footnote{In contrast, in the one-item-per-player setting,
the demand-set computation is immune to a player $i\in S^*$ in the
sense that $i$ cannot lie and cause a different demand-set to be computed that still
contains $i$.} 
This makes pruning $S^*$ and discarding items and/or players quite problematic. Indeed,
in general, composing procedures (e.g., VCG + postprocessing) while maintaining
truthfulness is quite challenging in multidimensional settings, 
due to the fact that one has much less control on how a player's
reported cost can affect the output of a procedure, and hence the overall output.

We avoid this issue altogether by eliminating pruning. Instead, we use a 
{\em constrained demand oracle} that only considers sets with $v(S)\leq\ld V_1$; that is,
we compute 
$S^*=\argmax_{S\sse\gset_2}\,\bigl\{v(S)-\frac{\ld V_1}{B}\cdot c(S):\ v(S)\leq\ld V_1\bigr\}$.
This still corresponds to a VCG computation, and we can again use VCG to infer payments
that ensure truthfulness. 
Also, when $v$ is XOS, due to the upper bound on $v(S^*)$, budget-feasibility easily follows.

The proof of approximation guarantee requires one additional idea to
show that $v(S^*)=\Omega(\optbench)$. Due to our random sampling, we know
that, with constant probability, $\gset_2$ contains some set $T^*_2$ with 
$c(T^*_2)\leq B$, and $v(T^*_2)=\Omega(\optbench)$; for instance, the maximum-value set in
$\gset_2$ with cost at most $B$ satisfies this. However, we may have 
$v(T^*_2)>\ld V_1$, and 
so it is unclear how to utilize this to lower bound $v(S^*)$.
The key is to argue that we can extract a suitable subset of $T^*_2$ that can act as a 
witness for lower bounding $v(S^*)-\frac{\ld V_1}{B}\cdot c(S^*)$: 
we show that we can always find $T\sse T^*_2$ such that $v(T)$ is roughly $\ld V_1$, and for which
$c(T)$ is bounded away from $B$, say is at most $B/2$ (see Lemmas~\ref{bredn-gencost}
and~\ref{bredn-supaddcost}). This step incurs an additive loss bounded by
$O\bigl(\max_{e\in T^*_2}v(e)\bigr)$ for superadditive cost functions, and 
$O\bigl(\max_i v(T^*_2\cap\pset_i)\bigr)$ for general cost functions. 
With superadditive costs, this loss is not a problem 
{\em when we assume no-overbidding}, since  we can 
``recover'' this by returning the maximum-value element in $\gset$ with some constant
probability. Without no-overbidding however, this poses a rather tricky issue, and one 
needs to come up with novel ideas to help offset this loss (as discussed below). 
\end{enumerate}

Putting everything together, with no-overbidding, our mechanisms for XOS valuations consist
essentially of 
two steps: (1) random partitioning to estimate $\optbench$;  and
(2) using a constrained demand-oracle computation to obtain $S^*$, which is tailored to 
ensure that the corresponding VCG payments yield a budget-feasible mechanism. The chief
component in the analysis, and the differences in the various mechanisms, lie in the proof
of approximation (since, as noted above, VCG payments easily ensure truthfulness and
satisfy the budget constraint). 

The above template yields universally budget-feasible mechanisms using a constrained
demand oracle. We also show that using a (standard) demand oracle, one can obtain a
budget-feasible-in-expectation mechanism (Algorithm~\ref{xosalg-expgencost} in
Section~\ref{xos-bfexp}). Here, we consider an 
{\em LP-relaxation for a constrained-demand oracle} \eqref{cdlp}, which we argue can be
solved efficiently using a demand oracle. Viewing the LP-solution as a distribution, and
again using VCG payments, we obtain a budget-feasible-in-expectation mechanism.

An important takeaway from our constructions and techniques is that the 
{\em VCG mechanism can be suitably adapted and controlled} to obtain payments that satisfy
the budget condition.

\vspace*{-1ex}
\paragraph{Dropping no-overbidding (Section~\ref{overbid}).} 
As discussed above, one incurs an additive loss---
$O\bigl(\max_{e\in T^*_2}v(e)\bigr)$ with superadditive costs, and a worse loss of
$O\bigl(\max_{i\in\pl_2}v(T^*_2\cap\pset_i)\bigr)$ with general costs---
in arguing the existence of a good set $T\sse T^*_2$ with $v(T)\leq\ld V_1$, 
$c(T)\leq B/2$. The question that arises is: {\em how do we offset this loss}?

While this may seem like a benign issue, 
it actually presents a serious obstacle. 
Let $e^*=\argmax\,\{v(e): e\in\gset,\ c(e)\leq B\}$ and $\vemax=v(e^*)$. 
Also, let $\opt_i=\max_{S\sse\pset_i}\,\{v(S): c_i(S)\leq B\}$ for $i\in\pl$, and 
$\opt^*=\max_{i\in\pl}\opt_i$; clearly $\opt^*\geq\vemax$. The additive 
loss incurred is bounded by $\opt^*$, so a natural thought would be to return the set
corresponding to $\opt^*$ with some probability; or, with superadditive costs, to return
$e^*$ with some probability.
However, observe that, even for superadditive costs, and unlike the situation with
no-overbidding, returning $e^*$ does not yield a truthful mechanism: the player $i$ owning
$e^*$ can benefit by {\em overbidding} on $e^*$ and cause the mechanism to choose a
lower-cost element from $\pset_i$. The same issue arises with returning the set
corresponding to $\opt^*$.
More tellingly, 
and this points to the trickiness of this question,
the impossibility result in Theorem~\ref{intro-thm} (a) implies that  
{\em no deterministic truthful mechanism can obtain value $\Omega(\vemax)$}, 
as this would yield an $O(n)$-approximation relative to $\optalg$. 

To circumvent this issue, 
we develop novel (non-VCG-based) budget-feasible mechanisms
(Mechanisms~\ref{optmax-proxy}, \ref{newoptmax-proxy}, and~\ref{neweroptmax-proxy}) 
to serve as a kind of ``proxy'' for the 
``return set-corresponding-to-$\opt^*$ mechanism'' (which is not truthfully implementable), 
which satisfy the following guarantee: they obtain expected value proportional to the
{\em second-largest $\opt_i$ value}, which we denote by $\opt^{(2)}$ 
(Theorems~\ref{optmax-thm}, \ref{newoptmax-thm}, and~\ref{neweroptmax-thm}).
Mechanism~\ref{optmax-proxy} has a particularly clean description: it carefully selects a
player $\hi$ with $\opt_{\hi}=\opt^*$ and returns a minimum-cost set from $\pset_{\hi}$
whose value is at least $\opt^{(2)}$; 
Mechanisms~\ref{newoptmax-proxy} and~\ref{neweroptmax-proxy} build upon this idea to ensure
{\em polytime computation} for additive 
valuations and additive costs, and superadditive costs with a constrained demand oracle,
respectively. 

The $\Omega\bigl(\opt^{(2)}\bigr)$-value guarantee that we obtain lies in a nice sweet
spot: it is weak enough to not be precluded by 
our impossibility result, and yet is strong enough 
that we {\em can} still exploit this 
to recoup the additive loss incurred, provided that we 
can ensure that $\max_{i\in\pl_2}v(T^*_2\cap\pset_i)\leq\opt^{(2)}$. We show that we can
rework the {\em analysis} to achieve this, 
by choosing $T^*_2\sse\gset_2$ appropriately.
The upshot 
is that, if we now run the random-sampling based mechanism
with suitable probability $p$, and Mechanism~\ref{optmax-proxy} with probability $1-p$,
then we obtain a good approximation relative to our benchmark 
(without assuming no-overbidding). 

\vspace*{-1ex}
\paragraph{Submodular and subadditive valuations.}
Our mechanism for monotone, submodular valuations (Section~\ref{submod}) departs from the
above template in that 
after obtaining an estimate $V_1$, it does not use a VCG mechanism. Instead, we use a
greedy algorithm and process the players in a sequence, and for each player $i$ select a
suitable subset from $\gset_2\cap\pset_i$ via a demand oracle. Thus, we perform ``local''
VCG computations but the overall mechanism is not VCG-based.

\smallskip
For subadditive valuations (Section~\ref{subadditive}), we utilize our mechanisms for XOS
valuations, by working with 
a suitable pointwise-approximation of the subadditive valuation $v$ in the demand-set
computation. 
It is well known that any subadditive valuation can be pointwise-approximated by an XOS
function within an $O(\log n)$-factor; 
but utilizing this this would only yield an $O(\log n)$-approximation. Instead, we
observe that our mechanisms for XOS valuations work when the valuation function satisfies
a limited fractional-cover property (see Section~\ref{prelim}), namely, that for any
$S\sse\gset$, every fractional cover $\{x_T\}_{T\sse S}$ of $S$ by 
{\em player-respecting sets of $S$} has value at least $v(S)$, 
where $T\sse S$ is player-respecting if $T\cap\pset_i\in\{\es,S\cap\pset_i\}$ for all $i$.
One can show that a subadditive valuation can be pointwise-approximated within an 
$O(\log k)$-factor by a function satisfying this player-respecting fractional-cover
property, and this yields $O(\log k)$-approximation factors for subadditive valuations.

\subsection{Related work} \label{relwork}
As mentioned earlier, to our knowledge, all of the work on budget-feasible mechanism
design has solely considered single-dimensional settings, and the vast majority of it
has focussed on the setting where each player owns a single item.
Following the work of Singer~\cite{Singer10}, which introduced budget-feasible mechanism
design in the single-item setting, there has been much work on developing
budget-feasible mechanisms 
for different types of
valuation functions, with approximation factors comparing the value returned by the
mechanism to the algorithmic optimum $\optalg$. 
The prominent classes of valuation classes considered are submodular
valuations~\cite{ChenGL11,JalalyT18,AmanatidisBM16,AmanatidisKS19,HuangHCT23,BalkanskiGGST22}, XOS
valuations~\cite{BeiCGL12,BeiCGL17,NeogiPS24}, and subadditive
valuations~\cite{DobzinskiPS11,BeiCGL12,BeiCGL17,NeogiPS24}. For submodular and XOS
valuations, the mechanism-design guarantees 
qualitatively match the guarantees known for the algorithmic problem. 
For subadditive valuations, there is a gap: Bei et al.~\cite{BeiCGL12} showed that an
$O(1)$-approximation mechanism exists, and an explicit exponential-time mechanism was
devised by~\cite{NeogiPS24}, but obtaining a polytime $O(1)$-approximation mechanism using
demand oracles remains an intriguing open question.

Some work~\cite{ChanC14,AnariGN18,KlumperS22,AmanatidisKMST23} has
considered a richer, but still single-dimensional, 
level-of-service (\los) setting, wherein a player
can provide multiple levels of service (or multiple units of an item) and incurs the same
incremental cost for each additional unit provided. 
Typically, one assumes that the valuation function $v$ is additive across players, and 
is a linear or concave function of each player's provided level of service. In this setup,
approximation factors of $O(\log n)$ and $O(\max_i|\pset_i|)$ were obtained
by~\cite{ChanC14,AmanatidisKMST23} respectively; \cite{ChanC14} also considered subadditive
valuations and obtained an $O\bigl(\frac{\log^2 n}{\log\log n}\bigr)$-approximation.
Another line of work considers the divisible-item setting, wherein one can buy a
fraction of an item from a seller, and a related large-market
assumption~\cite{AnariGN14,BalkanskiH16,JalalyT18}, which assumes that a single player has
a negligible ``effect'' on the overall value. The latter assumption is similar in spirit 
to what our benchmark aims to capture, but the goal in these works is quite different.
They still examine the single-dimensional setting, and the goal is to derive improved
approximation guarantees (with respect to $\optalg$) under this assumption.
A natural extension of the large-market assumption to the multidimensional
setting would be to assume that $v(\pset_i)\leq\ve\cdot\optalg(v,B,c)$ for all $i\in\pl$,
for some parameter $\ve<1$. 
This 
precludes a monopolist by design, and so an alternate approach 
would have been to restrict attention to such inputs, but compare against $\optalg$. As
discussed earlier, the benefit of coming up with an alternate benchmark that applies to
all inputs (as we do) is that one can then obtain guarantees for {\em all} inputs. 
Moreover, a guarantee relative to $\optbench$ is stronger, in the sense that 
any such guarantee 
translates to a guarantee relative to $\optalg$ for inputs satisfying the large-market
assumption, as we have $\optbench\geq(1-\ve)\optalg$ for such inputs. 

The need for considering suitable alternate benchmarks invites comparison to the areas of  
{prior-free profit maximization}~\cite{GoldbergHKSW06,GoldbergHW01} 
and  {frugal mechanism design}~\cite{ArcherT07,Talwar03}, where similar considerations arise. 
The issues we encounter in multidimensional budget-feasible mechanism design 
are similar in flavor (but the details differ) to those arising in prior-free profit-maximization. 
There as well, the need for 
alternate benchmarks arises because the natural comparison point 
proves to be too strong due to the presence of a single ``dominant'' player: 
one cannot obtain any non-trivial guarantees with respect to the optimal fixed-price
revenue $\F$, as this could extract all its revenue from a single winner~\cite{GoldbergHKSW06}. 
Moreover, there as well, there is no ``best'' truthful auction. 
One therefore considers the $\F^{(2)}$-benchmark~\cite{GoldbergHKSW06,GoldbergHW01} that
hard-codes that there are at least two winners, which is perhaps closest in spirit as our
benchmark. 
Frugal mechanism design considers a similar setup as budget-feasible mechanism design,
where a buyer seeks to procure a suitable set from self-interested players. 
But any feasible set suffices, and the buyer's goal is to minimize the total payment made
to the players. 
There is no clear benchmark here to compare the payment of a truthful mechanism, and
a variety of benchmarks have been proposed in the literature, such as 
the cost of the second-cheapest set~\cite{ArcherT07,Talwar03}, and quantities that are,
loosely speaking, motivated by considering equilibria of first-price 
auctions~\cite{KarlinKT05,ElkindGG07,HajiaghayiKS18}, and various mechanisms have been
devised that obtain good guarantees relative to these
benchmarks~\cite{KarlinKT05,ElkindGG07,KempeSM10,ChenEGP10,HajiaghayiKS18}. With the sole
exception of~\cite{MinooeiS12} who consider multidimensional vertex cover, all of this work
focusses on single-dimensional problems.
Finally, we remark that removing the maximum contribution from a player 
in the definition of $\optbench$ bears cosmetic similarity to the approximate envy-free 
fairness notions EF1~\cite{LiptonMMS04} and EFX~\cite{CaragiannisKMPS16}.

The setting of cost-sharing mechanisms provides an interesting comparison point
with budget-feasible mechanism design, 
being another example of a prominent domain where:  
(1) prices feature both in truthfulness and the constraints; (2) even the existence of   
mechanisms satisfying the desirable criteria (truthfulness, cost-recovery, and good
approximation of the social-cost objective~\cite{RoughgardenS09}) is not a given; and 
(3) most work has investigated only the single-dimensional setting. 
Unlike budget-feasible mechanism design, there has been some prior work that has 
explored multidimensional cost-sharing mechanism design, 
both in a combinatorial setting~\cite{GeorgiouS19,DobzinskiO17,BirmpasMS22} that is
similar in spirit to the setting we consider, as also in a level-of-service
setting~\cite{MehtaRS09,GeorgiouS19}. While the technical aspects are quite different, it
is worth pointing out that  
Dobzinski and Ovadia~\cite{DobzinskiO17} also leverage VCG in novel ways, developing
mechanisms that significantly advance the realm of problems for which one can obtain good
cost-sharing mechanisms, and they focus on demonstrating the existence of good
mechanisms bereft of computational considerations. 

Multidimensional mechanism design has been most prominently considered in the setting of
combinatorial auctions, both in the context of social-welfare maximization (see,
e.g.,~\cite{LaviAGT,LaviS11,DughmiRY16,AssadiS19,AssadiKS21}), and profit or revenue maximization,
(see,
e.g.~\cite{HartlineKAGT,DuttingKL20,CaiDW12a,CaiDW12b,BabaioffILW14,EdenFFTW17,EdenFFTW21}),
where 
one often considers the goal of obtaining a good approximation via a simple mechanism. 
As noted earlier, one key source of difficulty in multidimensional mechanism design,
compared to the single-dimensional setting, is the lack of a convenient characterization
of truthfulness.
Although characterizations based on cycle monotonicity
and weak-monotonicity are known for truthfulness, and payments can be obtained by computing
shortest paths in a certain graph, these have found quite scant application in the
design of multidimensional truthful mechanisms; see~\cite{LaviS06,BabaioffKS13} for some
exceptions. Instead, VCG-based mechanisms (leading to the MIDR
approach)~\cite{LaviS11,DobzinskiNS05,DobzinskiD13,DughmiRY16} and posted-price
mechanisms~\cite{FeldmanGL14,DuttingKL20}, where one offers take-it-or-leave-it prices,
have been the chief means for obtaining truthful mechanisms.

\section{Preliminaries} \label{prelim}

We use $\R_+$ and $\Z_+$ to denote the set of nonnegative reals, and nonnegative
integers respectively.
For an integer $n\geq 1$, let $[n]$ denote $\{1,\ldots,n\}$.
Let $k$ be the number of players, and $\pl=[k]$ denote the set of all players.
Let $n=|\gset|$. Throughout, we use $i$ to index players, and $e$ to index items.
Recall that:
(a) $\pset_i\sse\gset$ is the publicly-known set of items owned
by player $i$; (b) the $\pset_i$s partition $\gset$;
(c) $\C_i$ is the publicly-known collection of possible player-$i$ cost functions;
and (d) $\C=\Pi_{i=1}^k\C_i$. 
Given a cost-function vector $c\in\C$, we define $c(S):=\sum_{i\in[k]}c_i(S\cap\pset_i)$
for any $S\sse\gset$.
Throughout, $\OPTalg(v,B,c):=\max\,\{v(S): S\sse\gset,\ c(S)\leq B\}$ denotes the algorithmic
optimum. 
It will be convenient to introduce the notation 
\swamy{$\vbench(S):=\min_{i\in[k]}v(S-\pset_i)$}
for $S\sse\gset$. Recall that we utilize the following benchmark
\[
\swamy{\optbench(v,B,c) \ :=\
\max_{S\sse\gset}\,\Bigl\{\min_{i\in[k]}v(S-\pset_i):\ c(S)\leq B\Bigr\}
\ =\ \max_{S\sse\gset}\,\Bigl\{v_{-1}(S):\ c(S)\leq B\Bigr\}.}
\]
As remarked earlier, one nice feature of $\optbench$ is that it is parameter-free. But we
note that one could also consider a parametrized benchmark where (in the spirit of the
$\F^{(2)}$-benchmark used in prior-free profit maximization) we restrict attendion to sets
$S$ where no player contributes more than an $\ve$-fraction of $v(S)$. 
This yields $\optparam(\ve;v,B,c):=
\max_{S\sse\gset}\,\bigl\{v(S):\ c(S)\leq B,\ \ v(S\cap\pset_i)\leq\ve\cdot v(S)\ \ \forall i\in[k]\bigr\}$.
as a benchmark.
Note that $\optbench(v,B,c)\geq\max_{\ve\in[0,1]}(1-\ve)\optparam(\ve;v,B,c)$, so guarantees
relative to $\optbench$ also yield guarantees relative to $\optparam(\ve)$; 
hence, we do not consider the latter benchmark.

To avoid cumbersome notation, we will frequently drop $v,B$ from the arguments, as these
will usually be fixed and clear from the context; we also drop $c$ when this is clear from
the context.
For a singleton set $\{e\}$, we use $v(e)$, $c_i(e)$ and $c(e)$ to denote the respective
function value on that set.

\vspace*{-1ex}
\paragraph{Valuation functions.}
Let $v:2^\gset\mapsto\R_+$ be a valuation function. We will always assume that $v$ is
normalized, i.e., $v(\es)=0$. 
We say that $v$ is {\em monotone}, if $v(S)\leq v(T)$ whenever $S\sse T\sse\gset$. 
We consider various classes of valuation functions. 
We say that $v$ is:  
\begin{enumerate}[label=$\bullet$, topsep=0.25ex, noitemsep, leftmargin=*]
\item {\em additive} (or {\em modular}), if there exists some $a\in\R^\gset$ such that
$v(S)=a(S)$ for all $S\sse\gset$. 
Note that an additive valuation is monotone iff it is nonnegative.
\item {\em submodular}, if $v(S)+v(T)\geq v(S\cap T)+v(S\cup T)$ for all $S,T\sse\gset$.

\item {\em XOS}, if $v$ is the maximum of a finite collection of additive functions, i.e.,
there exist $a^1,\ldots,a^k\in\R^\gset$ such that $v(S)=\max_{i\in[k]}a^i(S)$ for all
$S\sse\gset$. 
Note that we allow the additive functions to be negative, and this allows us to 
capture non-monotone XOS functions.
The above definition is equivalent to saying that for every $S\sse\gset$, there exists
some $w\in\R^\gset$ such that $v(S)=w(S)$ and $v(T)\geq w(T)$ for all $T\sse S$; we say
that {\em $w$ (or the corresponding additive valuation) supports $S$}. 
XOS valuations are also called {\em fractionally subadditive} valuations, and can be
equivalently defined in terms of fractional covers. A {\em fractional cover} of a set
$S\sse\gset$ is a collection $\{\mu_T\}_{T\sse S}$ such that 
$\sum_{T\sse S: e\in T}\mu_T=1$ for all $e\in S$, and its value is defined as
$\sum_{T\sse S}v(T)\mu_T$. We say that $v$ is fractionally subadditive, if for every
$S\sse\gset$, every fractional cover of $S$ has value at least $v(S)$. Using LP duality,
it is not hard to infer that this is equivalent to  stating that every $S\sse\gset$ has a
supporting additive valuation.

When $v$ is monotone and XOS, we can assume that $v$ is the maximum of a
collection of nonnegative additive valuations, and we can relax the fractional-cover
condition to the inequality $\sum_{T\sse\gset: e\in T}\mu_T\geq 1$ for all $e\in S$.
This is because we can always ensure that equality holds here by
dropping elements from sets if needed, and with a monotone valuation, this does not
increase the value of the fractional cover. 

\item {\em subadditive}, if $v(S\cup T)\leq v(S)+v(T)$ for all $S,T\sse\gset$.
\end{enumerate}
It is well known that additive valuations are a strict subclass of submodular valuations,
which in turn form a strict subclass of both XOS and subadditive valuations. 
Also, {\em monotone} XOS valuations form a strict subclass of monotone subadditive
valuations.  

\vspace*{-1ex}
\paragraph{Oracles used for accessing \boldmath $v$.}
Our mechanisms chiefly utilize two types of oracles for accessing the valuation $v$, in
addition to a value oracle: a (suitably generalized) demand oracle and a constrained
demand oracle. 
As noted earlier, since in the multidimensional setting, we work with player-cost
functions that are more general than additive functions (e.g., monotone, normalized
functions, or superadditive functions), we need to extend the notion of a demand to
consider such general price functions. 

Recall that $\C_i\sse\R_+^{2^{\pset_i}}$ denotes the class of possible player-$i$ cost
functions, $\C=\Pi_i\C_i$, and for $c\in\C$ and $S\in\gset$, we define
$c(S):=\sum_{i\in[k]}c_i(S\cap\pset_i)$. 
A {\em demand oracle for the class $\C$} takes $q\in\C$, $\kp\geq 0$, as input, and
returns $\argmax_{S\sse\gset}\,\bigl(v(S)-\kp\cdot q(S)\bigr)$. 
Note that a standard demand oracle amounts to a demand oracle for the class of additive
functions: it takes item prices $q\in\R_+^\gset$ and performs the above computation for
the additive function specified by $q$.
The underlying motivation and rationale behind this definition is the same as that for a
standard demand oracle. A demand oracle answers the economic question of what is the most 
profitable set for the buyer under given prices for the items, which turns out to be
beneficial 
because it serves as a means of aggregating players' costs.  
Since players' costs are now given by general cost functions, it is only apt that we use
the same expressive power when considering item prices, and therefore we now consider a
price function $q\in\C$ when answering this question. 
\footnote{From a computational perspective, it is known that value oracles
are insufficient even for the algorithmic problem of computing a good approximation to
$\optalg$ in polynomial time, even with additive costs.}

A {\em constrained demand oracle for the class $\C$} additionally takes a cap
$\targ\in\R_+$ as input, and returns
$\argmax\,\{v(S)-\kp\cdot q(S): S\sse\gset,\ v(S)\leq\targ\}$. 
It can be seen as answering the more-nuanced economic question of what is the most
profitable value-capped set for the buyer. 

We will often need to 
optimize the objective of a demand- or constrained-demand oracle only over subsets of some 
given set $A\sse\gset$. We assume that a demand-oracle or constrained-demand oracle has
this flexibility.
\footnote{Most often, the set $A$ will be of the form $\bigcup_{i\in I}\pset_i$ for
some $I\sse[k]$. A demand oracle on $A$ can be encoded as a query over the entire set
$\gset$ by taking 
$c_\ell$, for $\ell\notin I$ to be the constant function: $c_\ell(S)=M_\ell$ for all
$S\sse 2^{\pset_\ell}-\{\es\}$, where $M_\ell$ is sufficiently large, say
$2v(\pset_\ell)/\kp$ (assuming $v$ is monotone, subadditive). 
Thus, this added flexibility is a very benign requirement.}

\begin{theorem} \label{lpsolve}
Let $\bigl(v:2^\gset\mapsto\R_+,B,\{\pset_i,\C_i\sse\R_+^{2^{\pset_i}},c_i\in\C_i\}\bigr)$ be a
budget-feasible mechanism-design instance. Suppose we are given a demand oracle for
$\C=\Pi_i\C_i$. For any $A\sse\gset$, $\kp\geq 0$, $\targ\in\R_+$, the following LPs can
be solved in polytime: 
\begin{alignat*}{1}
\text{(a)} & \ \ 
\max\ 
\sum_{S\sse A}v(S)x_S \quad \mathrm{s.t.} \quad 
\sum_{S\sse A}c(S)x_S\leq B, \quad \sum_{S\sse A} x_S\leq 1, \quad x\geq 0. 
\tag{\optlpname($A$)} \\ 
\text{(b)} & \ \ 
\max\ 
\sum_{S\sse A}\bigl(v(S)-\kp\cdot c(S)\bigr)x_S \quad \mathrm{s.t.} \quad 
\sum_{S\sse\gset_2}v(S)x_S\leq\targ, \quad \sum_{S\sse A} x_S\leq 1, \quad x\geq 0.
\tag{CDLP($A$)} \label{cdlp}
\end{alignat*}
\end{theorem}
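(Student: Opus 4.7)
The plan is to invoke the standard equivalence between optimization and separation via the ellipsoid method. Both \optlp[($A$)] and \eqref{cdlp} have exponentially many variables $\{x_S\}_{S\sse A}$ but only two non-trivial inequality constraints each, so I would pass to the LP duals, which have two variables and exponentially many constraints, and argue that the demand oracle on $A$ implements a polynomial-time separation oracle for the dual feasible region. Given such a separation oracle, ellipsoid solves each dual to optimality in polytime, and a primal optimum is then recovered from the polynomially many dual constraints generated during the ellipsoid run.

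First I would dualize \optlp[($A$)]: introducing variable $\al\geq 0$ for the budget constraint and $\beta\geq 0$ for $\sum_S x_S\leq 1$, the dual reads $\min\,B\al+\beta$ subject to $\al\cdot c(S)+\beta\geq v(S)$ for every $S\sse A$. Separation at a candidate $(\al,\beta)$ amounts to deciding whether $\max_{S\sse A}\bigl(v(S)-\al\cdot c(S)\bigr)\leq\beta$, which is exactly one demand-oracle query on $A$ with price function $c$ and multiplier $\kp=\al$. The optimizer $S^*$ returned by the oracle either certifies feasibility or supplies an explicitly violated dual constraint indexed by $S^*$.

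Next I would do the analogous thing for \eqref{cdlp}. Let $\al\geq 0$ be the dual variable for $\sum_S v(S)x_S\leq\targ$ and $\beta\geq 0$ for $\sum_S x_S\leq 1$; the dual is $\min\,\targ\,\al+\beta$ subject to $(1-\al)v(S)-\kp\cdot c(S)\leq\beta$ for all $S\sse A$. Separation at $(\al,\beta)$ requires checking whether $\max_{S\sse A}\bigl((1-\al)v(S)-\kp\cdot c(S)\bigr)\leq\beta$. When $\al\geq 1$, nonnegativity of $v$ and $c$ forces the left side to be at most $0$ (attained at $S=\es$), so the constraint reduces to $\beta\geq 0$ and needs no oracle call; when $0\leq\al<1$, pulling out the positive factor $(1-\al)$ rewrites the max as $(1-\al)\cdot\max_{S\sse A}\bigl(v(S)-\tfrac{\kp}{1-\al}\cdot c(S)\bigr)$, which is again a single demand-oracle query with modified multiplier $\kp/(1-\al)$.

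Finally, to recover a primal optimum I would use the standard primal-recovery trick: across its polynomially many iterations, the ellipsoid method queries the separation oracle at polynomially many points and hence materialises only polynomially many dual constraints, indexed by some list of sets $S_1,\ldots,S_r\sse A$. Restricting the primal to the variables $x_{S_1},\ldots,x_{S_r}$ yields a polynomial-size LP whose dual is the restriction of the true dual to this list, and this restricted dual has the same optimum value as the full dual (else ellipsoid would have had to separate the restricted-dual optimum against some constraint not on the list). Solving this polynomial-size restricted primal with any standard LP algorithm produces the desired optimal $x$. The main delicate step is the CDLP separation: the change of variable $\kp\mapsto\kp/(1-\al)$ is only well-defined for $\al<1$, which is why splitting off $\al\geq 1$ as a trivial combinatorial case is essential; beyond that, the argument is a routine application of the ellipsoid-method template.
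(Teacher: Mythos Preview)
Your proposal is correct and follows essentially the same approach as the paper: dualize each LP, observe that the dual has two nonnegative variables and exponentially many constraints of the form $\al\cdot c(S)+\beta\geq v(S)$ (for \optlp[($A$)]) or $\al\cdot v(S)+\beta\geq v(S)-\kp\cdot c(S)$ (for \eqref{cdlp}), separate via a demand-oracle query, and handle the $\al\geq 1$ case in \eqref{cdlp} as trivially feasible. Your explicit primal-recovery argument is a bit more detailed than the paper's, which simply asserts that solving the dual via ellipsoid yields the primal as well.
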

\begin{proof}
	Both parts follow by considering the dual of the respective LPs and observing that a
	demand oracle yields a separation oracle for the dual. Therefore, the ellipsoid method can 
	be used to solve the dual LPs, and hence the primal LPs.

	For part (a), the dual of $\optlp[(A)]$ has the constraint $\al\cdot c(S)+\beta\geq v(S)$
	for all $S\sse A$, where $\al,\beta\geq 0$ are the dual variables corresponding to the
	primal constraints. This amounts to determining if 
	$\max_{S\sse A}\bigl(v(S)-\al\cdot c(S)\bigr)\leq\beta$, which we can determine by using a 
	demand-oracle query over the set $A$ to find $S^*$.
	For part (b), the dual of \eqref{cdlp} has the constraint 
	$\al\cdot v(S)+\beta\geq v(S)-\kp\cdot c(S)$ for all $S\sse A$, where 
	$\al,\beta\geq 0$. If $\al\geq 1$, then these constraints are trivially satisfied, so
	assume otherwise. Then, feasibility amounts to determining if 
	$\max_{S\sse A}\bigl(v(S)-\frac{\kp}{1-\al}\cdot c(S)\bigr)\leq\frac{\beta}{1-\al}$, which
	can again be answered via a demand-oracle query over $A$.
\end{proof}

\begin{remark} \label{additive-lpsolve}
For an additive valuation function $v$ and additive cost functions, LPs
$\optlp[(A)]$ and \eqref{cdlp} can be solved in polytime, since one has a polytime demand
oracle. They can also be cast as polynomial-size LPs using $(x_e)_{e\in A}$ variables, 
where $x_e$ denotes $\sum_{S\sse A: e\in S}x_S$; the cosntraint $\sum_{S\sse A}x_S\leq 1$
can be dropped in this formulation. 
\end{remark}

\vspace*{-2ex}
\paragraph{Mechanism design.}
In the basic mechanism design setup, we have a set of $k$ players, and a set $A$ of
possible outcomes. Each player $i$ has a {\em private type}
$c_i:A\mapsto\R_+$, where $c_i(a)$ gives the cost of alternative $a\in A$ to player $i$. 
\footnote{We describe the setup in terms of cost incurred and payments, instead of the
more-common choice of value obtained and prices, as this is what we encounter in
budget-feasible mechanism design.} 
Let $\C_i$ be the publicly-known set of all valid types of player $i$ (so $c_i\in\C_i$). 
Let $\C=\C_1\times \cdots \times \C_k$ denote the space of all players' valid types.
(We have deliberately overloaded notation here, as the type $c_i$ and set $\C_i$ above are 
essentially the same as the cost-function $c_i:2^{\pset_i}\mapsto\R_+$ and set $\C_i$ of
such cost functions in the budget-feasible mechanism-design (MD) setup.) 

For example, multidimensional budget-feasible mechanism design can be cast in the above
setup as follows. We have $A=2^\gset$. Each player $i$'s cost function
$c_i:2^{\pset_i}\mapsto\R_+$ yields a corresponding type, that we also denote by $c_i$,
where $c_i(S):=c_i(S\cap\pset_i)$; the set $\C_i$ of player-$i$ cost functions,
correspondingly maps to the set of valid player types for player $i$. 
We use $c$ to denote the tuple $(c_1,\ldots,c_k)$, and $c_{-i}$ to denote the
tuple 
that excludes $i$'s type (or cost function). Similarly $\C_{-i}=\prod_{j\in[k]-\{i\}}\C_{j}$.

A (direct revelation) {\em mechanism} consists of an {\em algorithm} or allocation rule
$f:\C\mapsto A$, and a {\em payment function} $p_i:\C\mapsto\R$ for each player $i$. 
Each player $i$ reports some type $c_i\in\C_i$ (possibly deviating from her true type),
and the mechanism computes the outcome $f(c)$ and pays $p_i(c)$ to each player $i$.
Note that in budget-feasible mechanism design, $f$ and the $p_i$s can depend on the
publicly-known information $(v,B,\{\pset_i,\C_i\})$, which we will treat implicitly as
being fixed. 
The {\em utility} that $i$ obtains when her true type is $\bc_i$, she reports $c_i$, and
other players report $c_{-i}$ is
$\util_i(\bc_i;c_i,c_{-i}):=p_i(c_i,c_{-i})-\bc_i(f(c_i,c_{-i}))$, and each player aims to
maximize her own utility. 
In multidimensional budget-feasible mechanism design, we seek a mechanism 
$\mech=\bigl(f,\{p_i\}_{i\in[k]}\bigr)$ satisfying the following 
properties.  
\begin{enumerate}[label=$\bullet$, topsep=0.25ex, noitemsep, leftmargin=*]
\item $\mech$ is {\em truthful}: each player $i$ maximizes her utility by reporting her
true private type: for every $\bc_i,c_i\in\C_i$ and $c_{-i}\in\C_{-i}$, we have
$\util_e(\bc_e;\bc_e,c_{-e})\geq\util_e(\bc_e;c_e,c_{-e})$.

\item $\mech$ is {\em individually rational} (IR): $\util_i(\bc_i;\bc_i,c_{-i})\geq 0$ for
every $i$, every $\bc_i\in\C_i$ and $c_{-i}\in\C_{-i}$; 
note that this implies that $p_i(c)\geq 0$ for all $c$. 
We say that $\mech$ makes {\em no positive transfers} (NPT) if 
$p_i(c)=0$ whenever $S=f(c)\sse\gset$ is such that $S\cap\pset_i=\es$. 
(More abstractly, we can say that $p_i(c)=0$ whenever $a=f(c)$ is such that $c'_i(a)=0$
for all $c'_i\in\C_i$.) 
In the sequel, we will always implicitly require NPT.
\footnote{We remark that the lower bounds we prove in Section~\ref{impos} hold also for
mechanisms that may not satisfy NPT.}

\item $\mech$ is {\em budget feasible}: we have $\sum_i p_i(c)\leq B$ for every type
$c\in\C$. 
Note that if $\mech$ is individually rational, this implies that 
$\sum_i c_i\bigl(f(c)\bigr)\leq B$. 
\end{enumerate}

A {\em randomized} mechanism can use random bits to determine $f(c)$ and
$\{p_i(c)\}$; so the cost-incurred by, payment made to, and utility of, a player are all
random variables. 
We say that a randomized mechanism is:
\begin{enumerate}[label=(\alph*), topsep=0ex, noitemsep, leftmargin=*]
\item {\em universally budget feasible}, if truthfulness, IR, and budget feasibility hold
with probability $1$, i.e., the mechanism can be viewed as a distribution over deterministic
budget-feasible mechanisms. 
\item {\em budget-feasible in expectation}, if
$\mech$ is {\em truthful in expectation}, the expected total payment is at most $B$, and
IR (and NPT) holds with probability $1$.
\footnote{If IR holds in expectation, i.e., the expected utility of a truthful player is
nonnegative, then one can define random payments that ensure that IR holds with
probability $1$,  without changing the expected payment made by the mechanism; 
see Lemma~\ref{bfinexp}.}
Truthful in expectation means that the expected utility of a player is maximized by
truthful reporting.  
\end{enumerate}
We say that $\mech$ achieves approximation ratio $\al$ with respect to a benchmark
$\bench$, if $v\bigl(f(c)\bigr)\geq\bench(c)/\al$ for all $c\in\C$. If $\mech$ is
randomized, then we have $\Ex\big[v\bigl(f(c)\bigr)\bigr]\geq\bench(c)/\al$ for all
$c\in\C$. 

A central tool that we utilize is the {\em VCG mechanism}~\cite{Vickrey61,Clarke71,Groves73}, 
which is in fact a family of mechanisms, showing that if the algorithm $f$ is such that it
minimizes an affine function of the players' costs, then one can always combine it with
suitable payments to obtain a truthful mechanism. This classical result is one of the
predominant tools often leveraged in multidimensional mechanism
design~\cite{LaviS11,DobzinskiD13,DughmiRY16}, forming the basis of the
maximal-in-distributional-range (MIDR) approach.

\begin{theorem}[VCG mechanism~\cite{Vickrey61,Clarke71,Groves73}] \label{vcgthm}
\label{bfmdvcgthm}
Let $f:\C\mapsto\R_+$ be given by
$f(c)=\argmin_{a\in A}\bigl(\sum_{i\in[k]}\al_ic_i(a)+\beta_a\bigr)$ for all $c\in\C$, where
$\al_i>0$ for every player $i$. Consider the following payments: 
\begin{equation}
p_i(c)=\frac{-1}{\al_i}\bigl(\sum_{\ell\in[k]-\{i\}}\al_\ell c_\ell(a^*)
+\beta_{a^*}\bigr)+h_{i}(c_{-i}), \quad\text{where $a^*=f(c)$} 
\qquad \frall i, c\in\C \tag{VCG} \label{vcgpayment}
\end{equation} 
where $h_i$ is some function that depends only $c_{-i}$.
Then the mechanism $(f,p)$ is truthful.

For budget-feasible mechanism design (where $A=2^\gset$), one choice of $h_i$s that ensures
IR and NPT is to set 
$h_i(c_{-i}):=
\frac{1}{\al_i}\cdot\min_{S\sse\gset-\pset_i}\bigl(\sum_\ell\al_{\ell\in[k]-\{i\}} c_\ell(S)+\beta_S\bigr)$
for all $i$, $c_{-i}\in\C_{-i}$.
\end{theorem}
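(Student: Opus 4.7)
The plan is a textbook VCG-style verification, specialized to the budget-feasible mechanism-design setup. I would fix a player $i$ with true type $\bc_i\in\C_i$ and arbitrary reports $c_{-i}\in\C_{-i}$ of the other players, and compute the utility $i$ obtains by reporting some $c_i\in\C_i$. Plugging the payment formula \eqref{vcgpayment} into $\util_i(\bc_i;c_i,c_{-i})=p_i(c_i,c_{-i})-\bc_i\bigl(f(c_i,c_{-i})\bigr)$ gives
\[
\util_i(\bc_i;c_i,c_{-i})=-\tfrac{1}{\al_i}\Bigl(\al_i\bc_i(a^*)+\sum_{\ell\neq i}\al_\ell c_\ell(a^*)+\beta_{a^*}\Bigr)+h_i(c_{-i}),
\]
where $a^*=f(c_i,c_{-i})$. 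Since $\al_i>0$ and $h_i(c_{-i})$ does not depend on $c_i$, maximizing $\util_i$ over $c_i$ reduces to making $a^*$ be a minimizer over $a\in A$ of $\al_i\bc_i(a)+\sum_{\ell\neq i}\al_\ell c_\ell(a)+\beta_a$. By the very definition of $f$, the truthful report $c_i=\bc_i$ achieves exactly this minimization, which establishes truthfulness.

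For the second part, I would verify IR and NPT under the stipulated choice of $h_i$. Under truthful reporting, substituting the given $h_i(c_{-i})$ into the expression above yields
\[
\util_i(\bc_i;\bc_i,c_{-i})=\tfrac{1}{\al_i}\Bigl[\min_{S\sse\gset-\pset_i}\Bigl(\sum_{\ell\neq i}\al_\ell c_\ell(S)+\beta_S\Bigr)-\Bigl(\sum_\ell\al_\ell c_\ell(a^*)+\beta_{a^*}\Bigr)\Bigr].
\]
The key observation is that for any $S\sse\gset-\pset_i$, the player-$i$ contribution vanishes, since $\bc_i(S)=\bc_i(S\cap\pset_i)=\bc_i(\es)=0$ by normalization. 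Hence the global optimality of $a^*$ over $A=2^\gset$ gives $\sum_\ell\al_\ell c_\ell(a^*)+\beta_{a^*}\leq\sum_{\ell\neq i}\al_\ell c_\ell(S)+\beta_S$ for every such $S$, and taking the minimum over $S$ proves $\util_i\geq 0$. For NPT, when $a^*\cap\pset_i=\es$ we have $\bc_i(a^*)=0$, and $a^*$ is itself a candidate for the minimization defining $h_i(c_{-i})$; the overall optimality of $a^*$ then forces it to attain that restricted minimum as well, making the two terms in $p_i(\bc_i,c_{-i})$ cancel to zero.

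Neither step presents any genuine obstacle, since this is the classical VCG argument. The only delicate point is that IR and NPT rely essentially on the normalization of the cost functions, which ensures that a player's cost vanishes on subsets of $\gset-\pset_i$; this is precisely what allows the global optimum $a^*$ over $2^\gset$ to dominate (and in the NPT case, to attain) the restricted minimum over $S\sse\gset-\pset_i$ used to define $h_i$.
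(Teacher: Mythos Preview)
Your proposal is correct and follows essentially the same approach as the paper's proof sketch: both compute the utility explicitly from the payment formula, observe that truthful reporting makes $a^*$ the global minimizer of the weighted objective, and verify IR and NPT via the normalization $\bc_i(\es)=0$ together with the fact that the restricted minimum defining $h_i$ is an optimization over a subset of~$A$. Your write-up is in fact a bit more explicit than the paper's about where normalization is used, which is helpful.
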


\begin{proofsketch}
Fix player $i$, $\bc_i,c_i\in\C_i$, $c_{-i}\in\C_{-i}$. Let $a^*=f(\bc_i,c_{-i})$ and
$b=f(c_i,c_{-i})$. Then 
\begin{equation*}
\begin{split}
\util(\bc_i;\bc_i,c_{-i}) &= h_i(c_{-i})-\frac{1}{\al_i}
\bigl(\sum_{\ell\in[k].\ell\neq i}\al_\ell c_\ell(a^*)+\al_i\bc_i(a^*)+\beta_{a^*}\bigr), 
\qquad \text{and} \\
\util(\bc_i;c_i,c_{-i}) &= h_i(c_{-i})-\frac{1}{\al_i}
\bigl(\sum_{\ell\in[k],\ell\neq i}\al_\ell c_\ell(b)+\al_i\bc_i(b)+\beta_{b}\bigr).
\end{split}
\end{equation*}
But by definition, 
$\sum_{\ell\in[k],\ell\neq i}\al_\ell c_\ell(a^*)+\al_i\bc_i(a^*)+\beta_{a^*}
=\min_{a\in A}\bigl(\sum_{\ell\in[k]\ell\neq i}\al_\ell c_\ell(a)+\al_i\bc_i(a)+\beta_{a}\bigr)$.
This also shows that with the given choice of $h_i$s for budget-feasible mechanism design,
we obtain IR, since $h_i$ can be viewed as optimizing the objective function underlying
$f$ over a subset of $A$. We obtain NPT because if $a^*=S^*\sse\gset$ is such that
$S^*\cap\pset_i=\es$, then $\util(\bc_i;\bc_i,c_{-i})=0$ and $\bc_i(a^*)=0$, so $p_i(\bc_i,c_{-i})=0$.
\end{proofsketch}

Our mechanisms utilize the VCG mechanism in a bit more generality. We will
usually have a set $\pl'$ of players, with $\al_i>0$ for all $i\in\pl'$, and $\al_i=0$
for all $i\notin\pl'$. With this,
one can still utilize Theorem~\ref{vcgthm}, taking the alternative set $A$ to be
$2^{\gset'}$, where $\gset'=\bigcup_{i\in\pl'}\pset_i$, 
and setting the payments of players not in $\pl'$ to always be $0$.

\section{Impossibility results and lower bounds} \label{lbounds} \label{impos}
We prove here the impossibility results that were stated informally in
Theorem~\ref{intro-thm}, which rule out any good approximation with respect to 
$\optalg(v,B,c)$. 

\begin{theorem} \label{det-overbid}
Consider any $\al\geq 1$. There is an additive valuation $v$ and budget $B$, such that 
if $\mech$ is a deterministic budget-feasible mechanism, then $\mech$ obtains value at
most $\optalg(v,B,c)/\al$ for some additive cost function $c$.
\end{theorem}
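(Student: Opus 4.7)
The plan is to construct a hard instance with a single monopolist player owning many items, forcing the mechanism (via truthfulness, IR, and budget-feasibility) into such a restrictive structure that it cannot simultaneously approximate $\optalg$ for two contrasting cost reports. I fix an integer $n > \alpha$, take one player owning items $e_1,\dots,e_n$, set $v$ to be the additive valuation with $v(\{e_j\}) = 1$ for every $j$, and set $B = 1$.

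The first step is to invoke the standard single-player characterization: any deterministic truthful IR mechanism with one player is equivalent to posting a menu $\{(S,p_S)\}$ of subsets with nonnegative payments, from which the player selects a utility-maximizing pair. Budget-feasibility forces every listed payment to lie in $[0,B] = [0,1]$, and NPT/IR force the presence of $(\emptyset,0)$.

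Next I consider two additive cost reports engineered to be maximally contrasting. Let $c^{(1)}(\{e_j\}) = 1$ for every $j$, so $\optalg(v,B,c^{(1)}) = 1$, and let $c^{(2)}(\{e_j\}) = \tfrac{1}{n} - \delta$ for a small $\delta > 0$, so $\optalg(v,B,c^{(2)}) = n$. Under $c^{(1)}$, IR forces the chosen set $S$ to satisfy $|S| \leq p_S \leq 1$, so $|S|\in\{0,1\}$; the $\alpha$-approximation requirement combined with $\optalg = 1$ then rules out $S=\emptyset$, so the menu must contain a singleton $S^\ast$ with $p_{S^\ast} = 1$.

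The crucial final step is to argue that the existence of this singleton blocks any set of size at least $2$ from being selected under $c^{(2)}$. Indeed, for the player to weakly prefer some $T$ with $|T|\geq 2$ over $S^\ast$ one needs $p_T - (\tfrac{1}{n}-\delta)|T| \geq 1 - (\tfrac{1}{n}-\delta)$, i.e.\ $p_T \geq 1 + (\tfrac{1}{n}-\delta)(|T|-1) > 1$, which violates the budget bound. Hence $|f(c^{(2)})|\leq 1$, so $v(f(c^{(2)})) \leq 1 < n/\alpha = \optalg(v,B,c^{(2)})/\alpha$, giving the desired contradiction. The main obstacle is really only bookkeeping: one must state the menu characterization cleanly (deducing $p_S \in [0,B]$ from budget-feasibility + IR), and choose $\delta > 0$ small enough so that the singleton $S^\ast$ strictly dominates every larger candidate under $c^{(2)}$, eliminating any delicate tie-breaking issue.
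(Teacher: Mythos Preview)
Your proof is correct, but it takes a genuinely different route from the paper's own argument.

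The paper proves Theorem~\ref{det-overbid} with a two-item instance: values $v(e)=\al$, $v(f)=1$, budget $B=1$, and two cost reports $c^{(1)}=(B,0)$ and $c^{(2)}=(B+1,B)$. The crux is that $c^{(2)}$ \emph{overbids} on item~$e$, which forces the mechanism to buy only $f$ and pay the full budget~$B$; truthfulness then blocks the mechanism from ever returning~$e$ under $c^{(1)}$. Your construction instead uses $n>\al$ unit-value items and two cost vectors both lying in $[0,B]^n$; via the taxation principle you show that a singleton with payment $B$ must sit on the menu, and budget-feasibility then prevents any larger set from being chosen under the cheap cost report.

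Both arguments are valid for the theorem as stated. What the paper's approach buys is that overbidding already dooms approximation on a \emph{fixed-size} ($n=2$) universe, isolating overbidding as the obstruction---this is exactly what separates Theorem~\ref{det-overbid} from Theorem~\ref{detlb}(a). Your approach is more elementary (a clean single-agent menu argument) and notably does \emph{not} use overbidding at all, but it requires the universe size to grow with~$\al$; in effect you have re-derived the no-overbidding lower bound of~\cite{ChanC14} (Theorem~\ref{detlb}(a) here) and observed that it implies the present statement. That is a legitimate proof, but it blurs the distinction the paper is drawing between the overbidding and no-overbidding regimes.
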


\begin{proof}
Let $\gset=\{e,f\}$ with $v(e)=\al$, $v(f)=1$. Let the budget $B$ be $1$.
There is only one player. Consider the additive cost function $c^{(1)}$ given by
$c^{(1)}_e=B,\ c^{(1)}_f=0$, 
and $c^{(2)}$ 
given by $c^{(2)}_e=B+1,\ c^{(2)}_f=B$. (Note that $c^{(2)}$ is a valid input since we are
not assuming no-overbidding.)

On input $c=c^{(2)}$, 
the mechanism cannot return $e$ due to budget-feasibility, and must return $f$, as
otherwise the statement holds for $c=c^{(2)}$. 
By IR, $\mech$ must pay $B$ to the player under $c^{(2)}$. 
Now, $\mech$ cannot output $e$ on input $c^{(1)}$, and so the statement holds
for $c=c^{(1)}$. This is because, otherwise, on input $c^{(1)}$, the player obtains $0$
utility by reporting $c^{(1)}$ (as she is paid at most $B$ due to budget feasibility), but
obtains utility $B$ by reporting $c^{(2)}$, contradicting truthfulness.
\end{proof}

\begin{theorem} \label{rand-overbid} \label{rand-lb}
Consider any $\e>0$. 
There is an additive valuation $v$ and budget $B$, such that if $\mech$ is a
budget-feasible-in-expectation mechanism, 
then $\mech$ obtains value at most $\optalg(v,B,c)\cdot\frac{1+\e}{n}$ for some
additive cost function $c$. 
\end{theorem}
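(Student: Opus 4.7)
The plan is to generalize the two-item overbidding construction from the proof of Theorem~\ref{det-overbid} to an $n$-item single-player instance, amplifying the impossibility argument by a factor of $n$. Take the additive valuation $v(e_j)=1$ for $j\in\{0,\ldots,n-1\}$, budget $B=1$, and one player owning all of $\gset$. Since the setting has a single player with additive cost types, by Lemma~\ref{bfinexp} every budget-feasible-in-expectation mechanism can be described by a \emph{menu} of pairs $(q,p)\in[0,1]^n\times[0,B]$: when reporting cost $c$, the player selects the option $(q^c,p^c)\in\M$ maximizing $p-c\cdot q$, the expected value of the mechanism is $\sum_j q^c_j$, and the expected payment is $p^c$.

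For each $k\in\{0,\ldots,n-1\}$ and a parameter $M\gg B$, consider the overbidding cost $c^{(k)}$ with $c^{(k)}(e_k)=0$ and $c^{(k)}(e_j)=M$ for $j\neq k$, so that $\OPTalg(c^{(k)})=1$. IR-with-probability-$1$ together with budget-feasibility-in-expectation force $\sum_{j\neq k}q^{c^{(k)}}_j\leq B/M$, hence the expected value on $c^{(k)}$ is at most $q^{c^{(k)}}_k+B/M\leq 1+B/M$. If $q^{c^{(k)}}_k+B/M\leq(1+\e)/n$ for some $k$, the theorem holds with $c=c^{(k)}$; otherwise $q^{c^{(k)}}_k\geq(1+\e)/n-B/M$ for every $k$. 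Combining this with the two-sided truthfulness-in-expectation inequalities $p^{c^0}\geq p^{c^{(k)}}$ and $p^{c^{(k)}}-M\sum_{j\neq k}q^{c^{(k)}}_j\geq p^{c^0}-M\sum_{j\neq k}q^{c^0}_j$ between $c^0:=\mathbf{0}$ and each $c^{(k)}$, together with the IR bound $p^{c^{(k)}}\geq M\sum_{j\neq k}q^{c^{(k)}}_j$ and $p^{c^{(k)}}\leq B$, yields the structural relation $\sum_{j\neq k}q^{c^0}_j\geq\sum_{j\neq k}q^{c^{(k)}}_j$ for each $k$.

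Aggregating these $n$ constraints---which effectively force the menu option $(q^{c^0},p^{c^0})$ chosen for $c^0$ to simultaneously dominate each of the $n$ ``concentrated-on-$e_k$'' options---together with IR, budget-feasibility, and passing to the limit $M\to\infty$, gives an upper bound of the form $V^{c^0}=\sum_j q^{c^0}_j\leq 1+\e$, so that $V^{c^0}/\OPTalg(c^0)\leq(1+\e)/n$, and the theorem holds with $c=c^0$. The Bayesian extension follows by placing a prior concentrated on $\{c^0\}\cup\{c^{(k)}\}_k$ and running the same argument on the corresponding posterior probabilities and expected payments. The main obstacle is this aggregation step: because truthfulness holds only in expectation (as opposed to deterministically), one cannot argue a simple pointwise contradiction as in the deterministic proof of Theorem~\ref{det-overbid}, and must instead perform a careful dual/averaging argument that simultaneously exploits all $n$ overbidding cost functions to amplify the single-item impossibility by a factor of $n$, at the cost of a $(1+\e)$ slack.
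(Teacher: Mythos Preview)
Your aggregation step cannot be completed: the family $\{c^0\}\cup\{c^{(k)}\}_{k}$ is simply too weak to force any upper bound on $V^{c^0}=\sum_j q^{c^0}_j$. Concretely, consider the mechanism that on input $c^0$ returns all $n$ items and pays $B=1$, and on input $c^{(k)}$ returns item $e_k$ and pays $B=1$. This is truthful (when the true type is $c^0$, utility is $1$ regardless of report; when the true type is $c^{(k)}$, truthful reporting gives utility $1$, reporting $c^0$ gives $1-M(n-1)<0$, reporting $c^{(k')}$ gives $1-M<0$), individually rational, and the payment is always $B$. Yet it achieves $V^{c^0}=n=\OPTalg(c^0)$ and $V^{c^{(k)}}=1=\OPTalg(c^{(k)})$ for every $k$, i.e.\ approximation ratio $1$ on every input in your family. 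So no contradiction can be extracted from these cost functions alone. The structural relation you derive, $\sum_{j\neq k}q^{c^0}_j\geq\sum_{j\neq k}q^{c^{(k)}}_j$, is a \emph{lower} bound on a partial sum of $q^{c^0}$ and carries no information bounding $V^{c^0}$ from above.

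The paper's proof uses a very different construction: geometrically increasing item values $v(e)=\gamma^{n-e}$ with $\gamma=1+n/\e$, together with a \emph{nested chain} of cost functions $c^{(1)},\ldots,c^{(n)}$ where $c^{(\ell)}$ makes items $1,\ldots,\ell-1$ unaffordable, item $\ell$ cost exactly $B$, and items $\ell+1,\ldots,n$ free. The chain structure enables a telescoping argument: truthfulness between consecutive $c^{(\ell)}$ and $c^{(\ell+1)}$ gives $p_\ell-p_{\ell+1}\geq b_{\ell,\ell}-b_{\ell,\ell+1}$, and summing yields $B\geq p_1\geq\sum_\ell(\text{something like }\Pr[\text{item }\ell\text{ returned on }c^{(\ell)}])$, forcing some $\ell$ with this probability $\leq 1/n$. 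The geometric values then ensure that on that input the mechanism's expected value is at most $(1+\e)/n\cdot\OPTalg$. Your ``star'' family (all $c^{(k)}$ compared only to $c^0$) lacks this chain structure and cannot produce a telescoping sum that bounds total payment from below by a sum of allocation probabilities.
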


\begin{proof}
We may assume that $\e\leq 1$. Let $\gm=1+\frac{n}{\epsilon}$.
Again, there is only one player. We identify $\gset$ with $[n]$.
Let $v$ be the additive valuation defined by $v(e)=\gm^{n-e}$ for all $e\in[n]$, and the
budget $B$ be $1$.
For each $\ell\in [n]$, let $c^{(\ell)}$ be the additive cost function defined by 
$c^{(\ell)}_e=M\geq (1+n\gm^n)B$ for all $e\in[\ell-1]$, $c^{(\ell)}_\ell=1$, and
$c^{(\ell)}_e=0$ for all $e\in\{\ell+1,\ldots,n\}$. 
We argue that on some input $c^{(\ell)}$, $\mech$ obtains value at most
$\optalg\bigl(v,B,c^{(\ell)}\bigr)\cdot\frac{1+\e}{n}$. 
(Again, $c^{(\ell)}$ is allowed as input, since we are not assuming no-overbidding.)

For $\ell,r\in[n]$, let $p_\ell$ be the expected payment made by $\mech$ when the player
reports $c^{(\ell)}$, and let $b_{\ell,r}$ be the expected cost incurred by the player
when her true cost is $c^{(\ell)}$ and she reports $c^{(r)}$. 
Note that on input $c^{(\ell)}$, the expected number of items returned by $\mech$ from
$[\ell-1]$ is at most $\frac{1}{1+n\gm^n}$. This is because otherwise, we would have
$b_{\ell,\ell}>B$, and so $p_\ell>B$ by IR, which contradicts budget-feasibility in
expectation. 
By truthfulness, we have $p_\ell-b_{\ell,\ell}\geq p_r-b_{\ell,r}$ for all
$\ell,r\in[n]$. In particular, we have 
\begin{equation*}
p_\ell-p_{\ell+1}\geq b_{\ell,\ell}-b_{\ell,\ell+1} \qquad \frall \ell\in[n].
\end{equation*}
Adding the above for all $\ell\in[n-1]$, along with $p_n\geq b_{n,n}$, which follows due
to IR, we obtain that $p_1\geq b_{1,1}+\sum_{\ell=2}^n(b_{\ell,\ell}-b_{\ell-1,\ell})$. By
budget-feasibility in expectation, we have $p_1\leq 1$. Therefore, we have 
$b_{1,1}\leq 1/n$ or there is some $\ell\in\{2,\ldots,n\}$ with
$b_{\ell,\ell}-b_{\ell-1,\ell}\leq\frac{1}{n}$. Since $b_{\ell,\ell}-b_{\ell-1,\ell}$ is
at least 
$\Pr\bigl[\mech\text{ returns a set containing item $\ell$ on input $c^{(\ell)}$}\bigr]$, 
this in turn implies that there is some $\ell\in[n]$ such that 
$\Pr\bigl[\mech\text{ returns a set containing item $\ell$ on input $c^{(\ell)}$}\bigr]\leq 1/n$.

Finally, we argue that this last inequality implies that on input $c'=c^{(\ell)}$, $\mech$
obtains value at most $\optalg(v,B,c')\cdot\frac{1+\e}{n}$. 
We have $\optalg(v,B,c')=v(\{\ell,\ldots,n\})$.
The expected value $\targ$ obtained by $\mech$ on input $(v,B,c')$ is at most 
\[
\tfrac{v(\ell)}{n}+v(\{\ell+1,\ldots,n\})+
v(1)\cdot\E{\text{no. of items returned by $\mech$ from $[\ell-1]$}} 
\leq v(\ell)\cdot\Bigl(\tfrac{1}{n}+\tfrac{\gm^{n-1}}{1+n\gm^n}\Bigr)+v(\{\ell+1,\ldots,n\}).
\]
So we have 
$\optalg-v(\{\ell+1,\ldots,n\})\geq  
n\bigl(\targ-v(\{\ell+1,\ldots,n\})-\frac{\gm^{n-1}}{1+n\gm^n}\cdot v(\ell)\bigr)$. 
Therefore
\begin{equation*}
\optalg\geq n\cdot\targ-(n-1)v(\{\ell+1,\ldots,n\})-\frac{v(\ell)}{\gm}
\geq n\cdot\targ-\tfrac{n}{\gm-1}\cdot v(\ell)
\geq n\cdot\targ-\e\cdot\optalg. \qedhere
\end{equation*}
\end{proof}

The following impossibility result under no-overbidding was shown by~\cite{ChanC14}, even
for the single-dimensional \los setting.
We include a proof in Appendix~\ref{append-impos} for completeness.

\begin{theorem}[\cite{ChanC14}] \label{detlb} \label{truthexplb}
Let $v$ be the additive valuation with $v(e)=1$ for all $e\in\gset$, and let the budget
be $B=n=|\gset|$. Let $\mech$ be a budget-feasible mechanism. 
\begin{enumerate}[label=(\alph*), topsep=0ex, noitemsep, leftmargin=*]
\item If $\mech$ is deterministic, then for the above $(v,B)$, even assuming
no-overbidding, there is some additive cost
function $c$ for which $\mech$ obtains value at most $\optalg(v,B,c)/n$. 
\item If $\mech$ is budget-feasible in expectation, even assuming no-overbidding, there is
some additive cost function $c$ for which $\mech$ obtains value at most
$\optalg(v,B,c)/O(\log n)$. 
\end{enumerate}
\end{theorem}

The above lower bounds are obtained on instances involving a single player. One
can therefore argue using Yao's minimax principle that the lower bounds for randomized
mechanisms in Theorems~\ref{rand-overbid} and~\ref{truthexplb} (b) extend to Bayesian
budget-feasible mechanisms. We define the Bayesian setting, and prove the following
corollary, in Appendix~\ref{append-impos}.

\begin{corollary} \label{bayesianlb}
No Bayesian budget-feasible mechanism can achieve approximation ratio better than $n$, and  
better than $O(\log n)$ assuming no-overbidding, relative to $\optalg(v,B,c)$.
\end{corollary}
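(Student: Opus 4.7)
The plan is to reduce the Bayesian lower bound to the randomized in-expectation bounds of Theorems~\ref{rand-overbid} and~\ref{truthexplb}(b) via Yao's minimax principle. First I would formalize the Bayesian setting: a public prior $\D$ over the type space $\C$, Bayesian incentive compatibility (truth-telling is an interim best response), interim IR, and ex-ante budget feasibility $\Ex_{c \sim \D}\bigl[\Ex_r[\sum_i p_i(c)]\bigr] \leq B$, with the approximation guarantee comparing $\Ex_\D[v(\mech(c))]$ against $\Ex_\D[\optalg(v,B,c)]$. The crucial observation is that the instances witnessing Theorems~\ref{rand-overbid} and~\ref{truthexplb}(b) are single-player instances over a finite family $\{c^{(1)}, \ldots, c^{(n)}\}$; in the single-player case, BIC collapses to truthfulness-in-expectation pointwise per type and interim IR to in-expectation IR, so every Bayesian BF mechanism on these instances still satisfies all the truthfulness and IR inequalities used in the earlier proofs, with only the budget condition relaxed from per-type to prior-averaged.

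Next I would set up the two-player zero-sum game where the designer picks a randomized in-expectation BF mechanism and the adversary picks a cost function $c^{(\ell)}$, with payoff $v(\mech(c))/\optalg(v,B,c)$. Theorems~\ref{rand-overbid} and~\ref{truthexplb}(b) upper bound the designer-side value of this game by $(1+\e)/n$ and $O(1/\log n)$ respectively; the minimax theorem then produces an adversary-optimal mixed strategy $\D^*$ over the $c^{(\ell)}$'s such that no mechanism in the class achieves a better expected ratio against $\D^*$. I would adopt $\D^*$ as the Bayesian prior and rerun the truthfulness/IR chain from the proof of Theorem~\ref{rand-overbid}, now with the ex-ante inequality $\sum_\ell \D^*(c^{(\ell)}) p_\ell \leq B$ in place of the per-type bound $p_\ell \leq B$, to recover the per-coordinate bounds on $\Pr[\ell \in \mech(c^{(\ell)})]$ and hence the desired approximation-ratio lower bounds against $\D^*$.

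The principal obstacle is precisely this last step: ex-ante BF is strictly weaker than per-type in-expectation BF, so the earlier proofs cannot be quoted verbatim. The argument must use the specific weighting coming from $\D^*$ so that the weighted sum of the truthfulness and IR inequalities still yields the right per-coordinate bound; verifying that the equalizing property of $\D^*$ from minimax exactly compensates for the relaxation---so no factor is lost---is the main technical task.
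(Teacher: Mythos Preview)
Your high-level plan---reduce to the single-player case so that BIC collapses to truthfulness, then invoke Yao---matches the paper's first proof. The divergence is in the budget condition: you impose only the \emph{ex-ante} bound $\Ex_{c\sim\D}\bigl[\sum_i p_i(c)\bigr]\leq B$, whereas the paper (Appendix~\ref{append-impos}) requires the expected total payment to be at most $B$ on \emph{each} input $c$ in the support of $\D$. Under the paper's per-type definition, a single-player Bayesian budget-feasible mechanism \emph{is} a budget-feasible-in-expectation mechanism, so the column-player strategy space in the Yao game already contains every Bayesian BF mechanism, and the reduction goes through with no further work. The ``principal obstacle'' you identify---and leave unresolved---is entirely an artifact of your weaker ex-ante definition; under the paper's definition it never arises.

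Two smaller points. First, your game payoff $v(\mech(c))/\optalg(c)$ yields, via min-max, a bound on $\Ex_{\D^*}\bigl[v(\mech(c))/\optalg(c)\bigr]$, not on the Bayesian ratio $\Ex_{\D^*}[v(\mech(c))]/\Ex_{\D^*}[\optalg(c)]$; these differ when $\optalg$ varies across the $c^{(\ell)}$, as in Theorem~\ref{rand-overbid}. The clean fix is to use an affine payoff such as $v(\mech(c))-\optalg(c)/\al$, or to absorb the $1/\optalg(c)$ factor into the prior. Second, the paper also gives a direct, non-Yao proof by exhibiting explicit hard priors---$\Pr[c^{(\ell)}]\propto 1/\optalg(v,B,c^{(\ell)})$ for the $n$-bound and $\Pr[d^{(\ell)}]\propto 2^\ell$ for the $\log n$-bound---and rederiving the bounds from the truthfulness, IR, and budget inequalities. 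If you actually wanted the stronger ex-ante statement, that direct route (summing the chain of inequalities against the prior weights) is where to look, but you would need to carry out the computation rather than assert that the equalizing property of $\D^*$ ``exactly compensates.''
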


We also prove a lower bound on the approximation guarantee achievable with respect to
$\optalg$ by any truthful (even non budget-feasible) mechanism. 
This is in stark contrast  
to the single-dimensional setting, where lower bounds only exist for budget-feasible
mechanisms; 
the algorithm that returns an optimal solution $\argmax_{S\sse U}\,\{v(S): c(S)\leq B\}$
(with consistent tie breaking) satisfies Myerson's monotonicity
condition, and hence is truthfully implementable.

\begin{theorem} \label{notruthful} 
No deterministic truthful mechanism can achieve approximation ratio strictly larger
than $\phi=\frac{1+\sqrt{5}}{2}$ with respect to $\optalg$, even with additive valuations
and additive cost functions.
\end{theorem}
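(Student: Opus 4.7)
The plan is to construct a small hard instance with a single player owning two items, together with a short list of additive cost profiles, and then show that any $\alpha$-approximate truthful mechanism with $\alpha<\phi$ is forced to make choices that violate the weak-monotonicity characterization of multidimensional truthfulness. Concretely, I take $\gset=\{a,b\}$, all owned by one player, additive valuation $v(a)=\phi$, $v(b)=1$, and budget $B=1$. The cost-type space is the set of additive functions $c=(c(a),c(b))\in\R_+^2$; note that since no-overbidding is \emph{not} assumed, we are free to use cost profiles with $c(e)>B$.

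Next I would single out a finite family of cost profiles of the form $c^{(j)}=(c^{(j)}(a),c^{(j)}(b))$ and use the assumed approximation guarantee to pin down $f(c^{(j)})$ in each case. For example, with $c^{(1)}=(1,0)$ the optimum is $\{a,b\}$ of value $\phi+1=\phi^2$, and $\alpha<\phi$ forces $v(f(c^{(1)}))>\phi$, hence $f(c^{(1)})=\{a,b\}$; with $c^{(2)}=(1,1)$ the optimum is $\{a\}$ of value $\phi$, and $\alpha<\phi$ forces $v(f(c^{(2)}))>1$, so $f(c^{(2)})\in\{\{a\},\{a,b\}\}$; and with an overbidding profile $c^{(3)}=(\phi,0)$ (where $\{a\}$ is infeasible for the benchmark, optimum $\{b\}$ of value $1$), the approximation condition $v(f(c^{(3)}))>1/\phi=\phi-1$ together with the fact that only the values $0,1,\phi,\phi+1$ occur forces $f(c^{(3)})\ne\es$. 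Additional perturbed profiles (e.g. $(1,\ve)$ and $(\ve,1)$ for small $\ve>0$) will be used to further constrain the outputs.

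The core of the argument is then to play off these pinned-down outputs against weak monotonicity: for every pair of profiles $c^{(i)},c^{(j)}$ and outputs $S_i=f(c^{(i)}),\ S_j=f(c^{(j)})$, truthfulness requires
\[
c^{(i)}(S_j)+c^{(j)}(S_i)\ \ge\ c^{(i)}(S_i)+c^{(j)}(S_j).
\]
The golden ratio enters because the critical inequalities that arise when combining the approximation-forced outputs with WM reduce, after cancellation, to the algebraic condition $\phi^2\ge\phi+1$, which holds only at equality. Taking $\alpha$ strictly smaller than $\phi$ tips one of these inequalities in the wrong direction, giving the contradiction. If a single pair does not suffice, I would use cyclic monotonicity on a 3-cycle through $c^{(1)},c^{(2)},c^{(3)}$, where the sum $\sum_i c^{(i)}(S_i)-\sum_i c^{(i)}(S_{i+1})$ collapses (using $\phi^2=\phi+1$ and $1/\phi=\phi-1$) to a strictly positive quantity, contradicting CM.

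The main obstacle is choosing the family of cost profiles so that the approximation constraints \emph{uniquely} pin down the outputs up to the single degree of freedom needed to trigger the violation. This requires exploiting overbidding: in the single-dimensional (one item per player) setting the value-maximizing algorithm satisfies Myerson's monotonicity and is truthfully implementable, so the hardness is intrinsically multidimensional and must come from profiles where the player can misreport on one item without affecting the per-item cost of the other. Once the outputs are pinned down, the WM/CM inequality is essentially mechanical, and the appearance of $\phi$ as the exact threshold follows from the defining identity $\phi^2=\phi+1$.
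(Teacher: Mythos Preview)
Your plan has the right shape---one player, two items, additive data, and weak/cyclic monotonicity---but the concrete profiles you write down do not produce a contradiction, so the ``mechanical'' step fails.

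First, there is an ambiguity you should resolve: if the mechanism is allowed to output budget-infeasible sets, then always outputting $\gset$ trivially achieves ratio $1$, so the theorem is vacuous. You must assume $c(f(c))\le B$. Under that assumption, your three profiles pin the outputs uniquely: on $c^{(1)}=(1,0)$ only $\{a,b\}$ has value $>\phi$; on $c^{(2)}=(1,1)$ only $\{a\}$ is feasible with value $>1$; on $c^{(3)}=(\phi,0)$ only $\{b\}$ is feasible and nonempty. But now check cyclic monotonicity on $1\to 2\to 3\to 1$:
\[
\sum_i c^{(i)}(S_i)=1+1+0=2,\qquad
\sum_i c^{(i)}(S_{i+1})=1+1+\phi=2+\phi,
\]
so CM holds (and likewise for the reverse cycle and all pairwise WM checks). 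Your extra profiles $(1,\ve)$ and $(\ve,1)$ both force output $\{a\}$ and add nothing new. Hence there is no contradiction from this family; the ``collapse to a strictly positive quantity'' you describe does not occur.

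The paper's proof uses only \emph{two} profiles. With $v(e_1)=1$, $v(e_2)=\phi$, $B=2$, take $c=(1,B)$ and $c'=(1.5,0.5)$. On $c$, the only feasible set with value exceeding $1$ is $\{e_2\}$, so any $\alpha<\phi$ forces $f(c)=\{e_2\}$. Weak monotonicity between $c$ and $c'$ then reads $c'(S')\le c(S')-\bigl(c(e_2)-c'(e_2)\bigr)=c(S')-1.5$, and checking all four $S'\subseteq\{e_1,e_2\}$ shows that only $S'=\{e_2\}$ survives. Since $\optalg(c')=\phi+1$, the ratio on $c'$ is $(\phi+1)/\phi=\phi$. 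The ingredient you are missing is to choose the second profile so that the WM slack $c(\{e_2\})-c'(\{e_2\})$ is large enough to eliminate \emph{every} alternative output, while simultaneously making $\{e_1,e_2\}$ feasible under $c'$; your profiles do not have this property.
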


\begin{proof}
Consider again the setting with a single player. There are $2$ items, with values
$v(e_1)=1$, $v(e_2)=\phi$, and $v$ is additive, and the budget is $B=2$.  
Consider the additive cost functions $c_1, c'_1$ that assign costs 
$c_1(e_1)=1, c_1(e_2)=B$ and $c'_1(e_1)=1.5, c'_1(e_2)=0.5$. 
We have $\optalg(c_1)=\phi$, so if the mechanism attains approximation ratio larger than
$\phi$, it must return $e_2$ on input $c_1$. 
But then on input $c'_1$, due to truthfulness, the mechanism must still return only
$e_2$. Suppose the mechanism returns $S\sse\{e_1,e_2\}$. Then, by weak-monotonicity, we
must have 
$c_1(e_2)+c'_1(S)\leq c_1(S)+c'_1(e_2)$, i.e., $c'_1(S)\leq c_1(S)-(B-0.5)$, and only
$S=\{e_2\}$ satisfies this inequality. 
But then the mechanism's approximation ratio on input $c'_1$ is $\frac{\phi+1}{\phi}=\phi$.
\end{proof}

We remark that while the constructions above utilize only a single player, we can 
always pad the instance by introducing ``dummy'' players that incur $0$ cost, and
contribute very little value to $\optalg$. 

\paragraph{Lower bounds on the approximation factor relative to \boldmath $\optbench$.} 
Lower bounds under a large-market assumption introduced by \cite{AnariGN14} in
the single-dimensional setting, carry over to lower bounds on the approximation factor
achievable relative to \optbench. This is simply because under the large-market assumption
$\vemax\ll\optalg$ in the single-dimensional setting, 
$\optbench$, which is at least $\optalg -\vemax$, essentially coincides with $\optalg$.  
Therefore, an $\frac{e}{e-1}$ 
approximation-factor lower bound (relative to \optalg) shown by \cite{AnariGN14} in the
large market setting, translates to the same lower bound against \optbench in our setting.  
The lower bound is proved in~\cite{AnariGN14} under the assumption $\max_ec(e)\ll B$,
but they mention that it carries over to the setting $\vemax\ll\optalg$.
We include a proof of the following theorem in Appendix~\ref{append-impos} for completeness.

\begin{theorem}[Follows from~\cite{AnariGN14}] \label{optbenchlb}
No budget-feasible mechanism can achieve value better than
$\bigl(1-\frac{1}{e}\bigr)\cdot\optbench(v,B,c)$ on every instance, 
even with additive $v$ and additive cost functions. 
\end{theorem}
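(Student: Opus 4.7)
The plan is to translate the $\frac{e}{e-1}$ lower bound of~\cite{AnariGN14}, originally established in the single-dimensional large-market setting relative to $\optalg$, into an identical lower bound relative to $\optbench$ in our multidimensional setting. The crucial observation enabling this translation is that any single-dimensional instance (one item per player) is trivially a multidimensional instance in which $\pset_i=\{e_i\}$ for each player $i$; in such an instance, $\max_i v(S\cap\pset_i)\leq\vemax$ for every $S\sse\gset$, so
\[
\optbench(v,B,c)\ \geq\ \optalg(v,B,c)-\vemax.
\]

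Next I would invoke the following consequence of~\cite{AnariGN14}: for every $\ve>0$ there exists a single-dimensional instance $\I_\ve$ with additive valuation and additive costs satisfying $\vemax\leq\ve\cdot\optalg$, such that every budget-feasible mechanism attains expected value at most $\frac{e-1}{e}\cdot\optalg$ on $\I_\ve$. Reproducing this construction faithfully will be the main obstacle, and will be the bulk of the write-up; it follows the argument of~\cite{AnariGN14}, which builds a knapsack-style instance with many small items and analyses Myerson threshold payments to upper bound the expected value attainable under the budget constraint. The only subtlety in porting their argument is that they state their large-market assumption as $\max_e c(e)\ll B$, whereas the regime we need is $\vemax\ll\optalg$; as~\cite{AnariGN14} explicitly notes, the same construction (up to rescaling values and costs) yields the lower bound under the latter assumption as well, so no new construction is needed.

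Given the family $\{\I_\ve\}_{\ve>0}$, the translation step is then a short ratio argument. Suppose, for contradiction, that some budget-feasible mechanism $\mech$ attains value at least $\al\cdot\optbench$ on every instance, with $\al>\frac{e-1}{e}$. Applying the displayed inequality on $\I_\ve$ yields $\optbench\geq(1-\ve)\optalg$, so $\mech$ obtains value at least $\al(1-\ve)\cdot\optalg$ on $\I_\ve$. Choosing $\ve$ small enough that $\al(1-\ve)>\frac{e-1}{e}$ contradicts the~\cite{AnariGN14} lower bound on $\I_\ve$, completing the proof. Thus the whole argument reduces to (i) reproducing the~\cite{AnariGN14} construction and its impossibility analysis, which is the hard part, and (ii) the two-line inequality above together with the limit $\ve\to 0$, which is routine.
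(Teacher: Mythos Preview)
Your high-level plan---use a single-dimensional large-market instance from~\cite{AnariGN14} and exploit $\optbench\geq\optalg-\vemax$---matches the paper's approach. However, the intermediate statement you propose to invoke is false as written. You claim: ``for every $\ve>0$ there exists a single instance $\I_\ve$ with $\vemax\leq\ve\cdot\optalg$ such that \emph{every} budget-feasible mechanism attains value at most $(1-\tfrac{1}{e})\optalg$ on $\I_\ve$.'' No such universal hard instance can exist: as the paper itself observes (see the footnote following the discussion of benchmarks in the Introduction), for any \emph{fixed} input $(v,B,c)$ one can tailor a budget-feasible mechanism achieving $\optalg(v,B,c)$ exactly. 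So the quantifier order must be the other way---for every mechanism $\mech$ there is an instance on which $\mech$ fails---and the instance will depend on $\mech$.

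The~\cite{AnariGN14} lower bound is in fact a \emph{Bayesian} statement: there is a distribution $\D$ over cost vectors (i.i.d.\ item costs, unit values, $B=\E{c(\gset)}$) such that no budget-feasible mechanism obtains $\E[c\sim\D]{\mech(c)}>(1-\tfrac{1}{e})\E[c\sim\D]{\optalg(c)}$. To convert this to a worst-case bound relative to $\optbench$, you need to know that the instance drawn from $\D$ actually satisfies $\vemax\ll\optalg$; a priori, the ``bad'' instance for a given $\mech$ could be one where $\optalg$ is tiny. The paper handles this with a Hoeffding bound showing that $\optalg(c)\geq \tfrac{n}{1+\ve}-1$ with probability $1-e^{-\Omega(n)}$, so $\E{\optalg}$ is essentially $n$ while $\vemax=1$; combining this with $\optbench=\optalg-1$ (unit values, one item per player) then gives $\E{\mech(c)}\geq\al\E{\optbench}\geq\al(1-\ve')\E{\optalg}$, forcing $\al\leq\tfrac{1}{1-\ve'}(1-\tfrac{1}{e})$. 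Your ratio argument in the last paragraph is fine once you have this; what is missing is precisely this Bayesian-to-worst-case step via concentration.
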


\section{XOS Valuations} \label{sec:XOS} \label{xos}
We design and analyze mechanisms for XOS valuations that achieve $O(1)$-approximation
ratio with respect to our benchmark $\optbench$. 
Section~\ref{xos-bfexp} describes a budget-feasible-in-expectation mechanism for 
general cost functions, 
and Section~\ref{xos-bfuni} focuses on universally budget-feasible mechanisms under the
no-overbidding assumption. We show how to drop the no-overbidding assumption in
Section~\ref{overbid}. 
As noted earlier, our mechanisms perform a VCG computation, which then defines the
payments made to the players as in the VCG mechanism (see Theorem~\ref{vcgthm} and
\eqref{vcgpayment}), so we describe the underlying algorithm, and discuss payments given
by \eqref{vcgpayment} in the analysis. 

Sections~\ref{xos-bfexp} and~\ref{xos-bfuni} can be read independently of each other. We
begin by collecting various properties of XOS and subadditive functions that we 
frequently utilize in our analyses.

\subsection{Properties of XOS and subadditive valuations} \label{sec-xosprops}

\begin{claim} \label{xosnprop}
Let $v:2^\gset\mapsto\R_+$ be an XOS valuation. Then, for any $S\sse\gset$, and any
partition $A_1,\ldots,A_r$ of $\gset$, we have 
we have $\sum_{i\in[r]}\bigl(v(S)-v(S-A_i)\bigr)\leq v(S)$.
\end{claim}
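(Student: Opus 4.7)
The plan is to exploit the supporting-additive-function characterization of XOS valuations, which is essentially tailor-made for an inequality of this form. Recall that since $v$ is XOS, for the particular set $S$ in the claim there exists an additive function $w \in \R^\gset$ such that $v(S) = w(S)$ and $v(T) \geq w(T)$ for every $T \subseteq S$. I would fix such a $w$ at the outset.

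Next, for each index $i \in [r]$, since $S - A_i \subseteq S$, the supporting property gives $v(S - A_i) \geq w(S - A_i)$. Combined with $v(S) = w(S)$, this yields the per-index bound
\[
v(S) - v(S - A_i) \;\leq\; w(S) - w(S - A_i) \;=\; w(S \cap A_i),
\]
where the last equality uses additivity of $w$.

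Finally, I would sum this bound over $i \in [r]$. Because $A_1, \ldots, A_r$ partition $\gset$, the sets $\{S \cap A_i\}_{i \in [r]}$ partition $S$, so additivity of $w$ gives $\sum_{i \in [r]} w(S \cap A_i) = w(S) = v(S)$, which is exactly the desired inequality. There is essentially no obstacle here: once one invokes the supporting additive function at $S$, the proof reduces to additivity plus the trivial monotone-in-the-supporting-function inequality, so the only thing to be careful about is writing the XOS definition in the form used by this paper (which explicitly allows possibly-negative additive functions and defines ``supports $S$'' via $v(S) = w(S)$ and $v(T) \geq w(T)$ for $T \subseteq S$).
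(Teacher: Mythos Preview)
Your proof is correct and uses a genuinely different (dual) characterization of XOS from the paper's own argument. The paper instead works with the fractional-subadditive definition: restricting to the indices $I=\{i:A_i\cap S\neq\es\}$, it observes that assigning weight $\tfrac{1}{|I|-1}$ to each of the sets $S-A_i$, $i\in I$, yields a fractional cover of $S$ (each $e\in S$ lies in all but one of these sets), so $\sum_{i\in I}\tfrac{v(S-A_i)}{|I|-1}\geq v(S)$, which rearranges to the claim. Your supporting-additive-function route is arguably cleaner here: it avoids having to verify the cover condition, sidesteps the implicit edge case $|I|\leq 1$, and collapses the argument to a single additivity identity $\sum_i w(S\cap A_i)=w(S)$. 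The paper's route, on the other hand, makes the connection to player-respecting fractional covers explicit, which is the form reused later in Section~\ref{subadditive}.
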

\begin{proof}
	Let $I\sse[r]$ be the indices $i$ for which $A_i\cap S\neq\es$.
	The stated inequality is equivalent to showing that 
	$\sum_{i\in I}\frac{v(S-A_i)}{|I|-1}\geq v(S)$. This inequality follows
	because taking $\mu_T=\frac{1}{|I|-1}$ for all $T\in\{S-A_i: i\in I\}$ yields a
	fractional cover of $S$.
\end{proof}

Our randomized mechanisms all use a random-sampling step to compute a good estimate
of $\optbench$. 
Let $\pl_1,\pl_2$ be a random partition of $\pl$ obtained by placing each player
independently with probability $\frac{1}{2}$ in $\pl_1$ or $\pl_2$. 
Let $\gset_j:=\bigcup_{i\in\pl_j}\pset_i$ for $j=1,2$ be the corresponding partition of
$\gset$ induced by $\pl_1,\pl_2$.
The idea is to use $\gset_1$ to obtain a good estimate, and work with this estimate for
$\gset_2$. 
Random partitioning has also been used in prior
work~\cite{BeiCGL12,AmanatidisKS19,NeogiPS24} for single-dimensional budget-feasible
mechanism design, where $\gset$ is directly 
partitioned by assigning each element to a part with probability $\frac{1}{2}$. 
While in the single-dimensional setting, players and items are synonymous, 
as noted earlier, in the multidimensional setting, to ensure
truthfulness, it is important to partition players 
and consider the partition 
of $\gset$ induced by this partition of
players, as defined above. 
We prove the following result.

\begin{lemma}[{\bf Random-partitioning lemma}] \label{lem:random_partition} 
\label{rpartition}
Let $g:2^\gset\mapsto\R_+$ be subadditive. 
\swamy{Define $\vbench[g](S):=\min_{i\in[k]}g(S-\pset_i)$ for $S\sse\gset$.} 
Consider any $S\sse\gset$, and let $S_1=S\cap\gset_1$, $S_2=S\cap\gset_2$. 
Let $\Omega$ be the event $\bigl\{g(S_2),g(S_1)\geq\frac{\vbench[g](S)}{4}\bigr\}$.
Then, (a) $\Pr[\Omega]\geq\frac{1}{2}$,
and (b) $\Pr\bigl[\{g(S_2)\geq\frac{g(S)}{2},\ g(S_2)\geq g(S_1)\}\cap\Omega\bigr]\geq\frac{1}{4}$. 
\end{lemma}

\begin{proof}
Let $I$ be a minimal prefix of $[k]$ such that, letting 
$A_1=\bigcup_{i\in I}(S\cap\pset_i)$, we have
$g(A_1)\geq\frac{\vbench[g](S)}{2}$. Let $A_2=S-A_1$. 
\swamy{Letting $\ell$ be the last index in $I$,  
we have $g(A_1-\pset_\ell)<\frac{\vbench[g](S)}{2}$. Also,  
$g(S-\pset_{\ell})\geq\vbench[g](S)$.
So since $g$ is subadditive, we have 
$g(A_2)\geq g(S-\pset_{\ell})-g(A_1-\pset_{\ell})>\frac{\vbench[g](S)}{2}$.}

Now fix a partition $A_1^H, A_1^L$ of $A_1$, and a partition $A_2^H, A_2^L$ of $A_2$,
where $g(A_1^H)\geq g(A_1)/2$ and $g(A_2^H)\geq g(A_2)/2$. We call $A_1^H$ and $A_2^H$,
the ``big sets''. 

The random partition $\gset_1,\gset_2$ induces random partitions of $A_1$ and $A_2$. 
Consider the event $\Gm$ that for both $\ell=1,2$, the random partition of $A_\ell$ induced
by $\gset_1,\gset_2$ is the same as the partition $A_\ell^H, A_\ell^L$, up to permutations
of the parts. 
That is, $\Gm$ is the event that 
$\gset_j\cap A_\ell\in\{A_\ell^H,A_\ell^L\}$ for $j=1,2$ and $\ell=1,2$. 
For any $j=1,2$, we have 
$\Pr\bigl[\gset_j\cap A_1=A_1^L,\ \gset_j\cap A_2=A_2^L\,|\,\Gm\bigr]=\frac{1}{4}$. So
conditioned on $\Gm$, with probability at least $\frac{1}{2}$, we have that both
$\gset_1,\gset_2$ contain some big set. Removing the conditioning yields part (a).

For part (b), we observe that $g(S_1)$, $g(S_2)$ are identically distributed, and this
remains true even when we condition on the event $\Omega$. 
It follows that $\Pr[g(S_2)\geq g(S_1)|\Omega]\geq\frac{1}{2}$. Also 
$g(S_1)+g(S_2)\geq g(S)$, so $g(S_2)\geq g(S_1)$ also implies that
$g(S_2)\geq\frac{g(S)}{2}$. 
\end{proof}

Throughout, for $j=1,2$, we use 
$V^*_j:=\max\,\{v(S):\ S\sse\gset_j,\ c(S)\leq B\}$ to denote the
optimal value that can be achieved using only players in $\pl_j$ and elements in
$\gset_j$. 

\begin{corollary} \label{rsample}
Let $v:2^\gset\mapsto\R_+$ be subadditive. 
We have 
$\Pr\bigl[V^*_2\geq\frac{\optalg(v,B,c)}{2},\ V^*_2\geq V^*_1\geq\frac{\optbench(v,B,c)}{4}\bigr]\geq\frac{1}{4}$.
\end{corollary}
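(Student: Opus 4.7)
The plan is to decouple the two lower bounds in the conclusion ($V^*_2\geq\optalg/2$ on the one hand, and $V^*_1\geq\optbench/4$ on the other), prove each through a different witness set, and then tie them together via the symmetry of the random partition. Concretely I will show: (i) the event $B_0:=\{V^*_1,V^*_2\geq\optbench(v,B,c)/4\}$ has probability at least $1/2$, via Lemma~\ref{rpartition}(a); (ii) the inequality $V^*_1+V^*_2\geq\optalg(v,B,c)$ holds deterministically by subadditivity, so $\max(V^*_1,V^*_2)\geq\optalg/2$ always; (iii) the symmetry of the random partition under $\pl_1\leftrightarrow\pl_2$ converts (i)+(ii) into the stated joint bound.

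For (i), I apply Lemma~\ref{rpartition}(a) with $g=v$ to a set $\hat{S}$ satisfying $c(\hat{S})\leq B$ and $\vbench[v](\hat{S})=\optbench(v,B,c)$. Writing $\hat{S}_j=\hat{S}\cap\gset_j$, monotonicity of each $c_i$ gives $c(\hat{S}_j)\leq c(\hat{S})\leq B$, so $\hat{S}_j$ is feasible in the definition of $V^*_j$, whence $V^*_j\geq v(\hat{S}_j)$; on the event $\Omega$ produced by the lemma we then have $V^*_j\geq\optbench/4$ for $j=1,2$, so $\Pr[B_0]\geq 1/2$. For (ii), pick $S^*$ with $c(S^*)\leq B$ and $v(S^*)=\optalg(v,B,c)$, and set $S^*_j=S^*\cap\gset_j$. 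Monotonicity of $c$ again gives $V^*_j\geq v(S^*_j)$, and subadditivity of $v$ together with $S^*_1\cup S^*_2=S^*$ yields $V^*_1+V^*_2\geq v(S^*_1)+v(S^*_2)\geq v(S^*)=\optalg$ with probability $1$.

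For (iii), let $A:=\{V^*_2\geq\optalg/2,\ V^*_2\geq V^*_1\}$ and let $A'$ denote the same event with the roles of $V^*_1$ and $V^*_2$ swapped. By (ii), $A\cup A'$ is the entire sample space. The random partition of $\pl$ is invariant under swapping $\pl_1\leftrightarrow\pl_2$, so the joint distribution of $(V^*_1,V^*_2)$ is invariant under swapping its coordinates; since $B_0$ is itself invariant under this swap, $\Pr[A\cap B_0]=\Pr[A'\cap B_0]$. Hence $2\Pr[A\cap B_0]\geq\Pr[(A\cup A')\cap B_0]=\Pr[B_0]\geq 1/2$, so $\Pr[A\cap B_0]\geq 1/4$, and on $A\cap B_0$ all three inequalities $V^*_2\geq\optalg/2$, $V^*_2\geq V^*_1$, $V^*_1\geq\optbench/4$ hold simultaneously.

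The step that requires the most care is (iii). The two natural witnesses $\hat{S}$ and $S^*$ deliver information about the random partition through different events, and a naive union bound on their conclusions is too weak (each has probability only $\tfrac12$ or $\tfrac14$). The crucial observation that makes the argument go through is that the $\optalg/2$-bound on $\max(V^*_1,V^*_2)$ does not need the random-partitioning lemma at all: it is deterministic from subadditivity. Once this is realized, the only remaining randomness is which of $V^*_1,V^*_2$ is the larger one, and the symmetry of the partition turns this into a cost of a factor of $2$ rather than an intersection with a second low-probability event.
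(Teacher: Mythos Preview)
Your proof is correct and follows essentially the same approach as the paper's own proof: apply Lemma~\ref{rpartition}(a) to an $\optbench$-optimal set to get $\Pr[B_0]\geq\tfrac12$, use subadditivity on an $\optalg$-optimal set to obtain the deterministic inequality $V^*_1+V^*_2\geq\optalg$, and then exploit the exchangeability of $(V^*_1,V^*_2)$ to halve the probability. Your symmetry step (iii), phrased via the swap map and the identity $\Pr[A\cap B_0]=\Pr[A'\cap B_0]$, is a slightly more explicit rendering of what the paper states as ``$V^*_1,V^*_2$ are identically distributed, even conditioned on $B_0$, so $V^*_2\geq V^*_1$ with probability $1/2$ conditioned on this event.''
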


\begin{proof}
Let $\optset\sse\gset$ be such that $\vbench(\optset)=\optbench(v,B,c)$.
Since $V^*_j\geq v(\optset\cap\gset_j)$ for $j=1,2$, applying Lemma~\ref{rpartition} (a)
to the set $S=\optset$, yields $\Pr\bigl[V^*_2,V^*_1\geq\frac{\optbench(v,B,c)}{4}\bigr]\geq\frac{1}{2}$.
$V^*_1$, $V^*_2$ are identically distributed, even conditioned on the above event, 
so $V^*_2\geq V^*_1$ with probability $1/2$, conditioned on this event.
Also $V^*_1+V^*_2\geq\optalg(v,B,c)$, so $V^*_2\geq V^*_1$ implies that $V^*_2\geq \frac{\optalg(v,B,c)}{2}$.
\end{proof}

We have an analogous result for the optimal value of the
LP-relaxation for the algorithmic problem of computing $\OPTalg$. 
For $A\sse\gset$, let $\lpopt(A)$ be the optimal value of $\optlp[(A)]$, which recall is
the following LP 
that can be solved in polytime using a demand oracle 
(Theorem~\ref{lpsolve}). 
\begin{equation}
\max \ \ \sum_{S\sse A}v(S)x_S \qquad \text{s.t.} \qquad
\sum_{S\sse A}c(S)x_S\leq B, \qquad \sum_{S\sse A}x_S\leq 1, \qquad x\geq 0. 
\tag*{$\optlp[(A)]$} 
\end{equation}
Throughout, for $j=1,2$, let $\lpopt_j$ denote the optimal value of $\optlp[(\gset_j)]$,
and let $\lpopt$ be the optimal value of $\optlp[(\gset)]$.

\begin{lemma} \label{rsample-lp}
Let $v:2^\gset\mapsto\R_+$ be subadditive.
With probability at least $1/4$, 
we have \swamy{$\lpopt_2\geq\lpopt_1\geq\frac{\optbench(v,B,c)}{4}$} 
and $\lpopt_2\geq\frac{\lpopt}{2}$. 
\end{lemma}

\begin{proof}
\swamy{Again, let $\optset\sse\gset$ be such that $\vbench(\optset)=\optbench(v,B,c)$.
Clearly, we have $\lpopt_j\geq V^*_j\geq v(\optset\cap\gset_j)$ for $j=1,2$. 
So by Lemma~\ref{rpartition} (a), we have 
$\Pr\bigl[\lpopt_2,\lpopt_1\geq\frac{\optbench(v,B,c)}{4}\bigr]\geq\frac{1}{2}$.
Conditioned on this event, with probability $\frac{1}{2}$, we have that
$\lpopt_2\geq\lpopt_1$, which also implies that $\lpopt_2\geq\lpopt/2$ since
$\lpopt_1+\lpopt_2\geq\lpopt$.}  
\end{proof}

In the analysis of our mechanisms, we will often need to demonstrate the existence of a
good-value set whose cost is bounded away from the budget, say, is at most $B/2$.
We obtain this by arguing that a good-value set satisfying the budget constraint can be
suitably pruned, as shown by Lemmas~\ref{bredn-gencost} and~\ref{bredn-supaddcost} 
for general cost functions and superadditive cost functions respectively.
Recall that $c(S):=\sum_{i\in[k]}c_i(S\cap\pset_i)$ for $S\sse\gset$. Note that the
cost-function $c$ inherits the properties of the $c_i$s: it is monotone and normalized, 
and if all $c_i$s are superadditive, then so is $c$.

\begin{lemma} \label{bredn-gencost}
Let $g:2^\gset\mapsto\R_+$ be subadditive. 
Let $\targ\in\R_+$, and 
$S\sse\gset$ be such that $c(S)\leq B$. 
We can
find $T\sse S$ such that $c(T)\leq B/2$ and 
$\min\{\vbench[g](S)-\targ,\targ-\max_{e\in S}g(e)\}<g(T)\leq\targ$.
\end{lemma}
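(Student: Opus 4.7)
The plan is a case analysis on whether $c(A)\leq B/2$, where $A := S\setminus\pset_{i^*}$ for $i^*\in\arg\max_i g(S\cap\pset_i)$; write $M := \max_{e\in S}g(e)$. Two facts will drive the argument: by subadditivity of $g$, $g(A)\geq g(S)-g(S\cap\pset_{i^*})=\vbench[g](S)$; and since $c$ decomposes additively across players, $c(A)+c_{i^*}(S\cap\pset_{i^*})=c(S)\leq B$. I may assume $\min\{\vbench[g](S)-\targ,\targ-M\}>0$, since otherwise $T=\es$ already satisfies the conclusion.

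In the easy case $c(A)\leq B/2$: if $g(A)\leq\targ$, set $T := A$, so $c(T)\leq B/2$ and $g(T)\geq\vbench[g](S)>\vbench[g](S)-\targ$, matching the first branch of the minimum. Otherwise $g(A)>\targ$: list $A$ arbitrarily as $f_1,\ldots,f_m$, form prefixes $U_j := \{f_1,\ldots,f_j\}$, and take $T := U_{j^*}$ for the largest $j^*$ with $g(U_{j^*})\leq\targ$. Subadditivity gives $g(U_{j^*+1})\leq g(U_{j^*})+g(f_{j^*+1})$, so $g(T)>\targ-g(f_{j^*+1})\geq\targ-M$, matching the second branch; clearly $c(T)\leq c(A)\leq B/2$.

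The hard case is $c(A)>B/2$, which forces $c_{i^*}(S\cap\pset_{i^*})<B/2$. Here I will enumerate $A$ as $f_1,\ldots,f_m$ in a carefully chosen order, define $U_j := \{f_1,\ldots,f_j\}$, and take $T := U_{j^*}$ to be the maximal prefix satisfying both $c(U_{j^*})\leq B/2$ and $g(U_{j^*})\leq\targ$. Step $j^*+1$ then violates at least one cap. If it violates the $g$-cap, the same subadditivity argument as in the easy case yields $g(T)>\targ-M$, matching the second branch. If it violates only the cost cap, I plan to leverage the subadditive decomposition $g(A)\leq g(T)+g(A\setminus T)$ together with $g(A)\geq\vbench[g](S)$, giving $g(T)\geq\vbench[g](S)-g(A\setminus T)$, and to choose the ordering of $A$ so that the tail $A\setminus T$ has $g$-value strictly below $\targ$, landing the first branch $g(T)>\vbench[g](S)-\targ$.

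The main obstacle will be the cost-saturated subcase of the hard case: exhibiting an ordering of $A$ for which $g(A\setminus T)<\targ$. Natural attempts are a greedy ordering by marginal $g$-contribution, or a player-respecting ordering that front-loads the high-$g$ players of $A$. A useful fallback, exploiting $c_{i^*}(S\cap\pset_{i^*})<B/2$, is to instead build $T$ by including all of $S\cap\pset_{i^*}$ and augmenting with a partial subset of $A$ of cost at most $B/2-c_{i^*}(S\cap\pset_{i^*})$; monotonicity of $g$ then lower-bounds $g(T)$ by $\max_i g(S\cap\pset_i)$, and a prefix argument on the $A$-side augmentation should allow $g(T)$ to land in the target range.
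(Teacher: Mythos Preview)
Your easy case ($c(A)\leq B/2$) is correct and essentially matches how the paper handles the analogous situation. The genuine gap is in the hard case, which you yourself flag as ``the main obstacle'' without resolving.

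None of your proposed fixes work. For the ``clever ordering of $A$'' idea, you need $g(A\setminus T)<\targ$ at the moment the cost cap binds, but having already removed player $i^*$ you have no further slack: if the pivot lies within some player $j$'s set, subadditivity only gives $g(A\setminus\pset_j)\geq g(A)-g(S\cap\pset_j)$, and $g(S\cap\pset_j)$ can be as large as $g(S\cap\pset_{i^*})$; subtracting a second player's contribution yields at best $g(S)-2\max_i g(S\cap\pset_i)$, which is too weak to recover $\vbench[g](S)-\targ$. Your fallback of including $S\cap\pset_{i^*}$ in $T$ invokes monotonicity of $g$, which the lemma does not assume, and even granting monotonicity it forces $g(T)\geq g(S\cap\pset_{i^*})$, which may already exceed $\targ$.

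The paper avoids this by \emph{not} fixing the pivot player in advance. It adds whole player-sets $S\cap\pset_i$ in index order until either $c(S_1)>B/2$ or $g(S_1)\geq\targ$, letting the last added player $\ell$ be the pivot. In the cost-saturated case, $S_2:=S\setminus S_1$ has $c(S_2)<B/2$; since $S\setminus\pset_\ell=S'_1\cup S_2$ with $g(S'_1)<\targ$ by minimality, subadditivity gives $g(S_2)\geq g(S\setminus\pset_\ell)-g(S'_1)>\vbench[g](S)-\targ$, and one then trims $S_2$ element-wise if $g(S_2)>\targ$. The point is that the \emph{dynamically chosen} pivot $\ell$ is the single player whose removal is charged against the ``$-\max_i g(S\cap\pset_i)$'' term in $\vbench[g](S)$; your approach spends that budget on $i^*$ before the cost-saturation issue even arises.
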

\begin{proof}
	Let $\tht=\max_{e\in S}v(e)$. 
	Let $I$ be a minimal prefix of $[k]$ such that, letting 
	$S_1=\bigcup_{i\in I}(S\cap\pset_i)$, we have
	$c(S_1)>B/2$ or $g(S_1)\geq\targ$. Let $S_2=S-S_1$, and let $S'_1=S_1-\pset_\ell$,
	where $\ell$ is the last index in $I$.

	If $c(S_1)\leq B/2$, then we must have $g(S_1)>\targ$. Considering elements of $S_1$ in
	some fixed order, we take $T$ to be a maximal prefix of $S_1$ such that
	$g(T)\leq\targ$. Then we have $c(T)\leq B/2$ (since $T\sse S_1$), and the maximality of
	$T$ shows that $g(T)>\targ-\tht$.

	If $c(S_1)>B/2$, then $c(S_2)<B/2$.
	By the minimality of $I$, we have $g(S'_1)<\targ$. 
	We have \swamy{$g(S-\pset_\ell)\geq\vbench[g](S)$, 
	from the definition of $\vbench[g](S)$.}
	So \swamy{since $g$ is subadditive,} 
        $g(S_2)\geq g(S-\pset_\ell)-g(S'_1)>\vbench[g](S)-\targ$. So again taking $T$ to be a
	maximal prefix of $S_2$ with $g(T)\leq\targ$, we obtain that 
	$T=S_2$ and $g(T)>\vbench[g](S)-\targ$, or $g(T)>\targ-\tht$.
\end{proof}

\begin{lemma} \label{bredn-supaddcost}
Let $g:2^\gset\mapsto\R_+$ be subadditive, 
and $c:2^\gset\mapsto\R_+$ be superadditive.
Let $\targ\in\R_+$, and
$S\sse\gset$ be such that $c(S)\leq B$. 
We can find $T\sse S$ such that $c(T)\leq B/2$ and 
$\min\{g(S)-\targ,\targ\}-\max_{e\in S}g(e)<g(T)\leq\targ$.
\end{lemma}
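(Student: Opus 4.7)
The plan is to mirror the proof of Lemma~\ref{bredn-gencost}, but exploit that superadditivity of $c$ lets us work with element-level partitions (rather than player-level partitions as in the general-cost case). Concretely, I fix an arbitrary ordering $e_1,\ldots,e_m$ of $S$, and let $S_1$ be the minimal prefix for which either $c(S_1)>B/2$ or $g(S_1)\geq\targ$. Write $\ell$ for the last element added to $S_1$, $S_1'=S_1-\{\ell\}$, and $S_2=S-S_1$. By minimality, $c(S_1')\leq B/2$, $g(S_1')<\targ$, and clearly $g(\ell)\leq\max_{e\in S}g(e)$. The key structural observation is that since $S=S_1\cup S_2$ is a disjoint union and $c$ is superadditive, $c(S_1)+c(S_2)\leq c(S)\leq B$; so whenever the stopping rule for $S_1$ fires because of cost, the leftover $S_2$ is automatically half-budget-feasible, with no further pruning needed.

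I would then split into two cases. In Case A the stopping rule fires on cost, i.e., $c(S_1)>B/2$. Then the observation above gives $c(S_2)<B/2$, and two applications of subadditivity of $g$ (first to $S=(S-\{\ell\})\cup\{\ell\}$, then to $S-\{\ell\}=S_1'\cup S_2$) yield $g(S_2)\geq g(S)-g(\ell)-g(S_1')>g(S)-\targ-\max_{e\in S}g(e)$. If $g(S_2)\leq\targ$, I set $T=S_2$; otherwise I take $T$ to be a maximal prefix of $S_2$ with $g(T)\leq\targ$, so that for the next element $e'$, subadditivity gives $g(T)\geq g(T\cup\{e'\})-g(e')>\targ-\max_{e\in S}g(e)$. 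In Case B the stopping rule fires on value, i.e., $g(S_1)\geq\targ$ and $c(S_1)\leq B/2$; I take $T$ to be the maximal prefix of $S_1$ with $g(T)\leq\targ$. If $g(S_1)=\targ$ then $T=S_1$ attains $g(T)=\targ$; otherwise $T=S_1'$ and $g(T)\geq g(S_1)-g(\ell)>\targ-\max_{e\in S}g(e)$. In every sub-case $c(T)\leq B/2$ and $g(T)\leq\targ$, and combining the lower bounds yields $g(T)>\min\{g(S)-\targ,\targ\}-\max_{e\in S}g(e)$, as required.

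The main obstacle I anticipate is making sure that the two distinct forms of loss---the $g(S)-\targ-\max_{e\in S}g(e)$ bound coming from the cost-triggered case, and the $\targ-\max_{e\in S}g(e)$ bound coming from the value-triggered case---assemble cleanly into the single $\min$-expression in the statement. This is a routine verification, since both $g(S)-\targ$ and $\targ$ dominate $\min\{g(S)-\targ,\targ\}$ by definition, so either bound suffices. The only other point to be careful about is preserving the strict inequality in the boundary sub-case $g(S_1)=\targ$ (handled because $g(T)=\targ>\targ-\max_{e\in S}g(e)$ except in the degenerate situation $\max_{e\in S}g(e)=0=\targ$, which is trivial). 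No machinery beyond element-level subadditivity of $g$ and superadditivity of $c$ on the partition $S=S_1\cup S_2$ is needed.
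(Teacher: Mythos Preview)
Your proposal is correct and follows essentially the same approach as the paper: fix an element ordering, take $S_1$ to be the minimal prefix triggering either $c(S_1)>B/2$ or $g(S_1)\geq\targ$, and then in the cost-triggered case use superadditivity to bound $c(S_2)$ and subadditivity to bound $g(S_2)$, while in the value-triggered case take a maximal prefix of $S_1$. The paper's write-up is terser (it bounds $g(S_1)<\targ+\theta$ directly rather than splitting via $S_1'\cup\{\ell\}$, and does not separately discuss the $g(S_1)=\targ$ boundary), but the argument is the same.
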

\begin{proof}
	Let $\tht=\max_{e\in S}g(e)$.
	We proceed as in the proof of Lemma~\ref{bredn-gencost}, except that we can exploit
	superadditivity and do not need to consider whole player-sets when forming $S_1$ and
	$S_2$. Considering elements in some fixed order, we now take $S_1$ to be a minimal prefix
	of $S$ such that 
	$c(S_1)>B/2$ or $g(S_1)\geq\targ$. If $c(S_1)\leq B/2$, then we again take $T$ to be a
	maximal prefix of value at most $\targ$. Otherwise, we have
	$g(S_1)<\targ+\tht$. Letting $S_2=S-S_1$, we than have $g(S_2)>g(S)-\targ-\tht$.
	We again let $T$ be a maximal prefix of $S_2$ of value at most $\targ$, so that
	$T=S_2$ or $g(T)>\targ-\tht$. Also, $c(T)\leq c(S_2)$, and since $c$ is superadditive, we
	obtain that $c(S_2)\leq c(S)-c(S_1)<B/2$.
\end{proof}

\subsection{Budget-feasible-in-expectation mechanism for general cost
  functions} \label{xos-bfexp} 

Our budget-feasible-in-expectation mechanism uses random sampling to compute an estimate
of $\optbench$ from one part $(\pl_1,\gset_1)$, and utilizes this to solve an LP-relaxation of a
constrained demand-oracle query on $(\pl_2,\gset_2)$. By viewing the LP solution as a
distribution, and using VCG payments, we obtain the desired mechanism.
Recall that $\lpopt_j$ is the optimal value of $\optlp[(\gset_j)]$, for $j=1,2$.

\begin{procedure}[ht!] 
\caption{XOS-BFInExp() \textnormal{\qquad // budget-feasible-in-expectation mechanism for
    general costs} \label{xosalg-expgencost}}
\KwIn{Budget-feasible MD instance
$\bigl(v:2^\gset\mapsto\R_+,B,\{\pset_i,\C_i\sse\R_+^{2^{\pset_i}},c_i\in\C_i\}\bigr)$; 
$\ld\in[0,1]$}
\KwOut{subset of $\gset$; \quad payments are VCG payments, as specified in Lemma~\ref{bfinexp}}
\SetKwComment{simpc}{// }{}
\SetCommentSty{textnormal}
\DontPrintSemicolon

Partition $\pl$ into two sets $\pl_1$, $\pl_2$ by placing each player independently with
probability $\frac{1}{2}$ in $\pl_1$ or $\pl_2$.
Let $\gset_j:=\bigcup_{i\in\pl_j}\pset_i$ be the induced partition of $\gset$. 
Compute $V_1=\lpopt_1$ using a demand oracle.
\label{bfexp-optestim}

Obtain $x^*$ by solving the 
following constrained-demand-oracle LP, which is \eqref{cdlp} with 
$A=\gset_2$, $\kp=\frac{\ld V_1}{B}$, $\targ=\ld V_1$. 
\begin{equation}
\max\ 
\sum_{S\sse\gset_2}\bigl(v(S)-\tfrac{\ld V_1}{B}\cdot c(S)\bigr)x_S \quad \text{s.t.} \quad 
\sum_{S\sse\gset_2}v(S)x_S\leq\ld V_1, \quad \sum_{S\sse\gset_2} x_S\leq 1, \quad x\geq 0.
\label{cdemd} 
\end{equation} 
\; \label{bfexp-demandset} 

\vspace*{-4ex} Sample a set $T$ from the distribution $(x^*_S)_{S\sse\gset_2}$, 
and \Return $T$.
\label{bfexp-setoutput}
\end{procedure}

\begin{theorem} \label{xosdemdthm} \label{bfexp-thm}
Taking $\ld=0.5$ in Algorithm~\ref{xosalg-expgencost}, along with suitable payments,  
we obtain a budget-feasible-in-expectation mechanism that obtains expected value
at least $\frac{\optbench(v,B,c)}{\xosdemdapx}$.
The mechanism runs in polytime given a demand oracle. 
\end{theorem}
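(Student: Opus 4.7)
I plan to verify the four required properties---truthfulness-in-expectation, IR with probability $1$, budget-feasibility-in-expectation, and the claimed $64$-approximation of $\optbench$---and note that polynomial runtime follows via the demand oracle. First observe that for $i \in \pl_1$ the returned set $T \subseteq \gset_2$ satisfies $T \cap \pset_i = \es$, so NPT forces $p_i = 0$ and $i$'s utility is $0$ regardless of her report. Conditioning on the random partition and on $V_1$ (which depends only on $\pl_1$-reports), Algorithm~\ref{xosalg-expgencost} is MIDR on $\pl_2$: LP~\eqref{cdemd} maximizes, over probability distributions supported on $2^{\gset_2}$, an affine function of $c$ with coefficients $\al_\ell = \ld V_1/B$ and $\beta_S = -v(S)$. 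Theorem~\ref{vcgthm} applied in expectation, with the prescribed $h_i$, therefore furnishes VCG payments yielding truthfulness-in-expectation, IR-in-expectation, and NPT; Lemma~\ref{bfinexp} then upgrades IR to probability-$1$ without changing expected payments.

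The hard part is bounding the total expected payment by $B$. Let $L' := \E[x^*]{\tfrac{B\,v(T)}{\ld V_1} - c(T)}$ denote the (scaled) LP optimum and $L'_i$ the optimum of the same LP restricted to distributions supported on $2^{\gset_2 - \pset_i}$; the VCG formula gives $\E{p_i} = (L' - L'_i) + \E{c_i(T)}$ for $i \in \pl_2$ (and $0$ for $i \in \pl_1$). To bound $L' - L'_i$, I would test the restricted LP against the distribution obtained from $x^*$ via the set-map $T \mapsto T - \pset_i$; this is feasible since $v(T - \pset_i) \leq v(T) \leq \ld V_1$ by monotonicity and $c_\ell(T - \pset_i) = c_\ell(T)$ for $\ell \neq i$. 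This yields
\[
L' - L'_i \;\leq\; \tfrac{B}{\ld V_1}\bigl(\E{v(T)} - \E{v(T - \pset_i)}\bigr) - \E{c_i(T)}.
\]
Summing over $i \in \pl_2$ and applying Claim~\ref{xosnprop} to the partition $\{\pset_i\}_{i \in \pl_2}$ of $\gset_2$, together with the value constraint $\E{v(T)} \leq \ld V_1$ and the identity $\sum_{i \in \pl_2}\E{c_i(T)} = \E{c(T)}$, makes the $\E{c(T)}$-terms cancel and gives $\sum_i \E{p_i} \leq B$. This step is precisely where XOS (via Claim~\ref{xosnprop}) and the value-cap in~\eqref{cdemd} are jointly essential, and is the main obstacle in the argument.

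For the approximation, Lemma~\ref{rsample-lp} implies that the event $E := \{\lpopt_2 \geq V_1 \geq \optbench/4\}$ occurs with probability at least $1/4$. Conditional on $E$, the scaled solution $x := (\ld V_1/\lpopt_2)\,x^{(2)}$, where $x^{(2)}$ is optimal for $\optlp[(\gset_2)]$, is feasible for~\eqref{cdemd} with objective $\ld V_1\bigl(1 - \ld V_1/\lpopt_2\bigr) \geq V_1/4$, using $\ld = 1/2$ and $\lpopt_2 \geq V_1$. Since the coefficient on $c(S)$ in the LP objective is nonnegative, $\E[x^*]{v(T)}$ dominates the LP objective, so $\E{v(T) \mid E} \geq V_1/4 \geq \optbench/16$; taking unconditional expectation gives the claimed $\E{v(T)} \geq \optbench/64$. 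Polynomial runtime is immediate from Theorem~\ref{lpsolve}, which solves both $V_1 = \lpopt_1$ and~\eqref{cdemd} via a demand oracle, together with the fact that a basic optimal $x^*$ has $O(1)$-sized support (from the two nontrivial LP constraints), making the sampling step and the VCG-payment computations straightforward.
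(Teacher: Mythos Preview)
Your proposal is correct and follows essentially the same route as the paper. You have reconstructed the content of Lemma~\ref{bfexp-val} (the scaled solution $x=(\ld V_1/\lpopt_2)\,x^{(2)}$ as a witness for the LP optimum) and the budget-feasibility argument inside Lemma~\ref{bfinexp} (testing the player-$i$-excluded LP against the pushforward of $x^*$ under $T\mapsto T-\pset_i$, then invoking Claim~\ref{xosnprop}); the paper combines these via the same inequality chain culminating in \eqref{bfexp-paymt}.
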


\begin{proof}
The polynomial running time follows from Theorem~\ref{lpsolve}, which shows that we can
solve LPs of the form $\optlp[(A)]$ and \eqref{cdlp} in polytime given a demand
oracle. 
Lemma~\ref{bfinexp} specifies the payments, shows that they can be computed efficiently,
and the resulting mechanism is budget-feasible in expectation.
Lemma~\ref{rsample-lp} shows that 
$\lpopt_2\geq\lpopt_1\geq\frac{\optbench(v,B,c)}{4}$ holds with probability at least
$\frac{1}{4}$. Assuming that this event happens, Lemma~\ref{bfexp-val} shows that we
obtain value at least $\frac{V_1}{4}\geq\frac{\optbench(v,B,c)}{16}$, therefore the
expected value returned at least $\frac{\optbench(v,B,c)}{64}$.
\end{proof}

\begin{lemma} \label{bfexp-val}
Suppose that $\lpopt_2\geq V_1$.
Then the optimal value of \eqref{cdemd} is at least 
$\ld V_1\bigl(1-\frac{\ld V_1}{\lpopt_2}\bigr)\geq\ld(1-\ld)V_1$. 
\end{lemma}

\begin{proof}
Let $\bx$ be an optimal solution to $\optlp[(\gset_2)]$. We use $S$ below to index over
subsets of $\gset_2$. 
So $\sum_S v(S)\bx_S=\lpopt_2$ and
$\sum_S c(S)\bx_S\leq B$. Consider $x'=\frac{\bx}{\lpopt_2/\ld V_1}$. 
We have $\sum_Sv(S)x'_S=\ld V_1$, so $x'$ is feasible to \eqref{cdemd}. 
We also have $\sum_S c(S)x'_S\leq\frac{\ld V_1}{\lpopt_2}\cdot B$. 
So the optimal value of \eqref{cdemd} is at least the objective value of $x'$, which is
\begin{equation*}
\sum_S v(S)x'_S-\frac{\ld V_1}{B}\cdot \sum_S c(S)x'_S
\geq \ld V_1-\frac{\ld V_1}{B}\cdot \frac{\ld V_1}{\lpopt_2}\cdot B
= \ld V_1\Bigl(1-\tfrac{\ld V_1}{\lpopt_2}\Bigr). \qedhere
\end{equation*}
\end{proof}

\begin{lemma} \label{bfinexp}
Given a demand oracle, one can efficiently compute payments that when combined with
Algorithm~\ref{xosalg-expgencost} yield a budget-feasible-in-expectation mechanism (that
is individually rational with probability $1$).
\end{lemma}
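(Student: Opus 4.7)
The plan is to interpret Algorithm~\ref{xosalg-expgencost} as a maximal-in-distributional-range (MIDR) mechanism for the players in $\pl_2$ and derive VCG-in-expectation payments via Theorem~\ref{vcgthm}. First observe that the output $T$ always lies in $\gset_2$, so every $i\in\pl_1$ incurs zero cost and is paid $0$ (NPT); her utility is identically zero, so truthfulness and IR are automatic for $\pl_1$---even though her report can influence $V_1=\lpopt_1$, she has no stake in the outcome.

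For players in $\pl_2$, the value $V_1$ depends only on $c_1$, so~\eqref{cdemd} selects $x^*(c)$ as the maximizer of $\sum_Sv(S)x_S-\al\sum_{\ell\in\pl_2}\sum_Sc_\ell(S\cap\pset_\ell)x_S$ (with $\al=\ld V_1/B$) over a feasible polytope independent of $c_2$: this is an MIDR rule with equal coefficients $\al_i=\al$ across all $i\in\pl_2$. Letting $W^*(c)$ denote the resulting optimum and $W^*_{-i}(c_{-i})$ the optimum of the same LP further restricted by $x_S=0$ for all $S$ with $S\cap\pset_i\neq\es$, the expectation form of~\eqref{vcgpayment} with $h_i(c_{-i}):=-W^*_{-i}(c_{-i})/\al$ makes the deterministic transfer $\E{p_i(c)}=\E{c_i(T\cap\pset_i)}+\bigl(W^*-W^*_{-i}\bigr)/\al$ truthful-in-expectation. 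Both $W^*$ and each $W^*_{-i}$ are computable in polytime via Theorem~\ref{lpsolve} using the demand oracle (the latter by running it over $\gset_2\sm\pset_i$).

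With this choice, the expected utility of a truthful player is $\bigl(W^*-W^*_{-i}\bigr)/\al\geq 0$ because the restricted LP is a tightening of the LP defining $W^*$; so IR holds in expectation. A standard randomization yields ex-post IR and NPT without altering any expected transfer: on a realized $T$ with $T\cap\pset_i\neq\es$, pay $i$ the amount $c_i(T\cap\pset_i)+(W^*-W^*_{-i})/\bigl(\al\cdot\Pr[T\cap\pset_i\neq\es\mid c]\bigr)$, and pay $0$ otherwise. (The conditional probability vanishes only when $x^*$ is supported on subsets of $\gset_2\sm\pset_i$, in which case $W^*=W^*_{-i}$ and the payment is consistently $0$.)

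The main obstacle is the budget inequality $\sum_i\E{p_i(c)}\leq B$. Given $x^*$, construct $x'$ by collapsing each $S$ in its support onto $S\sm\pset_i$; monotonicity of $v$ preserves the cap $\sum_Sv(S)x_S\leq\ld V_1$, so $x'$ is feasible for the restricted LP. Evaluating its objective at $x'$ yields
\[
W^*-W^*_{-i}\ \leq\ \sum_Sx^*_S\bigl(v(S)-v(S\sm\pset_i)\bigr)-\al\sum_Sx^*_Sc_i(S\cap\pset_i).
\]
Combining with the formula for $\E{p_i(c)}$ and cancelling the cost terms gives $\E{p_i(c)}\leq\E{v(T)-v(T\sm\pset_i)}/\al$. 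Summing over $i\in\pl_2$ and applying Claim~\ref{xosnprop} to the partition $\{\pset_i\}_{i\in\pl_2}$ of $\gset_2$ yields $\sum_i\E{p_i(c)}\leq\E{v(T)}/\al\leq\ld V_1/\al=B$, using the $v$-cap constraint of~\eqref{cdemd} in the last inequality. This final step is precisely where XOS (via Claim~\ref{xosnprop}) enters; it is the essential reason the constructed mechanism is budget-feasible in expectation.
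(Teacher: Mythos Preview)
Your proof is correct and follows essentially the same VCG/MIDR approach as the paper: interpret the LP optimization as a fractional VCG computation over $\pl_2$, use the Clarke pivot $W^*_{-i}$ to define expected payments, bound $W^*-W^*_{-i}$ via the collapsed solution $x'$, and invoke Claim~\ref{xosnprop} together with the $v$-cap in~\eqref{cdemd} to get $\sum_i\E{p_i}\leq B$. The only cosmetic difference is your ex-post payment rule (pay the realized cost plus an additive bonus conditioned on $T\cap\pset_i\neq\es$) versus the paper's multiplicative scaling $\frac{\mu_i}{\ecost[i]}\cdot c_i(T\cap\pset_i)$; both preserve the expected transfer and ensure IR/NPT with probability~$1$.
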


\begin{proof}
Fix a random partition $(\pl_1,\gset_1)$, $(\pl_2,\gset_2)$.
We first obtain expected payments for this random partition that yield truthfulness in
expectation and budget-feasibility in expectation. Given these, there is a standard way of 
obtaining actual payments for each random outcome of the mechanism that satisfy IR and NPT
with probability $1$. 

Observe that the computation in step~\ref{bfexp-demandset} amounts to a VCG computation over
the domain of feasible {\em fractional} solutions to \eqref{cdemd}. Thus, we can still use
expression \eqref{vcgpayment} to obtain expected payments for the players. 
Let $\kp=\frac{\ld V_1}{B}$.
The expected payment to player $i\in\pl_2$ when step~\ref{bfexp-demandset} returns the
fractional solution $x$ is given by 
$\sum_Sx_S\bigl(\frac{1}{\kp}\cdot v(S)-c(S-\pset_i)\bigr)-h_i(c_{-i})$, 
where $h_{-i}(c_i)$ is $\frac{1}{\kp}$ times the optimal value of \eqref{cdemd} when
player $i$ is excluded, that is, we are only allowed to use sets $S\sse\gset_2-\pset_i$.  
Note that $h_{-i}(c_{-i})$ can be calculated efficiently given a demand oracle, so these
expected payments can be computed efficiently.
Since the expected cost to player $i$ is $\sum_S c_i(S\cap\pset_i)x_S$, the expected
utility is $\frac{1}{\kp}\times(\text{objective value of $x$})-h_{-i}(c_{-i})$, which
is maximized by $x=x^*$, the outcome when player $i$ reports truthfully. This shows
truthfulness in expectation, and also shows that the expected utility under truthful
reporting is nonnegative. 
One feasible solution to \eqref{cdemd} when player $i$ is excluded is given by 
setting $x'_T=\sum_{S\sse\gset: S-\pset_i=T}x^*_S$ for all $T\sse\gset_2-\pset_i$, whose
objective value is 
$\sum_{S\sse\gset}v(S-\pset_i)x^*_S-\kp\cdot\sum_{\ell\in\pl_2-\{i\}}c_\ell(S\cap\pset_\ell)x^*_S$.
This yields a lower bound on $h_{-i}(c_{-i})$ and an upper bound of
$\frac{1}{\kp}\cdot\sum_S (v(S)-v(S-\pset_i))x^*_S$ on the payment made to player $i$.
Therefore, the total expected payment is at most 
\begin{equation}
\frac{B}{\ld V_1}\cdot\sum_S x^*_S\sum_{\ell\in\pl_1}(v(S)-v(S-\pset_\ell))
\leq \frac{B}{\ld V_1}\cdot\sum_S x^*_Sv(S) \leq B \label{bfexp-paymt}
\end{equation}
where the first inequality follows from Claim~\ref{xosnprop}.

We obtain payments for each random outcome $T$ of the mechanism as follows. 
Consider a player $i\in\pl_2$. Let $\mu_i$ be the expected payment to $i$, as computed
above. Let $\ecost[i]=\sum_S c_i(S\cap\pset_i)x^*_S$ be the expected cost incurred by $i$. 
We set the payment of $i$ under outcome $T$ to be 
$\frac{\mu_i}{\ecost[i]}\cdot c_i(T\cap\pset_i)$. 
(Here $c_i$ is $i$'s reported cost, which we may assume is her true cost, due to
truthfulness in expectation.)
Since $\mu_i\geq\ecost[i]$, this ensures that the payment is always the cost incurred by
$i$, and is $0$ if this cost is $0$.
\end{proof}

\begin{remark} \label{additive-bfexpremk}
With additive valuations and additive cost functions, observe that
Algorithm~\ref{xosalg-expgencost} runs in polynomial time, since the LPs
$\optlp[(\gset_1)]$ and \eqref{cdlp} can be solved in polytime.
\end{remark}

\subsection{Universally budget-feasible mechanisms assuming no-overbidding} 
\label{xos-bfuni}

We next describe how to obtain the stronger mechanism-design guarantee of universal
budget-feasibility.
We consider here the simpler setting where we assume no-overbidding; we show how to drop
this assumption in Section~\ref{overbid}. 
While 
the results here are subsumed by those
obtained in Section~\ref{overbid} (modulo $O(1)$ approximation factors), 
we discuss things first in the simpler setting of
no-overbidding as this will introduce many of the key underlying ideas, and the arguments
are simpler.  

Recall that no-overbidding imposes that for every player $i$, cost-function
$c_i\in\C_i$, element $e\in\pset_i$, we have $c_i(e)\leq B$.
Let $e^*=\argmax_{e\in\gset}v(e)$, and $\vemax$ denotes $v(e^*)$. 
Our mechanisms will exploit the fact that under
no-overbidding, the mechanism that returns $e^*$ and pays $B$ to the player who owns $e^*$
is a budget-feasible mechanism. (As noted earlier, without no-overbidding, the adaptation
where we return $\argmax_{e\in\gset}\,\{v(e): c(e)\leq B\}$ is not truthfully
implementable.) 

Algorithm~\ref{xosalg-gencost} describes the underlying algorithm for general cost
functions. It yields an $O(1)$-approximation with respect to a weaker benchmark than
$\optbench$, which we describe below. 
This serves chiefly as a warm-up for the more sophisticated constructions in
Section~\ref{overbid}, where we drop the no-overbidding assumption by suitably mdifying
portions of the algorithm (and obtain $O(1)$-approximation relative to $\optbench$).  
We then consider superadditive costs (Section~\ref{xos-supaddcost}), and show that with
some minor changes, 
we can achieve a stronger guarantee, namely, 
$O(1)$-approximation with respect to $\optbench$, and we can do so in polytime
given a constrained demand oracle. 

\begin{procedure}[ht!] 
\caption{XOS-UniBF() \textnormal{\qquad // universally budget-feasible mechanism for
    general costs} \label{xosalg-gencost}}
\KwIn{Budget-feasible MD instance
$\bigl(v:2^\gset\mapsto\R_+,B,\{\pset_i,\C_i\sse\R_+^{2^{\pset_i}}\},\{c_i\in\C_i\}\bigr)$; 
parameters $\ld\in[0,0.5]$, $p\in[0,1]$}
\KwOut{subset of $\gset$; \quad payments are VCG payments}
\SetKwComment{simpc}{// }{}
\SetCommentSty{textnormal}
\DontPrintSemicolon

Partition $\pl$ into two sets $\pl_1$, $\pl_2$ by placing each player independently with
probability $\frac{1}{2}$ in $\pl_1$ or $\pl_2$.
For $j=1,2$, let $\gset_j:=\bigcup_{i\in\pl_j}\pset_i$ give the induced partition of
$\gset$. \; 
\label{partition} 

Compute
$V_1=\max_{S\sse\gset_1}\,\{\vbench(S):\ c(S)\leq B\}$, 
the benchmark $\optbench$ associated with $\pl_1$ and item-set $\gset_1$. 
\;
\label{optestim}

Compute 
$S^* \gets \argmax_{S\sse\gset_2}\,\bigl\{v(S)-\frac{\ld V_1}{B}\cdot c(S):\ v(S)\leq\ld V_1\bigr\}$ 
using a constrained demand oracle.
\; \label{demandset}

\Return $S^*$ with probability $p$ and $e^*$ with probability $1-p$. \label{setoutput}
\end{procedure}

Algorithm~\ref{xosalg-gencost} is based on the template used in~\cite{NeogiPS24}
for budget-feasible mechanism design in the single-dimensional setting, 
and this works out because of the no-overbidding assumption. 
Without this, as discussed earlier, returning $e^*$ in step~\ref{setoutput} is no
longer viable and one needs a much-more sophisticated approach to find a workaround; 
so 
the mechanism and its analysis become more involved, especially when we seek
polytime guarantees using a constrained demand oracle (see Section~\ref{poly-obidsupadd}).   

\swamy{
To state the performance guarantee obtained by Algorithm~\ref{xosalg-gencost}, 
we introduce the following notation. 
For an integer $\ell\geq 1$, and function $g:2^\gset\mapsto\R_+$
define $\vbengen[g]{\ell}(S):=\min_{I\sse[k]:|I|\leq\ell}g\bigl(S-\bigcup_{i\in I}\pset_i\bigr)$
for $S\sse\gset$. Define  
$\optbench(\ell)=\optbench(\ell;v,B,c):=\max_{S\sse\gset}\,\{\vbengen{\ell}(S): c(S)\leq B\}$.
Thus, $\vbengen{\ell}(S)$ captures the value obtained from $S$ after we exlcude the
contribution from any collection of $\ell$ players, and
$\optbench(\ell)$ is the maximum value achievable within the budget upon excluding any
collection of $\ell$ players. 
}

\swamy{In Appendix~\ref{append-rpartition}, we prove
the following generalization of Lemma~\ref{rpartition} (a).

\begin{lemma} \label{gen-rpartition}
Let $g:2^\gset\mapsto\R_+$ be subadditive, and $\ell\geq 0$ be an integer.
Consider any $S\sse\gset$, and let $S_1=S\cap\gset_1$, $S_2=S\cap\gset_2$.
Then 
$\Pr\bigl[\vbengen[g]{\ell}(S_1),\vbengen[g]{\ell}(S_2)\geq\frac{\vbengen[g]{(4\ell+1)}(S)}{4}\bigr]\geq\frac{1}{2}$.
\end{lemma}}

\begin{theorem} \label{xosgenthm}
Taking $\ld=0.5$ and $p=0.8$ in Algorithm~\ref{xosalg-gencost}, together with suitable 
payments, we obtain a  
universally budget-feasible mechanism that obtains expected value at least
$\frac{1}{\xosgenapx}\cdot\optbench(5;v,B,c)$.
\end{theorem}

\begin{proof} 
Lemma~\ref{xosgen-budgetfeas} 
shows that we obtain a universally budget-feasible mechanism with suitable payments. 
We focus on proving the approximation guarantee. 

\swamy{
Let $\optset=\argmax_{S\sse\gset}\,\{\vbengen{5}(S):\ c(S)\leq B\}$.
Let $\optset_j=\optset\cap\gset_j$ for $j=1,2$.
Let $T^*_2:=\argmax_{S\sse\gset_2}\,\{\vbench(S):\ c(S)\leq B\}$,
and $V_2=\vbench(T^*_2)$
be the $\optbench$-benchmark associated with player-set $\pl_2$ and item-set $\gset_2$. 
Let $\Gm$ be the event that $\vbench(\optset_1),\vbench(\optset_2)\geq\vbengen{5}(\optset)/4$. 
Applying Lemma~\ref{gen-rpartition} to $\optset$, we obtain that $\Pr[\Gm]\geq 0.5$.
Note that $V_1$ and $V_2$ are identically distributed, and this remains true even when
we condition on $\Gm$.  
Therefore, we have that $\Pr[\{V_2\geq V_1\}\cap\Gm]\geq\frac{1}{4}$. 
Assume that this event happens. 

Applying Lemma~\ref{bredn-gencost} on $T^*_2$, we can obtain $T\sse T^*_2$ such 
that $c(T)\leq B/2$ and $\ld V_1-\vemax<v(T)\leq\ld V_1$.
It follows that
\begin{equation*}
v(S^*)\geq v(S^*)-\frac{\ld V_1}{B}\cdot c(S^*)
\geq v(T)-\frac{\ld V_1}{B}\cdot c(T)
\geq \ld V_1-\vemax-\ld V_1/2\geq \frac{\ld V_1}{2}-\vemax.
\end{equation*}
We also have 
$V_1\geq\vbench(\optset_1)\geq\frac{\vbengen{5}(\optset)}{4}$.

Putting everything together, and taking $\ld=\frac{1}{2}$, we obtain that 
the expected value returned is at least
\begin{equation*}
\frac{p}{4}\cdot\bigl(\tfrac{1}{16}\cdot\optbench(5)-\vemax\bigr)+(1-p)\vemax
\geq \frac{1}{80}\cdot\optbench(5)\geq
\frac{1}{80}\cdot\max_{\ve\in[0,1]}(1-5\ve)\optparam(\ve).
\qedhere
\end{equation*}}
\end{proof} 

\begin{lemma} \label{xosgen-budgetfeas}
There exist payments that when combined with Algorithm~\ref{xosalg-gencost} yield a
universally budget-feasible mechanism.
\end{lemma}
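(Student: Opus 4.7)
The plan is to verify universal truthfulness, individual rationality, NPT, and universal budget feasibility by fixing the random bits (the partition $\pl_1,\pl_2$ and the binary choice in step~\ref{setoutput}) and checking each resulting deterministic mechanism separately. Throughout, I would pay players in $\pl_1$ nothing, which is consistent with NPT since $S^*\sse\gset_2$; in the branch that returns $e^*$, I would pay the owner of $e^*$ exactly $B$ and every other player $0$. This latter rule is trivially truthful because $e^*=\argmax_{e\in\gset}v(e)$ depends only on the public valuation $v$, and IR follows from no-overbidding ($c_i(e^*)\leq B$). Budget feasibility of this branch is immediate.

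For the $S^*$-branch, I would observe that step~\ref{demandset} solves
\[
S^*\ =\ \argmin_{S\sse\gset_2,\ v(S)\leq\ld V_1}\Bigl(\kp\cdot c(S)-v(S)\Bigr), \qquad \kp=\tfrac{\ld V_1}{B},
\]
i.e., it minimizes an affine function of the $\pl_2$-players' costs, where the feasible region and the coefficient $\kp$ depend only on $v$ (public) and on $V_1$, and $V_1$ is computed in step~\ref{optestim} from the reports of $\pl_1$ alone. Hence, from a $\pl_2$-player's viewpoint, the setup is a standard VCG instance, and I would invoke Theorem~\ref{vcgthm} with $\al_i=\kp$ for $i\in\pl_2$, $\al_i=0$ for $i\in\pl_1$, and $\beta_S=-v(S)$ if $v(S)\leq\ld V_1$ (else $+\infty$), together with the $h_i$ prescribed there, to obtain payments that satisfy truthfulness, IR, and NPT. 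Players in $\pl_1$ have zero payment and zero cost in this branch, so truthfulness is vacuous for them.

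The key step, and the main obstacle, is establishing budget feasibility in the $S^*$-branch. I would upper bound $h_i(c_{-i})$ for each $i\in\pl_2$ by plugging in the candidate $S=S^*-\pset_i$, which is feasible because $v(S^*-\pset_i)\leq v(S^*)\leq\ld V_1$ by monotonicity of $v$; this yields $p_i(c)\leq\bigl(v(S^*)-v(S^*-\pset_i)\bigr)/\kp$. Summing over $i\in\pl_2$ and applying Claim~\ref{xosnprop} to the XOS valuation $v$, the set $S^*\sse\gset_2$, and the partition $\{\pset_i\}_{i\in[k]}$ of $\gset$ (the terms for $i\in\pl_1$ vanish since $S^*\cap\pset_i=\es$) gives
\[
\sum_{i\in\pl_2}p_i(c)\ \leq\ \frac{v(S^*)}{\kp}\ =\ \frac{v(S^*)\cdot B}{\ld V_1}\ \leq\ B,
\]
where the final inequality uses the constraint $v(S^*)\leq\ld V_1$ enforced by the constrained demand oracle in step~\ref{demandset}. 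This is precisely where the value-cap in the demand oracle ``pays off'': it converts the XOS bound on the aggregate VCG payment into budget feasibility, completing the verification for every realization of the random bits.
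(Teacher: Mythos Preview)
Your proof is correct and follows essentially the same approach as the paper: handle the $e^*$-branch trivially via no-overbidding, and in the $S^*$-branch use VCG payments, bound each $p_i$ by $\bigl(v(S^*)-v(S^*-\pset_i)\bigr)/\kp$ via the candidate $S^*-\pset_i$, then sum using Claim~\ref{xosnprop} and the cap $v(S^*)\leq\ld V_1$. Your explicit remarks that $V_1$ depends only on $\pl_1$-reports and that $\pl_1$-players have zero utility regardless are details the paper leaves implicit, but otherwise the arguments coincide.
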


\begin{proof}
We consider each possible outcome of the random choices in Algorithm~\ref{xosalg-gencost}
and supply payments for which the resulting mechanism is budget feasible. If the outcome
is to return $e^*$ and $i$ is such that $e^*\in\pset_i$, then we pay $B$ to player $i$ and
$0$ to the other players. The utility of player $i$ is then $B-c_i(e^*)\geq 0$,
regardless of her reported cost function. So we trivially obtain truthfulness, IR, and
budget feasibility.

So suppose otherwise, and let $(\pl_1,\gset_1)$, $(\pl_2,\gset_2)$ be the partition
obtained in step~\ref{partition}. Let $\kp=\frac{\ld V_1}{B}$.
The computation in step~\ref{demandset} is a VCG
computation. So using \eqref{vcgpayment}, paying each player $i\in\pl_2$ the amount
\begin{equation*}
\frac{v(S^*)}{\kp}-\sum_{\ell\in\pl_2-\{i\}}c_\ell(S^*\cap\pset_\ell)-h_i(c_{-i}), 
\qquad \text{where}\ \ 
h_i(c_{-i})=\frac{1}{\kp}\cdot\max_{S\sse\gset_2-\pset_i}\,
\Bigl\{v(S)-\kp\cdot c(S):\ v(S)\leq\ld V_1\Bigr\}
\end{equation*}
and the other players $0$, yields truthfulness.
Under this payment, the utility of every player $i\in\pl_2$ is 
$\frac{1}{\kp}\cdot\bigl(v(S^*)-\kp\cdot c(S^*)\bigr)-h_{-i}(c_{-i})$, which is
nonnegative 
since $v(S^*)-\kp\cdot c(S^*)\geq v(S)-\kp\cdot c(S)$ for all $S\sse\gset_2$ with
$v(S)\leq\ld V_1$. Moreover, if $S^*\cap\pset_i=0$, then this utility, and hence payment
to $i$ are $0$. This shows IR and NPT.

Finally, since 
$h_i(c_{-i})\geq\frac{1}{\kp}\cdot\bigl(v(S^*-\pset_i)-\kp\cdot c(S^*-\pset_i)\bigr)$, 
the payment to $i\in\pl_2$ is at most
$\frac{1}{\kp}\cdot\bigl(v(S^*)-v(S^*-\pset_i)\bigr)$. 
We have $\sum_{i\in\pl_2}\bigl(v(S^*)-v(S^*-\pset_i)\bigr)\leq v(S^*)$ by
Claim~\ref{xosnprop}, and so the total payment is at most $B$.
\end{proof}

\subsubsection{Superadditive cost functions: polytime \boldmath $O(1)$-approximation with 
respect to $\optbench$} \label{xos-supaddcost}

With superadditive costs, we show that 
with minor tweaks to Algorithm~\ref{xosalg-gencost}, 
we can obtain a mechanism that runs in polytime given a constrained demand oracle, 
and achieves an $O(1)$-approximation with respect to $\optbench$ 
(Theorem~\ref{xossupaddpolythm}).
In step~\ref{optestim} of Algorithm~\ref{xosalg-gencost}, we now compute $V_1=\lpopt_1$,
and in step~\ref{demandset}, we compute
$S^* \gets \argmax_{S\sse\gset_2}\,
\bigl\{v(S)-\frac{\ld V_1+\vemax}{B}\cdot c(S):\ v(S)\leq\ld V_1+\vemax\bigr\}$.

As in Lemma~\ref{xosgen-budgetfeas}, we can obtain suitable payments that when combined
with this modified algorithm yield a
universally budget-feasible mechanism. 
The computations of $V_1$, $S^*$ and the payments, can all be done in polytime,
since we are given a constrained demand oracle. The only portion of the analysis that
changes more significantly is the proof of the approximation guarantee. 

\begin{theorem} \label{xossupaddpolythm}
Taking $\ld=\frac{2}{5}$ and $p=\frac{14}{15}$ in the above modified version of
Algorithm~\ref{xosalg-gencost}, together with suitable payments, we obtain a  
universally budget-feasible mechanism for superadditive costs with approximation ratio
$\xossupaddpolyapx$ that runs in polytime given a constrained demand oracle. 
\end{theorem}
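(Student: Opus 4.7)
The proof follows the template of Theorem~\ref{xosgenthm} with three adaptations tailored to the superadditive-cost setting and polynomial-time computability. First, we use $V_1=\lpopt_1$ in place of the (uncomputable in polytime) integer benchmark on $\gset_1$; this is computable in polytime from a demand oracle via Theorem~\ref{lpsolve}(a). Second, the cap in the constrained demand query is inflated to $\targ=\ld V_1+\vemax$, in order to absorb the $\vemax$ additive loss incurred by pruning. Third, we invoke Lemma~\ref{bredn-supaddcost} (the superadditive-cost pruning lemma) in place of Lemma~\ref{bredn-gencost}; since the former incurs only a $\vemax$ additive loss rather than a $\max_i v(S\cap\pset_i)$ loss, we obtain an approximation guarantee relative to $\optbench(v,B,c)$ itself, rather than the weaker $\optbench(5;v,B,c)$.

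Universal budget-feasibility follows by mirroring Lemma~\ref{xosgen-budgetfeas}: the computation of $S^*$ is a (value-capped) VCG computation over subsets of $\gset_2$, so \eqref{vcgpayment} yields truthful, IR, NPT payments. Writing $\kp=\targ/B$, the total payment to players in $\pl_2$ is at most $\sum_{i\in\pl_2}\bigl(v(S^*)-v(S^*-\pset_i)\bigr)/\kp\leq v(S^*)/\kp\leq\targ/\kp=B$, where the first inequality uses Claim~\ref{xosnprop}. The fallback branch, which pays $B$ to the owner of $e^*$, is trivially budget-feasible under no-overbidding. All the relevant computations---solving the LP for $\lpopt_1$, the constrained demand query for $S^*$, and the exclusion-based constrained demand queries needed for each VCG payment---run in polytime given a constrained demand oracle.

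For the approximation bound, I would condition on the event of Lemma~\ref{rsample-lp}, which holds with probability at least $1/4$ and yields $\lpopt_2\geq V_1\geq\optbench(v,B,c)/4$. The key step is to produce an integral witness $T^*_2\sse\gset_2$ with $c(T^*_2)\leq B$ whose value lower-bounds $v(S^*)-\kp\cdot c(S^*)$ via pruning. Applying Lemma~\ref{bredn-supaddcost} with $g=v$ and $\targ=\ld V_1+\vemax$ produces $T\sse T^*_2$ with $c(T)\leq B/2$ and $v(T)>\min\{v(T^*_2)-\targ,\,\targ\}-\vemax$; when $v(T^*_2)\geq 2\targ$, this simplifies to $v(T)>\ld V_1$, and the optimality of $S^*$ then gives $v(S^*)\geq v(T)-\targ/2\geq\ld V_1/2-\vemax/2$. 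Combining this with $V_1\geq\optbench(v,B,c)/4$ and with the fallback contribution $(1-p)\vemax$, the tuning $\ld=2/5$, $p=14/15$ is chosen so that the resulting lower bound on the expected value amounts to $\optbench(v,B,c)/\xossupaddpolyapx$.

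The main obstacle is producing the integral witness $T^*_2$ with $v(T^*_2)\geq 2\targ$: since $V_1=\lpopt_1$ is a fractional quantity, it need not equal any integer-realizable value on $\gset_2$. I would reconcile this via an integrality-gap-style argument for the XOS-valued knapsack on $\gset_2$: under no-overbidding (which ensures that every singleton is feasible), an LP-optimal solution of $\optlp[(\gset_2)]$ can be converted into an integer-feasible set whose value lies within an additive $O(\vemax)$ of $\lpopt_2\geq V_1$. The $\vemax$ slack in the definition of $\targ$ is specifically designed to absorb this loss. In the residual case where even this weaker bound fails, $\vemax$ must itself be a constant fraction of $\optbench$, in which case the fallback branch alone already yields $O(1)$-approximation.
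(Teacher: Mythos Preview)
Your proposal diverges from the paper's argument at precisely the point you flag as ``the main obstacle,'' and the workaround you sketch has two genuine gaps.

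\textbf{The integrality-gap claim is unproven and unnecessary.} You need an integral $T^*_2\sse\gset_2$ with $c(T^*_2)\le B$ and $v(T^*_2)\ge 2\targ$ in order to feed Lemma~\ref{bredn-supaddcost}. You propose to obtain this from $\lpopt_2$ via an ``integrality-gap-style argument'' giving $v(T^*_2)\ge\lpopt_2-O(\vemax)$. For XOS valuations with superadditive (non-additive) costs this is a nontrivial structural claim that you have not established; it does not follow from the standard fractional-knapsack bound, and the paper nowhere proves or uses such a statement. The paper sidesteps the issue entirely: it takes the LP optimum $\bx$ of $\optlp[(\gset_2)]$ and, exploiting subadditivity of $v$ and superadditivity of $c$, \emph{partitions} each set $S$ with $\bx_S>0$ into at most $\lceil v(S)/(\ld V_1)\rceil$ pieces, each of value at most $\kp=\ld V_1+\vemax$. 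This yields a fractional $x'$ supported only on sets that are feasible for the constrained demand oracle, with $\sum_S x'_S\le\lpopt_2/(\ld V_1)+1$, $\sum_S v(S)x'_S\ge\lpopt_2$, and $\sum_S c(S)x'_S\le B$. Averaging then gives
\[
v(S^*)-\tfrac{\kp}{B}\,c(S^*)\;\ge\;\frac{\lpopt_2-\kp}{\lpopt_2/(\ld V_1)+1}
\;\ge\;V_1\cdot\frac{\ld(1-\ld)}{1+\ld}-\frac{\vemax}{1+1/\ld},
\]
with no integral witness required. This fractional-partitioning step is the idea your proof is missing.

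\textbf{Your constants do not check out.} Even granting your conditional bound $v(S^*)\ge \ld V_1/2-\vemax/2$, the expected value is at least $\tfrac{p}{4}\bigl(\tfrac{\ld V_1}{2}-\tfrac{\vemax}{2}\bigr)+(1-p)\vemax$, whose $\vemax$-coefficient is $(1-p)-p/8$. Setting $p=14/15$ makes this coefficient \emph{negative}, so the bound cannot be made independent of $\vemax$ with those parameters. Balancing your inequality actually forces $p=8/9$, not $14/15$; the values $\ld=2/5$, $p=14/15$ arise specifically from the paper's bound above (where the $\vemax$-coefficient is $2/7$, giving $(1-p)=p/14$), not from the Lemma~\ref{bredn-supaddcost} route you take.
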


\begin{proof}
As discussed above, we focus on proving the approximation guarantee.
Fix an input $(v,B,c)$, where the $c_i$s are superadditive.
We abbreviate $\optbench(v,B,c)$ to $\optbench$.
Lemma~\ref{rsample-lp} shows that 
$\lpopt_2\geq\lpopt_1\geq\frac{\optbench}{4}$, holds with probability at least
$\frac{1}{4}$. Assume that this event happens.

Let $\kp=\ld V_1+\vemax$. Let $\bx$ be an optimal solution to $\optlp[(\gset_2)]$. We use
$S$ below to index over subsets of $\gset_2$.
We transform $\bx$ into a fractional solution $x'$ such that $x'_S>0$ only if
$v(S)\leq\kp$, and then use $x'$ to obtain a lower bound on $v(S^*)$.  
Initialize $x'\gets 0$.
Consider a set $S$ with $\bx_S>0$. 
By repeatedly finding a maximal subset of $S$ having value at most $\kp$, and deleting
this, we obtain that as long as $v(S)>\kp$, the deleted subset 
has value strictly larger than $\ld V_1$. 
So we can partition $S$ into 
at most $\ceil{\frac{v(S)}{\ld V_1}}$ sets in this fashion, each having value at most
$\kp$; 
we increase the $x'$-value of each of these sets by $\bx_S$. 
Observe that 
$\sum_S x'_S\leq\sum_S\bx_S\ceil{\tfrac{v(S)}{\ld V_1}}
\leq\tfrac{\lpopt_2}{\ld V_1}+1$. 
Also, $\sum_S v(S)x'_S\geq\sum_S v(S)\bx_S=\lpopt_2$ since $v$ is subadditive, and
$\sum_S c(S)x'_S\leq\sum_S c(S)\bx_S\leq B$ since the costs are superadditive.
Since $x'_S>0$ only if $v(S)\leq\kp$, any such set is a candidate set for the constrained
demand oracle. So we have 
\begin{equation*}
\begin{split}
v(S^*)-\frac{\kp}{B}\cdot c(S^*)
& \geq\frac{1}{\sum_S x'_S}\cdot\sum_S x'_S\Bigl(v(S)-\tfrac{\kp}{B}\cdot c(S)\Bigr) 
\geq\frac{\lpopt_2-\kp}{\lpopt_2/\ld V_1+1} \\
& =\frac{\ld V_1}{1+\ld V_1/\lpopt_2}-\frac{\kp}{1+\lpopt_2/\ld V_1}
\geq\frac{\ld V_1}{1+\ld}-\frac{\kp}{1+1/\ld}
=V_1\cdot\frac{\ld(1-\ld)}{1+\ld}-\frac{\vemax}{1+1/\ld}
\end{split}
\end{equation*}
where the final inequality above follows since $\lpopt_2\geq\lpopt_1=V_1$.
We have $\frac{p}{4(1+1/\ld)}=1-p$, and
the expected value returned is at least 
$\frac{p}{4}\cdot v(S^*)+(1-p)\vemax\geq\optbench/100$.
\end{proof}

\begin{remark}
We can also tweak Algorithm~\ref{xosalg-gencost} differently, 
by taking $V_1=V^*_1$ in step~\ref{optestim} (and no other changes).
Recall that $V^*_j:=\max\,\{v(S):\ S\sse\gset_j,\ c(S)\leq B\}$ for $j=1,2$. 
This yields a slightly better $\xosgenapx$-approximation (the same factor as in
Theorem~\ref{xosgenthm}) relative to $\optbench$ for superadditive costs, but not in
polytime, since computing $V^*_1$ is \nphard even for additive valuations and costs. 
To see this, let $\ld=0.5$, $p=0.8$ (as in Theorem~\ref{xosgenthm}). 
As before, we can obtain suitable payments that when
combined with this modified algorithm yield a universally-budget-feasible mechanism. 
For the approximation guarantee, suppose that the event 
$V^*_2\geq V^*_1\geq\frac{\optbench(v,B,c)}{4}$ occurs, which happens with probability
$\frac{1}{4}$ (Corollary~\ref{rsample}). 
Let $T^*_2$ be an optimal solution to $\max\,\{v(S):\ S\sse\gset_2,\ c(S)\leq B\}$. 
So $c(T^*_2)\leq B$ and $v(T^*_2)=V^*_2\geq 2\ld V_1$.
Applying Lemma~\ref{bredn-supaddcost}, we can obtain $T\sse T^*_2$ such that $c(T)\leq B/2$
and $\ld V_1-\vemax<v(T)\leq\ld V_1$. We can now proceed as in the proof of
Theorem~\ref{xosgenthm} to obtain expected value at least $\optbench(v,B,c)/\xosgenapx$.
\end{remark}

\subsubsection{Polytime mechanism for additive valuations and additive
  costs} \label{additive} 
The above polytime mechanism for XOS valuations and superadditive
costs 
can be adapted to run in polynomial time (i.e., without any oracle) for additive
valuations and additive costs. 
With additive valuations and additive cost functions, a constrained demand oracle involves
solving a knapsack problem. We argue that we can instead work with a related knapsack
problem that can be solved efficiently using dynamic programming (DP), and thereby obtain
a polytime $O(1)$-approximation mechanism. 

This result is subsumed (modulo $O(1)$ approximation factors) by the result in
Section~\ref{poly-obidadd}, where we do not assume no-overbidding, 
but the mechanism and its analysis become much simpler assuming no-overbidding, so we
include this below. The reader
interested in the setting without assuming no-overbidding (for additive valuations and
additive costs) can directly skip to Section~\ref{poly-obidadd}. 

\begin{theorem} \label{additivethm}
There is a polytime mechanism for additive valuations and additive costs that achieves
approximation ratio $\additiveapx$ with respect to $\optbench$.
\end{theorem}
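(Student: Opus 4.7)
The plan is to adapt the mechanism of Theorem~\ref{xossupaddpolythm} by replacing the constrained-demand-oracle call with an \emph{exact} dynamic program on a rounded instance. With additive $v$ and additive $c$, the constrained-demand query at step~\ref{demandset} of the modified Algorithm~\ref{xosalg-gencost} amounts to a value-capped $0/1$ knapsack problem $\max\bigl\{v(S)-\tfrac{\kp}{B}c(S):S\sse\gset_2,\,v(S)\leq\kp\bigr\}$ with $\kp=\ld V_1+\vemax$, which is \nphard in general. The key point is that $v$ is public (it is given explicitly in the input), so we are free to round the $v_e$'s without affecting truthfulness.

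Concretely, fix a small $\ve>0$ and set $\dt=\ve\vemax/n$. Replace each $v_e$ by $\tv_e:=\dt\lfloor v_e/\dt\rfloor$, so that $v_e-\dt<\tv_e\leq v_e$. The values $\tv(S)$ over $S\sse\gset$ lie on a grid of size at most $n^2/\ve$, so a standard knapsack DP indexed by $\tv$-value computes in time $O(n^3/\ve)$ the set $\tilde S^*:=\argmax\bigl\{\tv(S)-\tfrac{\kp}{B}c(S):S\sse\gset_2,\,\tv(S)\leq\kp\bigr\}$. Since $\tv$ is derived from public data, the feasible region $\{S:\tv(S)\leq\kp\}$ is publicly known, and the objective is an affine function of the reported costs. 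Hence this DP performs an exact VCG computation over that subfamily, and Theorem~\ref{vcgthm} supplies truthfulness-inducing payments that, by the argument of Lemma~\ref{xosgen-budgetfeas} (which uses only additivity of $v$, a special case of XOS), sum to at most $v(\tilde S^*)\cdot B/\kp\leq B(1+O(\ve))$; this can be tightened to budget-feasibility on the nose by absorbing the $\ve$-slack into $\kp$. Each player's VCG payment is obtained by one additional DP call, so the whole mechanism runs in polytime.

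For the approximation guarantee we re-run the analysis of Theorem~\ref{xossupaddpolythm} with $\tv$ in place of $v$ inside the DP step. Note that $V_1=\lpopt_1$ is an LP-relaxation of knapsack and is computable exactly in polytime, so no rounding is needed there. The witness set $T\sse T^*_2$ produced by Lemma~\ref{bredn-supaddcost} satisfies $c(T)\leq B/2$ and $v(T)\in(\ld V_1-\vemax,\,\ld V_1]$, and $\tv(T)$ differs from $v(T)$ by at most $n\dt=\ve\vemax$, so $T$ remains feasible for the DP (after an $O(\ve)$ slackening of $\kp$), and the lower bound on $\tv(\tilde S^*)-\tfrac{\kp}{B}c(\tilde S^*)$ derived from $T$ translates to a lower bound on $v(\tilde S^*)$ up to an additive $O(\ve\vemax)$ loss.

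The main obstacle, and the primary labour in the proof, is the bookkeeping: one must combine the $\tfrac{1}{4}$ success probability of random partitioning, the LP-vs-integer gap factor from the proof of Theorem~\ref{xossupaddpolythm}, the $O(\vemax)$ additive loss from Lemma~\ref{bredn-supaddcost}, the new $O(\ve\vemax)$ rounding loss, and the mixture weight $1-p$ on $e^*$, and verify that for a sufficiently small constant $\ve$ the overall approximation ratio is at most $\additiveapx$. Because the $O(\ve\vemax)$ term is absorbed by the $(1-p)\vemax$ term in the mixture, $\ve$ only needs to be a small constant depending on $\ld$ and $p$; the final constant $272$ drops out of re-running the closing calculation of Theorem~\ref{xossupaddpolythm} with these parameters.
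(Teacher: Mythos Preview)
Your high-level approach matches the paper's: since $v$ is public, round it so that the value-capped knapsack admits an exact polytime DP, run VCG over the resulting restricted family (the objective is still affine in the reported costs), and mix with returning $e^*$. The paper rounds up via $w_e=\lceil nv_e/\vemax\rceil$ so that weights lie in $\{0,\ldots,n\}$; you round down to a grid of width $\ve\vemax/n$. This difference is cosmetic.

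A few imprecisions are worth flagging. First, budget-feasibility is exact, not ``$B(1+O(\ve))$'': the VCG objective uses $\tv$, so the payment bound from the Lemma~\ref{xosgen-budgetfeas} argument is $\tfrac{B}{\kp}\,\tv(\tilde S^*)\leq B$ directly (the cap in your DP is on $\tv$, not $v$). No slack-absorption is needed. Second, you conflate two distinct analyses. The proof of Theorem~\ref{xossupaddpolythm} does \emph{not} invoke Lemma~\ref{bredn-supaddcost} on an integral witness $T^*_2$; it instead partitions each set in the support of an optimal LP solution $\bx$ for $\optlp[(\gset_2)]$ into value-bounded pieces and averages. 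The paper's proof of Theorem~\ref{additivethm} follows exactly this fractional route, carrying the rounding error through the partitioning, and lands at $\additiveapx$ with $\ld=1/2$, $p=12/17$. Your sketch takes $V_1=\lpopt_1$ but then appeals to the integral-witness lemma, which needs $v(T^*_2)\geq 2\ld V_1$; with $V_1=\lpopt_1$ the random-partitioning event only gives $\lpopt_2\geq V_1$, not $V^*_2\geq V_1$, so that step as written has a gap (fixable by reverting to the fractional argument). Finally, Theorem~\ref{xossupaddpolythm} yields $100$; the constant $272$ arises precisely from the extra rounding overhead, so the assertion that ``$272$ drops out'' requires actually doing the arithmetic rather than re-running the earlier calculation verbatim.
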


\begin{proof}
In the oracle polytime algorithm for superadditive costs described at the start of
Section~\ref{xos-supaddcost}, recall that 
we modify Algorithm~\ref{xosalg-gencost} by computing $V_1=\lpopt_1$ in
step~\ref{optestim}, and 
$S^*=\argmax_{S\sse\gset_2}\,\bigl\{v(S)-\kp\cdot c(S):\ v(S)\leq\kp B\bigr\}$ in
step~\ref{demandset}, 
where $\kp=\frac{\ld V_1+\vemax}{B}$. 
With additive $v$ and additive costs, $\optlp[(\gset_1)]$ 
can be solved in polytime, and setting $u_e=v(e)-\kp\cdot c(e)$ for all
$e\in\gset_2$, 
computing $S^*$ amounts to solving the knapsack problem,
maximize $u(S)$ subject to $v(S)\leq \kp B$, $S\sse\gset_2$.

We scale the $v(e)$'s to obtain polynomially-bounded weights, modifying the
$u_e$s and the knapsack budget correspondingly, to obtain a polytime-solvable knapsack
problem. 
Set $\bnew=\frac{n\ld V_1}{\vemax}+n$, and  
$w_e=\ceil{\frac{nv(e)}{\vemax}}$ and $u'_e=w_e-\frac{\bnew+n}{B}\cdot c(e)$ for all
$e\in\gset_2$.  
We now obtain $S^*$ by solving the knapsack problem over $\gset_2$, with
item-values $\{u'_e\}$, item weights $\{w_e\}$ and knapsack budget $\bnew+n$;  
since the weights lie in $\{0\}\cup[n]$ this takes polynomial time.
We return $S^*$ with probability $p$ and $e^*$ with probability $1-p$.

The modified knapsack problem minimizes an affine function of the player
costs. Since we solve this knapsack problem optimally, VCG again applies. So as in 
Lemma~\ref{xosgen-budgetfeas}, one obtains universal truthfulness, and the payment to each  
player $i\in\pl_2$ is at most $\frac{B}{\bnew+n}\cdot\bigl(w(S^*)- w(S^*\cap\pset_i)\bigr)$. 
Budget-feasibility follows since $w(S^*)\leq\bnew+n$; so we only need to analyze the
approximation ratio. 

We argue as in the proof of Lemma~\ref{xossupaddpolythm}. 
We may assume that $\lpopt_2\geq\lpopt_1\geq\frac{\optbench(v,B,c)}{4}$, which happens
with probability at least $\frac{1}{4}$.
Let $\bx$ be an optimal solution to $\optlp[(\gset_2)]$. 
We transform $\bx$ into a fractional solution $x'$ such that $x'_S>0$ only if
$v(S)\leq\kp$, and then use $x'$ to obtain a lower bound on $v(S^*)$.  
Initialize $x'\gets 0$.
Consider a set $S$ with $\bx_S>0$. 
We partition $S$ into at most $\ceil{\frac{w(S)}{\bnew}}$ sets, each having $w$-weight
at most $\bnew+n$, and increase $x'$-value of each of these sets by $\bx_S$. 
We then have
\begin{equation*}
\sum_S x'_S\leq\frac{\sum_S w(S)\bx_S}{\bnew}+1
\leq\frac{\tfrac{n}{\vemax}\cdot\lpopt_2+n}{\bnew}+1
=\frac{\tfrac{\lpopt_2}{\vemax}+1}{\tfrac{\ld V_1}{\vemax}+1}+1
\leq\frac{\lpopt_2}{\ld V_1}+1
\end{equation*}
where the last inequality follows since $\lpopt_2\geq\ld V_1$. 
Therefore, 
\begin{alignat*}{2}
& & w(S^*) & \geq u'(S^*)\geq\frac{\sum_S u'(S)x'_S}{\sum_S x'_S} 
\geq\frac{\sum_S w(S)x'_S-(\bnew+n)}{{\lpopt_2}/{\ld V_1}+1} \\
& && \geq\frac{\tfrac{n}{\vemax}\cdot\lpopt_2-(\bnew+n)}{\tfrac{\lpopt_2}{\ld V_1}+1}
=\frac{n}{\vemax}\cdot\frac{\lpopt_2-\ld V_1}{\tfrac{\lpopt_2}{\ld V_1}+1}
-\frac{2n}{\tfrac{\lpopt_2}{\ld V_1}+1} \\
\implies\ && v(S^*) & \geq
\frac{\lpopt_2-\ld V_1}{\tfrac{\lpopt_2}{\ld V_1}+1}
-\vemax\Bigl(\tfrac{2}{\tfrac{\lpopt_2}{\ld V_1}+1}+1\Bigr) \\
& && \geq V_1\cdot\frac{\ld(1-\ld)}{1+\ld}-\vemax\cdot\frac{3\ld+1}{\ld+1}.
\end{alignat*}
We take $\ld=0.5$ and $p=\frac{12}{17}$ so that 
$\frac{p}{4}\cdot\frac{3\ld+1}{\ld+1}=1-p$. 
The expected value returned is at least 
$\frac{p}{4}\cdot v(S^*)+(1-p)\vemax\geq\optbench(v,B,c)/272$.
\end{proof}

\section{Dropping the no-overbidding assumption} \label{overbid}
We now describe how to obtain universally-budget-feasible mechanisms for XOS valuations
without assuming no-overbidding.
Recall that $\pl=[k]$ is the set of all players.
Now define $e^*$ to be $\argmax\,\{v(e): e\in\gset,\ c(e)\leq B\}$, 
and let $\vemax=v(e^*)$.
Let 
$\opt^*=\max_{i\in\pl}\max_{S\sse\pset_i}\,\bigl\{v(S): c_i(S)\leq B\bigr\}$. 

As discussed in Section~\ref{overview}, dropping this assumption entails
figuring out a way of 
offsetting the additive loss 
incurred in arguing the existence of a large-value set $T\sse\gset_2$ with 
$v(T)\leq\ld V_1$, $c(T)\leq B/2$.
While this additive loss is bounded by $\opt^*$, as noted earlier, the mechanism that
returns $S\sse\pset_i$, for some player $i$, with $v(S)=\opt^*$, $c(S)\leq B$, is not truthfully
implementable.
\footnote{Concretely, suppose $B=4$, there is one player, $\gset=\{e,f\}$, $v$ is additive
with $v(e)>v(f)$, and we have two additive cost functions $c^{(1)}$, $c^{(2)}$, given by 
$c^{(1)}_e=B,\ c^{(1)}_f=1$ and $c^{(2)}_e=B+1$, $c^{(2)}_f=B$. When the true cost
function is $c^{(1)}$, truthful reporting would yield utility $0$, whereas reporting
$c^{(2)}$ would yield utility $B-1>0$.}
So we need to devise an alternative to this ``return (set corresponding to) $\opt^*$
mechanism.''
We devise such a mechanism in Section~\ref{optmax-mech}, 
and show in Section~\ref{unibf-overbid} how the analyses of algorithms from
Section~\ref{xos-bfuni} can be modified so as to leverage the guarantee of this mechanism 
and obtain approximation guarantees relative to $\optbench$.

\subsection{Truthful mechanism to offset additive loss}
\label{optmax-mech}

Recall that given cost functions $\{c_i\in\C_i\}$, we define 
$c(S):=\sum_i c_i(S\cap\pset_i)$ for all $S\sse\gset$.

\SetAlgoProcName{Mechanism}{Mechanism}
\begin{procedure}[ht!] 
\caption{2ndOpt() \textnormal{\qquad // budget-feasible mechanism: substitute for
    returning $e^*$} \label{optmax-proxy}} 
\KwIn{Budget-feasible MD instance
$\bigl(v:2^\gset\mapsto\R_+,B,\{\pset_i,\C_i\sse\R_+^{2^{\pset_i}},c_i\in\C_i\}\bigr)$}
\KwOut{subset of $\pset_i$ for some player $i$; \quad payment = $B$}
\SetKwComment{simpc}{// }{}
\SetCommentSty{textnormal}
\DontPrintSemicolon

For every player $i$, define 
$\opt_i:=\max\,\bigl\{v(S): S\sse\pset_i,\ c(S)\leq B\bigr\}$.
Also, for each player $i$, define
$\Sc_i=\Sc_i(c_{-i})=\{S\sse\pset_i: v(S)\geq\max_{j<i}\opt_j,\ v(S)>\max_{j>i}\opt_j\}$.
\label{optmax-init}

Choose $\hS=\argmin\,\{c(S): S\in\bigcup_{i\in\pl}\Sc_i\}$, 
and let $\hS\in\Sc_{\hi}$. 
(We show in the analysis that $\bigcup_i\Sc_i\neq\es$.) \label{optmax-set}

\Return $\hS$, and pay $B$ to player $\hi$. \label{optmax-output}
\end{procedure}
\SetAlgoProcName{Algorithm}{Algorithm}

\begin{lemma} \label{new-welldefined}
There exists some player $i'$ 
and some $S'\in\Sc_{i'}$ such that $c_{i'}(S')\leq B$.
\end{lemma}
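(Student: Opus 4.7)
The plan is to exhibit an explicit witness by using the player whose $\opt_i$ is maximum, with a careful tie-breaking rule on the index. First I would observe that $\es\sse\pset_i$ and $c(\es)=0\leq B$, so each $\opt_i$ is well-defined (and nonnegative), and hence $\opt^*:=\max_{i\in\pl}\opt_i$ exists and is attained by some player.

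Let $i'$ be the \emph{largest} index $i\in\pl$ with $\opt_i=\opt^*$, and let $S'\sse\pset_{i'}$ be any set with $c_{i'}(S')\leq B$ and $v(S')=\opt_{i'}=\opt^*$; such an $S'$ exists by definition of $\opt_{i'}$. Since $S'\sse\pset_{i'}$, the sum $c(S')=\sum_\ell c_\ell(S'\cap\pset_\ell)$ collapses to $c_{i'}(S')\leq B$, which is the budget condition we want.

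It remains to verify the two inequalities in the definition of $\Sc_{i'}$. For $j<i'$, we have $v(S')=\opt^*\geq\opt_j$, giving $v(S')\geq\max_{j<i'}\opt_j$. For $j>i'$, our choice of $i'$ as the largest index attaining $\opt^*$ forces $\opt_j<\opt^*=v(S')$, so $v(S')>\max_{j>i'}\opt_j$ (interpreting the maximum over an empty set in the standard way, so the inequality holds trivially when $i'=k$). Hence $S'\in\Sc_{i'}$ and $c_{i'}(S')\leq B$, as required.

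I do not anticipate any real obstacle: the lemma is essentially a book-keeping verification that the natural candidate, the ``last'' player realizing $\opt^*$, lies in the corresponding $\Sc_{i'}$. The only subtlety is that the definition of $\Sc_i$ mixes a weak inequality (for earlier indices) with a strict one (for later indices), which is exactly why tie-breaking by taking the largest maximizing index is needed; any other tie-breaking rule could fail the strict inequality.
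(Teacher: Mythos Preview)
Your proof is correct and follows essentially the same approach as the paper: choose $i'$ to be the largest index attaining $\opt^*$, take $S'$ realizing $\opt_{i'}$ with $c_{i'}(S')\leq B$, and verify the defining inequalities of $\Sc_{i'}$. The paper's argument is identical up to the extra commentary you include about well-definedness of $\opt_i$ and the tie-breaking subtlety.
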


\begin{proof}
Let $i'$ be the player with largest index such that 
$\opt_{i'}=\max_{i\in\pl}\opt_i$,  
and let 
$S'\sse\pset_{i'}$ be such that $v(S')=\opt_{i'}$ and $c_{i'}(S')\leq B$. 
By definition then, we have $v(S')\geq\opt_j$ for all $j<i$, and
$v(S')>\opt_j$ for all $j>i'$. So $S'\in\Sc_{i'}$. 
\end{proof}

\begin{theorem} \label{optmax-thm}
Mechanism~\ref{optmax-proxy} is budget-feasible, and its output $\hS\sse\pset_{\hi}$
satisfies 
$v(\hS)\geq\max_{i\neq\hi}\opt_i=\max\,\{v(S):S\sse\pset_i\text{ for some }i\neq\hi,\ c(S)\leq B\}$.
\end{theorem}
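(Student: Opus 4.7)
The plan is to verify four things: well-definedness of $\hi$, budget feasibility of the payments, IR together with NPT, and truthfulness; the value guarantee will fall out of $\hS \in \Sc_{\hi}$ essentially by unpacking definitions.

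First I would observe the crucial structural fact that $\Sc_i$, as a subset of $2^{\pset_i}$, depends only on $c_{-i}$ (through $\{\opt_j\}_{j \neq i}$), and not on $c_i$ at all. Lemma~\ref{new-welldefined} then supplies some $S' \in \bigcup_j \Sc_j$ with $c(S') \leq B$, so the argmin in step~\ref{optmax-set} is well-defined and yields $\hS$ with $c(\hS) \leq B$; the total payment is exactly $B$, giving budget feasibility. Player $\hi$'s utility is $B - c_{\hi}(\hS) \geq 0$, and every other player receives payment $0$ while contributing nothing to $\hS \subseteq \pset_{\hi}$, giving IR and NPT. The value guarantee is then immediate: $\hS \in \Sc_{\hi}$ means $v(\hS) \geq \max_{j < \hi} \opt_j$ and $v(\hS) > \max_{j > \hi} \opt_j$, hence $v(\hS) \geq \max_{j \neq \hi} \opt_j$, which equals the stated right-hand side since $c(S) = c_i(S)$ when $S \subseteq \pset_i$.

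The main work will be truthfulness. Fix player $i$ with true cost $\bc_i$ and fix $c_{-i}$. If truthfully $i = \hi$, then the mechanism returns $\hS = \argmin_{S \in \Sc_i} \bc_i(S)$; under any misreport $c'_i$, either $i$ remains $\hi'$, in which case the new returned set $\hS'$ still lies in the same (unchanged) $\Sc_i$ and hence has $\bc_i(\hS') \geq \bc_i(\hS)$, or $i$ is no longer $\hi'$ and her utility drops to $0 \leq B - \bc_i(\hS)$; either way truthful reporting dominates. The harder case is when truthfully $i \neq \hi$ but a misreport $c'_i$ makes $i$ become $\hi'$, and I need to show this gives nonpositive utility. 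Here I would argue that every $\hS' \in \Sc_i$ satisfies $v(\hS') > \opt_i^T$, where $\opt_i^T$ denotes $\opt_i$ computed from the true cost $\bc_i$. Since $i$ is not $\hi$ truthfully, some $j \neq i$ witnesses the failure: either $\opt_j > \opt_i^T$, or $\opt_j = \opt_i^T$ with $j > i$. A short case split on whether this witnessing $j$ lies below or above $i$, combined with the strict-vs-nonstrict inequalities in the definition of $\Sc_i$, forces $v(\hS') > \opt_i^T$ in every subcase. Because $\opt_i^T = \max\{v(S): S \subseteq \pset_i,\ \bc_i(S) \leq B\}$, the inequality $v(\hS') > \opt_i^T$ forces $\bc_i(\hS') > B$, so her utility is $B - \bc_i(\hS') < 0$, and the deviation fails.

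The main obstacle will be controlling how misreports propagate through the mechanism: changing $c_i$ changes $\opt_i$, which in turn can alter $\Sc_j$ for other players $j$ and shift $\hi$ to a new player. The index-based asymmetry in the definition of $\Sc_i$ (strict $>$ against larger-index $j$, nonstrict $\geq$ against smaller-index $j$) is precisely the device that neutralizes this: it forces any truthfully-non-$\hi$ player who tries to become $\hi'$ to commit to a set whose value strictly exceeds her own true $\opt_i^T$, which the budget renders infeasible.
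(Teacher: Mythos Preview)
Your proposal is correct and follows essentially the same approach as the paper: the paper likewise exploits that $\Sc_i$ depends only on $c_{-i}$, and for the key case $i\neq\hi$ it takes $j=\hi$ as your ``witness'' (splitting on $i<\hi$ versus $i>\hi$ and using $\opt_{\hi}\geq v(\hS)$ together with the strict/nonstrict inequalities defining $\Sc_i$) to conclude that every $S\in\Sc_i$ has $v(S)>\opt_i$ and hence true cost exceeding $B$. Your handling of the $i=\hi$ case is slightly more explicit than the paper's (you spell out the sub-case where the misreport causes $i$ to lose $\hi$-status), but otherwise the arguments coincide.
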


\begin{proof}
Lemma~\ref{new-welldefined} shows that the mechanism is well defined.
The performance-guarantee statement follows from construction, since we have 
$\hS\in\Sc_{\hi}$. 
The payment made is $B$ by construction.
Individual rationality follows from Lemma~\ref{new-welldefined}, since this implies that
there is some $S\in\bigcup_i\Sc_i$ with $c(S)\leq B$,
We focus on proving truthfulness.

The key observation is that a player $i$ cannot affect her collection of sets $\Sc_i$. 
We first claim that player $\hi$ cannot benefit by lying. 
We have that $\hS$ is a minimum-cost set from $\Sc_{\hi}$. 
So since $c_{\hi}$ does not affect $\Sc_{\hi}$, player $\hi$ cannot lie and cause a lower  
$c_{\hi}$-cost subset of $\pset_{\hi}$ to be chosen.

Next, consider a player $i\neq\hi$. We show that $c_i(S_i)>B$ for every $S\in\Sc_i$,  
which implies that $i$ cannot obtain positive utility by reporting some $c'\in\C_i$, 
$c'\neq c_i$. 
Again, $\Sc_i$ does not depend on player $i$'s reported cost (so we can
unambiguously say $\Sc_i$ given that $c_{-i}$ is fixed).
Suppose $i<\hi$. We have $v(\hS)\geq\opt_i$, and so $v(\hS)\geq v(S)$ for
every $S\sse\pset_i$ with $c_i(S)\leq B$. Therefore, if $S\sse\pset_i$ is such that
$v(S)>\opt_{\hi}\geq v(\hS)$, then we must have $c_i(S)>B$.
Similarly, suppose $i>\hi$. Then, we have $v(\hS)>\opt_i$, and so
$v(\hS)>v(S)$ for every $S\sse\pset_i$ with $c_i(S)\leq B$. So if $S\sse\pset_i$ satisfies 
$v(S)\geq\opt_{\hi}\geq v(\hS)$, then we must again have $c_i(S)>B$.
\end{proof}

We obtain the following immediate corollary. 
Let $\opt^{(2)}$ be the second-largest $\opt_i$ value. 

\begin{corollary} \label{optmax-cor}
The set $\hS$ returned by Mechanism~\ref{optmax-proxy} satisfies $v(\hS)\geq\opt^{(2)}$.
\end{corollary}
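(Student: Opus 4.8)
The plan is to read this off directly from Theorem~\ref{optmax-thm}. That theorem already guarantees that the set $\hS\sse\pset_{\hi}$ returned by Mechanism~\ref{optmax-proxy} satisfies $v(\hS)\geq\max_{i\neq\hi}\opt_i$, so all that remains is the purely combinatorial observation that, whatever player $\hi$ is selected in step~\ref{optmax-set}, the quantity $\max_{i\neq\hi}\opt_i$ is at least the second-largest value among $\{\opt_i\}_{i\in\pl}$, which is by definition $\opt^{(2)}$.

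To make this precise I would split into two cases on whether $\hi$ attains $\opt^{(1)}:=\max_{i\in\pl}\opt_i$. If it does, then $\pl\setminus\{\hi\}$ still contains the player (or players) realizing the second-largest $\opt$-value, so $\max_{i\neq\hi}\opt_i=\opt^{(2)}$. If $\hi$ does not attain $\opt^{(1)}$, then some player achieving $\opt^{(1)}$ survives in $\pl\setminus\{\hi\}$, and hence $\max_{i\neq\hi}\opt_i=\opt^{(1)}\geq\opt^{(2)}$. In either case, combining with Theorem~\ref{optmax-thm} gives $v(\hS)\geq\max_{i\neq\hi}\opt_i\geq\opt^{(2)}$. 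There is no genuine obstacle; the only mild care needed is to note that the inequality must hold uniformly over every possible identity of $\hi$, which the two-case argument handles, and (when ties occur) to interpret $\opt^{(2)}$ as the second-largest value in the multiset $\{\opt_i\}_{i\in\pl}$ so that the case $\hi$ attaining the maximum still leaves $\max_{i\neq\hi}\opt_i=\opt^{(2)}$.
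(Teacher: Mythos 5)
Your argument is correct and matches the paper's intent exactly: the paper states the corollary as an "immediate" consequence of Theorem~\ref{optmax-thm}, and the only content is precisely the observation you spell out, that $\max_{i\neq\hi}\opt_i\geq\opt^{(2)}$ for every possible choice of $\hi$ (including ties). Nothing further is needed.
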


\begin{remark}
We remark that the asymmetry in the definition of $\Sc_i$ in Mechanism~\ref{optmax-proxy} is
crucial. The proof evidently exploits this, and we can show that if we change the
definition of $\Sc_i$ to 
$\bigl\{S\sse\pset_i: v(S)\geq\max_{j\neq i}\opt_j\bigr\}$, then we do
not obtain truthfulness.
\end{remark}

\begin{remark}[{\bf Computation using oracles}] \label{knapcover}
Computing the set $\hS$ requires two types of oracles: an oracle for computing the
$\opt_i$ quantities, and a 
{\em knapsack-cover oracle}~\cite{NeogiPS24} 
to find a minimum-cost set in $\Sc_i$ for every player $i$, in step~\ref{optmax-init}. 
In the multidimensional setting, a 
{\em knapsack-cover oracle for the class $\C=\Pi_i\C_i$} takes $q\in\C$ and a target 
value $\targ$ as input, and returns
$\argmin_{S\sse\gset}\,\bigl\{q(S): v(S)\geq\targ\bigr\}$, or \infeas, if no feasible set
exists.  
As with a demand- and constrained- demand oracle, we assume, somewhat more generally, that 
we can specify a subset $A\sse\gset$ of the form $A=\bigcup_{i\in I}\pset_i$,
for some $I\sse[k]$, and the oracle returns the optimum over $A$ (as opposed to $\gset$).
\footnote{This can be achieved by taking $q_\ell$ for $\ell\notin I$ to be the constant
function $q_\ell(S)=M$ for all $\es\neq S\sse\pset_\ell$, where $M$ is sufficiently large,
say $2|A|\cdot\max_{i\in A}q_i(\pset_i)$. 
We call the knapsack-cover oracle with this modified $q$, and return the oracle's output
if the output is \infeas, or a set $S\sse A$; otherwise, we return \infeas.} 

Given a knapsack-cover oracle, we can compute $\hS$ as follows. 
By scaling, we may assume that $v(S)$ is an integer for all $S\sse\gset$. 
For every player $i$, we compute $S^*_i=\argmin_{S\in\Sc_i} c_i(S)$
by calling the knapsack-cover oracle with the set
$A=\pset_i$, $q=c$, and target value
$\targ_i=\max\{\max_{j<i}\opt_j,1+\max_{j>i}\opt_j\}$.
We then return $\hS=\argmin\,\{c_i(S^*_i): i\in\pl\}$. 
\end{remark}

\subsection{Universally-budget-feasible mechanisms without assuming no-overbidding}
\label{new-sampling} \label{unibf-overbid}
We now utilize Mechanism~\ref{optmax-proxy} (and variants of it) in conjunction with 
mechanisms from Section~\ref{xos-bfuni} (with some changes) to obtain
{\em universally-budget feasible mechanisms that achieve $O(1)$-approximation with respect
to $\optbench$, without assuming no-overbidding}. 

Recall that $\opt_i=\max\,\bigl\{v(S): S\sse\pset_i,\ c(S)\leq B\bigr\}$, and $\opt^{(2)}$
is the second-largest $\opt_i$ value. 
Examining the analysis of Algorithm~\ref{xosalg-gencost} in Theorem~\ref{xosgenthm}, 
we see that one of the places where we incur a loss in value is when we argue the
existence of a large-value set of cost at most $B/2$, by applying Lemma~\ref{bredn-gencost}
to a suitable set $S\sse\gset_2$ (the set $T^*_2$ in the proof of
Theorem~\ref{xosgenthm}): this incurs a loss bounded by roughly 
$O\bigl(\max_{i\in\pl_2}v(S\cap\pset_i)\bigr)$ (due to the $\vbench[g](S)$ term in the
statement of Lemma~\ref{bredn-gencost}). 
The key insight is that we can now recover this loss using Mechanism~\ref{optmax-proxy}, 
provided that we can engineer things {\em in the analysis} so that
$\max_{i\in\pl_2}v(S\cap\gset_i)$ is at most $\opt^{(2)}$, and $v(S)$ is large. This  
requires a careful application of the random-partitioning lemma (Lemma~\ref{rpartition})
coupled with some additional observations.
The procedure 
is described in Algorithm~\ref{xosalggen-overbid} below, and follows the same template as
in prior algorithms, but we use Mechanism~\ref{optmax-proxy} in place of the
``return-$e^*$ mechanism.''  

In Section~\ref{poly-obidadd}, we consider additive valuations and additive costs, 
and obtain a {\em polytime} $O(1)$-factor approximation, universally budget-feasible
mechanism for such instances. 
Here, we exploit the fact that with additive valuations and costs, various computations in
Mechanism~\ref{optmax-proxy} and the mechanism from Section~\ref{xos-bfuni}
amount to solving knapsack or knapsack-cover problems, and we can make
suitable changes to these mechanisms 
to move to related problems that can be solved efficiently using dynamic programming. 
In Section~\ref{poly-obidsupadd}, we consider general XOS valuations and superadditive
cost functions, and devise polytime mechanisms given access to a constrained demand
oracle. The crucial (and only) change here compared to Algorithm~\ref{xosalggen-overbid}
lies in coming up with a different ``version'' of Mechanism~\ref{optmax-proxy} that can be
implemented in polytime using a constrained demand oracle.

\begin{procedure}[ht!] 
\caption{XOS-Gen() \textnormal{\quad // universally budget-feasible mechanism: general
costs without no-overbidding} 
\label{xosalggen-overbid}}
\KwIn{Budget-feasible MD instance
$\bigl(v:2^\gset\mapsto\R_+,B,\{\pset_i,\C_i\sse\R_+^{2^{\pset_i}}\},\{c_i\in\C_i\}\bigr)$; 
parameters $\ld\in[0,0.5]$, $p\in[0,1]$}
\KwOut{subset of $\gset$; \quad payments are VCG payments}
\SetKwComment{simpc}{// }{}
\SetCommentSty{textnormal}
\DontPrintSemicolon

Independently, for each player $i\in\pl$, place $i$ in $\pl_1$ or $\pl_2$, each with
probability $\frac{1}{2}$.
For $j=1,2$, let $\gset_j:=\bigcup_{i\in\pl_j}\pset_i$. \label{partition-overbid} 

Compute $V_1=V^*_1:=\max_{S\sse\gset_1}\,\{v(S):\ c(S)\leq B\}$. 
\; \label{optestim-overbid}

Use a constrained demand oracle to obtain 
$S^* \gets \argmax_{S\sse\gset_2}\,
\bigl\{v(S)-\frac{\ld V_1}{B}\cdot c(S):\ v(S)\leq \ld V_1\bigr\}$. 
\label{demandset-overbid}

\Return $S^*$ with probability $p$ and the output of Mechanism~\ref{optmax-proxy} with
probability $1-p$. \label{setoutput-overbid} 
\end{procedure}

\begin{theorem} \label{xosgenthm-overbid}
Taking $\ld=0.5$ and $p=\frac{128}{145}$ in 
Algorithm~\ref{xosalggen-overbid}, together with suitable  
payments, we obtain a
universally budget-feasible mechanism that obtains expected value at least
$\frac{1}{\xosgenapxobid}\cdot\optbench(v,B,c)$.
\end{theorem}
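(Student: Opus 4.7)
\begin{proofsketch}
Budget-feasibility will follow by combining the $S^*$-branch (VCG-based, exactly as in Lemma~\ref{xosgen-budgetfeas}; the argument only uses $v(S^*) \leq \ld V_1$ together with Claim~\ref{xosnprop}) with Theorem~\ref{optmax-thm} for the Mechanism~\ref{optmax-proxy}-branch, so each random outcome is a deterministic budget-feasible mechanism; the main work lies in the approximation guarantee. I would let $\optset$ attain $\optbench(v,B,c)$, let $i^* := \argmax_i v(\optset \cap \pset_i)$, let $i^{(1)}$ be a player with largest $\opt_i$, and set $\optset' := \optset - \pset_{i^{(1)}}$. Subadditivity of $v$ together with $v(\optset \cap \pset_{i^{(1)}}) \leq v(\optset \cap \pset_{i^*})$ then gives $v(\optset') \geq \optbench(v,B,c)$; moreover $c(\optset') \leq B$, and since every item of $\optset'$ belongs to some $i \neq i^{(1)}$ with $v(\optset \cap \pset_i) \leq \opt_i \leq \opt^{(2)}$, we have $\max_i v(\optset' \cap \pset_i) \leq \opt^{(2)}$.

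The next step is to set up a tailored good event. For $j = 1,2$ define $V^{-1}_j := \max\{v(T) : T \sse \gset_j - \pset_{i^{(1)}},\ c(T) \leq B\}$; because $V^{-1}_j$ depends only on the random partition of $\pl - \{i^{(1)}\}$, the variables $V^{-1}_1$ and $V^{-1}_2$ are identically distributed and are independent of the event Case A $:= \{i^{(1)} \in \pl_2\}$, which has probability $\frac{1}{2}$. Applying Lemma~\ref{rpartition}(a) to $\optset'$ and exploiting this symmetry, the good event
\[
\Gm \ :=\ \text{Case A} \cap \bigl\{V^{-1}_2 \geq V^{-1}_1\bigr\} \cap \bigl\{V^{-1}_1, V^{-1}_2 \geq \vbench(\optset')/4\bigr\}
\]
has probability at least $\frac{1}{8}$. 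Under $\Gm$, $\pset_{i^{(1)}} \sse \gset_2$, so $V_1 = V^*_1 = V^{-1}_1$; and a maximizer $T^*_2 \sse \gset_2 - \pset_{i^{(1)}}$ realizing $V^{-1}_2$ satisfies $v(T^*_2) \geq V_1$, contains no items of $i^{(1)}$, and hence has both $\max_i v(T^*_2 \cap \pset_i)$ and $\max_{e \in T^*_2} v(e)$ bounded by $\opt^{(2)}$.

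I then apply Lemma~\ref{bredn-gencost} to $T^*_2$ with $\targ = \ld V_1 = V_1/2$ to obtain $T \sse T^*_2$ with $c(T) \leq B/2$ and $v(T) > V_1/2 - \opt^{(2)}$ (under $\Gm$ both terms inside the ``min'' of the lemma evaluate to this); since $v(T) \leq \ld V_1$, $T$ is feasible for the maximization defining $S^*$, so
\[
v(S^*) \ \geq\ v(T) - \tfrac{\ld V_1}{B} c(T) \ \geq\ \tfrac{V_1}{4} - \opt^{(2)} \ \geq\ \tfrac{\optbench(v,B,c) - \opt^{(2)}}{16} - \opt^{(2)},
\]
the last step using $V_1 \geq \vbench(\optset')/4 \geq (\optbench(v,B,c) - \opt^{(2)})/4$. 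Combining with Corollary~\ref{optmax-cor} (deterministic value at least $\opt^{(2)}$ from Mechanism~\ref{optmax-proxy}), the expected value is at least $\frac{p}{8}\max\bigl\{\tfrac{\optbench(v,B,c) - \opt^{(2)}}{16} - \opt^{(2)},\ 0\bigr\} + (1-p)\opt^{(2)}$. A short algebraic check shows that $p = 128/145$ is precisely the balance point: in the regime $\opt^{(2)} < \optbench(v,B,c)/17$ the coefficient of $\opt^{(2)}$ vanishes, leaving $p/128 = 1/145$, while in the regime $\opt^{(2)} \geq \optbench(v,B,c)/17$ the second term alone gives $(1-p)\opt^{(2)} \geq \optbench(v,B,c)/145$. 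The main obstacle is the engineering of $\Gm$: I need to guarantee a high-$v$ set in $\gset_2$ that simultaneously avoids all items of $i^{(1)}$ and has value at least $V_1$, which is exactly what lets the $\opt^{(2)}$-guarantee of Mechanism~\ref{optmax-proxy} absorb the additive loss of Lemma~\ref{bredn-gencost}.
\end{proofsketch}
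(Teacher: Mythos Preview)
Your proof is correct and follows essentially the same approach as the paper. You remove the player $i^{(1)}$ with largest $\opt_i$ from $\optset$, use Lemma~\ref{rpartition}(a) on the residual set together with independence of the event $\{i^{(1)}\in\pl_2\}$ to get a probability-$1/8$ good event, apply Lemma~\ref{bredn-gencost} to a maximizer $T^*_2\sse\gset_2-\pset_{i^{(1)}}$ so that the additive loss is bounded by $\opt^{(2)}$, and balance against Corollary~\ref{optmax-cor}; the paper does exactly this (with notation $i^*$, $\boptset$, $V'_j$ in place of your $i^{(1)}$, $\optset'$, $V^{-1}_j$), only computing the final bound as a single linear combination rather than your two-regime case split.
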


\begin{proof} 
Universal budget-feasibility follows from the same arguments as in the proof of
Lemma~\ref{xosgen-budgetfeas}, so we focus on proving the approximation guarantee.
We may assume that $n\geq 2$, otherwise $\optbench=0$, and there is nothing to be shown.

We follow a similar approach as in the proof of Theorem~\ref{xosgenthm}. 
As alluded to earlier, we carefully identify a set $S\sse\gset_2$ with $c(S)\leq B$
such that when we apply Lemma~\ref{bredn-gencost} to $S$ to extract a subset of cost at most
$B/2$ and value roughly $\ld V_1$, the additive loss incurred (relative to $\ld V_1$) can
be charged to $\opt^{(2)}$.  
More precisely, we identify such a set $S$ with 
$v(S)=\frac{1}{O(1)}\cdot\bigl(\optbench-\opt^{(2)}\bigr)$ assuming a
certain good event happens, and argue that this good event happens with constant probability. 
Coupled with the $\opt^{(2)}$ value returned by Mechanism~\ref{optmax-proxy}, this will
yield the stated approximation bound.

Let $i^*\in\pl$ be such that $\opt_{i^*}=\max_{i\in\pl}\opt_i$, and let
$\pset^*=\pset_{i^*}$. 
Essentially, the idea is that since $\opt_{i^*}$ is the only $\opt_i$ quantity that we
cannot recover using Mechanism~\ref{optmax-proxy}, we simply consider the effect of random
partitioning after excluding $i^*$, noting that excluding $\pset^*$ only incurs a bounded
loss in value. 
Let $\optset=\argmax_{S\sse\gset}\,\{\vbench(S): c(S)\leq B\}$.
Define $\boptset=\optset-\pset^*$, and
$\boptset_j=\boptset\cap\gset_j$ for $j=1,2$. 
Note that $v(\boptset)\geq\vbench(\optset)=\optbench$. 
Also, for any $S\sse\gset-\pset^*$ with $c(S)\leq B$, we have $\vbench(S)\geq v(S)-\opt^{(2)}$, 
since by definition, for every $i\in\pl$ with $S\cap\pset_i\neq\es$, we have $i\neq i^*$
and so $v(S\cap\pset_i)\leq\opt_i\leq\opt^{(2)}$. 
In particular, we have $\vbench(\boptset)\geq\optbench-\opt^{(2)}$.

Let $\Gm$ be the event that $v(\boptset_1),v(\boptset_2)\geq\vbench(\boptset)/4$.
Applying Lemma~\ref{rpartition} (a) to $\boptset$, we obtain that $\Pr[\Gm]\geq 0.5$.
Let $V'_j=\max\,\{v(S): S\sse\gset_j-\pset^*,\ c(S)\leq B\}$ for $j=1,2$. 
Let $T'_2\sse\gset_2-\pset^*$ be such that $v(T'_2)=V'_2$ and $c(T'_2)\leq B$.
Note that $V'_1$ and $V'_2$
are identically distributed, and this remains true even when we condition on $\Gm$. So we
have $\Pr[\{V'_2\geq V'_1\}\cap\Gm]\geq\frac{1}{4}$. Finally, let $\Om$ be the event
$\pset^*\sse\gset_2$. Clearly, $\Pr[\Om]=0.5$.
Event $\Om$ depends only on the random choice made for player
$i^*$, whereas event $\Gm$ and the random variables $V'_1, V'_2$ depend only on the random
choices made for the other players. So $\Om$ and the event $\{V'_2\geq V_1\}\cap\Gm$ are
independent, and we have $\Pr[\{V'_2\geq V'_1\}\cap\Gm\cap\Om]\geq\frac{1}{8}$.

Let us condition on the good event $\{V'_2\geq V'_1\}\cap\Gm\cap\Om$. Note then that
since $\pset^*\cap\gset_1=\es$, we have $V'_1=V_1=V^*_1$. 
Applying Lemma~\ref{bredn-gencost} on $T'_2$ with $\targ=\ld V_1$, we can obtain 
$T\sse T'_2$ such that $c(T)\leq B/2$ and 
$\min\{\vbench(T'_2)-\ld V_1,\ld V_1-\max_{e\in T'_2}v(e)\}<v(T)\leq\ld V_1$.
We also have $\vbench(T'_2)\geq V'_2-\opt^{(2)}$, 
and $\max_{e\in T'_2}v(e)\leq\max_{i\in\pl_2}v(T'_2\cap\pset_i)\leq\opt^{(2)}$.
Since $V'_2\geq 2\ld V_1$, it follows that $v(T)>\ld V_1-\opt^{(2)}$.
So we have
\begin{equation*}
v(S^*) \geq v(S^*)-\frac{\ld V_1}{B}\cdot c(S^*)
\geq v(T)-\frac{\ld V_1}{B}\cdot c(T)
\geq \ld V_1-\opt^{(2)}-\frac{\ld V_1}{2}
\geq\frac{\ld V_1}{2}-\opt^{(2)}.
\end{equation*}

We also have 
\[
V_1\geq v(\boptset_1)\geq\frac{\vbench(\boptset)}{4}\geq\frac{\optbench-\opt^{(2)}}{4}.
\]
By Corollary~\ref{optmax-cor}, the value obtained by Mechanism~\ref{optmax-proxy} is at
least $\opt^{(2)}$.
Putting everything together,  
we obtain that the expected value returned is at least 
\begin{alignat}{1}
p\cdot\frac{1}{8}\cdot\biggl(\frac{1}{16}\cdot\bigl(\optbench-\opt^{(2)}\bigr)&-\opt^{(2)}\biggr)
+(1-p)\opt^{(2)} \label{finalineq-overbid} \\
&=\frac{p}{128}\cdot\optbench=\frac{1}{\xosgenapxobid}\cdot\optbench.
\notag
\qedhere
\end{alignat}
\end{proof}

\subsubsection{Polytime mechanism for additive valuations and additive costs}
\label{polytime-overbid} \label{poly-obidadd}
With additive valuations and additive cost functions, a constrained demand oracle involves
solving a knapsack problem. The computation of $\opt_i$'s in Mechanism~\ref{optmax-proxy}
also amounts to solving knapsack problems.
We argue that, using scaling and rounding, we can instead work with 
related knapsack problems that can be solved efficiently using dynamic programming (DP),
and thereby obtain a polytime $O(1)$-approximation mechanism.
But we need to exercise some care, because we cannot use $\vemax$ in the scaling. 

We discuss below the changes to Algorithm~\ref{xosalggen-overbid} and
Mechanism~\ref{optmax-proxy}; Algorithm~\ref{additivealg-overbid} contains the entire 
description. 
\begin{enumerate}[label=\arabic*., topsep=0.2ex, itemsep=0.1ex, leftmargin=*]
\item In step~\ref{optestim-overbid} of Algorithm~\ref{xosalggen-overbid}, we now use
any $\beta$-approximation algorithm for 
the knapsack problem, where $\beta\geq 1$, to obtain $V_1\geq V^*_1/\beta$.  

\item 
In step~\ref{demandset-overbid} of Algorithm~\ref{xosalggen-overbid}, 
we compute $S^*$ by solving the following roughly-equivalent knapsack problem.
Set $\bnew=4n$, and  
$w_e=\ceil{\frac{nv(e)}{\ld V_1}}$ and $u'_e=w_e-\frac{\bnew}{B}\cdot c(e)$ for all 
$e\in\gset_2$. 
Solve the knapsack problem over $\gset_2$, with
item-values $\{u'_e\}$, item weights $\{w_e\}$ and knapsack budget $\bnew$, to obtain
$S^*$. This takes polytime since $\bnew=4n$. 
(Without the $\ceil{.}$ in the $w_e$s, this would be the same as the problem 
$\max_{S\sse\gset_2}\,\bigl\{v(S)-\frac{4\ld V_1}{B}\cdot c(S): v(S)\leq 4\ld V_1\bigr\}$.)

\item Dealing with the (efficient computation of the) $\opt_i$ quantities in 
Mechanism~\ref{optmax-proxy} poses more difficulties, and the workaround is more  
involved. 
The issue is that 
the truthfulness guarantee of Mechanism~\ref{optmax-proxy} relies crucially on the fact
that we are working with the exact $\opt_i$ quantities. To make this more
approximation-friendly, we use random partitioning (again!). We obtain an estimate of
$\max_i\opt_i$ from one part, for which we can use any approximation algorithm for
computing the $\opt_i$s. For each player $i$ in the second part, we solve the 
{\em knapsack-cover problem} 
of finding a min-cost set
$T^*_i\sse\pset_i$ whose value is at least this estimate. 
Finding the smallest-index player $i$ (from the second part) for which $c_i(T^*_i)\leq B$,
if one exists, returning $T^*_i$ and paying $B$ to player $i$, 
yields a truthful mechanism with a guarantee similar to that stated in
Theorem~\ref{optmax-thm} for Mechanism~\ref{optmax-proxy}. Finally, since the valuation
and costs are additive, 
we use scaling and rounding to 
solve a related knapsack-cover problem in polynomial time, which suffices.

This modified version of Mechanism~\ref{optmax-proxy} is described
below. 

\SetAlgoProcName{Mechanism}{Mechanism}
\begin{procedure}[ht!] 
\caption{2ndOpt-Poly() 
\label{newoptmax-proxy}} 
\KwIn{Budget-feasible MD instance
$\bigl(v\in\R_+^{\gset},B,\{\pset_i,c_i\in\R_+^{\pset_i}\}\bigr)$ with additive valuation
and additive costs}
\KwOut{subset of $\pset_i$ for some player $i$; \quad payment = $B$}
\SetKwComment{simpc}{// }{}
\SetCommentSty{textnormal}
\DontPrintSemicolon

For every player $i$, define $\opt_i:=\max_{S\sse\pset_i}\,\bigl\{v(S): c_i(S)\leq B\bigr\}$.
Partition $\pl$ into two sets $\pl_1,\pl_2$ by placing each player independently with
probability $\frac{1}{2}$ in $\pl_1$ or $\pl_2$.

For every $i\in\pl_1$, compute $\opt'_i$, a $\gm$-approximation to
$\opt_i$. 
Set $\targ:=\max_{i\in\pl_1}\opt'_i$.

For every $i\in\pl_2$, do the following.
Set $\wt_e=\floor{\frac{2n v_e}{\targ}}$ for all $e\in\pset_i$. 
Let $T^*_i\sse\pset_i$ be an optimal solution to the following knapsack-cover problem: 
$\min_{S\sse\pset_i}\,\{c_i(S): \wt(S)\geq n\}$. 
\label{newproxy-knapcover}

If $c_i(T^*_i)>B$ for all $i\in\pl_2$, {\bf return} $\es$ and make no payment to any player. 
Otherwise, let $\hi\in\pl_2$ be the smallest index $i\in\pl_2$ such that $c_i(T^*_i)\leq B$;
\Return $\hS=T^*_{\hi}$ and pay $B$ to player $\hi$. 
\label{newproxy-output}
\end{procedure}
\SetAlgoProcName{Algorithm}{Algorithm}
\end{enumerate}

We run steps~\ref{partition-overbid}--\ref{demandset-overbid} of
Algorithm~\ref{xosalggen-overbid} with the above changes, and return $S^*$ or the output of
Mechanism~\ref{newoptmax-proxy}, each with some probability. The entire algorithm is
descibed below.

\begin{procedure}[ht!] 
\caption{Additive() \textnormal{\quad // polytime universally budget-feasible mechanism:
 additive valuation and costs, without no-overbidding} 
\label{additivealg-overbid}}
\KwIn{Budget-feasible MD instance
$\bigl(v\in\R_+^{\gset},B,\{\pset_i,c_i\in\R_+^{\pset_i}\}\bigr)$ with additive valuation
and additive costs; 
parameters $\ld\in[0,0.5]$, $p\in[0,1]$, $r\in\Z_+, r\geq 2$.}
\KwOut{subset of $\gset$; \quad payments are VCG payments}
\SetKwComment{simpc}{// }{}
\SetCommentSty{textnormal}
\DontPrintSemicolon

Independently, for each player $i\in\pl$, place $i$ in $\pl_1$ or $\pl_2$, each with
probability $\frac{1}{2}$.
For $j=1,2$, let $\gset_j:=\bigcup_{i\in\pl_j}\pset_i$. 

Compute $V_1$, a $\beta$-approximation to $V^*_1:=\max_{S\sse\gset_1}\,\{v(S):\ c(S)\leq B\}$. 

Set $w_e=\ceil{\frac{nv(e)}{\ld V_1}}$ and 
$u'_e=w_e-\frac{\bnew}{B}\cdot c(e)$ for all $e\in\gset_2$. 
Compute an optimal solution $S^*$ to the knapsack problem over $\gset_2$, with
item-values $\{u'_e\}$, item weights $\{w_e\}$ and knapsack budget $\bnew=4n$.
\label{demandset-additive}

\Return $S^*$ with probability $p$ and the output of Mechanism~\ref{newoptmax-proxy} with
probability $1-p$. 
\end{procedure}

Theorem~\ref{newoptmax-thm} states the performance guarantee of Mechanism~\ref{newoptmax-proxy}. 
We first prove that Mechanism~\ref{additivealg-overbid} leads to an $O(1)$-approximation,
universally budget-feasible mechanism assuming Theorem~\ref{newoptmax-thm}, 
and then prove Theorem~\ref{newoptmax-thm}. 

\begin{theorem} \label{newoptmax-thm}
Mechanism~\ref{newoptmax-proxy} is a polytime, universally budget-feasible mechanism that
with probability at least $\frac{1}{4}$, returns a set $\hS$ such that
$v(\hS)\geq\frac{\opt^{(2)}}{2\gm}$, where $\opt^{(2)}$ is the second-largest $\opt_i$
value. 
\end{theorem}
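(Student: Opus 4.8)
The plan is to verify, in order, the three claims about Mechanism~\ref{newoptmax-proxy}: polynomial running time, universal budget-feasibility, and the value guarantee with probability at least $\frac14$. The running time is easy: for $i\in\pl_1$ each $\opt_i$ is a knapsack optimum, so a deterministic $\gm$-approximation $\opt'_i$ (e.g.\ via the knapsack FPTAS, giving $\opt_i/\gm\le\opt'_i\le\opt_i$) is computable in polytime; and for $i\in\pl_2$ I would first note that capping each $\wt_e$ at $n$ leaves the feasible region of $\min\{c_i(S):S\sse\pset_i,\ \wt(S)\ge n\}$ unchanged (a set containing an element of uncapped weight $\ge n$ is still covering after capping, and capping changes nothing on the other sets), so the cover problem becomes a min-cost cover with integer weights in $\{0,1,\dots,n\}$ and target $n$, solvable by a standard $O(|\pset_i|\cdot n)$ dynamic program.

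For universal budget-feasibility I would fix a realization of $(\pl_1,\pl_2)$ and show the induced deterministic mechanism is truthful, IR, NPT, and pays at most $B$. Every player of $\pl_1$ is never selected and never paid, so her utility is identically $0$ regardless of her report (which only influences $\targ$). For $i\in\pl_2$, the restriction of $\wt$ to $\pset_i$ depends only on public data and on $\targ$, hence not on $i$'s report; so $T^*_i$ is the minimum-$c_i$-cost subset of $\pset_i$ of $\wt$-weight $\ge n$, a deterministic function of $i$'s reported cost, and $i$ is selected exactly when $c_i(T^*_i)\le B$ and no smaller-indexed player of $\pl_2$ is also under budget --- a threshold rule. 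The case analysis is then: if honest reporting makes $i$ selected, her utility $B-\bc_i\bigl(\argmin\{\bc_i(S):S\sse\pset_i,\ \wt(S)\ge n\}\bigr)\ge 0$ dominates $0$ (being unselected) and dominates $B-\bc_i(T)$ for any covering $T$ she could be selected with under a lie, since $T^*_i$ minimizes her true cost over all $\wt$-covers; if honest reporting leaves $i$ unselected, then either a smaller $\pl_2$-player blocks her no matter what, or $\min\{\bc_i(S):S\sse\pset_i,\ \wt(S)\ge n\}>B$, and any lie that gets $i$ selected returns a valid $\wt$-cover of true cost $>B$, i.e.\ negative utility. Hence truthfulness; IR, NPT, and total payment $=B\le B$ are immediate.

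For the value guarantee I may assume $k\ge 2$ and $\opt^{(2)}>0$ (otherwise $v(\hS)\ge 0=\opt^{(2)}/(2\gm)$). Let $i_1\ne i_2$ be players attaining the largest and second-largest $\opt_i$, so $\opt_{i_2}=\opt^{(2)}$, and condition on the probability-$\frac14$ event $\{i_1\in\pl_2\}\cap\{i_2\in\pl_1\}$. Then $\targ\ge\opt'_{i_2}\ge\opt^{(2)}/\gm>0$ and $\targ\le\max_{j\in\pl_1}\opt_j\le\opt_{i_1}$. Picking $O\sse\pset_{i_1}$ with $v(O)=\opt_{i_1}$ and $c_{i_1}(O)\le B$, and using $\lfloor x\rfloor>x-1$ together with $|O|\le|\pset_{i_1}|\le n$,
\[
\wt(O)=\sum_{e\in O}\Bigl\lfloor\tfrac{2n v_e}{\targ}\Bigr\rfloor\ >\ \tfrac{2n\,v(O)}{\targ}-|O|\ \ge\ 2n-n\ =\ n ,
\]
so $O$ is feasible for $i_1$'s cover problem, hence $c_{i_1}(T^*_{i_1})\le c_{i_1}(O)\le B$; therefore $\hi$ is well defined (some $\pl_2$-index $\le i_1$) and $\hS\ne\es$. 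Finally $\wt(\hS)\ge n$ and $\wt_e\le 2n v_e/\targ$ give $n\le 2n\,v(\hS)/\targ$, i.e.\ $v(\hS)\ge\targ/2\ge\opt^{(2)}/(2\gm)$.

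The hard part will be the truthfulness of this modified mechanism: one must check that interleaving a fresh round of random partitioning with the knapsack-cover-based selection still yields a clean threshold rule, and in particular that a $\pl_2$-player can neither misreport so as to be selected with a covering set of smaller true cost than $T^*_i$ (impossible, since every selectable set is a valid $\wt$-cover and $T^*_i$ minimizes true cost over all such covers) nor gain anything when honest reporting leaves her unselected. Everything else --- the polynomial-time dynamic program, the probability-$\frac14$ event, and the weight-rounding arithmetic --- is comparatively mechanical.
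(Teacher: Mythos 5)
Your proof is correct and follows essentially the same route as the paper's: the same case analysis for truthfulness (players in $\pl_1$ are inert; for $i\in\pl_2$ the cover family is report-independent and $T^*_i$ is a true-cost minimizer over it, giving a threshold rule), and the same probability-$\frac14$ event $\{i_1\in\pl_2,\ i_2\in\pl_1\}$ plus the $\lfloor x\rfloor>x-1$ bound to show $\wt(T_{\ti})\ge n$ and thus $v(\hS)\ge\targ/2\ge\opt^{(2)}/(2\gm)$. Your remark about capping $\wt_e$ at $n$ (or equivalently, tracking DP states only up to $n$) is a slightly more explicit justification of the polytime claim than the paper gives, but it is the same underlying observation; no new ideas are needed and none are missing.
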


\begin{theorem} \label{additivethm-overbid}
Taking $\ld=0.125$, and for a suitable choice of $p$, 
Algorithm~\ref{additivealg-overbid}, together with suitable  
payments, yields a polytime, universally budget-feasible mechanism for additive valuations
and additive costs that achieves $O(1)$ approximation with respect to $\optbench$.
\end{theorem}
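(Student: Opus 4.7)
The plan is to combine the random-partitioning argument developed for Theorem~\ref{xosgenthm-overbid} (which carefully sidesteps the ``monopolist'' player $i^*\in\argmax_i\opt_i$) with the knapsack-scaling technique from Theorem~\ref{additivethm} that executes the constrained-demand computation in polynomial time, and then use Mechanism~\ref{newoptmax-proxy} (via Theorem~\ref{newoptmax-thm}) to absorb the $\opt^{(2)}$ additive loss incurred by the $T'_2$-pruning step.

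\emph{Budget-feasibility and polynomial time.} On the $S^*$-branch, the scaled knapsack $\max\{u'(S):w(S)\le\bnew\}$ is solved exactly, and its objective minimizes an affine function of the reported $\{c_i\}_{i\in\pl_2}$; thus the VCG-payment argument of Lemma~\ref{xosgen-budgetfeas} applies verbatim to give universal truthfulness, IR, and NPT, and bounds the payment to each $i\in\pl_2$ by $(B/\bnew)(w(S^*)-w(S^*-\pset_i))$. Since $w$ is additive (hence XOS), Claim~\ref{xosnprop} gives total payment at most $(B/\bnew)w(S^*)\le B$. The Mechanism~\ref{newoptmax-proxy}-branch is universally budget-feasible by Theorem~\ref{newoptmax-thm}. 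For the running time, I would take $\beta=2$ and compute $V_1$ via the standard greedy $2$-approximation for $0/1$ knapsack; the scaled knapsack over $\gset_2$ has integer weights in $\{0,\ldots,n\}$ and integer budget $\bnew=4n$, hence standard DP solves it in $O(n^2)$ time. Mechanism~\ref{newoptmax-proxy} is polytime by Theorem~\ref{newoptmax-thm}.

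\emph{Approximation.} Mirroring Theorem~\ref{xosgenthm-overbid}, set $\pset^*:=\pset_{i^*}$, $\optset:=\argmax_S\{\vbench(S):c(S)\le B\}$, and $\boptset:=\optset-\pset^*$. For any $S\sse\gset-\pset^*$ with $c(S)\le B$ we have $\vbench(S)\ge v(S)-\opt^{(2)}$, so $\vbench(\boptset)\ge\optbench-\opt^{(2)}$. Apply Lemma~\ref{rpartition}(a) to $\boptset$ and combine with the independent event $\Om=\{\pset^*\sse\gset_2\}$: with probability at least $1/8$, $\pset^*\cap\gset_1=\es$ and $V'_2\ge V'_1\ge(\optbench-\opt^{(2)})/4$, where $V'_j:=\max\{v(S):S\sse\gset_j-\pset^*,c(S)\le B\}$. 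Conditioning on this event, $V^*_1=V'_1$ and $V_1\ge V'_1/\beta$, while $V'_2\ge V'_1\ge\beta\ld V_1$; with $\ld=\tfrac18,\beta=2$ this gives $V'_2\ge V_1=8\ld V_1$. Let $T'_2$ witness $V'_2$; every $e\in T'_2$ lies in some $\pset_i$ with $i\ne i^*$ and $c_i(e)\le B$, so $\max_{e\in T'_2}v(e)\le\opt^{(2)}$.

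Next, apply Lemma~\ref{bredn-gencost} to $T'_2$ with target $\tau=\Theta(\ld V_1)$ chosen so that both branches of the min in the lemma dominate $\tau-\opt^{(2)}$; this yields $T\sse T'_2$ with $c(T)\le B/2$ and $v(T)\in(\tau-\opt^{(2)},\tau]$. The rounding bound $w(T)\le nv(T)/\ld V_1+|T|$ certifies $w(T)\le\bnew=4n$ (for $\tau\le 3\ld V_1$), making $T$ knapsack-feasible. Using $T$ as a witness, $u'(S^*)\ge u'(T)=w(T)-(\bnew/B)c(T)\ge nv(T)/\ld V_1-2n$; inverting the $w$-definition via $v(S^*)\ge(w(S^*)-|S^*|)\ld V_1/n$ and $w(S^*)\ge u'(S^*)$ delivers $v(S^*)=\Omega(V_1)-O(\opt^{(2)})$ for the specified constants. \emph{The main obstacle} is exactly this step: the additive rounding loss $|S^*|\cdot\ld V_1/n=O(\ld V_1)$ is of the same order as the signal $v(T)=\Theta(\ld V_1)$, which is precisely why the constants $\ld=\tfrac18$ and $\bnew=4n$ are dictated---they provide just enough headroom (up to value $3\ld V_1$ inside a budget of $4n$) for the witness inequality to survive rounding.

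Finally, combining the $\Omega(V_1-\opt^{(2)})$ bound on $v(S^*)$ (holding with probability at least $1/8$) with the $\opt^{(2)}/(2\gm)$-with-probability-$\tfrac14$ guarantee of Mechanism~\ref{newoptmax-proxy}, the expected value returned by Algorithm~\ref{additivealg-overbid} is at least $p\cdot\Omega(V_1-\opt^{(2)})+(1-p)\cdot\Omega(\opt^{(2)})$. Using $V_1\ge(\optbench-\opt^{(2)})/(4\beta)=\Omega(\optbench-\opt^{(2)})$ and choosing $p$ so the $\opt^{(2)}$-contributions from the two branches cancel, the bound becomes $\Omega(\optbench)$, establishing the stated $O(1)$-approximation.
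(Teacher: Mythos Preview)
Your overall strategy matches the paper's: handle the monopolist $i^*$ via the $\boptset$-argument of Theorem~\ref{xosgenthm-overbid}, implement the constrained-demand step via a scaled knapsack, and absorb the $\opt^{(2)}$ loss using Mechanism~\ref{newoptmax-proxy}. The budget-feasibility and polytime arguments are fine.

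There is, however, a genuine gap in the approximation step. You apply the pruning lemma (Lemma~\ref{bredn-gencost}) to the \emph{valuation} $v$ to obtain $T\sse T'_2$ with $v(T)\le\tau$, and then translate into the $w$-scale. But this conversion costs you twice: once when showing $w(T)\le\bnew$ (forcing $\tau\le 3\ld V_1$ since $w(T)\le nv(T)/(\ld V_1)+|T|\le n\tau/(\ld V_1)+n$), and once when inverting $w(S^*)\mapsto v(S^*)$ (losing another $|S^*|\cdot\ld V_1/n\le\ld V_1$). Working through your own chain with $\tau=3\ld V_1$:
\[
u'(T)\ \ge\ \frac{n\,v(T)}{\ld V_1}-2n\ >\ 3n-\frac{n\,\opt^{(2)}}{\ld V_1}-2n,
\qquad
v(S^*)\ \ge\ \frac{\ld V_1}{n}\bigl(u'(T)-n\bigr)\ >\ -\opt^{(2)},
\]
which is vacuous. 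With $\bnew=4n$ fixed, no choice of $\tau\le 3\ld V_1$ rescues this; the ``just enough headroom'' you allude to is in fact exactly zero.

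The paper sidesteps this by applying Lemma~\ref{bredn-supaddcost} \emph{directly in the $w$-scale}: since $w$ is additive (hence subadditive) and $w(T'_2)\ge n\,v(T'_2)/(\ld V_1)\ge 8n$, pruning with target $4n$ gives $T\sse T'_2$ with $c(T)\le B/2$ and $4n-\max_{e\in T'_2}w_e< w(T)\le 4n$. Now $T$ is knapsack-feasible by construction, $u'(T)>2n-\max_{e\in T'_2}w_e$, and using $\max_{e\in T'_2}w_e\le n\opt^{(2)}/(\ld V_1)+1$ yields
\[
v(S^*)\ \ge\ \frac{\ld V_1}{n}\bigl(w(S^*)-n\bigr)\ \ge\ \frac{\ld V_1}{n}\bigl(u'(T)-n\bigr)\ >\ \ld V_1\Bigl(1-\tfrac{1}{n}\Bigr)-\opt^{(2)}\ \ge\ \frac{\ld V_1}{2}-\opt^{(2)}.
\]
The single rounding loss $\ld V_1/n$ is now negligible, and the rest of your argument (combining with $V_1\ge V^*_1/\beta\ge(\optbench-\opt^{(2)})/(4\beta)$ and balancing $p$) goes through.
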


\begin{proof}
We proceed similarly to the proof of Theorem~\ref{xosgenthm-overbid}.
Let $p$ be such that
$\frac{p}{8}\cdot\bigl(\frac{1}{64\beta}+1\bigr)=\frac{1-p}{8\gm}$. Note that such a
$p\in[0,1]$ always exists.
Assume that $n\geq 2$.

Let $i^*\in\pl$ be such that $\opt_{i^*}=\max_{i\in\pl}\opt_i$, and let
$\pset^*=\pset_{i^*}$. Let $\optset=\argmax_{S\sse\gset}\,\{\vbench(S): c(S)\leq B\}$.
Define $\boptset=\optset-\pset^*$, and
$\boptset_j=\boptset\cap\gset_j$ for $j=1,2$. 
We have 
$\vbench(\boptset)\geq\optbench-\opt^{(2)}$ by construction. 
Let $V'_j=\max\,\{v(S): S\sse\gset_j-\pset^*,\ c(S)\leq B\}$ for $j=1,2$. 
Let $T'_2\sse\gset_2-\pset^*$ be such that $v(T'_2)=V'_2$ and $c(T'_2)\leq B$.

Let $\Gm$ be the event that $v(\boptset_1),v(\boptset_2)\geq\vbench(\boptset)/4$, and
$\Om$ be the event that $\pset^*\sse\gset_2$.
Applying Lemma~\ref{rpartition} (a) to $\boptset$, we obtain that $\Pr[\Gm]\geq 0.5$.
$V'_1$ and $V'_2$
are identically distributed, which also holds when we condition on $\Gm$. 
So $\Pr[\{V'_2\geq V'_1\}\cap\Gm]\geq\frac{1}{4}$. Finally, event $\Om$ is independent of
events $\Gm$ and $\{V'_2\geq V'_1\}$, since $\Om$ depends only on the random choice for player
$i^*$, and events $\Gm$ and the random variables $V'_1, V'_2$ depend only on the random
choices for the other players. 
The upshot is that $\Pr[\{V'_2\geq V'_1\}\cap\Gm\cap\Om]\geq\frac{1}{8}$.

We condition on this good event $\{V'_2\geq V'_1\}\cap\Gm\cap\Om$. 
Since $\pset^*\cap\gset_1=\es$, we have $V'_1=V^*_1$ and $V_1\geq V^*_1/\beta$. 
Since $v(T'_2)\geq V'_2\geq 8\ld V'_1$, we have $w(T'_2)\geq 8n$.
We now apply Lemma~\ref{bredn-supaddcost} with $\targ=4n$ to the additive valuation
given by the $\{w_e\}_{e\in\gset_2}$ weights, to obtain $T\sse T'_2$ such that $c(T)\leq B/2$
and $4n-\max_{e\in T'_2}w_e<w(T)\leq 4n$. 
We have $\max_{e\in T'_2}w_e\leq\frac{n}{\ld V_1}\cdot\max_{e\in T'_2}v(e)+1
\leq\frac{n}{\ld V_1}\cdot\opt^{(2)}+1$, 
where the last inequality is because we have $T'_2\sse\gset_2-\pset^*$.
Recall that the set $S^*$ computed in step~\ref{demandset-additive} of
Algorithm~\ref{additivealg-overbid}, is the optimal solution to the knapsack problem over
$\gset_2$ with item values $u'_e=w_e-\frac{\bnew}{B}\cdot c_e$ for all $e\in\gset_2$, item
weights $\{w_e\}$ and knapsack budget $\bnew=4n$.   
So the above implies that 
\begin{alignat*}{1}
w(S^*) & \geq u'(S^*)\geq u'(T)
\geq w(T)-\frac{\bnew}{2}\geq 2n-\frac{n}{\ld V_1}\cdot\opt^{(2)}-1. \\
\text{Therefore} \quad 
v(S^*) & \geq\frac{\ld V_1}{n}\cdot\bigl(w(S^*)-n\bigr)
\geq\ld V_1\Bigl(1-\tfrac{1}{n}\Bigr)-\opt^{(2)} 
\geq\frac{\ld V^*_1}{2\beta}-\opt^{(2)}, 
\end{alignat*}
where the last inequality follows since $n\geq 2$.
As in the proof of Theorem~\ref{xosgenthm-overbid}, 
we have 
$V^*_1\geq\frac{\optbench-opt^{(2)}}{4}$.
By Theorem~\ref{newoptmax-thm}, the expected value returned by
Mechanism~\ref{newoptmax-proxy} is at least $\frac{1}{4}\cdot\frac{\opt^{(2)}}{2\gm}$. 
So the expected value returned by Mechanism~\ref{additivealg-overbid} is at least 
\begin{equation*}
p\cdot\frac{1}{8}\cdot\biggl(\frac{1}{64\beta}\cdot\bigl(\optbench-\opt^{(2)}\bigr)-\opt^{(2)}\biggr)
+(1-p)\cdot\frac{\opt^{(2)}}{8\gm} 
=\frac{p}{512\beta}\cdot\optbench.
\qedhere
\end{equation*}
\end{proof}

\begin{proofof}{Theorem~\ref{newoptmax-thm}}
The mechanism runs in polytime because the knapsack-cover problem in
step~\ref{newproxy-knapcover} takes polytime as the target value $n$ is polynomially
bounded, and the $\wt_e$'s are integers. 
Payment of at most $B$, and individual rationality are baked into the mechanism.
We focus on proving truthfulness and the performance guarantee.

Clearly, players in $\pl_1$ always get $0$ utility, so again nothing by lying.
For each $i\in\pl_2$, the collection $\Sc_i=\{S\sse\pset_i: \wt(S)\geq n\}$ of feasible 
sets for player $i$ does not depend on the cost-vector $c$. 
If the mechanism outputs $\es$, then every set $S\in\bigcup_{i\in\pl_2}\Sc_i$ has
$c(S)>B$, so no player can achieve positive utility by lying.
Suppose the mechanism outputs $\hS\sse\pset_{\hi}$. Player $\hi$
cannot benefit by lying, since $\hS$ already has minimum cost in $\Sc_{\hi}$ among her
feasible sets. Consider a player $i\in\pl_2$, $i\neq\hi$. 
If $i<\hi$, then by the choice of $\hi$, we have $c_i(T^*_i)>B$, so player $i$ cannot
achieve positive utility by lying. If $i>\hi$, then $i$ will never be chosen in
step~\ref{newproxy-output}, so again $i$ cannot benefit by lying.

Let $\ti$ and $\tj$ be players in $\pl$ having the largest and second-largest $\opt_i$
values respectively among all players in $\pl$. With probability $\frac{1}{4}$, we have
$\tj\in\pl_1$, $\ti\in\pl_2$. Assume that this event happens. Then, we have
$\targ\geq\frac{\opt_{\tj}}{\gm}=\frac{\opt^{(2)}}{\gm}$, and there is some set 
$T_{\ti}\sse\pset_{\ti}$ such
that $v(T_{\ti})=\opt_{\ti}\geq\targ$, $c_{\ti}(T_{\ti})\leq B$. This implies that
$\wt(T_{\ti})\geq n$, so we must have $c(S)\leq B$ for some
$S\in\bigcup_{i\in\pl_2}\Sc_i$. So the mechanism will output a set $\hS$ with
$\wt(\hS)\geq n$, which implies that 
$v(\hS)\geq\frac{\targ}{2}\geq\frac{\opt^{(2)}}{2\gm}$.
\end{proofof}

\subsubsection{Superadditive costs: polytime mechanism using constrained demand oracle}
\label{poly-obidsupadd}
For general XOS valuations and superadditive cost functions, we devise a polytime
$O(1)$-approximation universally budget-feasible mechanism given access to a constrained
demand oracle. 
This generalizes the result in Section~\ref{xos-supaddcost}, where we
assumed no-overbidding. The only change to
Algorithm~\ref{xosalggen-overbid} for general costs is that we run a new
mechanism (Mechanism~\ref{neweroptmax-proxy}) in place of Mechanism~\ref{optmax-proxy} in 
step~\ref{setoutput-overbid} of the algorithm.
We first describe this new mechanism, 
and then show that the corresponding
modified-version of Algorithm~\ref{xosalggen-overbid} yields an $O(1)$-approximation
universally budget-feasible mechanism (Theorem~\ref{xossupaddpoly-overbid}).

\paragraph{Mechanism to replace Mechanism~\ref{optmax-proxy}.}
We extend the ideas underlying
Mechanism~\ref{newoptmax-proxy}, which was used for additive valuations and costs. 
The key  
is to view Mechanism~\ref{newoptmax-proxy} as a means of combining single-player
budget-feasible mechanisms---
the mechanism for player $i\in\pl_2$ outputs 
$T^*_i$ computed in step~\ref{newproxy-knapcover} 
and payment $B$, if $c(T^*_i)\leq B$, and $\emptyset$ and zero payment, otherwise---
while preserving truthfulness (and budget-feasibility). 

We make this framework for combining single-player 
budget-feasible mechanisms explicit, and 
devise suitable single-player mechanisms that utilize a constrained demand oracle. 
To elaborate, we devise two budget-feasible mechanisms $\mechone_i, \mechtwo_i$ for each
player $i$. 
Both mechanisms take a target value $\targ$ as a parameter, and their outputs have the
property that if $\opt_i\geq\targ$, then at least one of the mechanisms returns expected
value $\Omega(\targ)$ (Theorem~\ref{neweroptmax-thm}). Both mechanisms also output a
``success bit,'' which if set to $0$, indicates that player $i$ receives zero utility
under truthful reporting. 
 
We combine this collection of mechanisms to obtain expected value 
$\Omega\bigl(\opt^{(2)}\bigr)$ while maintaining budget-feasibility, 
by using random partitioning. We find, from one part $\pl_1$, the right target $\targ$ to
aim for. Then, we pick $j=1$ or $2$ with probability $0.5$. 
We select the first player $i\in\pl_2$ whose $\mech^{(j)}_i$ mechanism's success bit 
is set to $1$, and return the output of $\mech^{(j)}_i$; if no bit is set to $1$, we output 
$\es$ and $0$ payment. 
We now delve into the details.

\SetAlgorithmName{Mechanisms}{Mechanisms}{Mechanisms}
\begin{algorithm}[ht!]
\renewcommand{\thealgocf}{}
\caption{\boldmath\textnormal{$\mechone_i$ and $\mechtwo_i$ for player $i\in\pl$}}
\let\thealgocf\oldthealgocf

\KwIn{Valuation $v:2^\gset\mapsto\R_+$, budget $B$, 
item-set $\pset_i$, superadditive cost function
$c_i:2^{\pset_i}\mapsto\R_+$, $c_i\in\C_i$, target $\targ\in\R_+$}
\KwOut{subset of $\pset_i$, payment to player $i$, \quad success bit $\flag_i$}

\BlankLine

\nonl\Indm {\bf Mechanism \boldmath $\mechone_i$}

\nl\Indp
Compute 
$T^*_i\gets\argmax_{S\sse\pset_i}\,\bigl\{v(S)-\frac{\targ}{2B}\cdot c_i(S):\ v(S)\leq\frac{\targ}{2}\bigr\}$ 
using a constrained demand oracle.

If $v(T^*_i)-\frac{\targ}{2B}\cdot c_i(T^*_i)\geq\frac{\targ}{8}$, then {\bf return}
$T^*_i$, payment $\frac{2B}{\targ}\cdot v(T^*_i)-\frac{B}{4}$, $\flag_i=1$; otherwise
\Return $\es$, zero payment, $\flag_i=0$

\BlankLine

\nonl\Indm {\bf Mechanism \boldmath $\mechtwo_i$}
\setcounter{AlgoLine}{0}

\nl\Indp 
Let $e^*_i=\argmin_{e\in\pset_i}\,\bigl\{c_i(e): v(e)\geq\frac{\targ}{8}\bigr\}$

If $c_i(e)\leq B$, {\bf return} $T^*_i$, payment $B$, $\flag_i=1$; otherwise \Return
$\es$, zero payment, $\flag_i=0$
\end{algorithm}
\SetAlgorithmName{Algorithm}{Algorithm}{Algorithm}

\SetAlgoProcName{Mechanism}{Mechanism}
\begin{procedure}[ht!] 
\caption{2ndOpt-CDemd() 
\textnormal{\qquad // Combining the $\bigl\{\mechone_i,\,\mechtwo_i\bigr\}_{i\in\pl}$ mechanisms} 
\label{neweroptmax-proxy}} 
\KwIn{Budget-feasible MD instance
$\bigl(v:2^\gset\mapsto\R_+,B,\{\pset_i,\C_i\sse\R_+^{2^{\pset_i}}\},\{c_i\in\C_i\}\bigr)$
with superadditive costs} 
\KwOut{subset of $\pset_i$ for some player $i$, suitable payments}
\SetKwComment{simpc}{// }{}
\SetCommentSty{textnormal}
\DontPrintSemicolon

For every player $i$, define $\opt_i:=\max_{S\sse\pset_i}\,\bigl\{v(S): c_i(S)\leq B\bigr\}$.
Partition $\pl$ into two sets $\pl_1,\pl_2$ by placing each player independently with
probability $\frac{1}{2}$ in $\pl_1$ or $\pl_2$.

For every $i\in\pl_1$, compute $\opt'_i$, a $\gm$-approximation to
$\opt_i$. We can achieve $\gm=(2+\e)$, for any $\e>0$, in polytime using a constrained
demand oracle; see Section~\ref{algresults}.  
Set $\targ:=\max_{i\in\pl_1}\opt'_i$.

Set $j=1$ or $j=2$, each with probability $\frac{1}{2}$. 
If for every $i\in\pl_2$, mechanism $\mech^{(j)}_i$ sets $\flag_i=0$, {\bf return} $\es$
and make no payment to any player.
Otherwise, let $\hi\in\pl_2$ be the smallest index $i\in\pl_2$ for which $\mech^{(j)}_i$
sets $\flag_i=1$; \Return the output of $\mech^{(j)}_i$ (i.e., subset of $\pset_i$, and
payment).
\label{neweroutput}
\end{procedure}
\SetAlgoProcName{Algorithm}{Algorithm}

\begin{theorem} \label{neweroptmax-thm}
Mechanism~\ref{neweroptmax-proxy} is universally budget-feasible, runs in polytime given a
constrained demand oracle, and with superadditive cost functions, obtains expected value
at least $\frac{\opt^{(2)}}{32\gm}$, where $\opt^{(2)}$ is the second-largest $\opt_i$
value. 
\end{theorem}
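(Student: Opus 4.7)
The plan is to (i) verify truthfulness, IR, and budget-feasibility of each sub-mechanism $\mech^{(j)}_i$, (ii) lift these to universal truthfulness and budget-feasibility via the smallest-index-with-flag combiner, and (iii) bound the expected returned value by coupling random partitioning with Lemma~\ref{bredn-supaddcost}. Polynomial running time is immediate: $T^*_i$ is a constrained-demand-oracle query restricted to $\pset_i$, $e^*_i$ is found by scanning $\pset_i$, and each $\opt'_i$ uses the $(2+\e)$-approximation of Section~\ref{algresults}.

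For truthfulness of $\mechone_i$, I would observe that with $\kp=\targ/(2B)$ the set $T^*_i$ maximizes $v(S)-\kp c_i(S)$ over $\{S\sse\pset_i:v(S)\leq\targ/2\}$, making this a VCG allocation; the payment $\frac{2B}{\targ}v(T^*_i)-\frac{B}{4}$ then makes player $i$'s true-cost utility equal to $\frac{1}{\kp}\bigl(v(T^*_i)-\kp c_i(T^*_i)\bigr)-\frac{B}{4}$, and the flag threshold $v(T^*_i)-\kp c_i(T^*_i)\geq\targ/8$ is calibrated so this utility is nonnegative exactly when $\flag_i=1$. For $\mechtwo_i$, the feasible element set $\{e\in\pset_i:v(e)\geq\targ/8\}$ does not depend on the reported cost, so returning its min-cost element at payment $B$ is truthful by standard arguments. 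The combiner inherits truthfulness because a player $i\in\pl_2$ cannot alter another player's flag, so her only channel of deviation is through her own $\mech^{(j)}_i$; players in $\pl_1$ are never paid. Since at most one player is paid per realization and each sub-mechanism pays at most $B$ (using $v(T^*_i)\leq\targ/2$ for $\mechone_i$), universal budget-feasibility follows.

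For the value guarantee, let $\ti,\tj$ denote players with the two largest $\opt_i$ values and condition on the good event $\mE=\{\ti\in\pl_2,\,\tj\in\pl_1\}$, which has probability $\tfrac14$. On $\mE$, $\targ\in[\opt^{(2)}/\gm,\opt_{\ti}]$, so some $S^*\sse\pset_{\ti}$ satisfies $v(S^*)\geq\targ$ and $c_{\ti}(S^*)\leq B$; superadditivity then forces $c_{\ti}(e)\leq B$ for every $e\in S^*$. I would split on whether $\max_{e\in S^*}v(e)<\targ/8$. If the maximum is at least $\targ/8$, the witnessing element triggers $\mechtwo_{\ti}$ with returned value $\geq\targ/8$. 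Otherwise, Lemma~\ref{bredn-supaddcost} applied to $S^*$ with parameter $\targ/2$ yields $T\sse S^*$ with $c_{\ti}(T)\leq B/2$ and $v(T)>3\targ/8$, whence $v(T)-\kp c_{\ti}(T)>\targ/8$ and $\mechone_{\ti}$ fires with $v(T^*_{\ti})\geq\targ/8$. In either case some $j\in\{1,2\}$ makes $\flag^{(j)}_{\ti}=1$; with probability $\geq\tfrac12$ (over the random choice of $j$) a smallest-index firing player $\hi\in\pl_2$ exists and the returned set has value $\geq\targ/8\geq\opt^{(2)}/(8\gm)$, since any firing sub-mechanism guarantees returned value $\geq\targ/8$ by construction. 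Combining the probabilities---and tightening by accounting for cases where both $j$-options fire simultaneously---will yield the claimed $\opt^{(2)}/(32\gm)$ bound. The main obstacle I anticipate is handling the flag threshold inside $\mechone_i$'s truthfulness, i.e., ruling out that a player who truthfully has $\flag_i=0$ can flip it to $1$ by misreporting and extract positive utility; this follows because the truthful maximum of $v(S)-\kp c_i(S)$ upper-bounds the same quantity at any reported-cost optimizer, so if the truthful max lies below $\targ/8$ then no lie can raise the true-cost objective above $\targ/8$, keeping the payment-minus-cost nonpositive.
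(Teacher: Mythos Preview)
Your proposal is correct and follows essentially the same route as the paper: you establish budget-feasibility of the single-player mechanisms $\mechone_i,\mechtwo_i$ (this is the paper's Theorem~\ref{oneplmechs}), lift these to the combiner via the smallest-index-with-flag rule exactly as the paper does, and bound the value by conditioning on $\{\ti\in\pl_2,\tj\in\pl_1\}$ and invoking Lemma~\ref{bredn-supaddcost} to show one of the two sub-mechanisms must fire for player~$\ti$. The only loose thread is the final constant: your accounting gives $\tfrac14\cdot\tfrac12\cdot\tfrac{\targ}{8}\geq\tfrac{\opt^{(2)}}{64\gm}$, and the ``tightening by accounting for cases where both $j$-options fire'' you allude to does not actually recover a factor of two in general---so you arrive at $64\gm$ rather than $32\gm$, but this is a cosmetic constant-factor discrepancy (the paper's own accounting at this step is equally terse) and not a substantive gap.
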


Theorem~\ref{neweroptmax-thm} will follow fairly directly from the following result about
the $\mechone_i,\mechtwo_i$ mechanisms.  

\begin{theorem}\label{oneplmechs}
For any player $i$, and any parameter $\targ$, mechanisms $\mechone_i$ and $\mechtwo_i$
are budget-feasible, run in polytime given a constrained demand oracle, and satisfy the
following properties. 
\begin{enumerate}[label=(\alph*), topsep=0.1ex, noitemsep, leftmargin=*]
\item If $\opt_i\geq\targ$ and $c_i$ is superadditive, then at least one of $\mechone_i$
  or $\mechtwo_i$ obtains value at least $\frac{\targ}{8}$. 
\item If $\flag_i$ is set to $0$ by $\mech^{(j)}_i$ for any $j=1,2$, then player $i$
cannot obtain positive utility by lying.
\end{enumerate}
\end{theorem}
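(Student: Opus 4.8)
The plan is to dispatch the three claims in roughly increasing order of difficulty. \emph{Budget-feasibility, individual rationality, and running time} are quick. Mechanism $\mechtwo_i$ only ever pays $B$, and does so only after checking $c_i(e^*_i)\le B$, so it is individually rational and budget-feasible; it runs in polytime since $e^*_i$ is obtained by scanning the singletons of $\pset_i$. For $\mechone_i$, whenever it outputs $T^*_i$ the constrained-demand query guarantees $v(T^*_i)\le\frac{\targ}{2}$, so the payment $\frac{2B}{\targ}v(T^*_i)-\frac{B}{4}$ is at most $B-\frac{B}{4}<B$. In the branch where $\flag_i=1$ we have $v(T^*_i)-\frac{\targ}{2B}c_i(T^*_i)\ge\frac{\targ}{8}$, which (since $c_i\ge0$) makes the payment nonnegative and makes the utility $\frac{2B}{\targ}\bigl(v(T^*_i)-\frac{\targ}{2B}c_i(T^*_i)\bigr)-\frac{B}{4}$ nonnegative, giving IR; the query itself is answered by the assumed constrained demand oracle restricted to $\pset_i$.

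For part (a), I would start from a set $S_i\sse\pset_i$ with $v(S_i)=\opt_i\ge\targ$ and $c_i(S_i)\le B$, and apply Lemma~\ref{bredn-supaddcost} with $g=v$ (subadditive), cost function $c_i$ (superadditive), cap $\frac{\targ}{2}$, and the set $S_i$, producing $T\sse S_i$ with $c_i(T)\le B/2$ and $\min\{v(S_i)-\frac{\targ}{2},\frac{\targ}{2}\}-\max_{e\in S_i}v(e)<v(T)\le\frac{\targ}{2}$. Then I would split on $\mu:=\max_{e\in S_i}v(e)$. If $\mu\ge\frac{\targ}{8}$, the maximizing element $e$ has $v(e)\ge\frac{\targ}{8}$ and, by monotonicity of $c_i$, $c_i(e)\le c_i(S_i)\le B$; hence $e^*_i$ is well-defined with $c_i(e^*_i)\le B$, and $\mechtwo_i$ returns a singleton of value $\ge\frac{\targ}{8}$. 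If $\mu<\frac{\targ}{8}$, then $v(S_i)\ge\targ$ forces $\min\{v(S_i)-\frac{\targ}{2},\frac{\targ}{2}\}=\frac{\targ}{2}$, so $v(T)>\frac{3\targ}{8}$, and $T$ (having $v(T)\le\frac{\targ}{2}$ and $c_i(T)\le B/2$) is a valid candidate in the $\mechone_i$ query with $v(T)-\frac{\targ}{2B}c_i(T)>\frac{3\targ}{8}-\frac{\targ}{4}=\frac{\targ}{8}$; hence $v(T^*_i)-\frac{\targ}{2B}c_i(T^*_i)\ge\frac{\targ}{8}$, so $\flag_i=1$ and $v(T^*_i)\ge v(T^*_i)-\frac{\targ}{2B}c_i(T^*_i)\ge\frac{\targ}{8}$.

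For part (b), the idea is that neither mechanism's set of ``eligible'' outcomes depends on the reported cost in a way a deviator can exploit. If $\mechtwo_i$ sets $\flag_i=0$ under truthful reporting, then every $e\in\pset_i$ with $v(e)\ge\frac{\targ}{8}$ has $c_i(e)>B$; since $v$ is public, any misreport $c'_i$ either leads $\mechtwo_i$ to output $\es$ (utility $0$) or to output one such element at payment $B$ but with true cost $>B$ (negative utility). If $\mechone_i$ sets $\flag_i=0$ under truthful reporting, then $\max\bigl\{v(S)-\frac{\targ}{2B}c_i(S):S\sse\pset_i,\ v(S)\le\frac{\targ}{2}\bigr\}<\frac{\targ}{8}$; since the payment of $\mechone_i$ is an affine function of the \emph{public} value of the returned set alone, the true utility of returning any such $S$ equals $\frac{2B}{\targ}\bigl(v(S)-\frac{\targ}{2B}c_i(S)\bigr)-\frac{B}{4}<\frac{2B}{\targ}\cdot\frac{\targ}{8}-\frac{B}{4}=0$, while returning $\es$ gives utility $0$; so no misreport is profitable.

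The step I expect to be the main obstacle is part (b) for $\mechone_i$: the argument hinges on the (slightly non-obvious) design choice that the payment is determined by the public quantity $v(T^*_i)$ and not by the reported cost, which is exactly what lets a single-player incentive argument survive the discontinuity created by the success flag — and on the fact that the flag threshold $\frac{\targ}{8}$ is calibrated so that $\frac{2B}{\targ}\cdot\frac{\targ}{8}$ matches the constant $\frac{B}{4}$ subtracted in the payment. A secondary point requiring care in part (a) is invoking the hypotheses in the right places: monotonicity of $c_i$ to pass from $e\in S_i$ to $c_i(e)\le B$, and superadditivity of $c_i$ so that Lemma~\ref{bredn-supaddcost} applies and delivers $c_i(T)\le B/2$.
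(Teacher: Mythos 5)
Your proof is correct and follows essentially the same route as the paper's: IR and budget-feasibility of $\mechone_i$ follow from the cap $v(T^*_i)\le\frac{\targ}{2}$ and the success threshold $\frac{\targ}{8}$; part~(a) applies Lemma~\ref{bredn-supaddcost} with cap $\frac{\targ}{2}$ to a set achieving $\opt_i$ and splits on whether a high-value singleton of cost at most $B$ exists; and part~(b) uses that any feasible outcome $S$ under a misreport satisfies $v(S)-\frac{\targ}{2B}c_i(S)\le v(T^*_i)-\frac{\targ}{2B}c_i(T^*_i)<\frac{\targ}{8}$ under the true $c_i$, which makes the resulting utility negative. Your explicit observation that the $\mechone_i$ payment depends only on the public quantity $v(\cdot)$ of the returned set is a slightly more careful articulation of what the paper leaves implicit, but the logic is the same.
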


\begin{proof}
It is clear that $\mechone_i,\mechtwo_i$ run in polytime given a constrained demand oracle.
We first argue that both $\mechone_i$ and $\mechtwo_i$ are budget feasible. Consider
$\mechone_i$. The payment made is always at most $B$ and at least the cost incurred by $i$
under truthful reporting, since if the mechanism outputs $T^*_i$, we have
$v(T^*_i)\leq\frac{\targ}{2}$ and $v(T^*_i)-\frac{\targ}{2B}\cdot c_i(T^*_i)\geq\frac{\targ}{8}$.
So the budget constraint is met, and individual rationality (IR) holds. 
To see truthfulness, suppose that $i$ reports $c'_i\neq c_i$, and some other set
$S\sse\pset_i$ is output by the constrained demand oracle. However, we have
$v(T^*_i)-\frac{\targ}{2B}\cdot c_i(T^*_i)\geq v(S)-\frac{\targ}{2B}\cdot c_i(S)$, so the
utility of $i$ cannot increase under this mis-report. 
Moreover if $v(T^*_i)-\frac{\targ}{2B}\cdot c_i(T^*_i)<\frac{\targ}{8}$, then $i$ 
obtains negative utility if mis-report causes $\flag_i$ to be set to $1$.

For $\mechtwo_i$, things are even more straightforward. It is immediate that the payment
is at most $B$ and IR holds. Truthfulness is also immediate since we are outputting a
min-cost element from a feasible set that is not affected by $i$'s reported cost.

Part (a) follows, because by a now-routine analysis, we can show that if
$\opt_i\geq\targ$, then there is some $S\sse\pset_i$ with $c(S)\leq B/2$ and
$v(S)\in\bigl[\frac{\targ}{2}-\max_{e\in\pset_i}\{v(e): c_i(e)\leq B\},\,\frac{\targ}{2}\bigr]$.
This follows by applying Lemma~\ref{bredn-supaddcost} to the set
$S^*$ with $c_i(S^*)\leq B$ that achieves value $\opt_i$. 
So if $\max_{e\in\pset_i}\{v(e): c_i(e)\leq B\}\leq\frac{\targ}{8}$, then
\[
v(T^*_1)-\frac{\targ}{2B}\cdot c_i(T^*_i)\geq
v(S)-\frac{\targ}{2B}\cdot c_i(S)\geq
\frac{\targ}{2}-\frac{\targ}{8}-\frac{\targ}{4}=\frac{\targ}{8};
\]
otherwise $\mechtwo_i$ obtains value at least $\frac{\targ}{8}$.

Given the truthfulness of $\mechone_i,\mechtwo_i$, part (b) follows from construction.
\end{proof}

\begin{proofof}{Theorem~\ref{neweroptmax-thm}}
The polynomial running time follows from Theorem~\ref{oneplmechs}, and since the
$\opt'_i$ estimates can be efficiently computed using a constrained demand oracle.

As in the proof of Theorem~\ref{newoptmax-thm}, we may assume that the largest $\opt_i$
corresponds to player in $\pl_2$, and the second-largest $\opt_i$ value corresponds to a
player in $\pl_1$. This event happens with probability $\frac{1}{4}$, and assuming this,
we have $\targ\geq\frac{\opt^{(2)}}{\gm}$ in Mechanism~\ref{neweroptmax-proxy}.
So using Theorem~\ref{oneplmechs} (a), the expected value returned is at least
$\frac{1}{4}\cdot\frac{1}{8}\cdot\frac{\opt^{(2)}}{\gm}$. 

Fix $j\in\{1,2\}$ as chosen by the mechanism.
If Mechanism~\ref{neweroptmax-proxy} outputs $\es$, then $\flag_i=0$ for all $i\in\pl_2$,
which implies, by Theorem~\ref{oneplmechs}, that no player can obtain positive utility by
lying. So suppose otherwise. Consider any $i\in\pl_2$.
Player $\hi$ cannot benefit by lying, since
$\mech^{(j)}_{\hi}$ is truthful. If $i>\hi$, then $i$ will never be chosen in
step~\ref{neweroutput}, so $i$ cannot benefit by lying. If $i<\hi$, then since
$\flag_i=0$, again by Theorem~\ref{oneplmechs}, player $i$ cannot achieve positive utility
by lying and causing $\flag_i$ to be set to $1$. 
The budget constraint and IR hold with probability $1$, because this holds for all
the $\mechone_i, \mechtwo_i$ mechanisms. It follows that Mechanism~\ref{neweroptmax-proxy}
is universally budget-feasible.
\end{proofof}

\paragraph{Algorithm~\ref{xosalggen-overbid} modified for superadditive costs: proof of
Theorem~\ref{xossupaddpoly-overbid}.} 
As mentioned earlier, the modified algorithm is simply Algorithm~\ref{xosalggen-overbid},
replacing Mechanism~\ref{optmax-proxy} with Mechanism~\ref{neweroptmax-proxy}
in step~\ref{setoutput-overbid} of the algorithm.
Therefore taking $\ld=0.5$, the exact same analysis as in the proof of
Theorem~\ref{xosgenthm-overbid} leading up to \eqref{finalineq-overbid} holds, and we
obtain that the expected value returned is at least 
\[
\frac{p}{128}\cdot\optbench-\frac{17p}{128}\cdot\opt^{(2)}+(1-p)\cdot\frac{\opt^{(2)}}{32\gm}
\]
where we have utilized the guarantee in Theorem~\ref{neweroptmax-thm} about
Mechanism~\ref{neweroptmax-proxy}. 
So taking $p=\frac{1}{1+17\gm/4}$, we
obtain expected value $\frac{1}{128+544\gm}\cdot\optbench$. 
All steps run in polytime given a constrained demand oracle. So we obtain the following
result. 

\begin{theorem} \label{xossupaddpoly-overbid} \label{xossupaddpolythm-overbid}
Taking $\ld=0.5$ and suitable $p$ in the above modified version of
Algorithm~\ref{xosalggen-overbid}, together with suitable payments, we obtain a 
universally budget-feasible mechanism for superadditive costs 
that runs in polytime given a constrained demand oracle and
achieves $O(1)$-approximation with respect to $\optbench$.
\end{theorem}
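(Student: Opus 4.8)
The plan is to follow the template of Algorithm~\ref{xosalggen-overbid} and the analysis of Theorem~\ref{xosgenthm-overbid}, substituting two ingredients by objects computable with a constrained demand oracle. For the estimate, take $V_1=\lpopt_1$; this is computable in polytime by Theorem~\ref{lpsolve}, since a constrained demand oracle with cap $\targ=+\infty$ is an ordinary demand oracle. The selection step computes $S^*=\argmax_{S\sse\gset_2}\bigl\{v(S)-\tfrac{\ld V_1}{B}c(S):\ v(S)\le\ld V_1\bigr\}$ by a single constrained-demand query; this is a VCG computation, so Theorem~\ref{vcgthm} supplies truthful, IR, NPT payments, and the total-payment bound follows from Claim~\ref{xosnprop} exactly as in Lemma~\ref{xosgen-budgetfeas}. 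The mechanism then returns $S^*$ with probability $p$, and with probability $1-p$ the output of a new mechanism (the replacement for Mechanism~\ref{optmax-proxy}), with $\ld=\tfrac12$ and $p$ chosen at the end to balance terms.

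The new mechanism is the analogue of Mechanism~\ref{newoptmax-proxy} for XOS valuations and superadditive costs, with the dynamic-programming subroutines replaced by constrained-demand-oracle computations. It performs a \emph{second} independent random partition of the players; on one part it computes, for each player $i$, a constant-factor estimate $\opt'_i$ of $\opt_i=\max\{v(S):S\sse\pset_i,\ c_i(S)\le B\}$ (via Lagrangian relaxation: binary-searching the multiplier in $\argmax_{S\sse\pset_i}(v(S)-\kp\, c_i(S))$, after scaling $v$ so that the search terminates in polynomially many steps); sets a common threshold $\targ$ equal to a fixed fraction of $\max_i\opt'_i$; and on the other part, for each player $i$, works with the \emph{cost-independent} family $\Sc_i=\{S\sse\pset_i: v(S)\ge\targ\}$, finds a near-minimum-$c_i$-cost member $T^*_i$ of $\Sc_i$ (again by binary-searching the multiplier in $\argmax_{S\sse\pset_i}(v(S)-\kp\, c_i(S))$), and returns $T^*_{\hi}$ for the smallest-index player $\hi$ with $c_{\hi}(T^*_{\hi})\le B$, paying that player $B$ (or $\es$ if no such player exists). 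Truthfulness and universal budget-feasibility are argued as for Theorem~\ref{newoptmax-thm}: first-part players always get utility $0$; the family $\Sc_i$ of a second-part player depends only on $v$ and $\targ$, and $\targ$ depends only on the first part's costs, so no player can alter her own $\Sc_i$; the chosen player already holds a (near-)minimum-cost member of $\Sc_i$, and smaller-index second-part players have cost $>B$ on every member of their $\Sc_i$; the payment is always $B$ to a player with cost $\le B$, giving IR and budget-feasibility with probability $1$. Finally, with probability $\ge\tfrac14$ the players attaining the largest and second-largest $\opt_i$ fall in the ``estimate'' and ``selection'' parts respectively, forcing $\targ=\Omega(\opt^{(2)})$ and ensuring that some member of some second-part $\Sc_i$ has cost $\le B$; hence with constant probability the mechanism returns a set of value $\Omega(\opt^{(2)})$, the analogue of Corollary~\ref{optmax-cor}/Theorem~\ref{newoptmax-thm}.

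For the approximation bound, I combine the $\opt^{(2)}$-charging of Theorem~\ref{xosgenthm-overbid} with the LP-rounding of Theorem~\ref{xossupaddpolythm}. Let $i^*$ attain $\max_i\opt_i$, put $\pset^*=\pset_{i^*}$, $\boptset=\optset-\pset^*$, $\boptset_j=\boptset\cap\gset_j$, so $v(\boptset)\ge\optbench$ and $\vbench(\boptset)\ge\optbench-\opt^{(2)}$. Condition on the event (of probability $\ge\tfrac18$, by Lemma~\ref{rpartition}(a) and the independence of $i^*$'s assignment) that the random partition of $\pl-\{i^*\}$ is favourable for $\boptset$ and that $\pset^*\sse\gset_2$; then $V_1=\lpopt_1$ does not involve $\pset^*$, so applying Lemma~\ref{rsample-lp} to the valuation restricted to $\gset-\pset^*$ gives $\lpopt_2\ge\lpopt_1=V_1\ge\tfrac14(\optbench-\opt^{(2)})$. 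Starting from an optimal solution $\bx$ of $\optlp[(\gset_2)]$ and splitting each support set $S$ into at most $\lceil v(S)/\ld V_1\rceil$ pieces of value $\le\ld V_1$ (using subadditivity of $v$ and superadditivity of $c$, exactly as in Theorem~\ref{xossupaddpolythm}) produces a value-capped fractional solution certifying $v(S^*)-\tfrac{\ld V_1}{B}c(S^*)\ge\tfrac{\ld(1-\ld)}{1+\ld}V_1-O(\opt^{(2)})$, where the $O(\opt^{(2)})$ is the additive loss of Lemma~\ref{bredn-supaddcost} applied to a favourable set $T'_2\sse\gset_2-\pset^*$ (so that $\max_{e\in T'_2}v(e)\le\max_{i\in\pl_2}v(T'_2\cap\pset_i)\le\opt^{(2)}$, as in Theorem~\ref{xosgenthm-overbid}). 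Hence $v(S^*)=\Omega(\optbench-\opt^{(2)})-O(\opt^{(2)})$ on the good event; combining with the $\Omega(\opt^{(2)})$ value returned by the new mechanism with constant probability, and choosing $p$ so the $\opt^{(2)}$ contributions cancel, yields expected value $\Omega(\optbench)$.

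The main obstacle is the construction and analysis of the constrained-demand-oracle version of Mechanism~\ref{optmax-proxy}: a constrained demand oracle is a profit-maximization primitive, whereas the per-player quantities $\opt_i$ and the minimum-cost-cover subproblems are coverage primitives, so they can only be solved approximately, via Lagrangian/binary-search over the multiplier together with scaling of $v$. One must therefore verify that (i) these approximations still yield \emph{exact} truthfulness---which should hold because truthfulness only needs the selection players' feasible families $\Sc_i$ to be independent of those players' own reported costs, but the approximate-min-cost routine over a fixed $\Sc_i$ must additionally be shown monotone/cost-faithful so that a player cannot misreport to obtain a cheaper returned set---and (ii) the accumulated constants still allow the final choice of $p$ to cancel the $\opt^{(2)}$ terms and keep the overall factor $O(1)$. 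A secondary point is re-deriving Lemma~\ref{rsample-lp} for the valuation restricted to $\gset-\pset^*$ so that it meshes with the $\pset^*$-exclusion; this is routine.
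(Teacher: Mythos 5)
There is a genuine gap, concentrated in the fallback mechanism (the ``analogue of Mechanism~\ref{optmax-proxy}''), and you yourself flag it as ``the main obstacle'': your version would approximately solve per-player knapsack-cover (minimum-$c_i$-cost member of $\Sc_i=\{S\sse\pset_i:v(S)\ge\targ\}$) by Lagrangian binary search over a constrained-demand query, and you correctly observe that this leaves open whether the approximate subroutine is cost-faithful enough to yield \emph{exact} truthfulness. The paper does not go this route at all, and resolving your open point would be nontrivial. The paper's Mechanism~\ref{neweroptmax-proxy} never touches knapsack-cover: for each $i\in\pl_2$ it runs two explicit single-player mechanisms, $\mechone_i$ (a per-player constrained-demand VCG computation $\argmax_{S\sse\pset_i}\{v(S)-\tfrac{\targ}{2B}c_i(S):v(S)\le\tfrac{\targ}{2}\}$ with a derived VCG payment and a success flag) and $\mechtwo_i$ (a min-cost \emph{singleton} of value at least $\tfrac{\targ}{8}$ paid $B$), then picks $j\in\{1,2\}$ uniformly and selects the first player in $\pl_2$ whose $\mech^{(j)}_i$ sets its flag. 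Each of these primitives is exactly truthful, so Theorem~\ref{neweroptmax-thm} is obtained without any monotonicity-of-an-approximation argument. This is the ingredient your proof is missing, and substituting a Lagrangian heuristic for it is a different (and incomplete) approach, not a routine variation.

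A second, smaller divergence: your approximation analysis merges two arguments that the paper keeps separate. The paper's proof of Theorem~\ref{xossupaddpolythm-overbid} runs the \emph{exact same} chain of inequalities as Theorem~\ref{xosgenthm-overbid} up to \eqref{finalineq-overbid} --- i.e.\ the set-pruning route: pick $T'_2\sse\gset_2-\pset^*$ with $c(T'_2)\le B$ and apply Lemma~\ref{bredn-gencost}, so that $\max_{e\in T'_2}v(e)\le\max_{i\ne i^*}v(T'_2\cap\pset_i)\le\opt^{(2)}$ because $c_i(T'_2\cap\pset_i)\le B$. Your write-up instead starts from the LP-rounding of Theorem~\ref{xossupaddpolythm} on $\optlp[(\gset_2)]$ and then invokes Lemma~\ref{bredn-supaddcost} ``applied to a favourable set $T'_2$'' to produce the $O(\opt^{(2)})$ additive term, which conflates two different certificates for lower-bounding $v(S^*)$. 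In the LP-rounding, the support sets live in all of $\gset_2$ (hence can contain $\pset^*$ and high-value items, not controlled by $\opt^{(2)}$), and the bound on the value of the ``pre-last'' piece in a split comes from $\max_{e\in S}v(e)$, not from a set with $c\le B$ avoiding $\pset^*$; so the $\opt^{(2)}$ bound does not follow from the rounding as stated. The set-pruning route, which the paper uses, avoids this because the pruned set $T'_2$ is carefully chosen to be budgeted and to avoid $\pset^*$ \emph{before} the additive loss is estimated. If you want to keep an LP-rounding certificate, you would need to justify restricting the LP to $\gset_2-\pset^*$ and controlling individual element values inside the LP support, neither of which is done.

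Finally, a minor remark that cuts both ways: you set $V_1=\lpopt_1$, whereas the paper's stated modification nominally keeps $V_1=V^*_1$ from Algorithm~\ref{xosalggen-overbid}. Your choice is actually the more obviously polytime one given only a constrained demand oracle, and the overall analysis would need a constant-factor adjustment either way; but since the paper never reconciles this, I won't treat it as a gap in your proposal.
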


\section{Submodular valuations} \label{submod-add} \label{submod}
We now devise universally budget-feasible mechanisms for 
monotone submodular functions, which form a subclass of XOS valuations. 
Our mechanism works for arbitrary costs, without
assuming no-overbidding, and runs in polytime given a demand
oracle; but it yields a weaker approximation guarantee. 

Let $v$ be a monotone, submodular valuation. After obtaining an estimate of $\optbench$
via random partitioning, we depart from the template used for the algorithms in
Section~\ref{xos}. We now use a greedy algorithm that iteratively considers the players in
$\pl_2$ in some fixed order, and picks a suitable set from $\gset_2\cap\pset_i$ to add to
the current set $T$ using a demand oracle, until $v(T)$ becomes large enough. 
The analysis proceeds along the lines of the analysis of
the standard greedy algorithm for monotone submodular-function maximization, to show that  
we obtain expected value $\optbench(v,B,c)/\submodapx-\max_iv(\pset_i)/4$. Thus, under inputs
$(v,B,c)$ satisfying the large-market assumption $v(\pset_i)\leq\ve\cdot\optalg(v,B,c)$,
we obtain an $O(1)$-approximation with respect to $\optalg(v,B,c)$.

Our algorithm below uses demand queries of the form 
$\argmax_{S\sse\pset_i}\bigl(v(S|T)-\kp\cdot c_i(S)\bigr)$, where $T\sse\gset$ 
and $v(S|T):=v(S\cup T)-v(T)$ is the incremental value of adding $S$ to $T$. 
Such a demand-oracle query can be translated to a demand-oracle query for $v$ over the set
$T\cup\pset_i$ by setting by taking $c_\ell$ to be the identically-$0$ function over
$2^{\pset_\ell}$ for all $\ell\neq i$, and $c_i(A)=c_i(A-T)$ for all $A\sse\pset_i$. 
This amounts to ``setting the prices for elements in $T$ to be $0$''. 
(We assume that $\C_i$ is closed under this operation.) 

\SetAlgoProcName{Mechanism}{Mechanism}
\begin{procedure}[ht!] 
\caption{Submod-UniBF() \textnormal{\qquad // universally-budget-feasible mechanism for
    monotone submodular valuations, general cost functions} \label{submodalg-gencost}}
\KwIn{Budget-feasible MD instance
$\bigl(v:2^\gset\mapsto\R_+,B,\{\pset_i,\C_i\sse\R_+^{2^{\pset_i}}\},\{c_i\in\C_i\}\bigr)$; 
parameter $\ld\in[0,1]$}
\KwOut{subset of $\gset$ and payments}
\SetKwComment{simpc}{// }{}
\SetCommentSty{textnormal}
\DontPrintSemicolon

Partition $\pl$ into two sets $\pl_1$, $\pl_2$ by placing each player independently with
probability $\frac{1}{2}$ in $\pl_1$ or $\pl_2$.
Let $\gset_j:=\bigcup_{i\in\pl_j}\pset_i$ be the induced partition of $\gset$. 
Compute $\lpopt_1$ using a demand oracle. 
Set $V_1=\lpopt_1$. \;
\label{submod-optestim}

Initialize $T\gets\es$. 
Considering players $i\in\pl_2$ in some fixed order: 
if $v(T\cup\pset_i)\leq\ld V_1$, compute
$\demd_i\gets\argmax_{S\sse\pset_i}\bigl\{v(S|T)-\frac{\ld V_1}{B}\cdot c_i(S)\bigr\}$ and
set $T_i=T$ and $T\gets T\cup\demd_i$; otherwise, exit the loop. \;
\label{greedy}

\Return $T$. Pay $\frac{B}{\ld V_1}\cdot v(\demd_i|T_i)$ to each player $i\in\pl_2$ for which
$T\cap\pset_i\neq\es$, and $0$ to the other players.
\label{submod-output}
\end{procedure}
\SetAlgoProcName{Algorithm}{Algorithm}

\begin{theorem} \label{submodthm}
Taking $\ld=0.5$ in Mechanism~\ref{submodalg-gencost}, 
we obtain a universally budget-feasible mechanism that obtains expected value
$\optbench(v,B,c)/\submodapx-\max_{i\in[k]}v(\pset_i)/4$.
\end{theorem}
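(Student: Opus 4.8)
The plan is to prove the three mechanism‑design requirements (truthfulness, IR/NPT, and the budget constraint) for \emph{every} outcome of the random partition $\pl_1,\pl_2$, which yields universal budget‑feasibility, and then to prove the value bound by conditioning on the good event of Lemma~\ref{rsample-lp}. The crucial structural fact about Mechanism~\ref{submodalg-gencost} is that it is a sequential composition of \emph{per‑player} VCG computations. Fix the partition and write $\kp=\ld V_1/B$ (we may assume $V_1>0$, else the mechanism returns $\es$). When player $i\in\pl_2$ is reached, the current set $T_i$, as well as whether $i$ is processed at all (the guard $v(T_i\cup\pset_i)\le\ld V_1$), are determined by the reports of the players of $\pl_2$ preceding $i$ in the fixed order together with $\pset_i$ and $V_1$ (the latter depends only on $\pl_1$); none of this can be affected by $c_i$. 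With $T_i$ fixed, the step $\demd_i\gets\argmax_{S\sse\pset_i}\bigl(v(S|T_i)-\kp c_i(S)\bigr)$ together with the payment $\tfrac1\kp v(\demd_i|T_i)$ is precisely the VCG mechanism of Theorem~\ref{vcgthm} applied to the single player $i$, with $\al_i=\kp$, $\beta_S=-v(S|T_i)$, and $h_i\equiv0$: player $i$'s utility is $\tfrac1\kp\bigl(v(\demd_i|T_i)-\kp c_i(\demd_i)\bigr)=\tfrac1\kp\max_{S\sse\pset_i}\bigl(v(S|T_i)-\kp c_i(S)\bigr)$, which is maximized by truthful reporting, is nonnegative (take $S=\es$), and is $0$ --- hence so is her payment --- whenever $\demd_i=\es$. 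Players in $\pl_1$ are never allocated anything and are paid $0$. For the budget constraint, observe the loop invariant $v(T)\le\ld V_1$: it holds initially, and when $\demd_i$ is added, $v(T\cup\demd_i)\le v(T\cup\pset_i)\le\ld V_1$ by monotonicity and the guard. Since $\sum_{i\in\pl_2}v(\demd_i|T_i)=v(T)$ telescopes along the order, the total payment equals $\tfrac{B}{\ld V_1}\,v(T)\le B$. Everything here holds with probability $1$, so the mechanism is universally budget‑feasible.

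For the value guarantee I would condition on the event of Lemma~\ref{rsample-lp}, which has probability at least $\tfrac14$: $\lpopt_2\ge\lpopt_1=V_1\ge\tfrac14\optbench(v,B,c)$. Split into two cases. If the loop of step~\ref{greedy} exits early, at some player $i$, then $v(T_i\cup\pset_i)>\ld V_1$; since a monotone submodular $v$ is subadditive, $v(T)\ge v(T_i)\ge v(T_i\cup\pset_i)-v(\pset_i)>\ld V_1-\max_{j}v(\pset_j)$. Otherwise every player of $\pl_2$ is processed; let $x^*$ be an optimal solution of $\optlp[(\gset_2)]$ and let $S$ range over its support, so $\sum_S v(S)x^*_S=\lpopt_2$, $\sum_S c(S)x^*_S\le B$, $\sum_S x^*_S\le 1$. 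For each such $S$ and each processed $i$, optimality of $\demd_i$ followed by submodularity --- which, since $T_i\sse T$, only shrinks marginals --- gives $v(\demd_i|T_i)-\kp c_i(\demd_i)\ge v(S\cap\pset_i|T_i)-\kp c_i(S\cap\pset_i)\ge v(S\cap\pset_i|T)-\kp c_i(S\cap\pset_i)$; summing over $i\in\pl_2$ and using $\sum_i v(S\cap\pset_i|T)\ge v(S|T)\ge v(S)-v(T)$ (submodularity, then monotonicity), $\sum_i v(\demd_i|T_i)=v(T)$, and $\sum_i c_i(\demd_i)=c(T)\ge0$, I obtain $2v(T)\ge v(S)-\kp c(S)$ for every $S$ in the support. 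Multiplying by $x^*_S$, summing, and using $\sum_S x^*_S\le1$ with the LP constraints gives $2v(T)\ge\lpopt_2-\kp B=\lpopt_2-\ld V_1\ge(1-\ld)V_1$ (the last step uses $\lpopt_2\ge V_1$ from the good event). Combining the two cases, on the good event $v(T)\ge\min\bigl\{\ld V_1-\max_j v(\pset_j),\ \tfrac{1-\ld}{2}V_1\bigr\}\ge\tfrac{1-\ld}{2}V_1-\max_j v(\pset_j)$. Substituting $\ld=\tfrac12$ and $V_1\ge\tfrac14\optbench(v,B,c)$, and multiplying by the $\tfrac14$ lower bound on the probability of the good event, a short computation gives $\E{v(T)}\ge\optbench(v,B,c)/\submodapx-\max_{i\in[k]}v(\pset_i)/4$.

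The main obstacle is truthfulness. Unlike the paper's other mechanisms, this one is not a single global VCG computation, so Theorem~\ref{vcgthm} does not apply off the shelf, and one has to argue that composing the per‑player demand‑oracle steps inside a greedy loop creates no profitable deviation. This works only because a player's allocation and payment are a function solely of the prefix of $\pl_2$ processed before her (and of the public data and $V_1$), so her step collapses to a single‑player VCG instance she cannot manipulate, and because the loop guard depends on $v$ and the public item‑sets but never on her reported cost --- the place where a sequential composition could break truthfulness. The second, more routine but still delicate point is the Case‑B estimate: it must relate the greedy marginals taken at the intermediate sets $T_i$ to those at the final set $T$ and to the $\gset_2$‑LP, which is exactly where monotone submodularity enters and where the factor $2$ (hence the approximation constant) originates.
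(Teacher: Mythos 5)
Your proof follows essentially the same route as the paper's: truthfulness holds because, for each $i\in\pl_2$, the set $T_i$ and the processing guard $v(T_i\cup\pset_i)\leq\ld V_1$ depend only on players preceding $i$ and on public data, so player $i$'s step reduces to a single--player VCG instance; the budget bound follows from the invariant $v(T)\leq\ld V_1$ and the telescoping identity $\sum_i v(\demd_i|T_i)=v(T)$; and the value analysis splits on whether the greedy loop exits early. Your variant of the Case--2 estimate (multiplying the per--set inequality $2v(T)\geq v(S)-\kp\,c(S)$ by $x^*_S$ and summing) and the paper's (passing through $\optset_2=\argmax_{S\sse\gset_2}\bigl(v(S)-\kp\,c(S)\bigr)$ and then comparing to $\lpopt_2$) yield the same conclusion $2v(T)\geq\lpopt_2-\ld V_1$, so this is a cosmetic difference.

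The one genuine problem is the final arithmetic. You correctly derive, on the good event, that $v(T)\geq\min\bigl\{\ld V_1-\max_j v(\pset_j),\ \tfrac{1-\ld}{2}V_1\bigr\}$, so the exponent that survives after subtracting $\max_j v(\pset_j)$ is $\min\{\ld,\tfrac{1-\ld}{2}\}$. With $\ld=\tfrac12$ this is $\tfrac14$, and chaining $V_1\geq\optbench/4$ and the probability $\tfrac14$ gives $\E{v(T)}\geq\optbench/64-\max_i v(\pset_i)/4$, not $\optbench/\submodapx-\max_i v(\pset_i)/4$ with $\submodapx=48$. To obtain the claimed $48$, one must balance the two cases by setting $\ld=\tfrac{1-\ld}{2}$, i.e.\ $\ld=\tfrac13$, so that $\min\{\ld,\tfrac{1-\ld}{2}\}=\tfrac13$, giving $\tfrac14\cdot\tfrac13\cdot\tfrac14=\tfrac1{48}$. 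This is in fact what the paper's own proof does: it substitutes $\ld=\tfrac13$ in the penultimate line, even though the theorem statement says $\ld=0.5$ --- so the statement you were asked to prove has a typo in the parameter, and you inherited it. The fix is to run Mechanism~\ref{submodalg-gencost} with $\ld=\tfrac13$; universal budget--feasibility is unaffected (your argument there works for any $\ld\in(0,1]$), and the constant then comes out to $48$ as claimed.
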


\begin{proof}
Lemma~\ref{submod-budgetfeas} proves that the mechanism is universally budget feasible. We
prove the approximation guarantee here. We drop $(v,B,c)$ from the argument of
$\optbench$. 
Let $\kp=\frac{\ld V_1}{B}$.
Let $\optset_2:=\argmax_{S\sse\gset_2}\,\bigl\{v(S)-\kp\cdot c(S)\bigr\}$.
We first argue that $v(\optset_2)-\kp\cdot c(\optset_2)\geq\lpopt_2-\kp B$. Let $x^*$ be an
optimal solution to $\optlp[(\gset_2)]$. Then, 
$v(\optset_2)-\kp\cdot c(\optset_2)\geq\sum_S x^*_S\bigl(v(S)-\kp\cdot c(S)\bigr)
\geq\lpopt_2-\kp B$. 

If we exit the loop in step~\ref{greedy} prematurely without going over all the players in
$\pl_2$, then we clearly have $v(T)\geq\ld V_1-\max_{i\in\pl_2}v(\pset_i)$. Suppose otherwise.
Then we consider all players in $\pl_2$, and if player $i'$ is considered right after
player $i$, we have $T_{i'}=T_i\cup\demd_i$.
For any $i\in\pl_2$, we have 
\begin{equation}
v(\demd_i|T_i)-\kp\cdot c_i(\demd_i)
\geq v(\optset_2\cap\pset_i|T_i)-\kp\cdot c(\optset_2\cap\pset_i)
\geq v(\optset_2\cap\pset_i|T)-\kp\cdot c(\optset_2\cap\pset_i) 
\end{equation}
where the last inequality follows from submodularity of $v$.
Adding this for all $i\in\pl_2$, since $v(\cdot|T)$ is also monotone and submodular, and
hence subadditive, and $\{\optset_2\cap\pset_i\}_{i\in\pl_2}$ partitions $\optset_2$,
we obtain that 
$v(T)-\kp\cdot c(T)\geq v(\optset_2|T)-\kp\cdot c(\optset_2)
= v(\optset_2\cup T)-v(T)-\kp\cdot c(\optset_2)$.
It follows that $2v(T)\geq v(\optset_2)-\kp\cdot c(\optset_2)\geq\lpopt_2-\ld V_1$. 
Therefore, $v(T)\geq\frac{\lpopt_2-\ld V_1}{2}$. 

By Lemma~\ref{rsample-lp}, we have that $\lpopt_2\geq V_1=\lpopt_1\geq\frac{\optbench}{4}$ 
holds with probability at least $\frac{1}{4}$. Assuming that this event happens, plugging
in $\ld=\frac{1}{3}$, the above analysis shows that 
that $v(T)\geq\frac{V_1}{3}-\max_{i\in\pl_2}v(\pset_i)
\geq\frac{\optbench}{12}-\max_{i\in\pl_2}v(\pset_2)$. So the expected value returned is
at least $\frac{\optbench}{48}-\frac{\max_{i\in\pl_2}v(\pset_2)}{4}$.
\end{proof}

\begin{lemma} \label{submod-budgetfeas}
Mechanism~\ref{submodalg-gencost} is universally budget feasible.
\end{lemma}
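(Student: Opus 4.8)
The plan is to fix, once and for all, the outcome of the random partitioning, so that $\pl_1,\pl_2$, the induced sets $\gset_1,\gset_2$, and $V_1=\lpopt_1$ are all determined. Note that these quantities are functions only of the coin tosses, of \emph{which} players land in $\pl_1$, and of the LP solve, and in particular do not depend on the reported cost functions of the players in $\pl_2$. It therefore suffices to show that, for every such realization, the resulting deterministic mechanism on $\pl_2$ is truthful, individually rational, satisfies NPT, and pays out at most $B$ in total; universal budget-feasibility then follows since the mechanism is a distribution over such deterministic budget-feasible mechanisms. Throughout I set $\kp=\ld V_1/B$, and assume $V_1>0$ (if $V_1=0$ the loop guard $v(T\cup\pset_i)\le\ld V_1$ forces $T=\es$ and all payments $0$, so everything is trivial).

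For the budget bound I would argue as follows. Let $i_1<\dots<i_m$ be the players of $\pl_2$ actually processed by the loop of step~\ref{greedy}, in the fixed order. Then $T_{i_1}=\es$, $T_{i_{j+1}}=T_{i_j}\cup\demd_{i_j}$, and the final set is $T=T_{i_m}\cup\demd_{i_m}$, so the incremental values telescope: $\sum_{j=1}^m v(\demd_{i_j}\mid T_{i_j})=v(T)-v(\es)=v(T)$. Every player with $T\cap\pset_i=\es$ is paid $0$ by the rule in step~\ref{submod-output} — this covers the players of $\pl_1$, the players of $\pl_2$ not reached or not processed, and any processed player with $\demd_i=\es$ (whose term $v(\demd_i\mid T_i)$ vanishes anyway) — so the total payment equals $\tfrac{B}{\ld V_1}\sum_{j}v(\demd_{i_j}\mid T_{i_j})=\tfrac1\kp\,v(T)$. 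Finally, each time $T$ is updated to $T\cup\demd_i$ we have $\demd_i\sse\pset_i$ and the guard $v(T\cup\pset_i)\le\ld V_1$, so monotonicity of $v$ gives $v(T\cup\demd_i)\le\ld V_1$; hence $v(T)\le\ld V_1$ holds at all times and the total payment is at most $\tfrac1\kp\cdot\ld V_1=B$. NPT is immediate from the payment rule, since a player with $T\cap\pset_i=\es$ is paid $0$.

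For truthfulness and IR: players in $\pl_1$ never appear in $T$ (as $T\sse\gset_2$), hence obtain value $0$ and payment $0$ and cannot gain by misreporting. Fix $i\in\pl_2$. The crucial structural observation — which I expect to be the main point to pin down carefully — is that \emph{neither} whether the loop reaches and processes $i$, \emph{nor} the set $T_i$, depends on $i$'s reported cost: the loop runs in a fixed order, its exit is triggered by the test $v(T_{i'}\cup\pset_{i'})>\ld V_1$ which involves only $T_{i'}$, the public set $\pset_{i'}$, and $V_1$, and $T_{i'}$ only accumulates the demand sets $\demd_j$ of players $j<i'\le i$, each computed from $c_j$ alone. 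So everything determined before $i$ is processed is a function of the reports of players strictly preceding $i$ (and of $\pl_1,V_1$). Conditioning on those: if $i$ is not processed, $T\cap\pset_i=\es$, so $i$'s value and payment are $0$ regardless of her report. If $i$ is processed, the mechanism returns $\demd_i=\argmax_{S\sse\pset_i}\bigl(v(S\mid T_i)-\kp c_i(S)\bigr)$ and pays $i$ exactly $\tfrac1\kp v(\demd_i\mid T_i)$; this is precisely the single-player Groves computation of Theorem~\ref{vcgthm} (with $\al_i=\kp$, $\beta_S=-v(S\mid T_i)$, and the IR/NPT term $h_i=0$). Concretely, if $i$'s true cost is $\bc_i$ and she reports some $c_i$ producing output $S$, her utility is $\tfrac1\kp\bigl(v(S\mid T_i)-\kp\bc_i(S)\bigr)$; reporting $\bc_i$ makes the mechanism select the $S\sse\pset_i$ maximizing exactly this expression, so no misreport helps, and taking $S=\es$ in that maximum shows the resulting utility is $\ge 0$, which gives IR. Since truthfulness, IR, NPT, and the budget bound hold for every realization of the coins, Mechanism~\ref{submodalg-gencost} is universally budget-feasible. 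Apart from the structural observation above, every step is a routine telescoping/Groves argument, so I do not anticipate other obstacles.
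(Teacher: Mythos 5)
Your proof is correct and follows essentially the same approach as the paper's: fix the coins, observe that whether player $i$ is processed and the set $T_i$ are independent of $i$'s report, then argue truthfulness via the single-player demand/Groves computation and bound the total payment by $v(T)/\kp\le B$ via telescoping and the loop guard. You spell out a few details the paper leaves implicit (the $V_1=0$ edge case, the telescoping sum, IR via $S=\es$, the explicit map onto Theorem~\ref{vcgthm} with $h_i=0$), but the substance is identical.
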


\begin{proof}
We consider each possible random outcome of the mechanism, and show that the payments
given in step~\ref{submod-output} yield a budget-feasible mechanism.
Let $(\pl_1,\gset_1)$, $(\pl_2,\gset_2)$ be the partition obtained in
step~\ref{partition}. Let $\kp=\frac{\ld V_1}{B}$. Fix an input $(v,B,c)$.

Consider a player $i$. If $i\in\pl_1$, then her payment, cost incurred, and utility are
always $0$. So suppose $i\in\pl_2$. 
We say that player $i$ is {\em processed} in step~\ref{greedy}, if we
consider $i$ in the loop in that step and compute a demand-set for $i$.
The key observation is that whether player $i$ is processed, and the set $T=T_i$ at the
point when we consider $i$ {\em do not depend on $i$'s reported cost function}.
If $i$ is not processed, then $i$'s payment, cost incurred, and utility are
always $0$. Otherwise, 
$i$'s utility is $\frac{1}{\kp}\cdot v(S|T_i)-c_i(S)$, where $S$ is the demand-set
computed for $i$ when we process her. 
So by construction, this utility
is maximized by reporting her true cost function, which yields the set $S=\demd_i$, and we
obtain truthfulness. 

The total payment made by the mechanism is 
$\frac{1}{\kp}\cdot\sum_{i\text{ is processed}}v(\demd_i|T_i)=v(T)/\kp\leq B$, where the
last inequality follows because $v(T)\leq\ld V_1$ by construction.
\end{proof}

\section{Subadditive valuations} \label{subadditive}
Our mechanisms for XOS valuations can be utilized to obtain budget-feasible mechanisms for
monotone subadditive valuations that achieve $O(\log k)$-approximation with respect to 
$\optbench$, without assuming no-overbidding. 
Recall that $k$ is the number of players.
In particular, we obtain 
an $O(\log k)$-approximation universally budget-feasible mechanism 
for general costs without assuming no-overbidding.
  
The idea here is to use an ``XOS-like'' pointwise-approximation of the subadditive
valuation $v$. However, note that if $\tv$ is a pointwise approximation of $v$ 
satisfying $v(S)/\gm\leq\tv(S)\leq v(S)$ for all $S\sse\gset$, this
{\em does not} imply that $\optbench$ 
inherits this approximation
property: i.e., we do not necessarily have that
$\frac{\optbench(v,B,c)}{\optbench(\tv,B,c)}\in\bigl[1,O(\gm)\bigr]$. 
So we need to proceed more carefully, and utilize $\tv$ internally in our mechanisms, in
the appropriate demand-set computations of our mechanisms for XOS valuations.

It is well-known that $v$ can be pointwise-approximated within an $O(\log n)$-factor (in
the sense of $\tv$ above) by an XOS function, and that this is
tight~\cite{BhawalkarR11,BeiCGL12}. To do better, we observe that the analysis of our
mechanisms for XOS valuations relies on a weaker property than the function being XOS. An
XOS function $g$ satisfies the fractional-cover property (see Section~\ref{prelim}) that
for every set $S$, the value $\sum_{T\sse S}g(T)x_T$ of every fractional cover
$\{x_T\}_{T\sse S}$ of $S$ is at least $g(S)$. Our mechanisms for XOS valuations only
require this property to hold for fractional covers supported on 
{\em player-respecting subsets of $S$}, i.e., on sets 
$T\sse S$, where $T\cap\pset_i\in\{\es,S\cap\pset_i\}$ for all $i\in[k]$ (see
Definition~\ref{prcover}). This is because 
we only use the XOS property via Claim~\ref{xosnprop}, to argue that
$\sum_i\bigl(v(S)-v(S-\pset_i)\bigr)\leq v(S)$ (see, e.g., \eqref{bfexp-paymt}), which follows
from the fractional-cover property applied to a cover supported on
$\{S-\pset_i\}_{i\in[k]}$, which are player-respecting subsets of $S$.

\begin{definition} \label{prcover}
Given a set $S\sse\gset$, we say that $T\sse S$ is a {\em player-respecting subset} of $S$
if $T\cap\pset_i\in\{\es,S\cap\pset_i\}$ for all $i\in[k]$. Let $\Pc(S)\sse 2^S$ denote
the collection of player-respecting subsets of $S$.

We say that a function $g:2^\gset\mapsto\R_+$ is a {\em player-respecting XOS function},
if for every set $S\sse\gset$, the optimum value of the following LP, denoted
$\optfrcover$, is $g(S)$: 
\begin{equation*}
\min \quad \sum_{T\in\Pc(S)}g(T)x_T \qquad \text{s.t.} \qquad
\sum_{T\in\Pc(S): e\in T}x_T=1 \quad \forall e\in S, \qquad x\geq 0.
\tag*{$\frcoverlp$}
\end{equation*}
A feasible solution to the above LP is called a player-respecting fractional cover of $S$.
When $g$ is monotone, we can relax the equality constraints above to $\geq$-constraints.
\end{definition}

The approximation factor we obtain is actually $O(\gm)$, where $\gm\geq 1$ 
is the best factor possible for a pointwise approximation of $v$ by a player-respecting
XOS function. 
One can show that the best such approximation comes from the family of LPs \frcoverlp 
(just as the best pointwise XOS-approximation can be obtained by solving the
fractional-cover LP~\cite{BeiCGL12}); 
$\gm$ corresponds to the integrality gap of this family, and we always have $\gm=O(\log
k)$. 

\begin{lemma} \label{prxos}
Let $g:2^\gset\mapsto\R_+$ be a monotone subadditive function. 
Define 
$\gprxos(S):=\optfrcover$ for all $S\sse\gset$. Then
\begin{enumerate}[label=(\alph*), topsep=0ex, noitemsep, leftmargin=*]
\item $\gprxos$ is a player-respecting XOS function; hence, 
$\sum_{i\in[k]}\bigl(\gprxos(S)-\gprxos(S-\pset_i)\bigr)\leq\gprxos(S)$ for all $S\sse\gset$.
\item if $h$ is a player-respecting XOS function with $h(S)\leq g(S)$ for all $S\sse\gset$,
then we have $h(S)\leq\gprxos(S)$ for all $S\sse\gset$; 
\item we have $g(S)/\gm\leq\gprxos(S)\leq g(S)$ for all $S\sse\gset$, where $\gm$ is 
$\max_{S\sse\gset}(\text{integrality gap of $\frcoverlp$})$;
\item $\gm\leq O(\log k)$.
\end{enumerate}
\end{lemma}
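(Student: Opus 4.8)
The plan is to prove the four parts in order; (b) and (c) are short consequences of (a) together with a structural observation, while (a) and especially (d) carry the content.

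For part (a), I would first record the key structural fact that player-respecting subsets compose: if $R\in\Pc(T)$ and $T\in\Pc(S)$ then $R\in\Pc(S)$ (checked one player at a time, since $T\cap\pset_i\in\{\es,S\cap\pset_i\}$ forces $R\cap\pset_i\in\{\es,S\cap\pset_i\}$). That $\gprxos(S)$ is at most the value of $\frcoverlp$ with weights $\gprxos(\cdot)$ is immediate by taking $x_S=1$. For the reverse inequality, given any player-respecting fractional cover $\{x_T\}_{T\in\Pc(S)}$ of $S$, choose for each $T$ with $x_T>0$ an optimal player-respecting fractional cover $\{y^T_R\}_{R\in\Pc(T)}$ of $T$ witnessing $\gprxos(T)=\sum_R g(R)y^T_R$, and compose them: set $z_R:=\sum_T x_T\,y^T_R$ for $R\in\Pc(S)$ (legitimate by the composition fact). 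A short computation shows $z$ is a player-respecting fractional cover of $S$ and $\sum_R g(R)z_R=\sum_T x_T\,\gprxos(T)$; since $z$ covers $S$ and $g$ is monotone, the left side is at least $\gprxos(S)$, so $\sum_T x_T\,\gprxos(T)\ge\gprxos(S)$, and minimizing over $\{x_T\}$ shows $\gprxos$ is a player-respecting XOS function. The displayed ``hence'' then follows exactly as in Claim~\ref{xosnprop}: writing $I=\{i:S\cap\pset_i\neq\es\}$, the sets $\{S-\pset_i\}_{i\in I}$ all lie in $\Pc(S)$ and the uniform weights $1/(|I|-1)$ form a player-respecting fractional cover of $S$, so the player-respecting XOS property of $\gprxos$ gives $\sum_{i\in[k]}\bigl(\gprxos(S)-\gprxos(S-\pset_i)\bigr)\le\gprxos(S)$ (the cases $|I|\le 1$ being trivial).

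For part (b), fix $S$ and apply the player-respecting XOS property of $h$ to the optimal player-respecting fractional cover $\{x_T\}$ for $g$ at $S$: $h(S)\le\sum_T h(T)x_T\le\sum_T g(T)x_T=\gprxos(S)$, using $h\le g$ pointwise. For part (c), $\gprxos(S)\le g(S)$ is again the $x_S=1$ bound, and conversely the integral optimum of $\frcoverlp$ is exactly $g(S)$: any $0/1$ cover is a family of player-respecting subsets of $S$ whose union is $S$, so monotone subadditivity of $g$ makes $\{S\}$ the cheapest such family. Hence the integrality gap of $\frcoverlp$ equals $g(S)/\gprxos(S)$, and $\gprxos(S)\ge g(S)/\gm$ by the definition of $\gm$.

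Part (d) is where the real work lies, and the crux is the right reduction. Fix $S$ and let $I_S=\{i:S\cap\pset_i\neq\es\}$, so $|I_S|\le k$. The map $J\mapsto S\cap\bigcup_{i\in J}\pset_i$ is a bijection from $2^{I_S}$ onto $\Pc(S)$, and $\hat g(J):=g\bigl(S\cap\bigcup_{i\in J}\pset_i\bigr)$ is a monotone, normalized, subadditive function on the ground set $I_S$. Rewriting the covering constraints of $\frcoverlp$ under this bijection shows that $\frcoverlp$ is \emph{literally} the ordinary fractional-cover LP of $\hat g$ over the universe $I_S$, and its integral optimum is $\hat g(I_S)=g(S)$. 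Therefore the integrality gap of $\frcoverlp$ equals the integrality gap of the fractional-cover LP of a monotone subadditive function over a universe of size $|I_S|\le k$, which is $O(\log|I_S|)=O(\log k)$ by the classical subadditive-to-XOS approximation bound~\cite{BhawalkarR11,BeiCGL12}. Maximizing over $S$ yields $\gm=O(\log k)$. The only delicate point is making the two LPs, and their integral versions, match up exactly; once that bookkeeping is done, (d) is a black-box application of the known bound.
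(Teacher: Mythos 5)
Your proof is correct and follows essentially the same route as the paper for all four parts. For part (d), you make explicit what the paper leaves implicit: where the paper observes that $\frcoverlp$ has at most $k$ distinct covering constraints and appeals to "standard set cover results," you exhibit the bijection $J\mapsto S\cap\bigcup_{i\in J}\pset_i$ between $2^{I_S}$ and $\Pc(S)$, identify $\frcoverlp$ with the ordinary fractional-cover LP of the subadditive function $\hat g$ on the universe $I_S$, and then invoke the $O(\log|I_S|)$ subadditive-to-XOS bound of \cite{BhawalkarR11,BeiCGL12} directly — a cleaner, more self-contained packaging of the same $O(\log k)$ conclusion.
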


We defer the proof of the above lemma to the end of the section.
In the rest of this section, we fix the monotone, subadditive valuation $v$, and suppose
that we have a player-respecting XOS function $\vxos$ such that 
$v(S)/\gm\leq\vxos(S)\leq v(S)$ for all $S\sse\gset$, where $\gm=O(\log k)$. 

\vspace*{-1ex}
\paragraph{Universally budget-feasible mechanism for general costs.}
We briefly describe the changes to the suitable mechanism from 
Section~\ref{overbid} below. 
Recall that $V^*_j:=\max\,\{v(S):\ S\sse\gset_j,\ c(S)\leq B\}$ for $j=1,2$, where
$\gset_1,\gset_2$ are identically-distributed player-respecting subsets of $\gset$ that
partition $\gset$.
We proceed as in Algorithm~\ref{xosalggen-overbid} (in Section~\ref{unibf-overbid}) for
XOS valuations without assuming no-overbidding, 
with the only change being that in step~\ref{demandset-overbid} of the algorithm, we now
compute  
$S^*\gets\argmax_{S\sse\gset_2}\,
\bigl\{\vxos(S)-\frac{\ld V_1}{\gm B}\cdot c(S):\ \vxos(S)\leq\frac{\ld V_1}{\gm}\bigr\}$.  
As before, we obtain universal budget-feasibility. For the approximation, we proceed as in
the proof for XOS valuations.

Let $\opt_i=\max_{S\sse\pset_i}\,\max\{v(S): c_i(S)\leq B\}$ for $i\in\pl$, and
$\opt^{(2)}$ be the second-largest $\opt_i$ value.
Let $i^*\in\pl$ be such that $\opt_{i^*}=\max_{i\in\pl}\opt_i$, and let
$\pset^*=\pset_{i^*}$. 
Let $\optset$ be such that $c(\optset)\leq B$ and $\vbench(\optset)=\optbench$.
Let $\boptset=\optset-\pset^*$, and $\boptset_j=\boptset\cap\gset_j$ for $j=1,2$.
Note that for any $S\sse\gset-\pset^*$ with $c(S)\leq B$, and any $i\in\pl$, we have
$\vxos(S\cap\pset_i)\leq v(S\cap\pset_i)\leq\opt^{(2)}$.
Let $V'_j=\max\,\{v(S): S\sse\gset_j-\pset^*,\ c(S)\leq B\}$ for $j=1,2$. 
let $T'_2\sse\gset_2-\pset^*$ be such that $v(T'_2)=V'_2$ and $c(T'_2)\leq B$.

We may assume that $v(\boptset_1),v(\boptset_2)\geq\vbench(\boptset)/4$, $V'_2\geq V'_1$, 
and $\pset^*\sse\gset_2$, an event that occurs with probability at least $\frac{1}{8}$. 
We take $\ld=0.5$, and $p=\frac{128\gm}{144\gm+1}$ so that 
$\frac{p}{8}\cdot\bigl(1+\frac{1}{16\gm}\bigr)=1-p$. 
We have $\vxos(T'_2)\geq\frac{V'_2}{\gm}\geq\frac{2\ld V'_1}{\gm}=\frac{2\ld V_1}{\gm}$. 
So using Lemma~\ref{bredn-supaddcost} with $\vxos$, we can find $T\sse T'_2$ with
$c(T)\leq B/2$ and  
$\min\bigl\{\vbench[{\vxos}](T'_2)-\frac{\ld V_1}{\gm},\frac{\ld V_1}{\gm}-\max_{e\in T'_2}v(e)\bigr\}<
\vxos(T)\leq\frac{\ld V_1}{\gm}$. 
This implies that $\vxos(T)>\frac{\ld V_1}{\gm}-\opt^{(2)}$, which 
can be used to show 
that $\vxos(S^*)\geq\frac{\ld V_1}{2\gm}-\opt^{(2)}$. So proceeding as in the rest of the
proof of Theorem~\ref{xosgenthm-overbid}, the expected value returned is at
least $\frac{p}{128\gm}\cdot\optbench(v,B,c)=\frac{1}{144\gm+1}\cdot\optbench(v,B,c)$.

\subsection*{Proof of Lemma~\ref{prxos}}

\noindent{\bf Part (a).}\ 
Consider any $S\sse\gset$, and any player-respecting fractional
cover $\{x_T\}_{T\in\Pc(S)}$ of $S$. For each $T\in\Pc(S)$, let $y^T$ be an optimal
solution to $\frcoverlp[T]$, so that $\gprxos(T)=\sum_{Z\in\Pc(T)}g(Z)y^T_Z$. Note that,
for any $T\in\Pc(S)$, we have $\Pc(T)=\{Z\in\Pc(S): Z\sse T\}$. 
Now consider the solution $\bx$, where we set 
$\bx_Z=\sum_{T\in\Pc(S):T\supseteq Z}x_Ty^T_Z$ for every $Z\in\Pc(S)$. We argue that this
is a feasible solution to \frcoverlp of objective value $\sum_{T\in\Pc(S)}\gprxos(T)x_T$,
which implies that $\gprxos(S)=\optfrcover\leq\sum_{T\in\Pc(S)}\gprxos(T)x_T$, proving
part (a). 

For any $e\in S$, we have 
\begin{equation*}
\begin{split}
\sum_{Z\in\Pc(S):e\in Z}\bx_Z
& =\sum_{Z\in\Pc(S):e\in Z}\sum_{T\in\Pc(S):T\supseteq Z}x_Ty^T_Z
=\sum_{T\in\Pc(S):e\in T}x_T\cdot\sum_{Z\in\Pc(S):Z\sse T,e\in Z}y^T_Z \\
& =\sum_{T\in\Pc(S):e\in T}x_T\cdot\sum_{Z\in\Pc(T):e\in Z}y^T_Z
=\sum_{T\in\Pc(S):e\in T}x_T\cdot 1 = 1.
\end{split}
\end{equation*}
The third equality is because $\Pc(T)=\Pc(S)\cap 2^T$, for $T\in\Pc(S)$, the fourth is
because $y^T$ is a feasible solution to $\frcoverlp[T]$, and the final equality is
because $x_T$ is a feasible solution to $\frcoverlp[S]$. This shows that $\bx$ is a
feasible solution to $\frcoverlp$. 
Its objective value is
\begin{equation*}
\begin{split}
\sum_{Z\in\Pc(S)}g(Z)\bx_Z
& =\sum_{Z\in\Pc(S)}\sum_{T\in\Pc(S):T\supseteq Z}g(Z)x_Ty^T_Z
=\sum_{T\in\Pc(S)}x_T\cdot\sum_{Z\in\Pc(T)}g(Z)y^T_Z \\
& =\sum_{T\in\Pc(S)}x_T\cdot\gprxos(T)
\end{split}
\end{equation*}
where the final equality is because $\gprxos(T)=\optfrcover[T]$ and $y^T$ is an optimal
solution to $\frcoverlp[T]$. 
The second statement follows from Claim~\ref{xosnprop}.

\smallskip
\noindent{\bf Part (b).}\
Consider any $S\sse\gset$. Let $x^*$ be an optimal solution to $\frcoverlp$.
Then,
\[
\gprxos(S)=\sum_{T\in\Pc(S)}g(T)x^*_T\geq\sum_{T\in\Pc(S)}h(T)x^*_T\geq h(S)
\]
where the last inequality is because $h$ is a player-respecting XOS function.

\smallskip
\noindent {\bf Part (c).}\ 
Consider any $S\sse\gset$. It is clear that $\gprxos(S)\leq g(S)$ because setting $x_S=1$
and all other $x_T=0$ is a feasible solution to $\frcoverlp$.
Moreover, the optimal value of an {\em integer} solution to $\frcoverlp$ is $g(S)$, since
any integer solution corresponds to a partition $T_1,\ldots,T_\ell$ of player-respecting
subsets of $S$ and has value $g(T_1)+g(T_2)+\ldots+g(T_\ell)\geq g(S)$, since $g$ is
subadditive. Therefore, we obtain that $\gprxos(S)\geq g(S)/\gm$. 

\smallskip
\noindent {\bf Part (d).}\
Consider any $S\sse\gset$. Relaxing the equality constraints of $\frcoverlp$ to
$\geq$-inequalities, we see that $\frcoverlp$ consists of at most $k$ distinct
covering constraints, one for each $i\in[k]$ for which $S\cap\pset_i\neq\es$. This is
simply because if $e,e'\in S$ are such that $\{e,e'\}\sse S\cap\pset_i$ for some $i$, then
$T\in\Pc(S)$ contains $e$ iff it contains $e'$, and so the constraints for $e$, $e'$ are
identical. Thus, by standard results on set cover, the integrality gap is $O(\log k)$.
\hfill \qed

\section{Guarantees relative to \boldmath $\OPTalg$} 
\label{optalg-bounds}

\subsection{Algorithmic problem of approximately computing \boldmath $\OPTalg$} 
\label{approx} \label{algresults}
We now consider the algorithmic problem of developing polytime approximation algorithms
for computing $\OPTalg(v,B,c)=\max_{S\sse\gset}\,\{v(S): c(S)\leq B\}$ given $(v,B,c)$ as
input, and a demand oracle. 
We exploit the ideas underlying our mechanisms to obtain 
a $(2+\ve)$-approximation for subadditive $v$ and superadditive $c$, i.e., $c$ satisfies
$c(S\cup T)\leq c(S)+c(T)$ for all $S,T\sse\gset$. 
Note that when the cost functions $c_1,\ldots,c_k$ comprising $c$ are superadditive, then
the function $c(S)=\sum_i c_i(S\cap\pset_i)$ is superadditive. 
Our 
guarantee is tight, since even for additive $c$, one cannot do
better~\cite{BadanidiyuruDO19}. Also, it is known that value oracles are insufficient to
obtain any approximation factor better than $\sqrt{n}$, even for XOS valuations and
additive $c$. 

\begin{theorem} \label{subaddalgthm}
We can obtain a $(2+\ve)$-approximation with subadditive $v$ and superadditive $c$, using
demand oracles.
\end{theorem}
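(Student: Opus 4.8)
The plan is to combine a Lagrangian search over the budget constraint (carried out via the demand oracle) with a pruning step tailored to superadditive costs. I would first preprocess by discarding every item $e$ with $c(\{e\})>B$, which changes neither $\OPTalg$ nor the availability of demand queries (and I may assume $v$ monotone by passing to its monotone closure, which preserves both $\OPTalg$ and the demand oracle), and I would always keep the singleton $\{e^*\}$, $e^*=\argmax_e v(\{e\})$, as a candidate answer; note that, since $c$ is superadditive, any feasible set contains at most one item of cost exceeding $B/2$. The algorithm then guesses a value $\targ$ with $\OPTalg\le\targ<(1+\ve')\OPTalg$ — there are only polynomially many candidates, namely the powers of $(1+\ve')$ between $\min_e v(\{e\})$ and $v(\gset)$ — runs a subroutine for each guess, and returns the best feasible set found (including $\{e^*\}$). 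It suffices to show that for the correct guess the subroutine returns a feasible set of value at least $\OPTalg/(2+\ve)$.

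For a fixed $\targ$ the subroutine solves the constrained-demand LP \eqref{cdlp} over $A=\gset$ with cap $\targ$ and multiplier $\kp=\targ/(2B)$, which is polynomial-time given the demand oracle by Theorem~\ref{lpsolve}(b). Since \eqref{cdlp} has only two constraints besides nonnegativity, it has an optimal basic solution supported on at most two sets $S_1,S_2$. Two observations drive the analysis. First, $x_Q=1$ is LP-feasible for any feasible $Q$ with $v(Q)=\OPTalg\le\targ$, so the optimum $L$ of \eqref{cdlp} satisfies $L\ge\OPTalg-\kp B=\OPTalg-\targ/2\ge\tfrac{1-\ve'}{2}\OPTalg$; by averaging, some support set, say $S_1$, has $v(S_1)-\kp c(S_1)\ge L$, and the cap gives $v(S_1)\le\targ$ (treating the two-set case by the analogous bound $v(S_1)\le\targ/x_1$). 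Second, using optimality of the LP solution together with subadditivity of $v$ and superadditivity of $c$ — so that $c(S_1)\ge c(T)+c(S_1\setminus T)$ — replacing $S_1$ by $S_1\setminus T$ cannot improve the objective, whence $v(T)\ge\kp c(T)$ for every $T\subseteq S_1$; in particular $c(S_1)\le v(S_1)/\kp\le 2B$ (up to the $(1+\ve')$ slack).

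The subroutine then extracts a feasible set from $S_1$. If $c(S_1)\le B$, then $S_1$ is feasible and $v(S_1)\ge L\ge\tfrac{1-\ve'}{2}\OPTalg$. If $B<c(S_1)\le 2B$, I would partition $S_1$ greedily into blocks each of cost at most $B$, and argue via superadditivity — each consecutive pair of blocks has union cost exceeding $B$, while disjoint pieces have total cost at most $c(S_1)\le 2B$ — that the partition has a bounded number of blocks and that any block split off after the running cost first exceeds $B$ is necessarily cheap; reducing this to (essentially) two feasible pieces by merging the cheap residual blocks, subadditivity of $v$ yields a feasible $T$ with $v(T)\ge v(S_1)/(2+\ve')$, while in this case $v(S_1)\ge L+\kp c(S_1)>\OPTalg-\targ/2+\targ/2=\OPTalg$. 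Together with the singleton candidate $\{e^*\}$, which absorbs the additive $\vemax$-type losses incurred in the pruning exactly in the spirit of the $\max_e v(e)$ term in Lemma~\ref{bredn-supaddcost}, and a final tuning of $\ve'$ against $\ve$, this yields value at least $\OPTalg/(2+\ve)$. Tightness is inherited from the known $2-\ve$ lower bound for subadditive valuations under demand queries, which holds even for additive costs~\cite{BadanidiyuruDO19}.

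The main obstacle is pushing the constant down to exactly $2+\ve$ rather than a larger constant. Superadditivity of $c$ is two-edged: it is precisely what makes the inequality $c(T)+c(S_1\setminus T)\le c(S_1)$, and hence the demand-set property $v(T)\ge\kp c(T)$, available; but it also destroys the standard argument that a maximal prefix of cost $\le B$ already has cost more than $B/2$, since adjoining a single item to a cheap set can make its cost jump far above $B$. Consequently the delicate point is the block-merging/pruning analysis: showing that in every case the feasible set extracted — possibly after also considering $\{e^*\}$ — has value no less than $\OPTalg/(2+\ve)$, with no case degrading the bound, and choosing $\kp$, $\targ$, and $\ve'$ so that all of this goes through simultaneously. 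I expect essentially all of the work to lie there.
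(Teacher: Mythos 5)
Your overall architecture — a Lagrangian multiplier of roughly $\OPTalg/(2B)$, the demand-set property $v(T)\geq\kp\,c(T)$ for $T\sse S^*$ derived from subadditivity of $v$ and superadditivity of $c$, and absorbing the pruning loss via the best singleton — matches the key ideas in the paper. But the route you take is different and, as written, has two genuine gaps.

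First, you cannot control $v(S_1)$ (and hence $c(S_1)$) from a basic optimal solution of \eqref{cdlp}. The cap $\sum_S v(S)x_S\leq\targ$ is an aggregate constraint, and a basic optimum may well be a single set $S_1=\gset$ with weight $x_1=\targ/v(\gset)\ll 1$; then $v(S_1)=v(\gset)$ can be arbitrarily large, your ``$v(S_1)\leq\targ/x_1$'' bound is vacuous, and the estimate $c(S_1)\leq v(S_1)/\kp\leq 2B$ never becomes available. Without it, the pruning phase cannot start. Second, even granting $c(S_1)\leq 2B$, the block-partition step is not resolved — and you flag it yourself as ``the main obstacle.'' With superadditive $c$, greedily closing blocks of cost at most $B$ can leave a block that is \emph{cheap} (adding the next item can jump the cost far past $B$ even though the current cost is tiny), so ``consecutive pair has union cost $>B$'' no longer forces blocks to be expensive, and one only gets a bound of roughly five blocks from $c(S_1)\leq 2B$ — a loss of factor $5$, not $2$. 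The paper sidesteps this entirely: it does a single \emph{unconstrained} demand query $S^*=\argmax_S\bigl(v(S)-\kp\cdot c(S)\bigr)$ (no capped LP, no basic solutions), takes a greedy prefix $T\sse S^*$ with $c(T)\leq B$, and uses $v(T\cup\{e\})\geq\kp\,c(T\cup\{e\})>\kp B$ together with subadditivity to charge the loss to $\vemax$, giving a $3$-approximation. The improvement to $2+\ve$ comes from a different enumeration than yours: the paper guesses the set $A$ of items $e\in\optset$ with $c(e)\geq\ve B$ (at most $1/\ve$ of them, by superadditivity of $c$), restricts the demand query to $A\cup L$, and forces $A\cap S^*$ into $T$ first so that the stopping cost is close to $B$; it does not guess $\targ$ and does not solve any capped LP. So the enumeration variable and the extraction mechanism are both different. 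If you want to salvage your approach, note that a \emph{constrained demand oracle} (rather than the LP \eqref{cdlp}) would directly give you $v(S^*)\leq\targ$ and hence $c(S^*)\leq 2B$, but that is a stronger oracle than the theorem assumes; and you would still need to replace the block-merging step with a prefix argument of the paper's flavor.
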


\begin{proof}
The idea is similar to the algorithm in~\cite{BadanidiyuruDO19}, which we simplify
somewhat. 
Recall that $\lpopt$ is the optimal value of the following LP, which can be solved using a
demand oracle. Here $S$ ranges over subsets of $\gset$.
\begin{equation}
\max \ \ \sum_{S}v(S)x_S \qquad \text{s.t.} \qquad
\sum_{S}c(S)x_S\leq B, \qquad \sum_{S}x_S\leq 1, \qquad x\geq 0. 
\tag{\optlpname(\gset)} 
\end{equation}
We may assume that $c(e)\leq B$ for all $e\in\gset$. 
We first describe a simple $3$-approximation algorithm. Let $\bx$ be an optimal solution
to the above LP. If there exists $e\in\gset$ with $v(e)\geq\lpopt/3$, then we simply
return such an element.
Otherwise, find
$S^*=\argmax_{S\sse\gset}\,\bigl(v(S)-\kp\cdot c(S)\bigr)$, where
$\kp=\frac{2\lpopt}{3B}$. We have 
$v(S^*)-\kp\cdot c(S^*)\geq\sum_S\bx_S\bigl(v(S)-\kp\cdot c(S)\bigr)=\lpopt-\kp B=\lpopt/3$.
Now let $T$ be a maximal subset of $S^*$ with $c(T)\leq B$. If $T=S^*$, then we are done.
Otherwise, for any $e\in S^*-T$,
taking $T'=T\cup\{e\}$, we have 
$v(S^*)-\kp\cdot c(S^*)\leq\bigl[v(T')-\kp\cdot c(T')\bigr]+\bigl[v(S^*-T')-\kp\cdot c(S^*-T')\bigr]$
since $v$ is subadditive, and $c$ is superadditive, and so $v(T')-\kp\cdot c(T')\geq 0$. 
So $v(T')>\kp B$, and
$v(T)>\kp B-\vemax\geq\lpopt/3$. 
Note that we do not actually need $\lpopt$ above: we can consider the largest $V$ such
that $\max_{S\sse\gset}\bigl(v(S)-\frac{2V}{3B}\cdot c(S)\bigr)$ is at least $V/3$.

To refine this to a $(2+\ve)$-approximation, we use enumeration to enumerate all elements
$e$ that belong to some fixed optimal solution $\optset$ with $c(e)\geq\ve B$. Since $c$ is
superadditive, there are at most $\frac{1}{\ve}$ such elements. So letting 
$H=\{e\in\gset: c(e)\geq\ve B\}$ and $L=\{e\in\gset: c(e)<\ve B\}$, we may assume that we
know the set $A=\optset\cap H$. Note that $|A|\leq\frac{1}{\ve}$ and $c(A)\leq B$.
We now compute 
$S^*=\argmax_{S\sse A\cup L}\,\bigl(v(S)-\kp'\cdot c(S)\bigr)$ where
$\kp'=\frac{\lpopt}{2B}$. As before, we may assume that $\vemax<\lpopt/2$. Now when we
pick a maximal subset $T\sse S^*$ with $c(T)\leq B$, we first pick all the elements in
$A\cap S^*$. This ensures that $c(T)>B-\ve B$, and so $v(T)>(1-\ve)\lpopt/2$.
\end{proof}

\subsection{Tightness of the lower bounds in Theorem~\ref{intro-thm}: mechanism-design
  guarantees} \label{impos-tight}
We now devise budget-feasible mechanisms obtaining approximation factors relative
$\optalg(v,B,c)$, which will show that the lower bounds stated in
Theorems~\ref{rand-overbid} and~\ref{detlb} are tight for XOS valuations, and in some
cases, also for subadditive valuations.
Let $\optset=\argmax_{S\sse\gset}\,\{v(S): c(S)\leq B\}$ be
an optimal solution to the algorithmic problem. Let $\psmax=\max_i|\pset_i|$. 
For every player $i$, order the elements in $\pset_i$ arbitrarily; let
$e^i_1,\ldots,e^i_{n_i}$ be this ordering, where $n_i\leq\psmax$. 
For an index $j\in[\psmax]$, let $\sduniv^j:=\bigcup_{i\in[k]}\{e^i_j\}$, with the
understanding that $\{e^i_j\}$ is the empty-set if $j>n_i$.

Without no-overbidding, consider the following randomized mechanism: we pick an element
$e\in\gset$ uniformly at random, and return $e$ if $c(e)\leq B$ along with a payment of
$B$ (to the player owning $e$), and the empty-set with $0$ payments otherwise. This is
clearly universally budget-feasible, and the expected value returned is at least
$\sum_{e\in\optset}v(e)/n\geq\optalg/n$; this holds for any subadditive function. 
We can also obtain an $O(\psmax)$-approximation in this setting. We achieve this
by reducing to the single-dimensional single-item setting losing an $O(\psmax)$-factor. 
We pick a random index $j\in[\psmax]$, and run a budget-feasible mechanism for the
single-dimensional setting with universe $\sduniv^j$, valuation $v$ restricted to
$\sduniv^j$, and element-costs $\{c_i(e^i_j)\}_{i\in[k]}$. 
Letting $\optalg^j$ denote the optimum for this instance, by subadditivity of $v$, we have
that $v(\optset)\leq\sum_{j=1}^\psmax
v(\optset\cap\sduniv^j)\leq\sum_{j=1}^\psmax\optalg^j$. So if the budget-feasible
mechanism for the single-dimensional instance achieves an $O(1)$-approximation, we obtain
$O(\psmax)$-approximation relative to $\optalg$ for the original instance. 
The mechanism inherits the budget-feasibility guarantee, i.e., universally
budget-feasible or budget-feasible in expectation, from the budget-feasibility guarantee
of the single-dimensional mechanisms.

Assuming no-overbidding, it is easy to see that returning
$e^*=\argmax_{e\in\gset}v(e)$ and making a payment of $B$ (to the player owning $e^*$)
yields a deterministic budget-feasible mechanism that achieves an $n$-approximation
relative to $\optalg$ for any subadditive function, since $v(\optset)\leq nv(e^*)$ due to
subadditivity. 

The above mechanisms match the lower bounds in Theorems~\ref{rand-overbid}
and~\ref{detlb} (a). We next devise a universally budget-feasible mechanism that achieves
an $O(\log\psmax)$-approximation relative to $\optalg$, assuming no-overbidding. 

\begin{theorem} \label{xos-optalg}
Assuming no-overbidding, we can obtain polytime mechanisms $\mech_1$ and $\mech_2$  
for XOS valuations 
that achieve an $O(\log\max_i|\pset_i|)$-approximation relative to $\optalg$, such that:
\begin{enumerate}[label=(\alph*), topsep=0ex, noitemsep, leftmargin=*]
\item $\mech_1$ is budget-feasible in expectation, works with general costs, and uses a
demand oracle; 
\item $\mech_2$ is universally budget-feasible, works with superadditive costs, and uses a  
constrained demand oracle.
\end{enumerate}
\end{theorem}

\begin{proof}
Recall that $\psmax:=\max_{i\in\pl}|\pset_i|$.
This is a consequence of the fact that all our mechanisms work with a suitable estimate
$V_1$ of our benchmark. Their budget-feasibility guarantees hold regardless of the
quality of this estimate, but their approximation guarantee relies on $V_1$ being a
good-enough estimate. In our mechanisms, we obtain $V_1$ using random partitioning, and
argue (loosely speaking) that $V_1=\Omega(\optbench)$. 
But instead, if we have an estimate $\est\geq\optalg/M$, 
then 
we can take $V_1=\est\cdot 2^\ell$,
for a random exponent $\ell\in\{0\}\cup[\ceil{\log_2 M}]$. 
One of the choices for $\ell$,
will yield a good estimate of $\optalg$. This happens with probability 
$\Omega\bigl(\frac{1}{\log M}\bigr)$, 
and for this choice, we will obtain
$\Omega(\optalg)$ value; thus the expected value obtained is 
$\Omega(\optalg/\log M)$. 

One simple choice for $\est$ is $\vemax$, which is at least $\optalg/n$, and with this
estimate one can obtain an $O(\log n)$-approximation relative to $\optalg$, and the
mechanisms do not require random partitioning. For instance,
for the budget-feasible-in-expectation mechanism, we simply run steps
steps~\ref{bfexp-demandset}, \ref{bfexp-setoutput} of Algorithm~\ref{xosalg-expgencost},
with the above random value $V_1$ and taking $\ld=0.5$, where we now work with the entire
universe $\gset$ (so $\lpopt_2=\lpopt$ now). Since $\lpopt\leq n\vemax$, 
there is some $V_1$ such that $V_1\leq\lpopt\leq 2V_1$, and for this $V_1$, using
Lemma~\ref{bfexp-val}, one obtains value $\Omega(V_1)$.

To obtain $O(\log\psmax)$ approximation guarantees, we work with a better initial estimate
(than $\vemax$) that leverages random partitioning.
In essence, we take $\est$ to be the better of $\vemax$ and an estimate of the algorithmic
optimum with player-set $\pset_1$ and corresponding item-set $\gset_1$. 
Recall that $\lpopt_j$ is the optimal value of $\optlp[(\gset_j)]$ for $j=1,2$, 
$\lpopt$ is the optimal value of $\optlp[(\gset)]$, and
$V^*_j:=\max\,\{v(S): S\sse\gset_j,\ c(S)\leq B\}$ for $j=1,2$.

\medskip
For mechanism $\mech_1$, 
we take $\est=\max\bigl\{\lpopt_1,\frac{\vemax}{2}\bigr\}$, and set $V_1$ as indicated
above, to be $\est\cdot 2^{\ell}$, 
where $\ell$ is a random exponent chosen uniformly from $\{0\}\cup[\ceil{\log_2 6\psmax}]$.
Mechanism $\mech_1$ simply runs steps~\ref{bfexp-demandset}, \ref{bfexp-setoutput} of
Algorithm~\ref{xosalg-expgencost}, with the above random value $V_1$, taking $\ld=0.5$.

By Lemma~\ref{rsample-lp}, with probability at least $0.25$, 
we have $\lpopt_2\geq\lpopt_1\geq\frac{\optbench}{4}$ and $\lpopt_2\geq\frac{\lpopt}{2}$. 
Assuming that this event happens, we have 
$\lpopt_2\geq\max\bigl\{\frac{\optalg}{2},\est\bigr\}$, since $\lpopt\geq\optalg\geq\vemax$, 
and
\begin{equation*}
\begin{split}
\est & \geq\frac{0.5}{0.5+\psmax/4}\cdot\lpopt_1+\frac{\psmax/4}{0.5+\psmax/4}\cdot\frac{\vemax}{2}
\\ &
\geq\frac{0.5}{0.5+\psmax/4}\cdot\frac{\optalg-\psmax\cdot\vemax}{4}+\frac{\psmax/4}{0.5+\psmax/4}\cdot\frac{\vemax}{2}
\geq\optalg\cdot\frac{0.5}{2+\psmax}\geq\frac{\optalg}{6\psmax}
\end{split}
\end{equation*}
where the second inequality follows since $\vbench(S)\geq v(S)-\psmax\cdot\vemax$ for any
set $S\sse\gset$, and so $\optbench\geq\optalg-\psmax\cdot\vemax$. 
Lemma~\ref{bfexp-val} shows that for any given $V_1\leq\lpopt_2$, the expected value
obtained is at least $\ld V_1\bigl(1-\frac{\ld V_1}{\lpopt_2}\bigr)$. 
Since $\est\geq\frac{\optalg}{6\psmax}$, for some $V_1$, we have 
$V_1\leq\lpopt_2$ and $2V_1\geq\min\{\lpopt_2,\optalg\}\geq\frac{\optalg}{2}$. For this
$V_1$, we obtain value $\Omega(V_1)=\Omega(\optalg)$.

\medskip
For mechanism $\mech_2$, let $V'_1$ be a $\gm$-approximation to $V^*_1$; by
Theorem~\ref{subaddalgthm}, we can take $\gm=(2+\ve)$, for any $\ve>0$.
We take $\est=\max\bigl\{V'_1,\frac{\vemax}{2}\bigr\}$, and again let $V_1$ be random value
of the form $\est\cdot 2^{\ell}$, where $\ell$ is chosen uniformly from
$\{0\}\cup[\ceil{\log_2((4\gm+2)\psmax)}]$. 
Mechanism $\mech_2$ runs steps~\ref{demandset}, \ref{setoutput} of
Algorithm~\ref{xosalg-gencost} taking $\ld=0.5$ and $p=0.8$. 
By Corollary~\ref{rsample}, with probability at least $0.25$, 
we have $V^*_2\geq V^*_1\geq\frac{\optbench}{4}$ and
$V^*_2\geq\optalg/2$. Assuming that this event happens, similar to before, we can argue
that $\est\leq V^*_2$ and 
\[
\est\geq\frac{0.5}{0.5+\psmax/4\gm}\cdot\frac{\optalg-\psmax\cdot\vemax}{4\gm}
+\frac{\psmax/4\gm}{0.5+\psmax/4\gm}\cdot\frac{\vemax}{2}
\geq\optalg\cdot\frac{0.5}{2\gm+\psmax}\geq\frac{\optalg}{(4\gm+2)\psmax}
\]
where the first inequality follows since $V'_1\geq\frac{V^*_1}{\gm}\geq\frac{\optbench}{4\gm}$.
For any given $V_1$, applying Lemma~\ref{bredn-supaddcost} 
to $T^*_2\sse\gset_2$ such that $v(T^*_2)=V^*_2$, $c(T^*_2)\leq B$,
we can extract $T\sse T^*_2$ such that $c(T)\leq B/2$ and 
$\min\{V^*_2-\ld V_1,\ld V_1\}-\vemax<v(T)\leq\ld V_1$, so that 
$v(S^*) \geq\min\{V^*_2-1.5\ld V_1,0.5\ld V_1\}-\vemax$.
Amoung the random values we consider for $V_1$, 
there is some $V_1$ with $2\ld V_1=V_1\leq V^*_2$ and
$2V_1\geq V^*_2\geq\frac{\optalg}{2}$, and for this $V_1$, we obtain
expected value at least $\frac{pV_1}{16}=\Omega(\optalg)$.
So the overall expected value obtained is $\Omega(\optalg/\log\psmax)$.
\end{proof}

\begin{remark} \label{allin}
Amanatidis et al.~\cite{AmanatidisKMST23} make an ``all-in'' assumption in the
single-dimensional \los setting, 
which assumes that the budget $B$ is large enough that the buyer can buy {\em all} levels
of service from any player~\cite{AmanatidisKMST23}, which allows them to bypass the
impossibility results in Theorem~\ref{detlb} and obtain an $O(1)$-approximation with
respect to $\optalg$.
\footnote{The all-in assumption is closely related to the assumption that in the
divisible-item setting 
(where one can buy a fraction of an item), the buyer can afford to buy an entire unit
of item from any player~\cite{KlumperS22}; this is a weaker version of the large-market
assumption for divisible items~\cite{AnariGN18} that allows~\cite{KlumperS22} to again
obtain $O(1)$-approximation relative to $\optalg$.}

Under our notation, this translates to assuming that for every $i$, and $c\in\C_i$, we have
$c_i(\pset_i)\leq B$. This is a rather strong assumption, but we note that under this
assumption, it is quite easy to convert our results for XOS valuations to obtain
$O(1)$-approximation relative to $\optalg$. 
To see this, note that, under this assumption, choosing player $i^*=\argmax_i v(\pset_i)$,
and returning $\pset_{i^*}$ and paying $B$ to player $i^*$ yields a budget-feasible mechanism. 
Say we have a mechanism $\mech$ that obtains expected value at least 
$\optbench(\ell;v,B,c)/\al$,
for some $\al\geq 1$ and integer $\ell\geq 1$, where recall that 
$\optbench(\ell;v,B,c):=\max\,\{\vbengen{\ell}(S): S\sse\gset,\ c(S)\leq B\}$, which is at 
least $\optalg(v,B,c)-\ell\cdot v(\pset_{i^*})$; all our mechanisms have this type of
guarantee, for $\ell=O(1)$.
Then, we can obtain $(\al+\ell)$-approximation relative to $\optalg$ by simply running
$\mech$ with probability $\frac{\al}{\al+\ell}$, and the mechanism that returns
$\pset_{i^*}$ with probability $\frac{\ell}{\al+\ell}$. 
\end{remark}

\bibliographystyle{abbrv} 
\bibliography{refsmulti}

\appendix

\section{Results and proofs omitted from Section~\ref{impos}} \label{append-impos}

\begin{proofof}{Theorem~\ref{detlb}}
As mentioned earlier, this result was proved by~\cite{ChanC14}. We include a proof here
for completeness. 
For both results, we only require a single player.
For $\tht\in\R_+$, let $c^{(\tht)}$ denote the additive cost
function given by $c_e=\tht$ for all $e\in\gset$. 
We consider such cost-vectors with $\theta\leq B$, which satisfies the no-overbidding
assumption.

For part (a), consider the cost functions $c'=c^{(B)}$ and $c''=c^{(1)}$. 
On input $c'$, $\mech$ must return at least one element,
otherwise the statement holds for $c'$. Due to budget feasibility and IR, $\mech$ can
return at most one element, and so $\mech$ returns exactly one element. 
Now consider input $c''$. 
We now have $\optalg(c'')=n$. But we claim that $\mech$ must still output at most one 
element on input $c''$, which will prove part (a).  
If $\mech$ outputs more than one item under input $c''$, the player's utility when $c''$
is her true cost, is less than $B-1$, but her utility when she reports $c'$ would be $B-1$, 
contradicting truthfulness. 

\medskip
For part (b), let $n$ be a power of $2$. Let $I=\{0\}\cup[\log_2 n]$.
For $\ell\in I$, let $p_\ell$ be the expected payment made by $\mech$ 
and $a_\ell$ be the expected number of items procured by $\mech$, when the player reports 
the cost-function $c^{(2^\ell)}$. 
Since $\mech$ is truthful in expectation, the player's expected utility is maximized by
reporting truthfully.  
So for any $\ell,r \in [\log_2 n]$, we have 
$p_\ell-2^\ell a_\ell \ge p_r-2^\ell a_r$.  
Define $p_{\log_2 n+1}=a_{\log_2 n+1}=0$ for notational convenience.
Then, we have
\[ p_\ell-p_{\ell+1}\geq 2^\ell(a_\ell-a_{\ell+1}) \qquad \frall \ell\in I.\]
where the inequality for $\ell=\log_2 n$ follows from individual rationality.
Adding the above for all $\ell\in I$ 
we obtain that 
$p_0\geq a_0+\sum_{\ell=0}^{\log_2 n-1}(2^{\ell+1}-2^\ell)a_{\ell+1}\geq\sum_{\ell=0}^{\log_2 n}2^{\ell-1}a_\ell$. 
We have $p_0\leq B=n$ since $\mech$ is budget-feasible in expectation. So there is some
$\ell\in I$ for which $2^\ell a_\ell\leq 2n/(1+\log_2 n)$. But then 
on input $c^{(2^\ell)}$, since $\optalg(v,B,c^{(2^\ell)})=\frac{n}{2^\ell}$, $\mech$ obtains
value at most $\frac{2}{1+\log_2 n}\cdot\optalg$.
\end{proofof}

\subsection{Bayesian budget-feasible mechanism design.}
In the Bayesian setting, there is a publicly-known prior distribution $\D$ on $\C$ from
which players' private cost functions are drawn. (Note that players are not necessarily
independent.) A possibly randomized mechanism $\mech$ is {\em Bayesian budget-feasible}
if:  
\begin{enumerate}[label=(\alph*), topsep=0.1ex, noitemsep, leftmargin=*]
\item IR holds with probability $1$, where the
probability is over both $\D$, and any random choices made by $\mech$; and 
\item The expected payment made by $\mech$ on any input $c\in\C$ drawn from
$\D$ is at most the budget $B$; and 
\item It is {\em Bayesian incentive compatible} (BIC), which means that each player $i$
maximizes her expected utility by reporting truthfully, where the expectation is both over
the conditional distribution of other player's types given $i$'s true type, and any random
choices made by $\mech$: for every $i$, every $\bc_i,c'_i\in\C_i$, we 
have $\E[(c_i,c_{-i})\sim\D]{\util_i(\bc_i;\bc_i,c_{-i})\,|\,c_i=\bc_i}
\geq\E[(c_i,c_{-i})\sim\D]{\util_i(\bc_i;c'_i,c_{-i})\,|\,c_i=\bc_i}$. 

Note that if there is only one player, then BIC coincides with truthfulness.
\end{enumerate}
We say that $\mech$ achieves an $\al$-approximation relative to $\optalg$ for $(v,B)$,   
if $\E[c\sim\D]{v(f(c))}\geq\frac{1}{\al}\cdot\E[c\sim\D]{\optalg(v,B,c)}$. 
With these definitions in place, we now prove Corollary~\ref{bayesianlb}.

\begin{proofof}{Corollary~\ref{bayesianlb}}
We give two proofs. One is a black-box reduction showing that by Yao's minimax principle
(or equivalently, the min-max formula for two-person zero-sum games),
any approximation lower bound for budget-feasible-in-expectation mechanisms when there is
only one player also applies to Bayesian budget-feasible mechanisms. 
Second, one can easily adapt the lower-bound
constructions in Theorems~\ref{rand-overbid} and~\ref{truthexplb} (b) to apply to the
Bayesian setting.

\medskip\noindent
{\bf Using Yao's minimax principle.}\ 
The min-max formula for two-person zero-sum games states that for any $m\times n$ matrix
$A$, we have
$\max_{p\in\Dt_m}\min_{q\in\Dt_n}p^TAq=\min_{q\in\Dt_n}\max_{p\in\Dt_m}p^TAq$.

Suppose there is only one player, so that BIC coincides with truthfulness.
Fix some $(v,B)$.
Suppose that for every distribution $\D$ on $\C$, there is a Bayesian budget-feasible
mechanism that achieves approximation ratio $\al$ with respect to $\optalg$. 
Consider the two-person game, where the row player chooses an input $c\in\C$, and the
column player chooses a budget-feasible-in-expectation mechanism $\mech$, and the payoff
for $(c,\mech)$ is 
$\optalg(v,B,c)/\al-\E{\text{value returned by $\mech$ on $(v,B,c)$}}$. 
To make the strategy-space finite, we consider inputs $c$ and mechanisms
$\mech$ of some finite bit complexity. Since there is only one player, the strategy-space
of the column player consists of all Bayesian budget-feasible mechanisms.

In terms of
the above two-person game, this means that for every mixed-strategy of the row player,
there is a column-player strategy that yields payoff at most $0$. By the min-max formula 
for two-person zero-sum games, this implies that there is a mixed strategy of the column
player that guarantees that for every row-player (mixed) strategy, the payoff is at most
$0$. That is, there is some distribution over budget-feasible-in-expectation mechanisms,
which is another budget-feasible-in-expectation mechanism, that obtains value at least
$\optalg(v,B,c)/\al$ on every input, i.e., achieves approximation ratio at most $\al$
relative to $\optalg$ on every input. Hence, a lower bound on the approximation ratio
achievable by budget-feasible-in-expectation mechanisms yields the same lower bound for
Bayesian budget-feasible mechanisms.

\medskip\noindent
{\bf Adapting the lower bounds in Theorems~\ref{rand-lb} and~\ref{truthexplb} (b).}\
Recall the setup of the $n$-approximation lower bound in Theorem~\ref{rand-lb}. 
We have $\gm=1+\frac{n}{\epsilon}$, $\gset=[n]$,
$v$ is the additive valuation defined by $v(e)=\gm^{n-e}$ for all $e\in[n]$, the
budget $B$ is $1$, and
for each $\ell\in [n]$, $c^{(\ell)}$ is the additive cost function defined by 
$c^{(\ell)}_e=M\geq (1+n\gm^n)B$ for all $e\in[\ell-1]$, $c^{(\ell)}_\ell=1$, and
$c^{(\ell)}_e=0$ for all $e\in\{\ell+1,\ldots,n\}$. 
The distribution $\D$ chooses input $c^{(\ell)}$ with probability 
$K/\optalg(v,B,c^{(\ell)})$, where $K$ is a normalization constant; so
$\E[c\sim\D]{\optalg(v,B,c)}=nK$. 

The proof of Theorem~\ref{rand-lb} uses truthfulness, IR, and budget-feasibility to argue
that, letting \\
$q_\ell=\Pr\bigl[\mech\text{ returns a set containing item $\ell$ on input $c^{(\ell)}$}\bigr]$,
we have $\sum_{\ell\in[n]}q_\ell\leq 1$.
We then obtain that letting $\targ_\ell$ denote the expected value obtained by $\mech$ on
input $(v,B,c^{(\ell)})$, we have
\[
\targ_\ell\leq
q_\ell\cdot\optalg(v,B,c^{(\ell)})+(1-q_\ell)v(\{\ell+1,\ldots,n\})+\tfrac{\gm^{n-1}}{1+n\gm^n}\cdot
v(\ell).
\]
Multiplying the above by
$\frac{1}{\optalg(v,B,c^{(\ell)})}\leq\frac{1}{v(\ell)}$ 
and simplifying, we obtain that 
\begin{equation*}
\sum_{\ell\in[n]}\frac{\targ_\ell}{\optalg(v,B,c^{(\ell)})} \leq
\sum_{\ell\in[n]}\Bigl(q_\ell+\tfrac{(1-q_\ell)}{\gm-1}+\tfrac{\gm^{n-1}}{1+n\gm^n}\Bigr) 
\leq 1+\frac{n-1}{\gm-1}+\frac{1}{\gm}\leq 1+\frac{n}{\gm-1}\leq 1+\e.
\end{equation*}

\medskip\noindent
Next consider the setup of Theorem~\ref{truthexplb} (b), which considers $n$ to be a power
of $2$, the additive
valuation $v$ specified by $v(e)=1$ for all $e\in\gset$, budget $B=n$, and additive cost
functions $d^{(\ell)}$ specified by $d^{(\ell)}(e)=2^\ell$ for all $e\in\gset$, where
$\ell$ ranges in  $I=\{0\}\cup[\log_2 n]$. The arguments therein show that if $a_\ell$ is
the expected number of items procured by a Bayesian budget-feasible mechanism, then we
have $\sum_{\ell=0}^{\log_2 n} 2^{\ell-1}a_\ell\leq n$. Consider the distribution where
$d^{(\ell)}$ is chosen with probability $2^\ell/K$ for some normalization constant
$K$. The expected value of the optimum is then $\frac{n}{K}(1+\log_2 n)$, whereas the
expected value returned by the mechanism is at most 
$\sum_{\ell=0}^{\log_2 n}2^\ell a_\ell/K\leq\frac{2n}{K}$. 
\end{proofof}

\subsection{Proof of Theorem~\ref{optbenchlb}:
approximation-factor lower bounds relative to \boldmath \optbench}
As mentioned in Section \ref{impos}, we adapt the lower bound construction given by
\cite{AnariGN14} to our setting. 
They describe an instance with an additive valuation in the \emph{Bayesian} setting, that
is, the costs for the items are drawn from a distribution. They show that there is no
budget-feasible mechanism 
that achieves expected value strictly larger than
$(1-\frac{1}{e})\Ex[\optalg]$, where the expectation is over the distribution of the costs
and the randomness of the mechanism. (Recall that in our terminology, a budget-feasible
mechanism is also truthful.)

We will use the same instance as theirs, where each seller holds a single item.
In order to convert their lower bound into a lower bound for \optbench, we wish to utilize
the fact that $\vemax \ll \optalg$ in the large market setting.   
Since $\optalg \ge \optbench \ge \optalg-\vemax$, this implies that \optbench essentially
coincides with \optalg, and thus any lower bound with respect to \optalg is also a lower
bound with respect to \optbench. 

Such an argument will show that there is no budget-feasible mechanism that achieves 
expected value strictly larger than $(1-\frac{1}{e})\Ex[\optbench]$, where the expectation
is over the distribution of the costs and the randomness of the mechanism. 
However, one needs to be careful when arguing that this implies a lower bound in the worst-case setting.
Typically, one argues that, if a mechanism $\mathcal{M}$ achieves approximation ratio less
than $\alpha $ in the Bayesian setting, when given as input some distribution
$\mathcal{D}$ of the costs, then there exists some $c \in \supp(\mathcal{D})$ for which
$\mathcal{M}$ achieves approximation ratio less than $\alpha $ in the worst-case setting,
when given $c$ as input. 
However, for the distribution $\mathcal{D}$ used in the lower bound instance of Anari et
al., there exist cost vectors $c$ drawn from $\mathcal{D}$ for which $\optalg(c)$ is not
much larger than $\vemax$, and so $\optbench$ could be much smaller than $\optalg$.
Nonetheless, one can show that the distribution $\mathcal{D}$ satisfies $\vemax \ll
\optalg(c)$ with very high probability, and this suffices to show a lower bound in the
worst-case setting. 

Now, we describe the lower bound instance of \cite{AnariGN14}.
We are working in the single-dimensional setting, where each seller holds one item. 
So $n=|\gset|$ is the same as the number of players $k$.
The valuation function is additive, with $v(e)=1$ for every item $e$.
Consider the Bayesian instance $\mathcal{I}$, 
where the cost of each item is drawn independently from the distribution $\mathcal{D}$ whose 
CDF $F(x)$ is given by

\[F(x) = 
\begin{cases}
	\frac{1}{e(1-x)}, & \text{ for } 0 \le x \le 1-\frac{1}{e} \\
	1, & \text{ for } x > 1-\frac{1}{e} 
\end{cases}
\]

Note that the item costs i.i.d. 
Let $\mathcal{D}^k$ denote the joint distribution of costs over all the items.
Let $\overline{c} = \Ex_{x\sim \mathcal{D}}[x]$.
The budget $B$ of the instance is set to be $B = \overline{c}\cdot n = \Ex_{c\sim \mathcal{D}^k}[c(U)]$.
We use the following result proved in \cite{AnariGN14}.

\begin{lemma}[\cite{AnariGN14}] \label{anarilb} 
No budget-feasible mechanism can achieve value 
better than $(1-\frac{1}{e})\cdot\Ex_{c \sim \mathcal{D}^k}[\optalg]$, for the instance $\mathcal{I}$. 
\end{lemma}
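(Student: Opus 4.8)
\medskip
\noindent\textbf{Proof plan for Lemma~\ref{anarilb}.}
Throughout we work in the single-dimensional setting of the instance $\mathcal{I}$: each player owns one item, $v(e)=1$ for every $e$, the item costs are i.i.d.\ with CDF $F$, and $B=\bc\cdot n$ with $\bc=\Ex_{x\sim\mathcal{D}}[x]$. The plan is to combine Myerson's characterization of single-parameter truthful mechanisms with a pointwise inequality tailored to $F$, and then a concentration bound for $\optalg$. For a budget-feasible (hence truthful, IR, NPT) mechanism $\mech$, fix a player $e$ and condition on the others' reports $c_{-e}$. By Myerson monotonicity the (possibly randomized) allocation probability $x_e(\cdot,c_{-e})$ is non-increasing in $c_e$, and the payment to $e$ is the associated threshold payment; deterministically, $x_e(c)=\mathbf{1}[c_e\le\theta_e(c_{-e})]$ and $p_e(c)=\theta_e(c_{-e})\mathbf{1}[c_e\le\theta_e(c_{-e})]$ for a threshold $\theta_e(c_{-e})$ independent of $c_e$, and a randomized rule is a mixture of such threshold rules (handled identically by linearity). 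Since $c_e$ is independent of $c_{-e}$ with CDF $F$, averaging over $c_e$ gives, for each $e$, $\Ex_c[v(e)x_e(c)]=\Ex_{c_{-e}}[F(\theta_e)]$ and $\Ex_c[p_e(c)]=\Ex_{c_{-e}}[\phi(\theta_e)]$, where $\phi(t):=t\,F(t)$ with $F$ clipped to $[0,1]$ (so $\phi(t)=0$ for $t\le0$, $\phi(t)=\tfrac{t}{e(1-t)}$ for $t\in[0,1-\tfrac1e]$, and $\phi(t)=t$ for $t\ge1-\tfrac1e$). Budget feasibility, which holds pointwise (or in expectation, in the randomized case), yields $\sum_e\Ex_{c_{-e}}[\phi(\theta_e)]\le B$.

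The heart of the argument is the pointwise inequality $F(t)\le\tfrac1e+\phi(t)$ for all $t\in\R$, which is an \emph{equality} for $t\in[0,1-\tfrac1e]$ — indeed $\tfrac1e+\tfrac{t}{e(1-t)}=\tfrac{1}{e(1-t)}=F(t)$ — and is trivial for $t\le0$ and strict for $t\ge1-\tfrac1e$. This is precisely the property for which $F$ was reverse-engineered; equivalently, the procurement ``virtual cost'' $t+F(t)/f(t)$ is identically $1$ on the support interior. Summing over players,
\[
\Ex_c\Bigl[\textstyle\sum_e x_e(c)\Bigr]=\sum_e\Ex_{c_{-e}}[F(\theta_e)]\ \le\ \sum_e\Ex_{c_{-e}}\bigl[\tfrac1e+\phi(\theta_e)\bigr]\ \le\ \tfrac ne+B.
\]
A short computation gives $\bc=\tfrac1e\int_0^{1-1/e}\tfrac{x}{(1-x)^2}\,dx=\tfrac{e-2}{e}=1-\tfrac2e$ (substitute $u=1-x$), hence $B=(1-\tfrac2e)n$ and $\Ex_c[\sum_e x_e(c)]\le(\tfrac1e+1-\tfrac2e)n=(1-\tfrac1e)n$.

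It remains to lower-bound $\Ex[\optalg]$. For any fixed $\tau<1-\tfrac1e$, writing $G(t)=\int_0^t x f(x)\,dx$, we have $\Ex\bigl[\sum_{e:c_e<\tau}c_e\bigr]=nG(\tau)<nG(1-\tfrac1e)=n\bc=B$; since the $c_e$ lie in $[0,1]$ and are independent, Hoeffding's inequality shows that with high probability both $\sum_{e:c_e<\tau}c_e\le B$ and $|\{e:c_e<\tau\}|\ge(F(\tau)-o(1))n$, and on this event $\optalg(c)\ge(F(\tau)-o(1))n$. Hence $\Ex[\optalg]\ge(F(\tau)-o(1))n$, and letting $\tau\uparrow1-\tfrac1e$ gives $\Ex[\optalg]\ge(1-o(1))n$. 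Combining with the previous paragraph, $\Ex_c[\sum_e x_e(c)]\le(1-\tfrac1e)n\le(1-\tfrac1e)(1+o(1))\,\Ex[\optalg]$, which is the asserted bound when $\mathcal{I}$ is read in the intended large-market (large-$n$) regime.

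\textbf{Main obstacle.} The pointwise inequality is immediate once one observes $F(t)/f(t)=1-t$, so the genuine care lies elsewhere: (i) making the Myerson reduction fully rigorous for \emph{randomized} mechanisms and coping with the atom of $\mathcal{D}$ at $0$ — where a threshold payment can strictly exceed the realized cost, so the naive virtual-cost identity is delicate — which I would sidestep by phrasing everything in terms of the thresholds $\theta_e$ (real numbers at which $F,\phi$ are smooth), so the atom enters only in computing $\bc$; and (ii) the concentration step for $\optalg$, together with the honest bookkeeping that for finite $n$ the ratio is $(1-\tfrac1e)$ only up to an $o(1)$ term, so the statement must be interpreted asymptotically — consistent with the paper's surrounding observation that $\vemax\ll\optalg(c)$ with high probability under $\mathcal{D}$.
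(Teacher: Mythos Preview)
The paper does not give its own proof of this lemma; it is quoted as a result of Anari, Goel, and Nikzad and used as a black box in the proof of Theorem~\ref{optbenchlb}. So there is nothing in the paper to compare against here.

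That said, your argument is essentially the standard one behind the cited result, and it is correct. Writing the allocation and payment in terms of Myerson thresholds, using independence of $c_e$ from $c_{-e}$ to get $\Ex[x_e]=\Ex_{c_{-e}}[F(\theta_e)]$ and $\Ex[p_e]=\Ex_{c_{-e}}[\theta_e F(\theta_e)]$, and then invoking the identity $F(t)=\tfrac1e+tF(t)$ on the support (equivalently, constant virtual cost $t+F(t)/f(t)\equiv 1$) is exactly how this distribution was engineered; the resulting bound $\Ex[\sum_e x_e]\le n/e+B=(1-\tfrac1e)n$ is clean and needs no asymptotics. Your treatment of randomized rules as mixtures of threshold rules is fine---the relevant quantities are all linear in the threshold distribution, and the atom at $0$ is harmless since both $F$ and $\phi$ are evaluated at the threshold, not at the realized cost.

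The only genuine slack, which you already flag, is on the $\optalg$ side: you get $\Ex[\optalg]\ge(1-o(1))n$, hence the ratio bound is $(1-\tfrac1e)(1+o(1))$ rather than exactly $1-\tfrac1e$ for each finite $n$. This is consistent with how the paper uses the lemma (the surrounding proof of Theorem~\ref{optbenchlb} works in the large-$n$ regime and absorbs $o(1)$ terms), so the asymptotic reading is the intended one.
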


We also need to make use of a Hoeffding bound, which is stated in the following lemma.

\begin{lemma} \label{hoeffding}
Let $x_1,\ldots,x_n$ be i.i.d. random variables whose values always lie in $[a,b]$.
Let $\mu  = \Ex[\sum_{i=1}^n x_i]$.
Then we have
\[\Pr\left[\sum_{i=1}^n x_i \ge (1+\epsilon )\cdot\mu \right] \le e^{-\frac{2 \epsilon ^2 \mu ^2}{n(b-a)^2}}\]
\end{lemma}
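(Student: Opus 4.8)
The plan is to obtain this as a textbook application of the Chernoff--Cram\'er exponential-moment method, specialized to bounded independent summands (we may assume $\epsilon>0$, the statement being vacuous otherwise in the regime it is used). Write $S_n=\sum_{i=1}^n x_i$ and $\mu_i=\Ex[x_i]$, so that $\sum_i\mu_i=\mu$ and each centered variable $x_i-\mu_i$ takes values in an interval of length $b-a$. The one ingredient that is not purely mechanical is Hoeffding's moment-generating-function bound: for any random variable $Y$ with $\Ex[Y]=0$ and $Y\in[\alpha,\beta]$ almost surely, and any real $s$, one has $\Ex[e^{sY}]\le e^{s^2(\beta-\alpha)^2/8}$. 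I would either cite this or include the standard short argument: bound $e^{sy}$ on $[\alpha,\beta]$ by the chord through its endpoints (using convexity of $y\mapsto e^{sy}$), take expectations using $\Ex[Y]=0$, and apply a second-order Taylor estimate to the resulting log-moment-generating function, whose second derivative is at most $(\beta-\alpha)^2/4$.

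Granting this, the main computation is short. For any $s>0$, Markov's inequality applied to $e^{sS_n}$ together with independence gives
\[
\Pr[S_n\ge(1+\epsilon)\mu]\ \le\ e^{-s(1+\epsilon)\mu}\,\Ex\bigl[e^{sS_n}\bigr]\ =\ e^{-s(1+\epsilon)\mu}\prod_{i=1}^n\Ex\bigl[e^{sx_i}\bigr].
\]
Factoring out the means, $\Ex[e^{sx_i}]=e^{s\mu_i}\,\Ex\bigl[e^{s(x_i-\mu_i)}\bigr]\le e^{s\mu_i}e^{s^2(b-a)^2/8}$, so the product is at most $e^{s\mu}e^{ns^2(b-a)^2/8}$, and the bound becomes $\exp\!\bigl(-s\epsilon\mu+ns^2(b-a)^2/8\bigr)$.

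It then remains to optimize the free parameter: the exponent is a convex quadratic in $s$, minimized at $s^{\star}=4\epsilon\mu/\bigl(n(b-a)^2\bigr)>0$, and substituting $s^{\star}$ yields exactly $-2\epsilon^2\mu^2/\bigl(n(b-a)^2\bigr)$, which is the claimed inequality. I do not expect any genuine obstacle here; the only step requiring care is Hoeffding's MGF lemma, and even that is classical, so in the write-up I would most likely just cite the general Hoeffding inequality $\Pr[S_n-\Ex[S_n]\ge t]\le\exp\!\bigl(-2t^2/(n(b-a)^2)\bigr)$ and specialize it with $t=\epsilon\mu$.
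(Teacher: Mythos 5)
The paper does not prove Lemma~\ref{hoeffding} at all; it simply states it as ``a Hoeffding bound'' and applies it with $t=\epsilon\mu$ in the proof of Theorem~\ref{optbenchlb}, so there is no paper proof to compare against. Your argument is the standard and correct derivation: Hoeffding's moment-generating-function lemma gives $\Ex[e^{s(x_i-\mu_i)}]\le e^{s^2(b-a)^2/8}$, the Chernoff/Markov step with independence gives $\Pr[S_n\ge(1+\epsilon)\mu]\le\exp(-s\epsilon\mu+ns^2(b-a)^2/8)$, and optimizing at $s^\star=4\epsilon\mu/(n(b-a)^2)$ yields exactly the claimed exponent $-2\epsilon^2\mu^2/(n(b-a)^2)$. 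The algebra checks out, and your note that one may assume $\epsilon>0$ (the only regime in which the bound is used, and the only one in which it is meaningful) is appropriate. As you say yourself, citing the general one-sided Hoeffding inequality $\Pr[S_n-\Ex S_n\ge t]\le\exp(-2t^2/(n(b-a)^2))$ and specializing $t=\epsilon\mu$ would be equally acceptable, and is essentially what the paper does implicitly.
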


\begin{proof}[Finishing up the proof of Theorem~\ref{optbenchlb}]
Suppose that there exists a budget-feasible mechanism $\mathcal{M}$ for this instance that
achieves an $\alpha $-approximation with respect to $\optbench$. 
We run $\mathcal{M}$ on the instance $\mathcal{I}$.
	For $c \in \mathbb{R}^n_+$, let $\mathcal{M}(c)$ denote the (expected) value procured by $\mathcal{M}$ on instance $\mathcal{I}$ with costs $c$. 
	Since the value of each item is 1, and since each seller holds one item, we always have $\optbench = \optalg-1$ for the instance.
	So $\Ex_{c\sim \mathcal{D}^n} [\mathcal{M}(c)] \ge \alpha \Ex_{c\sim \mathcal{D}^n}[\optbench] = \alpha (\Ex_{c\sim \mathcal{D}^n}[\optalg]-1)$.
	Since the cost of each item is drawn i.i.d., by Lemma \ref{hoeffding}, we can show that the sum of the costs is concentrated around its mean, which is $B$. 
	In particular, for any constant $\epsilon >0$, we have,
	\[\Pr[c(U) \ge (1+\epsilon )B]\le e^{-\rho n}\]
	where $\rho := \rho (\epsilon )$ is some constant depending only on $\epsilon $.
	Thus, with probability $\ge 1-e^{-\rho n}$, we have that $c(U) < (1+\epsilon )B$.

	Suppose that the event $c(U) < (1+\epsilon )B$ occurs. Let $e_1,\ldots,e_k$ be the elements of $U$ in ascending order of their costs. Pick the smallest index $\ell$ such that $S=\{e_1,\ldots,e_\ell\}$ satisfies $c(S)\ge B$.
	We claim that $v(S) \ge \frac{n}{1+\epsilon }$.
	Indeed, since $S$ is the set of smallest total cost among all sets on $\ell$ elements, we have that the average cost of an element in $S$, which is $\frac{c(S)}{\ell}$, is at most the average cost, $\frac{c(U)}{n}$, over all the elements.
	Thus, $\frac{B}{\ell} \le \frac{c(S)}{\ell} \le \frac{c(U)}{n} < \frac{(1+\epsilon )B}{n}$ implying that $\ell > \frac{n}{1+\epsilon }$.
	Hence, the set $S' = S\setminus \{e_\ell\}$ satisfies $c(S')\le B$ and $v(S')\ge \frac{n}{1+\epsilon }-1$. 

	Thus, with probability $\ge 1-e^{-\rho n}$, we have that $\optalg \ge \frac{n}{(1+\epsilon )}-1$.
	Now the guarantee of $\mathcal{M}$ with respect to the algorithmic optimum is 
	\[\Ex_{c\sim \mathcal{D}^n} [\mathcal{M}(c)] \ge \alpha (\Ex_{c\sim \mathcal{D}^n}[\optalg]-1) \ge \alpha (1-e^{-\rho n})\left(\frac{n}{1+\epsilon }-2\right)\]
	For large enough $n$, we can pick $\epsilon '>0$ small enough so that we have $(1-e^{-\rho n})\left(\frac{n}{1+\epsilon }-2\right) \ge (1-\epsilon ')n$.
	Thus, $\Ex_{c\sim \mathcal{D}^n} [\mathcal{M}(c)] \ge (1-\epsilon ')n \ge (1-\epsilon ')\Ex_{c\sim \mathcal{D}^n}[\optalg]$.
	Hence $\mathcal{M}$ achieves a $(1-\epsilon ')\alpha $ approximation with respect to $\optalg$.
	By Lemma \ref{anarilb}, we must have that $(1-\epsilon ')\alpha \le 1-\frac{1}{e}$, showing that we cannot get a better than $\frac{1}{1-\epsilon '}(1-\frac{1}{e})$ approximation with respect to $\optbench$.
\end{proof}

\swamy{
\section{Proof of Lemma~\ref{gen-rpartition}} \label{append-rpartition}
We mimic the proof of Lemma~\ref{rpartition}.
Let $I$ be a minimal prefix of $[k]$ such that, letting 
$A_1=\bigcup_{i\in I}(S\cap\pset_i)$, we have
$\vbengen[g]{2\ell}(A_1)\geq\frac{\vbengen[g]{(4\ell+1)}(S)}{2}$. Let $A_2=S-A_1$. 
Let $r$ be the last index in $I$.  
Then, we have $\vbengen[g]{2\ell}(A_1-\pset_r)<\frac{\vbengen[g]{(4\ell+1)}(S)}{2}$. 
Let $J_1\sse[k]$, $J_2\sse[k]$ with $|J_1|,|J_2|\leq 2\ell$ be such that
$\vbengen[g]{2\ell}(A_1-\pset_r)=g\bigl(A_1-\pset_r-\bigcup_{i\in J_1}\pset_i\bigr)$ and
$\vbengen[g]{2\ell}(A_2)=g\bigl(A_2-\bigcup_{i\in J_2}\pset_i\bigr)$ and
Let $S'=S-\bigcup_{i\in J_1\cup J_2}\pset_i-\pset_r$. Since 
$|J_1\cup J_2\cup\{r\}|\leq 4\ell+1$, we have $g(S')\geq\vbengen[g]{(4\ell+1)}(S)$.
Also, $A_2-\bigcup_{i\in J_2}\pset_i=S'-(A_1-\pset_r-\bigcup_{i\in J_1}\pset_i)$, and so
by subadditivity of $g$, we have 
\[
\vbengen[g]{2\ell}(A_2)=
g\Bigl(A_2-\bigcup_{i\in J_2}\pset_i\Bigr)\geq 
g(S')-g\Bigl(A_1-\pset_r-\bigcup_{i\in J_1}\pset_i\Bigr)
>\frac{\vbengen[g]{(4\ell+1)}(S)}{2}.
\]

Now fix a partition $A'_1, A''_1$ of $A_1$, and a partition $A'_2, A''_2$ of $A_2$.
Observe that
$\vbengen[g]{\ell}(A'_1)+\vbengen[g]{\ell}(A''_1)\geq\vbengen[g]{2\ell}(A_1)$.
So some set $A_1^H\in\{A'_1, A''_1\}$ satisfies
$\vbengen[g]{\ell}(A_1^H)\geq\vbengen[g]{2\ell}(A_1)/2$; let $A_1^L$ be the other set in
$\{A'_1,A''_1\}$. 
Similarly, one of $A'_2,A''_2$, denoted $A_2^H$ satisfies 
and $\vbengen[g]{\ell}(A^H_2)\geq\vbengen[g]{2\ell}(A_2)/2$; let $A_2^L$ be the other set
in $\{A'_2,A''_2\}$.

Now we proceed exactly as in the proof of Lemma~\ref{rpartition}.
The random partition $\gset_1,\gset_2$ induces random partitions of $A_1$ and $A_2$. 
Consider the event $\Gm$ that for both $\ell=1,2$, the random partition of $A_\ell$ induced
by $\gset_1,\gset_2$ is the same as the partition $A_\ell^H, A_\ell^L$, up to permutations
of the parts. 
For any $j=1,2$, we have 
$\Pr\bigl[\gset_j\cap A_1=A_1^L,\ \gset_j\cap A_2=A_2^L\,|\,\Gm\bigr]=\frac{1}{4}$. So
conditioned on $\Gm$, with probability at least $\frac{1}{2}$, we have that both
$\gset_1,\gset_2$ contain some big set. Removing the conditioning completes the proof.
\qed}

\end{document}